\DeclareFontFamily{U}{mathx}{\hyphenchar\font45}
\DeclareFontShape{U}{mathx}{m}{n}{
      <5> <6> <7> <8> <9> <10>
      <10.95> <12> <14.4> <17.28> <20.74> <24.88>
      mathx10
      }{}
\DeclareSymbolFont{mathx}{U}{mathx}{m}{n}
\DeclareMathSymbol{\bigtimes}{1}{mathx}{"91}
\definecolor{DarkRed}{rgb}{0.5,0.1,0.1}
\definecolor{DarkBlue}{rgb}{0.1,0.1,0.5}
\definecolor{ForestGreen}{rgb}{0.1333,0.5451,0.1333}
\definecolor{Red}{rgb}{0.9,0,0}
\crefname{property}{property}{Property}
\crefname{equation}{eq}{Eq}
\def\BState{\State\hskip-\ALG@thistlm}
\setlist[itemize]{leftmargin=20pt}
\setlist[enumerate]{leftmargin=20pt}
\newtheorem{theorem}{Theorem}
\newtheorem{lemma}{Lemma}[section]
\newtheorem{proposition}[lemma]{Proposition}
\newtheorem{corollary}[theorem]{Corollary}
\newtheorem{claim}[lemma]{Claim}
\newtheorem{fact}[lemma]{Fact}
\newtheorem*{claim*}{Claim}
\newtheorem*{assumption*}{Assumption}
\newtheorem*{proposition*}{Proposition}
\newtheorem*{lemma*}{Lemma}
\newtheorem*{problem5*}{Problem}
\newtheorem{observation}[lemma]{Observation}
\newtheorem{property}{Property}
\newtheorem*{theorem*}{Theorem}
\crefname{lemma}{Lemma}{Lemmas}
\crefname{claim}{claim}{claims}
\crefname{property}{Property}{Properties}
\newtheorem{mdresult}{Result}
\newenvironment{result}{\begin{mdframed}[backgroundcolor=lightgray!40,topline=false,rightline=false,leftline=false,bottomline=false,innertopmargin=2pt]\begin{mdresult}}{\end{mdresult}\end{mdframed}}
\theoremstyle{definition}
\newtheorem{mdproblem}{Problem}
\newtheorem*{mdproblem*}{Problem}
\newenvironment{Problem*}{\begin{mdframed}[hidealllines=false,innerleftmargin=10pt,backgroundcolor=gray!10,innertopmargin=5pt,innerbottommargin=5pt,roundcorner=10pt]\begin{mdproblem*}}{\end{mdproblem*}\end{mdframed}}
\newtheorem{mddefinition}[lemma]{Definition}
\newenvironment{Definition}{\begin{mdframed}[hidealllines=false,innerleftmargin=10pt,backgroundcolor=white!10,innertopmargin=5pt,innerbottommargin=5pt,roundcorner=10pt]\begin{mddefinition}}{\end{mddefinition}\end{mdframed}}
\newtheorem*{mddefinition*}{Definition}
\newenvironment{Definition*}{\begin{mdframed}[hidealllines=false,innerleftmargin=10pt,backgroundcolor=white!10,innertopmargin=5pt,innerbottommargin=5pt,roundcorner=10pt]\begin{mddefinition*}}{\end{mddefinition*}\end{mdframed}}
\newtheorem{mdremark}{Remark}
\newenvironment{Remark}{\begin{mdframed}[hidealllines=false,innerleftmargin=10pt,backgroundcolor=gray!10,innertopmargin=5pt,innerbottommargin=5pt,roundcorner=10pt]\begin{mdremark}}{\end{mdremark}\end{mdframed}}
\newtheoremstyle{restate}{}{}{\itshape}{}{\bfseries}{~(restated).}{.5em}{\thmnote{#3}}
\theoremstyle{restate}
\renewcommand{\qed}{\nobreak \ifvmode \relax \else
      \ifdim\lastskip<1.5em \hskip-\lastskip
      \hskip1.5em plus0em minus0.5em \fi \nobreak
      \vrule height0.75em width0.5em depth0.25em\fi}
\renewcommand{\leq}{\leqslant}
\renewcommand{\geq}{\geqslant}
\renewcommand{\ge}{\geq}
\newcommand{\Leq}[1]{\ensuremath{\underset{\textnormal{#1}}\leq}}
\newcommand{\Geq}[1]{\ensuremath{\underset{\textnormal{#1}}\geq}}
\newcommand{\Eq}[1]{\ensuremath{\underset{\textnormal{#1}}=}}
\newcommand{\inner}[2]{\ensuremath{\langle #1 \; , \; #2 \rangle}}
\newcommand{\tvd}[2]{\ensuremath{\norm{#1 - #2}_{\mathrm{tvd}}}}
\newcommand{\Ot}{\ensuremath{\widetilde{O}}}
\newcommand{\eps}{\ensuremath{\varepsilon}}
\newcommand{\Paren}[1]{\Big(#1\Big)}
\newcommand{\Bracket}[1]{\Big[#1\Big]}
\newcommand{\bracket}[1]{\left[#1\right]}
\newcommand{\paren}[1]{\ensuremath{\left(#1\right)}\xspace}
\newcommand{\card}[1]{\left\vert{#1}\right\vert}
\newcommand{\IN}{\ensuremath{\mathbb{N}}}
\newcommand{\norm}[1]{\ensuremath{\|#1\|}}
\newcommand{\set}[1]{\ensuremath{\left\{ #1 \right\}}}
\newcommand{\poly}{\mbox{\rm poly}}
\DeclareMathOperator*{\Exp}{\ensuremath{{\mathbb{E}}}}
\DeclareMathOperator*{\Prob}{\ensuremath{\textnormal{Pr}}}
\renewcommand{\Pr}{\Prob}
\newcommand{\Ex}{\Exp}
\newenvironment{ourbox}{\begin{mdframed}[hidealllines=false,innerleftmargin=10pt,backgroundcolor=white!10,innertopmargin=10pt,innerbottommargin=5pt,roundcorner=10pt]}{\end{mdframed}}
\newenvironment{tbox}{\begin{tcolorbox}[
		enlarge top by=5pt,
		enlarge bottom by=5pt,
		 breakable,
		 boxsep=0pt,
                  left=4pt,
                  right=4pt,
                  top=10pt,
                  arc=0pt,
                  boxrule=1pt,toprule=1pt,
                  colback=white
                  ]
	}
{\end{tcolorbox}}
\newcommand{\event}{\ensuremath{\mathcal{E}}}
\newcommand{\rv}[1]{\ensuremath{{\mathsf{#1}}}\xspace}
\newcommand{\rA}{\rv{A}}
\newcommand{\rB}{\rv{B}}
\newcommand{\rC}{\rv{C}}
\newcommand{\rD}{\rv{D}}
\newcommand{\rG}{\rv{G}}
\newcommand{\rProt}{\rv{\Prot}}
\newcommand{\rR}{\rv{R}}
\newcommand{\supp}[1]{\ensuremath{\textnormal{\text{supp}}(#1)}}
\newcommand{\distribution}[1]{\ensuremath{\textnormal{dist}(#1)}\xspace}
\newcommand{\kl}[2]{\ensuremath{\mathbb{D}(#1~||~#2)}}
\newcommand{\II}{\ensuremath{\mathbb{I}}}
\newcommand{\HH}{\ensuremath{\mathbb{H}}}
\newcommand{\mi}[2]{\ensuremath{\def\mione{#1}\def\mitwo{#2}\mireal}}
\newcommand{\mireal}[1][]{
  \ifx\relax#1\relax%
    \II(\mione \,; \mitwo)%
  \else%
    \II(\mione \,; \mitwo\mid #1)%
  \fi
}
\newcommand{\en}[1]{\ensuremath{\HH(#1)}}
\newcommand{\itfacts}[1]{\Cref{fact:it-facts}-(\ref{part:#1})\xspace}
\newcommand{\cc}[1]{\ensuremath{\textsc{CC}(#1)}}
\newcommand{\ic}[2]{\ensuremath{\textsc{IC}(#1,#2)}}
\newcommand{\PP}{\ensuremath{\mathbb{P}}}
\newcommand{\cP}{\ensuremath{\mathcal{P}}}
\newcommand{\GG}{\ensuremath{\mathcal{G}}}
\newcommand{\cH}{\ensuremath{\mathcal{H}}}
\newcommand{\cA}{\ensuremath{\mathcal{A}}}
\newcommand{\cB}{\ensuremath{\mathcal{B}}}
\newcommand{\cHstar}{\ensuremath{\mathcal{H}^*}}
\newcommand{\startvert}{\ensuremath{\textnormal{\textsc{start}}}}
\newcommand{\finalvert}{\ensuremath{\textnormal{\textsc{final}}}}
\newcommand{\embed}{\ensuremath{\textnormal{\textsc{embed}}}}
\newcommand{\prot}{\ensuremath{\pi}}
\newcommand{\Prot}{\ensuremath{\Pi}}
\newcommand{\raoconst}{\ensuremath{c_{\textnormal{comp}}}}
\newcommand{\probconst}{\ensuremath{\rho}}
\newcommand{\reald}{\ensuremath{\GG^{\textnormal{\textsc{real}}}}}
\newcommand{\faked}{\ensuremath{\GG^{\textnormal{\textsc{fake}}}}}
\newcommand{\kstar}{\ensuremath{k^{\star}}}
\title{$\mathcal{O}(\log\log{n})$ Passes is Optimal for Semi-Streaming \\ Maximal Independent Set}
\author{Sepehr Assadi\footnote{(sepehr@assadi.info) Cheriton School of Computer Science, University of Waterloo, and Department of Computer Science, Rutgers University. 
Supported in part by an Alfred P. Sloan Fellowship, a University of Waterloo startup grant, and an NSF CAREER grant CCF-2047061. \smallskip}
\and Christian Konrad\footnote{(christian.konrad@bristol.ac.uk) School of Computer Science, University of Bristol. Supported by EPSRC New Investigator Award EP/V010611/1. \smallskip}
\and Kheeran K. Naidu\footnote{(kheeran.naidu@bristol.ac.uk) School of Computer Science, University of Bristol. Supported
by EPSRC DTP studentship EP/T517872/1. \smallskip}
\and Janani Sundaresan\footnote{(jsundaresan@uwaterloo.ca) Cheriton School of Computer Science, University of Waterloo. \smallskip}} 
\date{}
\begin{document}
\maketitle

\pagenumbering{roman}

\begin{abstract}

\bigskip

In the semi-streaming model for processing massive graphs, an algorithm makes multiple passes over the edges of a given $n$-vertex graph and is tasked with computing the solution to a problem using $O(n \cdot \poly\!\log{\!(n)})$ space.
Semi-streaming algorithms for Maximal Independent Set (MIS) that run in $O(\log\log{n})$ passes have been known for almost a decade, however, the best lower bounds can only rule out single-pass algorithms. 
We close this large gap by proving that the current algorithms are \emph{optimal}:
{Any semi-streaming algorithm for finding an MIS with constant probability of success requires $\Omega(\log\log{n})$ passes.}
This settles the complexity
of this fundamental problem in the semi-streaming model, and constitutes one of the first optimal multi-pass lower bounds in this model. 

\medskip

We establish our result by proving an \textbf{optimal round vs communication tradeoff} for the (multi-party) communication complexity of MIS. 
The key ingredient of this result is a new technique, called \textbf{hierarchical embedding}, for performing round elimination: we show how to pack \emph{many} but \emph{small} hard $(r-1)$-round instances of the problem
into a single $r$-round instance, in a way that enforces any $r$-round protocol to effectively solve all these $(r-1)$-round instances also. 
These embeddings are obtained via a novel application of results from extremal graph theory---in particular dense graphs with many disjoint unique shortest paths---together with a newly designed graph product, 
and are analyzed via information-theoretic tools such as direct-sum and message compression arguments.

\end{abstract}

\clearpage

\setcounter{tocdepth}{3}
\tableofcontents
\clearpage
\pagenumbering{arabic}
\setcounter{page}{1}


\section{Introduction}\label{sec:intro}

In the semi-streaming model for processing graphs, the edges of an
$n$-vertex graph $G = (V, E)$ are presented to an algorithm one-by-one in an arbitrarily ordered stream.
A semi-streaming algorithm then is allowed to make one or few passes over this stream and use $\Ot(n) := O(n \cdot \poly\!\log{\!(n)})$ space 
to solve a given problem. The semi-streaming model has been at the forefront of research on processing massive graphs since its introduction in~\cite{FeigenbaumKMSZ05} almost two decades ago. 

We study the \emph{Maximal Independent Set (MIS)} problem, namely, finding \emph{any} independent set of the graph that is not a proper subset of another independent set. 
An $O(\log{n})$-pass semi-streaming algorithm for MIS follows from Luby's parallel algorithm~\cite{Luby85} (see also~\cite{LattanziMSV11,KumarMVV13}).  
This was improved to an $O(\log\log{n})$-pass algorithm in~\cite{AhnCGMW15}  (see also~\cite{GhaffariGKMR18,Konrad18}). Despite significant attention, this has remained the state of the art for almost a decade now. 
At the same time, the only streaming lower bounds known for MIS are the $\Omega(n^2)$ space lower bounds for \emph{one}-pass algorithms obtained 
independently in~\cite{AssadiCK19a,CormodeDK19}.\footnote{There is also an $\Omega(n^{1/5})$-pass lower bound for semi-streaming algorithms that compute the \emph{lexicographically first MIS (LFMIS)}~\cite{AssadiCK19b}; 
however, it is known that LFMIS is a much harder (and quite different) problem than MIS (in most settings, including semi-streaming) and thus this result is not related to our discussion for finding any MIS.} 

 We prove that the $O(\log\log{n})$ passes in the algorithm of~\cite{AhnCGMW15} is optimal. 
\begin{result}[Formalized in~\Cref{cor:mis-stream}]\label{res:main}
	For any $p \geq 1$, any $p$-pass streaming algorithm for finding any maximal independent set of  $n$-vertex graphs with constant success probability requires $n^{1+1/(2^{p}-1)-o(1)}$ space. 
	In particular, semi-streaming algorithms require $\Omega(\log\log{n})$ passes. 
\end{result}
\vspace{-0.3cm}

\Cref{res:main} fully settles the pass-complexity of MIS in the semi-streaming model. This answers a fundamental open question in the graph streaming literature---see, e.g.,~\cite{CormodeDK19,Dark20}---on whether 
it is possible to even prove  \emph{any} multi-pass lower bound for MIS -- our lower bound now matches, up to $n^{o(1)}$ factors, the tradeoff of the algorithm of~\cite{AhnCGMW15} for \emph{every} number of passes.  

We establish~\Cref{res:main} by proving a stronger \textbf{rounds vs communication tradeoff} for MIS in the standard (multi-party) communication model. Here, the input graph is \emph{edge}-partitioned between multiple 
players. In each round, player one sends a message to player two, who sends a message to player three, and they continue like this until the last player, who sends a message back to the first one. 
We would like the message of the last player in the last round to reveal an MIS of the input graph (see~\Cref{sec:cc} for a formal definition). 
It is a well-known fact that communication lower bounds in this model also imply streaming lower bounds (see \Cref{prop:cc-stream}). 
\begin{result}[Formalized in~\Cref{thm:mis-cc}]\label{res:cc}
	For any $r \geq 1$, any $r$-round $(r+1)$-party protocol for finding any maximal independent set of $n$-vertex graphs with constant success probability requires $n^{1+1/(2^{r}-1)-o(1)}$ communication.
\end{result}
\vspace{-0.3cm}

The bounds in~\Cref{res:cc} are again optimal, up to $n^{o(1)}$ factors, in light of the algorithms in~\cite{AhnCGMW15,GhaffariGKMR18,Konrad18}. 
\Cref{res:main} now follows immediately from~\Cref{res:cc} using the standard reduction between the two models. 
Given the central role MIS 
plays in most distributed communication models---wherein communication is often a bottleneck---our result on the communication complexity of MIS is of its own independent interest.  

\subsection{Our Techniques}

We shall go over our techniques in detail in the streamlined overview of our approach in~\Cref{sec:overview}. For now, we only mention the high level ideas behind our techniques.

Our main approach in proving~\Cref{res:cc}---which implies~\Cref{res:main} immediately---is by finding a way to generalize and adapt the natural and easy-to-work-with ``round elimination lemma''-type arguments of~\cite{MiltersenNSW95} 
to \emph{graph} problems.  We achieve this using a new technique which we call \textbf{hierarchical embedding} that consists of two separate parts.

The first part is of \emph{combinatorial} nature. We design a new family of extremal graphs that have {many} disjoint {induced} subgraphs that are essentially---but not (necessarily) entirely---{smaller} copies of the same ``outer'' graph (informally, 
the outer graph is repeated many times as a smaller induced copy inside itself). 
These graphs form a considerable generalization of \emph{Ruzsa-Szemer{\'e}di (RS)} graphs~\cite{RuzsaS78} that, starting from~\cite{GoelKK12}, have found numerous applications in 
proving streaming and communication lower bounds, e.g., in~\cite{Kapralov13,AssadiKLY16,AssadiKL17,AssadiKO20,Kapralov21,CormodeDK19,KonradN21,AssadiR20,ChenKPSSY21,Assadi22,AssadiS23,KonradK24}.
We construct these graphs via a new \emph{graph product} applied to a (variant of) another family of extremal graphs studied in the literature on hopsets and spanners under the terms 
``directed graphs requiring large number of shortcuts''~\cite{Hesse03} or ``graphs with many long disjoint shortest paths''~\cite{AbboudB16} (see, e.g.,~\cite{CoppersmithE06,HuangP21,BodwinH22,LuWWX22,WilliamsXX23}). 

The second part is of  \emph{information-theoretic} nature. We use  \emph{information complexity}~\cite{ChakrabartiSWY01,BarakBCR10} and \emph{direct-sum} results to generalize the $\Theta(n)$-party {simultaneous} communication lower bounds of~\cite{AlonNRW15,AssadiKZ22} 
to the $O(\log\log{n})$-party communication model that we study. This involves dealing with protocols with {more interaction} in each round and a much larger bandwidth \emph{per} player. 
One component of this part is a \emph{message compression} technique due to~\cite{HarshaJMR07} (see also~\cite{BravermanG14}), which we use to handle protocols in our arguments that are \emph{statistically close} to having low information cost, but may not
have an actually low information cost themselves. 

Finally, we note that our hierarchical embedding approach adapts and generalizes the very recent work of \cite{KonradK24} for proving \emph{two}-pass lower bounds
for approximating matchings: in the combinatorial part, we allow for embedding a much richer family of graphs compared to those of~\cite{KonradK24} by going beyond RS graphs;
in the information-theoretic part, we use entirely different arguments that allow for proving lower bounds for any large number of passes and not just two.

\subsection{More Context and Related Work}

MIS is a fundamental problem with natural connections to many other classical problems, such as vertex cover, matching, and vertex/edge coloring (see, e.g.~\cite{Linial87}). 
Hence, MIS has been studied extensively in most models of computation on (massive) graphs, including 
LOCAL~\cite{Luby85,Linial87,BarenboimEPS12,KuhnMW16,Ghaffari16,BalliuBHORS19,DRozhonG20,GhaffariG23}, Dynamic graphs~\cite{AssadiOSS18,OnakSSW18,AssadiOSS19,BehnezhadDHSS19,ChechikZ19}, the Massively 
Parallel Computation model (MPC)~\cite{GhaffariGKMR18,DBehnezhadBDFHKU19,GhaffariU19}, Distributed Sketching~\cite{AssadiKO20,AssadiKZ22}, Local Computation Algorithms (LCAs)~\cite{RubinfeldTVX11,AlonRVX12,GhaffariU19,Ghaffari22}, 
Sublinear-time~\cite{NguyenO08,YoshidaYI09,AssadiS19,Behnezhad21}, and semi-streaming~\cite{AhnCGMW15,CormodeDK19,AssadiCK19a} (this is by no means a comprehensive list; see these papers for further references). 
We now mention some of these and related work that are most relevant to us. 

\subsubsection*{MIS as a subroutine and a ``barrier'' for other semi-streaming algorithms}
 The $O(\log\log{n})$-pass algorithm of \cite{AhnCGMW15} for MIS 
was designed as part of a semi-streaming implementation of the so-called \emph{Pivot} algorithm~\cite{AilonCN05} which computes the \emph{random order MIS} to approximate \emph{Correlation Clustering} 
(improving upon~\cite{ChierichettiDK14}; see also~\cite{BlellochFS12,FischerN18}). In general, computing MIS, and, in particular, random order MIS, is a highly useful subroutine in various algorithmic problems; see, e.g.~\cite{YoshidaYI09,BlellochFS12,BehnezhadDHSS19,Behnezhad21,HaqiZ23,CambusKLPU24}. 
At the same time, this also meant that for many of these
problems, the best bounds remained stuck at the same $O(\log\log{n})$ passes of the MIS. 

More recently, there have been several attempts to bypass this ``barrier'' by relying on weaker subroutines than MIS. 
Most notably, for correlation clustering,~\cite{Cohen-AddadLMNP21,AssadiW22} designed ``non Pivot'' algorithms that use an entirely different subroutine based on sparse-dense decompositions (at 
a cost of much larger approximations); and,~\cite{BehnezhadCMT22,BehnezhadCMT23,CambusKLPU24} designed semi-streaming algorithms for more relaxed versions of Pivot  that can be 
implemented in much fewer passes. 
Another example of similar nature is the $O(1)$-pass semi-streaming algorithm of~\cite{CambusKPU23} for a relaxation of MIS called the \emph{$2$-ruling set}%
\footnote{For every $\beta \geq 1$, a $\beta$-ruling set of a graph $G$ is an independent set such that every other vertex is within distance $\beta$ of some vertex in the ruling set. Thus, an MIS is a $1$-ruling set.}, 
which can replace MIS sometimes, e.g., in metric facility location~\cite{BernsHP12} (see~\cite{KonradPRR19,AssadiD21} for other streaming algorithms for ruling sets). 

Our lower bound in~\Cref{res:main} now definitively confirms the necessity of relying on these relaxations (and their accompanied complications), as there is provably no way of implementing \emph{any} MIS-based algorithm
in the semi-streaming model in fewer than $O(\log\log{n})$ passes. 

\subsubsection*{Distributed lower bounds for MIS}
 There is a quite large body of work on proving lower bounds for MIS in distributed models that focus on \emph{locality} (instead of communication cost and message lengths). This is a topic orthogonal to ours
and we instead refer the reader to~\cite{KuhnMW16,BalliuBHORS19,Suomela20}. We only mention that these
results are also based on a technique called \emph{round elimination} which is conceptually similar to communication complexity round eliminations (as in~\cite{MiltersenNSW95} or our paper)
but technically they appear to be entirely disjoint (see~\cite{Suomela20} for more details on distributed round elimination).  

Much more closer to our work is the distributed sketching (a.k.a., \emph{broadcast} Congested Clique) lower bound of~\cite{AssadiKZ22} (building on~\cite{AlonNRW15}). 
In their model, there is a player for each vertex of the graph that can only see incident edges of this vertex (so $n$ players in total). In each round, each player, 
 \emph{simultaneously} with others, can send $\poly\!\log{\!(n)}$-size messages which will be seen by everyone in the next round. 
 \cite{AssadiKZ22} proved that $\Omega(\log\log{n})$ rounds of communication is needed in this model to compute an MIS. 
 The distributed sketching model is algorithmically weaker than the communication model of~\Cref{res:cc} in three aspects: $(i)$ (much) more players, $(ii)$ much less interaction per round, and, $(iii)$ the more stringent requirement of $\poly\!\log{\!(n)}$-size messages \emph{per} vertex in \emph{worst case}, as opposed to \emph{on average}, namely, $\Ot(n)$ communication per round in total. In particular, 
 this model is not even strong enough to implement the MIS algorithms of~\cite{AhnCGMW15,GhaffariGKMR18,Konrad18} (primarily due to condition $(iii)$; see~\cite{AssadiKZ22}), and the best known upper bound in this model for MIS is still the $O(\log{n})$ rounds of Luby's algorithm~\cite{Luby85}.
 
 Given the above, the lower bounds in~\cite{AssadiKZ22} do \emph{not} imply any streaming lower bounds\footnote{This is in a strong sense; for instance,~\cite{AssadiKZ22} also proves an $\Omega(\log\log{n})$-round
 distributed sketching lower bound for \emph{maximal matching}, despite this problem admitting a simple \emph{one}-pass semi-streaming algorithm~\cite{FeigenbaumKMSZ05}.}. Nonetheless, 
 they \emph{do} rule out a certain restricted family of semi-streaming algorithms: these are algorithms that compute a linear sketch of the neighborhood of each vertex \emph{individually}, 
 as was introduced in~\cite{AhnGM12} to design algorithms on \emph{dynamic} streams with both edge insertions and deletions. 
As such,~\cite{AssadiKZ22} posed the question of whether their results can be generalized to all \emph{dynamic} streaming algorithms. Our~\Cref{res:main} is now 
answering an even more stronger version of the question by proving the lower bound directly for \emph{insertion-only} streams. 

\subsubsection*{MIS in MPC and Congested-Clique models} 

The key technique behind the semi-streaming MIS algorithm of~\cite{AhnCGMW15} was proving the ``residual sparsity property'' of the \emph{randomized} greedy MIS: computing MIS of a ``small'' random subset of vertices allows for sparsifying the graph significantly. 
This technique has been quite influential for both designing other semi-streaming algorithms, e.g., for approximating matchings~\cite{Konrad18b,AzarmehrB24}, 
as well as for designing algorithms for MIS in other models, e.g.~\cite{AssadiOSS19,BehnezhadDHSS19,ChechikZ19}
for dynamic graphs,~\cite{GhaffariGKMR18,Konrad18} for MPC or Congested-Clique algorithms. 

In particular, the current best MPC and Congested-Clique algorithms~\cite{GhaffariGKMR18,Konrad18} for MIS are {direct} translation of the algorithm of~\cite{AhnCGMW15} to these models. 
Similar to the semi-streaming model, it is also an important open question in the area to obtain more efficient algorithms for MIS in these models. 
Proving unconditional lower bounds in MPC and Congested-Clique models imply strong circuit lower bounds (see, e.g.~\cite{DruckerKO14,RoughgardenVW16}) and is thus beyond the reach of current techniques. But, our~\Cref{res:main} can still act as a strong \emph{guide} here: any  algorithm in these models for breaking the $O(\log\log{n})$-round barrier must exploit the ``extra'' power of these models over semi-streaming algorithms and should not be implementable in the semi-streaming model (unlike many MPC and Congested-Clique algorithms that do have semi-streaming counterparts).  

\subsubsection*{The classic four local graph problems in the semi-streaming model} 

There has been a growing interest in studying ``locally checkable'' graph problems in the semi-streaming model, beyond their origins in distributed computing; see, e.g.~\cite{AssadiCK19a,KonradPRR19,AssadiD21,CambusKPU23,FlinGHKN24} 
and references therein. In particular, the (pass-)complexity of the  `classic four local graph problems'  is as follows: \textbf{Maximal matching}: a \emph{one}-pass semi-streaming algorithm via the greedy algorithm~\cite{FeigenbaumKMSZ05};
\textbf{$(\Delta+1)$-vertex coloring}: a \emph{one}-pass semi-streaming algorithm via palette sparsification~\cite{AssadiCK19a}; \textbf{$(2\Delta-1)$-edge coloring}: this problem is not well-defined in the semi-streaming model given its output size is more than the allowed space (although, there have been exciting recent work on this problem in the \emph{W-streaming} model
that augments the standard semi-streaming model with a \emph{write-only} tape~\cite{BehnezhadDHKS19,BehnezhadS23,ChechikMZ23,GhoshS23});
and finally, \textbf{Maximal independent set}: our~\Cref{res:main} combined with the algorithm of~\cite{AhnCGMW15} completes the picture for MIS and establishes $\Theta(\log\log{n})$ passes 
as its pass-complexity. Consequently, we now have a complete understanding of these four classical problems in the semi-streaming model. 

\subsubsection*{Optimal multi-pass streaming lower bounds} 

Finally, we remark that there has been tremendous progress recently in proving multi-pass graph streaming lower bounds~\cite{GuruswamiO16,AssadiCK19b,AssadiR20,ChenKPSSY21,ChenKPSSY21b,Assadi22,AssadiS23,KonradK24}; see~\cite{Assadi23} for a recent survey of these results. Yet, our techniques in~\Cref{res:main} are the \emph{first} ones that allow for proving (even nearly) \emph{optimal} pass lower bounds for semi-streaming algorithms in this
line of work. For instance,~\cite{GuruswamiO16,ChenKPSSY21} prove $\Omega(\log{n})$ pass lower bounds for reachability or perfect matching, and~\cite{AssadiS23} proves $\Omega(\log{(1/\eps)})$-passes for $(1-\eps)$-approximate matchings (conditionally). However, the best known algorithms require $n^{1/2+o(1)}$ and $n^{3/4+o(1)}$ passes for the first two problems~\cite{AssadiJJST22}, and $O(1/\eps^2)$~\cite{AssadiLT21} or $O(\log{(n)}/\eps)$~\cite{AhnG18,Assadi24} passes for the second one. 
We hope our techniques pave the path for proving other \emph{optimal} multi-pass lower bounds as well.

\section*{\emph{A note on the presentation of this paper}}
Given the generality---and conceptual simplicity---of our approach, we believe the ideas in this paper can be of general interest beyond the semi-streaming model. 
As such, to enhance the readability of our paper and make it more accessible to non-experts, we have provided ample intuition, discussions, and figures throughout, which has contributed significantly to the length of the paper.
However, the main parts of the proofs can be stated fairly succinctly. In particular, to obtain a complete understanding of all the key technical details, a reader familiar with the background on multi-pass semi-streaming lower bounds
can directly jump to~\Cref{sec:dup-embed} (ignoring~\Cref{sec:dup-construct} in the first read) and then check~\Cref{sec:cc-lower} followed by~\Cref{sec:analysis} until
the beginning of~\Cref{sec:proof-communication-lb}.

\section{Technical Overview}\label{sec:overview}

We now present a streamlined overview of our technical approach. As stated earlier, our main result is a (multi-party) round vs communication lower bound for MIS in~\Cref{res:cc}; we obtain
our semi-streaming lower bound from this result immediately, using the standard connection between the two models (see~\Cref{sec:cc}). Thus, in this section, 
we solely focus on the communication complexity of MIS and postpone the streaming result to the formal proofs.
We emphasize that this section
oversimplifies many details and the discussions will be informal for the sake of intuition. 

We start by presenting the intuition on what would have been the \emph{ideal} lower bound approach for this problem. We then discuss 
how we are \emph{actually} implementing this ideal approach and discuss our main technical ingredient, namely, \emph{hierarchical embeddings}, that enables this implementation.  

\subsection{An ``{Ideal}'' Lower Bound Approach}\label{sec:ideal}

Almost all \emph{round-sensitive} communication lower bounds, at a high level, rely on some form of \textbf{round elimination} arguments (see, e.g.~\cite[Section 1.3]{MiltersenNSW95}): one shows that a too-good-to-be-true $r$-round protocol 
can be used to also create a too-good-to-be-true $(r-1)$-round protocol, and continue this until ending up with a $0$-round protocol which does something non-trivial, a contradiction. These arguments can take
various forms, wherein the number of players, the domains of the inputs, or even the underlying problems may change (dramatically) from one round to another.  

One of the simplest approaches here is the following ``Round Elimination Lemma'' of~\cite{MiltersenNSW95}: an $r$-round hard instance of the problem is created by picking multiple $(r-1)$-round hard instances 
of the same problem $I_1,\ldots,I_k$ given to the players. The players also receive additional inputs that 
``point'' to one of these instances $I^* \in \set{I_1,\ldots,I_k}$ as being the \textbf{special} one in a way that, to obtain the final answer, they need to solve $I^*$. However, 
this pointer is hidden to the players at first (e.g., is only given to the player not speaking first) and so they cannot reveal enough information about $I^*$ in their first round; thus, they now need 
to solve $I^*$ which is a hard $(r-1)$-round instance in $(r-1)$ rounds, with ``almost no help'' from the first round, implying the lower bound.

Let us now see the \emph{one-round} lower bounds for MIS in~\cite{AssadiCK19a,CormodeDK19} in this context. 

\paragraph{Prior one-round lower bounds.} The proofs in~\cite{AssadiCK19a,CormodeDK19} are as follows. Alice receives \emph{two {identical} copies} of a random bipartite graph $G$ with $\Theta(n^2)$ edges.
Bob receives the following graph on the same set of vertices:  pick a pair $\set{u,v}$ of vertices on the two sides of the bipartition of $G$ and connect all vertices that are \emph{not} a copy of these vertices to each other. 
See~\Cref{fig:overview1}. Computing the MIS of such a graph requires the players to figure out if $(u,v)$ is an edge in $G$ or not. This is because picking any vertex $w$ other than $u$ or $v$ in the MIS ``forces out'' all the remaining vertices (possibly even the copies of $u$ and $v$ 
in the \emph{same} copy as $w$), except for $u,v$ of the other copy, that are now the only candidates to join the MIS (again, see~\Cref{fig:overview1}). 

\begin{figure}[h!]
\centering
\includegraphics[scale=0.4]{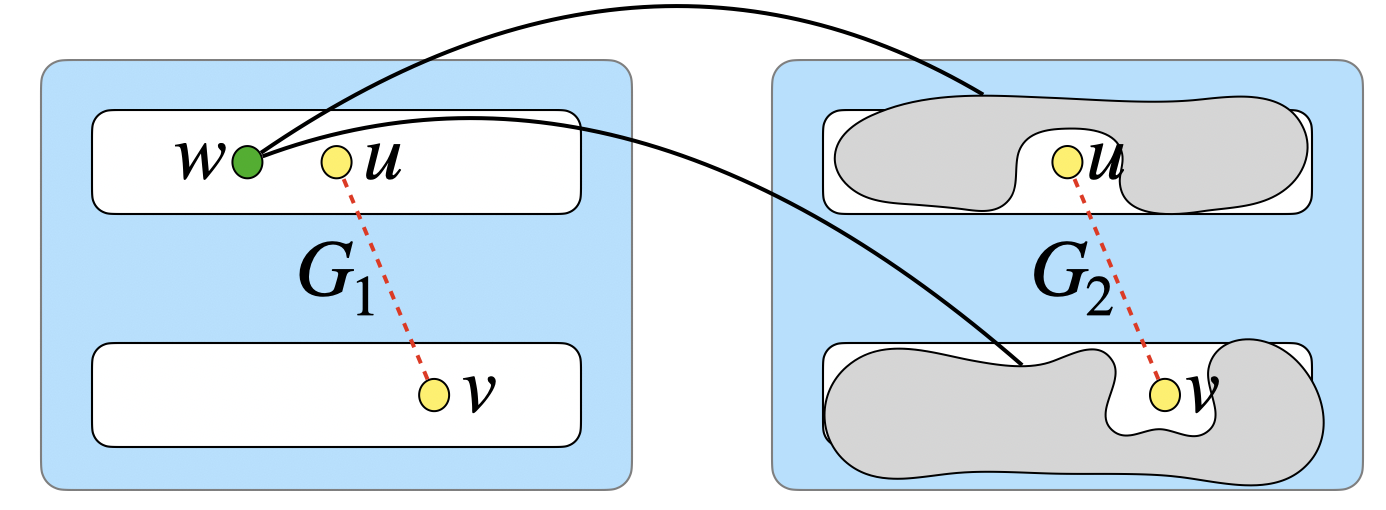}
\caption{Alice's input is $G_1$ and $G_2$ which are copies of the same  random bipartite graph $G$. Bob's input connects every vertex other than copies of $u$ and $v$ to each other. If we 
pick a vertex $w$ in the left to join the MIS, only vertices $u$ and $v$ in the right remain to join the MIS (the figure only shows some of $w$'s edges). 
}\label{fig:overview1}
\end{figure}

We can now do a round elimination to prove the lower bound\footnote{The lower bounds in~\cite{AssadiCK19a,CormodeDK19} instead directly reduce the problem to the one-round communication
complexity of the \emph{Index} problem~\cite{Ablayev93,KremerNR95}; but, they can also be proven in this (almost equivalent) way.}. Think of the $0$-round ``hard'' problem as finding the MIS of a graph on a pair of vertices which may or may not be connected. 
The $1$-round hard problem above consists of $\Theta(n^2)$ copies of this problem, with Bob's input making exactly one of them, corresponding to the pair $\set{u,v}$, the special one; for the players to solve this $0$-round problem, 
Alice should have communicated enough information about $\set{u,v}$, which is not possible with $o(n^2)$ size messages, given Alice does not know which of the $\Theta(n^2)$ instances is special. 

\paragraph{Beyond one-round lower bounds.} Suppose we want to extend this approach beyond a single round. We can use the same approach of using two identical copies of the same \textbf{base graph} for Alice and 
give Bob an input that connects these copies. Instead of a random graph however, we need Alice's base graph to be a collection of \emph{independent} copies of the $1$-round hard instances. Here, `independent' copies not only mean in a probabilistic sense, but also in a graph-theoretic sense: the edges of one instance should not appear between vertices of another as otherwise the instances become ``corrupted'', namely, they are no longer necessarily hard $1$-round instances. On top of this, we also need these instances to be quite \emph{large}, so that a $1$-round lower bound for them incurs a significant communication cost in terms of the \emph{entire} graph size (this is 
not a problem for the $1$-round case as a $0$-round protocol cannot have any communication no matter the input size). 

A bit more specifically, to make this strategy work even for just a two-round $n^{1+\eps}$ communication lower bound, for any constant $\eps > 0$, we need Alice's base graph to have the following properties: 
\begin{enumerate}[itemsep=0pt, leftmargin=8pt]
\item It must contain $(n^{1+\eps})$ $1$-round instances so Alice cannot reveal too much about a random instance; 
\item Each $1$-round instance needs to have $(n^{(1+\eps)/2})$ vertices so that the quadratic $1$-round lower bound implies that the second round of the 
protocol needs $(n^{(1+\eps)/2})^2 = n^{1+\eps}$ communication; 
\item The instances cannot have edges between each others' vertices so as not to ``corrupt'' one another. 
\end{enumerate}

Given that one-round instances have quadratic number of edges, satisfying all these constraints means having a base graph on $n$ vertices with $n^{2+2\eps}$ edges or even pairs of vertices!\footnote{Making 1-round instances sparser or similar modifications
does not work either, as that forces us to pick even larger $1$-round instances in Line 2, leading to the same exact contradiction.}

This highlights an inherent difficulty (basically, impossibility) of implementing this natural type of round elimination argument for proving communication lower bounds on graphs. 
In fact, almost all prior round-sensitive communication lower bounds on graphs (which are super linear in $n$) rely on more complicated arguments.
For instance, the recent lower bounds in~\cite{AssadiR20,ChenKPSSY21,Assadi22,AssadiS23} 
require proving a lower bound for much harder intermediate problems to be able to carry out the inductive argument (e.g., see \emph{permutation hiding generators} in~\cite{ChenKPSSY21,AssadiS23}; see also~\cite{Assadi23} for 
a summary of these recent advances). The only exception is the very recent work of~\cite{KonradK24} that shows how to extend a similar one-round lower bound approach (but, for the matching problem in~\cite{GoelKK12})
to \emph{two}-rounds, and prove a two-pass semi-streaming lower bound for approximate matching (we mention similarities and differences of these two approaches throughout this section).

\subsection{Our \emph{Actual} Lower Bound Approach: Hierarchical Embeddings}\label{sec:actual}

Even though the above approach seems quite hopeless, 
we show how to actually implement it
modulo a crucial twist: we will not only start with many hard instances of the $(r-1)$-round problem in our $r$-round instance, but we will 
get the players to also solve \emph{many} of these instances in the subsequent rounds (this is inspired by the aforementioned two-round lower bound of~\cite{KonradK24}). Specifically, we have $q \cdot p$ ``hard'' $(r-1)$-round  \emph{sub-instances} $\set{I_{i,j}}_{i \in [q] , j \in [p]}$, each on $b$ vertices for some \emph{large} parameters $q$, $p$, and $b$ as a function of $n$. These instances are again created based on a single base graph which is copied twice and given to the players; they are then connected via a clique structure, which identifies a \textbf{sub-instance group} $I_{t,*} := I_{t,1},\ldots,I_{t,p}$ for a random chosen $t \in [q]$ 
as special instances. Restricting ourselves again to a two-round $n^{1+\eps}$-communication lower bound, what we need from this base graph is the following:
\vspace{-0.15cm}
\begin{enumerate}[itemsep=0pt, leftmargin=10pt]
\item We need $q \cdot p = n^{1+\eps}$ so Alice's first message does not reveal too much information; 
\item We also need $p \cdot b^2 = n^{1+\eps}$ so Bob's message cannot solve \emph{all} $p$ special instances (each on $b$ vertices admitting a $1$-round $\Omega(b^2)$-communication lower bound) in the second round; 
\item \textbf{Inducedness property:} None of $q \cdot p$ sub-instances can have an edge between vertices of another sub-instance so as to not ``corrupt'' one another. Also, each sub-instance group $I_{i,*}$ for $i \in [q]$ should be \emph{vertex disjoint}
so that finding an MIS of their vertices requires solving all sub-instances. 
\end{enumerate}

Satisfying these constraints requires having a base graph on $n$ vertices with $q \cdot p \cdot b^2$ vertex-pairs. \emph{If} we could set $p \approx q \approx n/b$, then, at least we will pass the \emph{basic} test of ``vertex-pair counting''. This in turn gives us a 
base graph with $\approx n^2/b^2$ sub-instances among which $\approx n / b$ are special (each admitting a 1-round $\Omega(b^2)$-communication lower bound). Optimizing the parameters by setting $n^2/b^2 = (n/b) \cdot b^2 = n^{1+\eps}$, implies that 
we can \emph{hope for} obtaining an $\approx n^{4/3}$ lower bound (for $b \approx n^{1/3}$) for two-round algorithms this way.

\begin{Remark}\label{rem:loose-motivation}
\vspace{-5pt}
{This is precisely the tradeoff achieved by the protocols for MIS in~\cite{AhnCGMW15,GhaffariGKMR18} in two rounds. In the first round, they sample
$n/b$ vertices and send all their edges in $n^2/b^2$ communication, compute their MIS, and remove their neighbors. They prove that this reduces the maximum degree of remaining vertices to $\approx b$ only, and so in the next 
round they can send the remaining edges in $\approx n \cdot b$ communication and compute an MIS. This approach
allows us to mimic the \textbf{same exact tradeoffs} for the lower bound for any number of passes.}
\end{Remark}

\vspace{-10pt}
Let us see how a base graph with these parameters should look like, especially given that the third requirement is quite strict: 
the induced subgraph of the base graph on vertices of $I_{i,*}$ for $i \in [q]$, should be a vertex-disjoint union of graphs in $I_{i,*}$, see~\Cref{fig:overview2}. 

\begin{figure}[h!]
\centering
\includegraphics[scale=0.55]{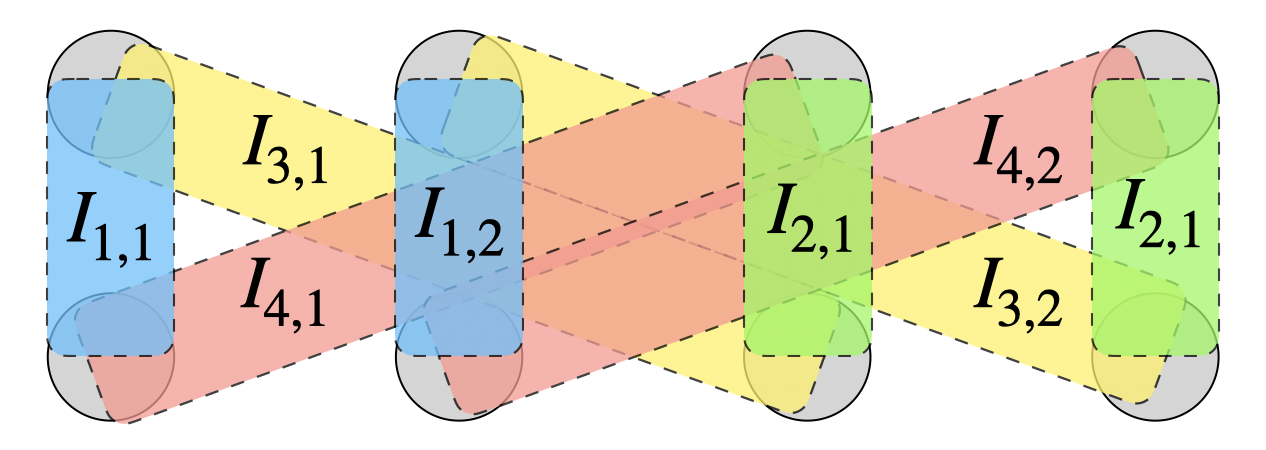}
\caption{An illustration of the base graphs discussed above. Here $q=4$ and $p=2$ and each box denotes a single sub-instance (with different colors for different $I_{i,*}$ for $i \in [4]$). 
There are no edges between vertices of $I_{i,*}$ except for edges of the same sub-instances, which themselves are vertex-disjoint. While the figure may suggest sub-instances as being bipartite graphs, that crucially cannot be true in the actual construction.}\label{fig:overview2}
\end{figure}

\paragraph{Combinatorial considerations.} Do such graphs even exist? Notice that inserting each group of sub-instances $I_{i,*}$ adds $p \cdot b^2 \approx n \cdot b$ edges to the graph, but prohibits 
roughly $\approx p^2 \cdot b^2 \approx n^2$ vertex-pairs to ever appear as an edge in the graph (to ensure the inducedness property, vertices of a single sub-instance in $I_{i,*}$ cannot have edges to any of the remaining $(p-1) \cdot b$ vertices of the sub-instances of $I_{i,*}$). This seems to suggest that by the time we have added barely a super-constant number of sub-instance groups $I_{i,*}$, i.e., when $q$ is becoming super-constant and way before $q \approx n/b$, we will run out of edges and
thus cannot continue. In other words, these graphs can only be sparse, and are hence not at all suitable for a super-linear-in-$n$ lower bound.  

The above argument however is flawed because the vertex-pairs {prohibited} by these sub-instance groups can be \emph{shared} among each other (as opposed to edges that cannot be shared). Indeed, 
succinctly stated, the  base graphs for $r$-round instances are disjoint collections of \emph{induced} subgraphs that are \emph{vertex-disjoint} unions of hard $(r-1)$-round instances. 
This definition is reminiscent of \emph{Ruzsa-Szemer{\'e}di (RS)} graphs~\cite{RuzsaS78}: these are graphs that consist of many disjoint \emph{induced matchings}.  
The same ``flawed reasoning'' also applies to RS graphs and yet, in reality, they can become quite dense, with even ${{n}\choose{2}} - o(n^2)$ edges and induced matchings of size $ p = n^{1-o(1)}$~\cite{AlonMS12}.

The difference with RS graphs for us is that we need \emph{induced subgraphs} that are much more complicated than matchings. One of our main contributions is an 
almost-optimal construction of these graphs for a large family of induced subgraphs (which can be any \emph{$n^{o(1)}$-colorable} graph).
Our construction relies on two separate ingredients: $(i)$ another family of extremal graphs referred to as 
``directed graphs requiring large number of shortcuts''~\cite{Hesse03} or ``graphs with many long disjoint shortest paths''~\cite{AbboudB16} studied extensively in the context of hopsets and spanners lower bounds
(see, e.g.,~\cite{CoppersmithE06,HuangP21,BodwinH22,LuWWX22,WilliamsXX23}); and, $(ii)$ a graph product, which we call \emph{embedding product}, that allows for ``packing'' a large number 
of complex induced subgraphs inside a single graph of part $(i)$. See~\Cref{sec:overview-dup} for a detailed overview.  

\paragraph{Information-theoretic considerations.} 
We have so far focused on the {combinatorial} aspects of our approach. The next step are the \emph{information theoretic} arguments for proving the lower bound. 
Here, our work can be seen as generalizing and unifying two previously disjoint sets of techniques for proving round-sensitive communication lower bounds (discussed in more detail in~\Cref{sec:overview-round-elimination}): 
\vspace{-5pt}
\begin{itemize}[itemsep=0pt, leftmargin=10pt]
	\item The round elimination arguments of~\cite{AlonNRW15} and~\cite{AssadiKZ22} for approximate matchings and MIS, respectively, 
	for quite different and ``algorithmically weaker'' models of communication (e.g., one cannot implement the MIS algorithms of~\cite{AhnCGMW15,GhaffariGKMR18} in these models). 
	These lower bounds in particular work with a much larger number of players (proportional to vertices in the graph) and do not allow interaction between the players in each single round. 
	
	\item The line of work, starting from~\cite{GoelKK12},  
	that use \emph{RS graphs} for proving semi-streaming lower bounds and in particular, a very recent result of~\cite{KonradK24}
	that proves a \emph{two}-pass lower bound for the matching problem. These prior works typically rely on different reductions to various communication problems---which are often different
	from the original underlying problem---say, \emph{Index} in~\cite{GoelKK12} and \emph{HiddenStrings} in~\cite{KonradK24}.
	Our approach, on the other hand, takes advantage of a ``self-reducibility'' property of the problem which allows for implementing a round elimination argument over multiple passes/rounds. 
\end{itemize}
\vspace{-12pt}
\paragraph{The hierarchical embedding technique.} Let us now reiterate how our \emph{hierarchical embedding} technique works at a high level. 
For technical reasons, our $r$-round lower bound will be against $(r+1)$-party (number-in-hand) protocols instead of just two parties\footnote{We suspect the lower bound also works for two players (and possibly without much modifications). But, since 
``few''-party protocols already imply semi-streaming lower bounds---and one needs more than two parties to achieve the sharp bounds on passes (see, e.g.~\cite{GuhaM08}) achieved in~\Cref{res:main}---we did not pursue that direction in this paper.}.

We pick $q_r \cdot p_r$ hard $(r-1)$-round sub-instances with $r$-players in a set $\mathcal{I}:=\set{I_{i,j}}_{i \in [q_r], j \in [p_r]}$, 
each on $b_r$ vertices, for parameters $q_r, p_r$, and $b_r$ to be determined soon.

We use our \textbf{combinatorial constructions} to {embed} all these instances into a {single} base graph $G$ 
with the inducedness property mentioned earlier (and then copied identically twice as before). For every $a \in [r]$, the input of the $a^{\textnormal{th}}$ player in the $r$-round instance is the collection of the inputs of all $a^{\textnormal{th}}$ players in sub-instances
in $\mathcal{I}$. The $(r+1)^{\textnormal{th}}$ player in the $r$-round instance gets a clique-subgraph as before that ``points'' to one of the sub-instance groups $I_{t,*}$ for a random $t\in [q_r]$. 

This way, the input consists of a \emph{hierarchy} of different-round instances of the problem, and the players have to {solve} multiple root-to-leaf paths of this hierarchy to obtain a solution to the entire input as well.  
Our \textbf{round elimination argument} then shows that the first round of the protocol cannot reveal enough information about the special sub-instance group to allow solving \emph{all} of them in the remaining $(r-1)$ rounds. 
See~\Cref{fig:hierarchical} for an illustration. 

\begin{figure}[h!]
	\centering
	\subcaptionbox{The hierarchical instances form a conceptual tree, and the players need to ``solve'' certain root-to-leaf paths.}%
	[1\linewidth]{
		\includegraphics[scale=0.3]{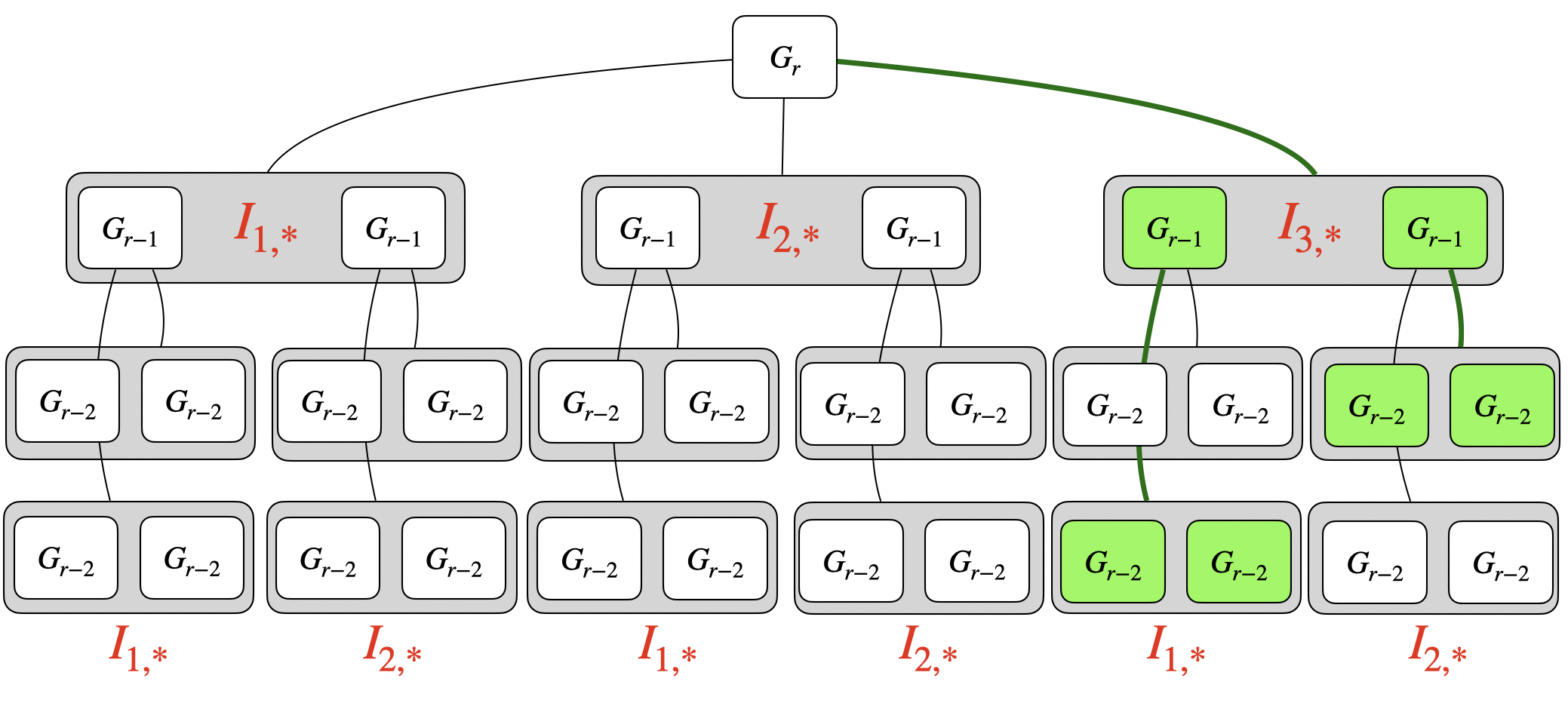}
	} 
	
	\subcaptionbox{The input graph is also a hierarchy of \emph{induced} subgraphs (this figure shows these graphs in a vertex-disjoint way for simplicity of drawing, but in reality, vertices of these graphs
	are highly ``tangled''). }%
	[1\linewidth]{
	\includegraphics[scale=0.28]{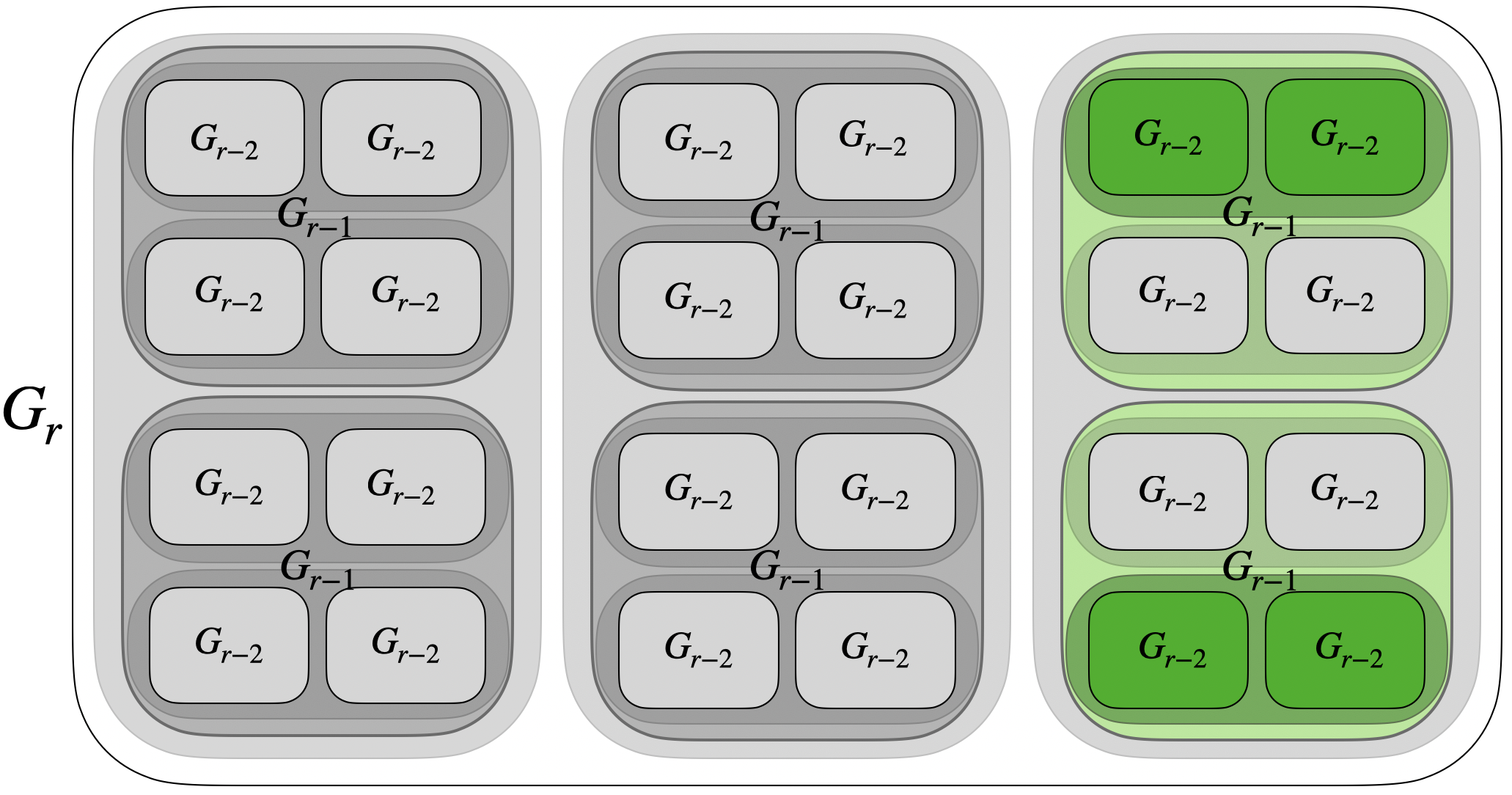}
	}
	\caption{An illustration of the hierarchical embedding technique.}
	\label{fig:hierarchical}
\end{figure}

Finally, the choice of the parameters $q_r,p_r$ and $b_r$ is as follows. Let $C(n,r)$ denote the lower bound on the communication cost we hope to prove for $r$-round protocols on $n$-vertex graphs. Then, 
\begin{align*}
	&q_r \cdot p_r \approx C(n,r): \text{so players cannot reveal much about $p_r$ special sub-instances in the first round}  \\
	&p_r \cdot C(b_r , r-1) \approx C(n,r): \text{so solving the $p_r$ special $b_r$-size sub-instances in $r-1$ rounds is hard}. 
\end{align*}
Picking these parameters optimally, while accounting for the restrictions imposed by the combinatorial construction (and an $n^{o(1)}$-``loss'' on size of these graphs compared to absolute best-imaginable bounds), 
gives us our $\approx n^{1+1/(2^{r}-1)}$ communication lower bound for $r$-round $(r+1)$-party protocols. 

\begin{Remark}
As should be clear from this discussion, our hierarchical embedding technique is quite general and is not particularly tailored to the MIS problem. Hence, it seems quite plausible to extend this approach to various other graph problems as well, 
making hierarchical embeddings a general technique for proving multi-pass graph streaming lower bounds. 
\end{Remark}

\subsection{Combinatorial Part: Dense Graphs with Many Induced Subgraphs}\label{sec:overview-dup}

We now discuss the construction of our \emph{base graphs} for our hierarchical embeddings. Recall that we need an $n$-vertex graph $G$ with $q \approx n/b$ \textbf{subgraph groups} $\set{H_{i,*}}_{i \in [q]}$, each with $p \approx n/b$ \emph{vertex-disjoint} subgraphs $\set{H_{i,1},\ldots,H_{i,p}}_{i \in [q]}$ on $b$ vertices each. In addition, the \textbf{inducedness property} requires that there are no other edges between the vertices of each subgraph group (other than the edges of the subgraphs themselves). 
Refer back to~\Cref{fig:overview2} from earlier for an illustration.

Our starting point is the recent RS-graph based constructions in streaming lower bounds in~\cite{AssadiS23,KonradK24}. In particular,~\cite{AssadiS23} created a graph with ``high entropy permutations'' in place of edges of RS graphs and~\cite{KonradK24} took this even further with a graph that has many ``small'' RS graphs embedded inside one larger RS graph (roughly speaking, by changing
each edge of the outer RS graph, with a copy of the inner RS graph). These constructions, however, are still insufficient for us as they can support quite limited ``inner'' graphs, while 
our hierarchical embeddings require a graph consisting of highly complex subgraphs corresponding to hard instances of MIS with one less round. For instance, both constructions in~\cite{AssadiS23,KonradK24} inherently 
can only support \emph{bipartite} inner graphs while our hard instances definitely cannot be bipartite as finding MIS of bipartite graphs is easy with $\approx n$ communication in one round. 

Our constructions work with an \emph{outer} graph, which, instead of induced matchings in RS graphs, contains many \textbf{Unique Path Collections (UPCs)} that can {almost} be seen as ``stretching'' an induced 
matching (a collection of paths of length $1$), to longer paths (see~\Cref{def:upc} for the formal definition and~\Cref{fig:overview5,fig:upc} for illustrations). The `almost' part however is due to a subtle difference: the property of UPCs 
goes beyond only their \emph{induced} subgraphs -- instead, these paths are not only induced on their vertices, but also are the \emph{unique} shortest path between their endpoints even 
if one uses outside vertices. See~\Cref{fig:overview5} for an example. 
\begin{figure}[h!]
\centering
\includegraphics[scale=0.42]{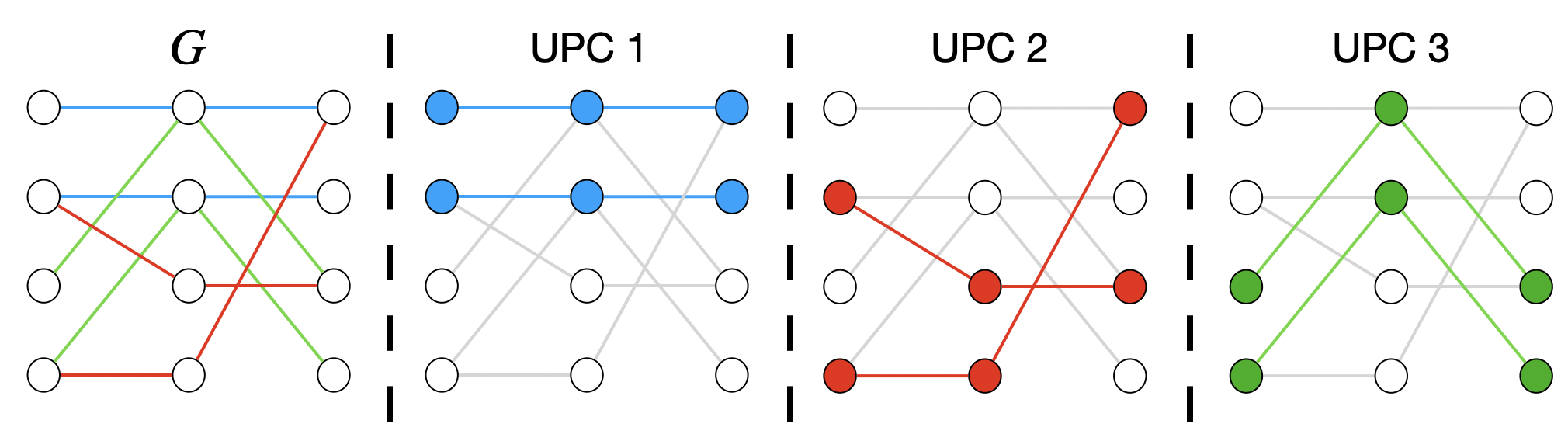}
\caption{An example of UPCs in the graph $G$ on the left. In each UPC, there is no path of length $2$ from a starting vertex to an end vertex unless the path belongs to the UPC itself.}\label{fig:overview5}
\end{figure}

Such outer graphs have been already studied extensively in the context of shortcutting sets and spanners; see, e.g., ~\cite{Hesse03,AbboudB16,CoppersmithE06,HuangP21,BodwinH22,LuWWX22,WilliamsXX23} and references therein. 
For our purpose, we need a slight variation that follows from the approach in~\cite{AbboudB16}, which is itself based on the original RS graph constructions in~\cite{RuzsaS78} (the variation allows for ``packing'' 
many paths \emph{together} in a UPC, as opposed to the \emph{path-wise} guarantee of prior work). Nevertheless, it appears that the way we use these graphs is completely different 
from prior work. We are also not aware of any prior applications of these graphs to streaming or communication lower bounds, or to the maximal independent set problem in any other computational model\footnote{In terms of parameters, we are working with almost the exact opposite of prior work: we aim to \emph{maximize} their density and size of UPCs---crucially of size $n^{2-o(1)}$ and $n^{1-o(1)}$, respectively---while \emph{minimizing} their diameter---crucially of size $n^{o(1)}$---unlike prior work that aim to achieve a diameter of $n^{\Omega(1)}$; we only \emph{have to} increase the diameter per each level of our hierarchical embedding to accommodate the recursive strategy.}. 
 
 Having obtained these outer graphs, our \textbf{embedding product} works by replacing, for every $i \in [q]$, each \emph{path} in the $i^{\textnormal{th}}$ UPC of these graphs with the subgraphs in $H_{i,*}$ that we would like to have in our base graph.
 The only requirement we need from subgraphs $\set{H_{i,j}}_{i,j}$ is that they should have a small \emph{chromatic number}; in particular, as long as they are \emph{$k$-colorable} graphs, we can also embed them in UPCs with paths of length $k-1$ by mapping each of their color classes to one of the vertices of the path in a careful way. The strong guarantee of our outer graphs can then be used to argue that these $k$-colorable graphs cannot have inserted an edge between vertices of each other (i.e., break
 the inducedness property), as such an edge can be traced backed to a ``shortcutting path'' in the outer graph which cannot exist. See~\Cref{fig:overview6} for an illustration. 
 \begin{figure}[h!]
\centering
\includegraphics[scale=0.30]{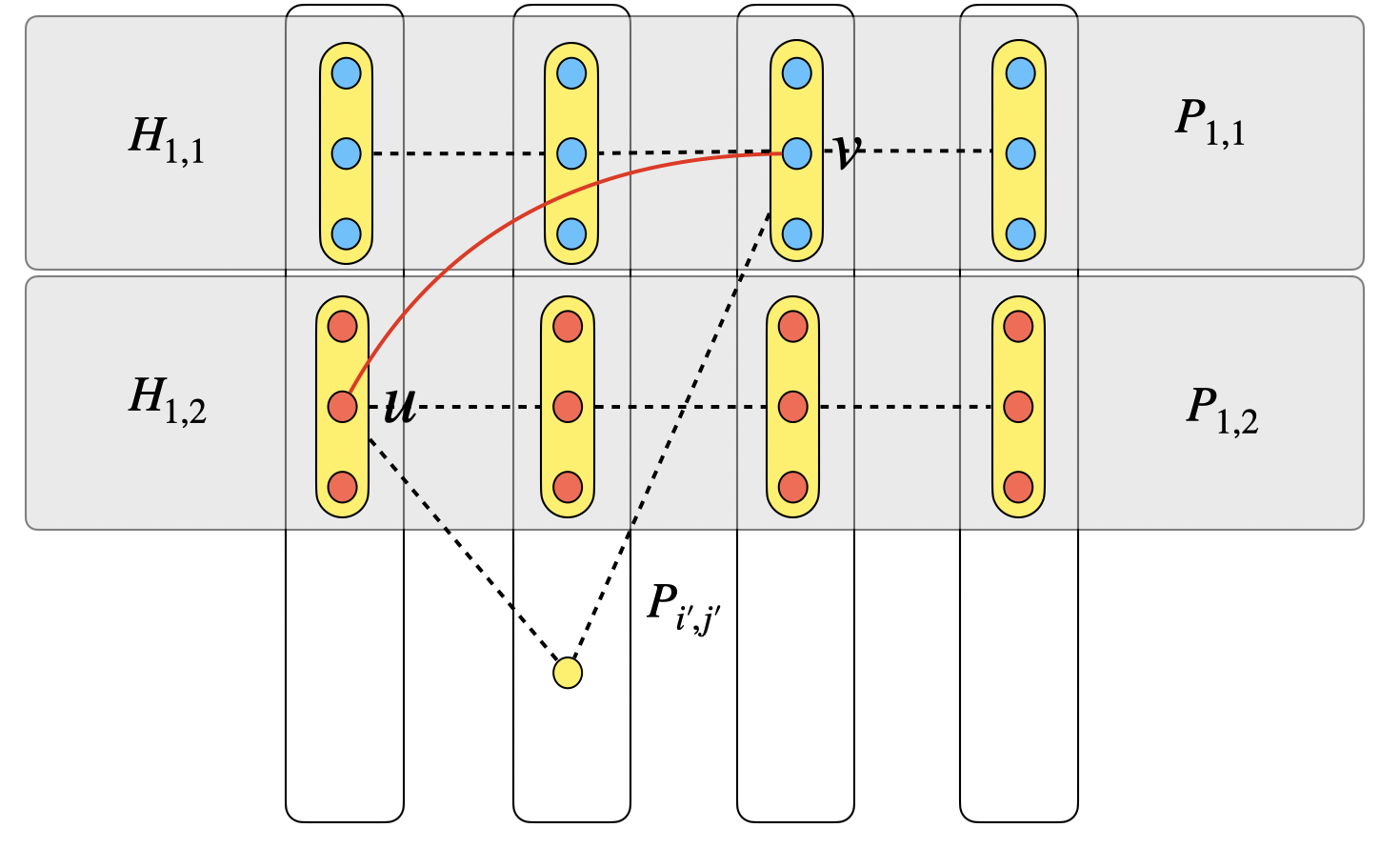}
\caption{If the solid (red) edge between $u$ and $v$ is inserted by any other $k$-colorable graph $H_{i,j}$ for $i \neq 1$, we will also find a shortcut between ``outer'' versions of $u$ and $v$, which cannot exist in the outer graph.}\label{fig:overview6}
\end{figure}

Finally, in our application to MIS, we start with $2$-colorable graphs as our base case instances that are hard for $0$-round protocols. The identical copying plus adding a clique between some parts of the two copies that we mentioned earlier, with some modifications, leads to a $4$-colorable hard instance for $1$-round protocols. Following the same pattern thus leads to $2^{r+1}$-colorable graphs for $r$-round protocols. Our construction allows for setting $q \approx p \approx n^{1-o(1)}$
for embedding any $n^{o(1)}$-colorable graphs, and since we will only need to continue the lower bound for $r \approx \log\log{n}$ rounds, we will have more than enough room to carry out the entire plan.

\subsection{Information-Theoretic Part: Round Elimination via Direct-Sum}\label{sec:overview-round-elimination}

We will now discuss the information-theoretic parts of our lower bound. The combinatorial parts of our lower bound
gives us a quite strong boost here: we can now \emph{solely} focus on the ability of the protocols in ``traversing'' the hierarchical tree of instances we discussed earlier (see~\Cref{fig:hierarchical}), 
without worrying about \emph{any} interference between underlying instances of the problem (there is none owing to the inducedness property of our construction). In other words, 
the communication lower bound here is practically independent of the MIS problem and is {essentially} (although not quite) only about the abstract problem of traversing the hierarchical tree: 
\begin{itemize}[itemsep=0pt]
	\item The first $r$ players receive  $q_r \cdot p_r$ ``hard'' $(r-1)$-round sub-instances $\mathcal{I} := \set{I_{i,j} \mid i \in [q_r], j \in [p_r]}$; the last player receives an index $t \in [q_r]$; 
	\item The goal for the players is to solve \emph{all} \textbf{special} sub-instances of $I_{t,*}$ in $r$ rounds. 
\end{itemize}
 
This problem is similar in spirit to the \emph{tree-pointer jumping} problem of~\cite{ChakrabartiCM08}.  
The crucial difference is that  tree-pointer jumping only has \emph{one} special instance per level (think of $p_r = 1$ above), whereas for us, 
we need the players to solve \emph{many} ``small'' sub-instances---which, individually, are not hard to solve within the budget of even the $(r-1)$-round protocols (this is precisely our deviation from 
the ideal-type lower bounds to our actual lower bound). A similar problem was also very recently introduced by~\cite{KonradK24} {only} for $2$-round protocols as their \emph{HiddenStrings} problem. 
However, the techniques in~\cite{KonradK24}---based on the two-way communication complexity of the Index problem~\cite{JainRS09}---are inherently only for $2$-round protocols and do not generalize beyond. 

Our second technical contribution is a lower bound for this problem (in a bit less abstract form). 

\paragraph{An intuitive ``proof''.} Suppose we have a protocol $\prot$ with $s=o(p_r \cdot q_r)$ communication on the distribution $\GG_r$ of $r$-round instances. 
We can first show that, in its first round, $\prot$ can only reveal $O(s/q_r) = o(p_r)$ bits in total about all instances in $I_{t,*}$ (as these are an {unknown} $1/q_r$ fraction of the inputs in $\mathcal{I}$). 
This means that \emph{on average} $\pi$ has only revealed 
$o(1)$ bits about any one instance $I_{t,j}$ for $j \in [p_r]$. Thus, in the remaining $r-1$ rounds, it needs to solve $p_r$ copies of the $(r-1)$-round problem, where each copy, on average, is only $o(1)$ {statistical distance} away from the hard $(r-1)$-round 
distribution $\GG_{r-1}$. This second problem has a flavor
of a direct-sum result\footnote{Recall that a direct-sum result argues that solving $m$ \emph{independent} copy of a problem becomes $m$ times harder than solving one copy; in this context, solving $m$ copies needs $m$ times more communication than one copy.}. Hence, using a direct-sum style argument (say, as in~\cite{BarakBCR10}), 
we should also be able to say that in the remaining $(r-1)$ rounds, $\prot$ needs to communicate $p_r$ times the communication cost of solving one $(r-1)$-round hard instance. 
This gives us the tradeoff stated at the very end of~\Cref{sec:actual}.

Unfortunately, making this intuition precise is hindered by the fact that in this direct-sum scenario, the $p_r$ $(r-1)$-round instances, in addition to their marginal deviation of $o(1)$ from $\GG_{r-1}$, also 
have become \emph{correlated} with each other for a total of $o(p_r)$ bits. In other words, the first message of $\prot$ can also change their \emph{joint} distribution by revealing $o(p_r)$
bits about them collectively.  
In general, one cannot hope for a direct-sum result to hold over such non-independent instances. 

\paragraph{Our actual proof.} To tackle this obstacle, we apply an \emph{information complexity direct-sum} to the previous round elimination arguments of~\cite{AlonNRW15,AssadiKZ22}. 
Specifically, we show how to obtain a $(r-1)$-round protocol $\prot'$ for a \emph{single} hard $(r-1)$-round instance using the protocol $\prot$: 
\begin{ourbox}
\textbf{High level description of the protocol $\prot'$ for $I_{r-1} \sim \GG_{r-1}$:}
\vspace{-0.15cm}
\begin{enumerate}[itemsep=0pt]
	\item \textbf{Sampling step:} \emph{Sample} the first message $\Prot_1$ of $\prot$ and a family of sub-instances $\mathcal{I}$ from $\GG_{r}$, i.e., $(\Prot_1,\mathcal{I}) \sim \GG_{r}$. Then, 
	sample $k \in [p_r]$ uniformly at random and \emph{replace} $I_{t,k}$ in $\mathcal{I}$ with the input instance $I_{r-1}$ (where $t$ is the index of the special sub-instance group of $\mathcal{I}$). 
	\item \textbf{Simulation step:} Run the protocol $\prot$ on the modified $\mathcal{I}$ from its \emph{second} round onwards and output the answer of $\prot$ on the special sub-instance $I_{t,k}$. 
\end{enumerate}
\end{ourbox}

\vspace{-0.15cm}
The analysis consists of two parts, one for each step of the protocol. The first part shows how the players can sample the joint variables $(\Prot_1,\mathcal{I})$ and embed $I_{r-1}$ inside $\mathcal{I}$---which crucially is independent of $\Prot_1$ 
in this sampling, although it should \emph{not} be in the distribution of by $\prot$---using a distribution that is \emph{close} to $\GG_{r} \mid I_{i,j} = I_{r-1}$. 
This sampling is inspired by~\cite{AlonNRW15,AssadiKZ22} with various modifications to account for the considerable differences between the two communication models. 
Proving this part is based on a similar argument as in the intuitive ``proof'' above that says that the marginal distribution of $I_{i,j}$ is not affected that much by the first-round message $\Prot_1$. 

The second part of the argument, however, deviates entirely from~\cite{AlonNRW15,AssadiKZ22}, because in their communication model, the new protocol $\prot'$ already has the desired 
communication cost for an $(r-1)$-round protocol (given their focus on maximum message-size \emph{per} vertex of the graph); thus, their arguments finish at this point. 
However, for us, the protocol $\prot'$ has the same communication cost as $\prot$ even though it is being run on a \emph{much smaller} instance. Hence, there is no contradiction that 
$\prot'$ can solve $\GG_{r-1}$ in $(r-1)$ rounds as its communication cost is way above the bar anyway. 

To handle this part, we instead work with the \emph{information cost} of the protocols (in place of their communication cost) for a natural \emph{multi}-party generalization of two-party \emph{external} information cost in~\cite{BarakBCR10} 
(see~\Cref{sec:ic} for the definition). Our approach is then to show that even though $\prot'$ may communicate as much as $\prot$, its information cost is a factor $1/p_r$ smaller than the information cost of $\prot$. 
This is because $I_{r-1}$ is embedded in one of the $p_r$-many special sub-instances randomly whose identity is \emph{hidden} to the protocol $\prot$. But, this creates another challenge: information cost, unlike communication, 
is a function of the underlying distribution, and in $\prot'$, the distribution of the inputs that $\prot$ is simulated on is not the original distribution $\GG_{r}$, only statistically close to it. 
Thus, even though we know that $\prot$ has a \emph{low} information cost on $\GG_r$, we cannot guarantee it also has a low information cost in the simulation step
and conclude that $\prot'$ has a low information cost\footnote{In general, information cost of a protocol on two statistically close distributions can be vastly different.}. 

The challenge outlined above is not unique to our problem and has been studied before also, e.g., for direct-\emph{product} results in~\cite{BravermanRWY13a,BravermanRWY13b,BravermanW15} but typically
for \emph{internal} information cost and two-party protocols. For our purpose, we use a \emph{message compression} technique due to~\cite{HarshaJMR07} (see also~\cite{BravermanG14}) to reduce the \emph{communication} cost 
of the protocol $\prot$ in the simulation step down to its information content. This leads to some minimal loss in the parameters that can be easily handled
in our setting. But now, the change in the underlying distribution of the simulation step cannot affect the communication cost, which allows us to obtain a too-good-to-be-true protocol $\prot'$ for $\GG_{r-1}$, which leads to our desired contradiction.

\medskip

This concludes our overview.  In the rest of the paper, we formalize the ideas and approaches discussed in this section and prove~\Cref{res:main} and~\Cref{res:cc} formally.

\section{Preliminaries}

\paragraph{Notation.} For any $t \geq 1$, define $[t] := \set{1,\ldots,t}$. Some times, for a clarity of exposition, we may write $(a) \uparrow (b)$ to denote $a^b$. 
For any tuple $(x_1,\ldots,x_t)$ and $i \in [t]$,  
we define $x_{<i} := (x_1,\ldots,x_{i-1})$. We define $x_{>i}$ and $x_{-i}$, analogously. For a set of tuples $\set{(x,y) \mid x \in X, y \in Y}$ for some 
sets $X$ and $Y$, and $x \in X$, we define $(x,*) := \set{(x,y) \mid y \in Y}$; we define $(*,y)$ for $y \in Y$ analogously. 


When there is room for confusion, we use sans-serif letters for random
variables (e.g. $\rA$) and normal letters for their realizations (e.g. $A$). We use $\distribution{\rA}$ and $\supp{\rA}$ to denote the distribution of $\rA$ and its support, respectively. 

For random variables $\rA,\rB$, we use $\en{\rA}$ to denote the \emph{Shannon entropy}
 and $\mi{\rA}{\rB}$ to denote the \emph{mutual information}. For two distributions $\mu$ and $\nu$ on the same support, $\tvd{\mu}{\nu}$ denotes their \emph{total variation distance} 
 and $\kl{\mu}{\nu}$ is their \emph{KL-divergence}.~\Cref{app:info} contains the definitions of these notions and standard
information theory facts that we use in this paper. 

\subsection{(Multi-Party) Communication Complexity}\label{sec:cc}

We work in the standard multi-party number-in-hand communication model; we provide some basic definitions here and refer the interested reader to the excellent textbooks~\cite{KushilevitzN97,RaoY20} on communication complexity for more details. 

There are $k \geq 1$ players $P_1,\ldots,P_k$ in this model who receive a partition of the edges of an input graph $G=(V,E)$. 
The players follow a protocol $\pi$ to solve a given problem on $G$, say, finding an MIS.
They have access to a \emph{shared} tape of randomness, referred to as \emph{public randomness}, in addition to their own \emph{private randomness}.  
In each \emph{round}, $P_1$ writes a message on a board visible to all parties, followed by $P_2$, all the way
to $P_k$. Each message can only depend on the private input of the sender, content of the blackboard, public randomness, and private randomness of the sender. 
At the end of the last round, the last player $P_k$ writes the answer to the blackboard. 

\begin{Definition}\label{def:cc}
	For any protocol $\prot$, the \textbf{communication cost} of $\prot$, denoted by $\cc{\prot}$, is defined as the worst-case (maximum) length of messages, measured in bits, communicated by all players 
	on any input. We assume that all \textbf{transcripts}, i.e., the set of all messages written onto the blackboard, in $\prot$ have the same worst-case length (by padding).
\end{Definition}

\paragraph{Communication complexity and streaming.} The following standard result---dating back (at the very least) to the seminal paper of~\cite{AlonMS96} that introduced the streaming model---relates communication cost of protocols and space of streaming algorithms. 

\begin{proposition}[cf.~\cite{AlonMS96}]\label{prop:cc-stream}
	For any $p \geq 1, s \geq 1$, and $\delta \in (0,1)$, suppose there is a $p$-pass $s$-space $\delta$-error streaming algorithm $A$ for some problem $\PP$. Then, for any integer $k \geq 1$, there also exists a $k$-party protocol $\prot$ with $p$ rounds, communication cost $\cc{\prot} \leq p \cdot k \cdot s$, and error probability at most $\delta$, for the same problem $\PP$. 
\end{proposition}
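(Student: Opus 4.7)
The plan is to carry out the classical ``one pass equals one round'' simulation, where each player plays the role of the streaming algorithm on its own edges and communicates the memory contents to the next player. Concretely, I will partition the edges of $G$ arbitrarily into $k$ disjoint sets $E_1,\dots,E_k$ with $E_j$ given to $P_j$, and consider the fictitious stream obtained by concatenating $E_1,E_2,\dots,E_k$ in that order. The players will use the public randomness tape to agree on a shared value of the random bits used by $A$, so that every simulated execution of $A$ is consistent across the players.

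Next, I will specify the message of each player in each round. In round $i$ (which will correspond to the $i$-th pass of $A$), player $P_1$ reads from the blackboard the state $M$ that $P_k$ wrote at the end of round $i-1$ (for $i=1$ there is no such state and $P_1$ uses the initial empty state of $A$), runs $A$ starting from $M$ over the edges in $E_1$, and writes the resulting memory contents $M_1^{(i)}$ (of length at most $s$ bits) on the blackboard. Then $P_2$ reads $M_1^{(i)}$, continues $A$ from there on $E_2$, writes $M_2^{(i)}$, and so on up to $P_k$. After $p$ such rounds the final state held by $P_k$ is exactly the state reached by $A$ after $p$ complete passes over the concatenated stream; since the problem $\PP$ is order-oblivious in the streaming sense assumed for $A$, this state must produce a correct answer with probability at least $1-\delta$, and $P_k$ writes this answer as part of its last message. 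Thus the error probability of $\prot$ is at most $\delta$.

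For the communication bound, I will observe that every message written on the blackboard is either a memory state of $A$ or the final output, each of size at most $s$ bits by the space assumption on $A$ (the output being produced by the algorithm out of its memory). In each of the $p$ rounds exactly $k$ such messages are sent, one per player, giving total communication $\cc{\prot}\leq p\cdot k\cdot s$; the ``same length by padding'' convention from \Cref{def:cc} is preserved since we may pad every memory state to exactly $s$ bits. This yields the claimed protocol.

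The only genuine subtlety I expect to encounter is that $A$'s guarantee is given against an \emph{adversarial} ordering of the stream, which justifies the freedom to reorder edges by player index; without this, the fictitious stream obtained by concatenating $E_1,\dots,E_k$ would not be a legal input for $A$. A second minor point is the handling of randomness: because $A$ is allowed to be randomized, I rely on the public randomness tape (and the fact that private randomness would only further increase the set of allowable protocols) to make sure every player uses exactly the same random bits when it simulates $A$, so that the sequence of memory states is well-defined and the $\delta$-error bound transfers verbatim. Both of these points are standard and should not pose a real obstacle.
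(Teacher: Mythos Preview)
Your proposal is correct and follows essentially the same approach as the paper: both simulate the streaming algorithm pass-by-pass, with each player continuing $A$ on its own edges from the memory state posted by the previous player, yielding the $p\cdot k\cdot s$ communication bound. Your added remarks on adversarial stream ordering and shared randomness are reasonable elaborations of points the paper leaves implicit.
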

\begin{proof}
	Consider the stream $\sigma = \sigma_1 \circ \cdots \circ \sigma_k$ where $\sigma_i$ is the input to the player $P_i$ in $\prot$ (ordered arbitrarily in the stream). 
	Player $P_1$ runs $A$ on $\sigma_1$ and writes the memory content on the board, which allows player $P_2$ to continue running $A$ on $\sigma_2$, and so on and so forth. 
	This allows the players to run one pass of $A$ using communication cost at most $k \cdot s$. 
	The players can continue this in $p$ rounds, faithfully simulating the $p$ passes of the algorithm, and at the end $P_k$ can read the answer of $A$ from the memory and outputs it on the blackboard. 

	This way, $\prot$ will have the same probability of success as $A$, uses at most $p \cdot k \cdot s$ communication and $p$ rounds of communication. 
\end{proof}
\Cref{prop:cc-stream} allows us to translate communication lower bounds into streaming lower bounds. In particular, this result reduces proving~\Cref{res:main} to proving~\Cref{res:cc}. 

\subsection{Information Cost and Message Compression}\label{sec:ic}
We also work with the notion of \emph{information cost} of protocols that originated in~\cite{ChakrabartiSWY01} and has since
found numerous applications (see, e.g.~\cite{Weinstein15} for an excellent survey of this topic). 

There are various definitions of information cost that have been considered depending on the application. The definition we use---which is a natural multi-party variant of the standard \emph{external information cost}~\cite{BarakBCR10}---is best suited for our application.  

\begin{Definition}\label{def:ext-info}
	For any multi-party protocol $\prot$ whose inputs are distributed according to the distribution $\mu$, the \textbf{(external) information cost} is defined as: 
	\[
		\ic{\prot}{\mu} := \mi{\rG}{\rProt \mid \rR},
	\]
	where $\rG$ denotes the random variable for the input graph sampled from $\mu$, $\rProt$ denotes the set of all messages written on the board, and $\rR$ is the public randomness. 
\end{Definition}
In words, multi-party external information cost measures the information revealed by the entire protocol about the entire graph to an \emph{external} observer. 
The following result follows from~\cite{BarakBCR10}. 
\begin{proposition}[cf.~\cite{BarakBCR10}]\label{prop:ic-cc}
	For any multi-party protocol $\prot$ on any input distribution $\mu$, 
	\[
		\ic{\prot}{\mu} \leq \cc{\prot}. 
	\]
\end{proposition}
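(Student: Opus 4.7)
The plan is to prove this standard comparison between information and communication by unpacking the definition of mutual information and then using the length bound on the transcript. By \Cref{def:ext-info}, we have $\ic{\prot}{\mu} = \mi{\rG}{\rProt \mid \rR}$, and by the standard identity for mutual information,
\[
\mi{\rG}{\rProt \mid \rR} = \en{\rProt \mid \rR} - \en{\rProt \mid \rR, \rG}.
\]
Since Shannon entropy is non-negative (this is a basic information-theoretic fact presumably collected in \Cref{app:info}), the subtracted term can only help us, so it suffices to bound $\en{\rProt \mid \rR}$ from above by $\cc{\prot}$.

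For the entropy bound, I would invoke \Cref{def:cc}, which specifies that all transcripts of $\prot$ have the same worst-case length (by padding), bounded by $\cc{\prot}$ bits. Thus $\rProt$ is supported on a set of size at most $2^{\cc{\prot}}$, so its Shannon entropy satisfies $\en{\rProt} \leq \cc{\prot}$. Conditioning only decreases entropy, giving $\en{\rProt \mid \rR} \leq \en{\rProt} \leq \cc{\prot}$. Combining the two inequalities yields $\ic{\prot}{\mu} \leq \cc{\prot}$, as desired.

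There is really no obstacle here: the proposition is a textbook fact about external information cost whose proof is a one-line chain of inequalities once the definitions are in place. The only subtlety worth flagging is that the padding convention in \Cref{def:cc} is essential — without a uniform length bound on every transcript, one would need to bound the entropy via the expected length together with a Kraft-inequality style argument; with padding, the support-size bound is immediate. I would present the argument as a single short displayed chain to keep it uncluttered.
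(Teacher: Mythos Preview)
Your proposal is correct and matches the paper's own proof essentially step for step: expand $\mi{\rG}{\rProt \mid \rR}$ as $\en{\rProt \mid \rR} - \en{\rProt \mid \rG,\rR}$, drop the non-negative subtracted term, use that conditioning reduces entropy to get $\en{\rProt \mid \rR} \leq \en{\rProt}$, and bound $\en{\rProt} \leq \log\card{\supp{\rProt}} = \cc{\prot}$ via the padding convention in \Cref{def:cc}.
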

\begin{proof}
	We have, 
	\begin{align*}
		\ic{\prot}{\mu} \Eq{(1)} \mi{\rG}{\rProt \mid \rR} \Eq{(2)} \en{\rProt \mid \rR} - \en{\rProt \mid \rG,\rR} \Leq{(3)} \en{\rProt \mid \rR} \Leq{(4)} \en{\rProt} \Leq{(5)} \log{\card{\supp{\rProt}}} \Eq{(6)} \cc{\prot}; 
	\end{align*}
	here, $(1)$ is by the definition information cost, $(2)$ is by the definition of mutual information, (3) is by the non-negativity of entropy (\itfacts{uniform}), (4) is because conditioning can only reduce the entropy (\itfacts{cond-reduce}), 
	(5) is because uniform distribution has the highest entropy (\itfacts{uniform}), and (6) is by the definition of worst-case communication cost. 
\end{proof}

\paragraph{Message compression.} We use \emph{message compression} to reduce communication cost of protocols close to their information cost in certain settings.  
The following result is due to~\cite{HarshaJMR07} which was further strengthened slightly in~\cite{BravermanG14}. We follow the textbook presentation in~\cite{RaoY20}. 

\begin{proposition}[c.f.{\cite[Theorem 7.6]{RaoY20}}]\label{prop:msg-compress}
	Suppose Alice knows two distributions $\cA, \cB$ over the same set ${U}$ and Bob only knows $\cB$. Then, there is a protocol for Alice and Bob to sample an element according to $\cA$ 
	by Alice sending a single message of size 
	\begin{align}
			\kl{\cA}{\cB} + 2 \log (1+\kl{\cA}{\cB}) + O(1)  \label{eq:kl-msg-compress}
	\end{align}
	
	bits \underline{in expectation}. This protocol has no error. 
\end{proposition}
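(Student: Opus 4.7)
The plan is to implement the classical correlated rejection sampling protocol of Harsha et al., using the public randomness to coordinate candidate draws between Alice and Bob. First, Alice and Bob use the shared random tape to jointly materialize an i.i.d.\ sequence of pairs $(X_1, Y_1), (X_2, Y_2), \ldots$, where each $X_i$ is drawn from $\cB$ (Bob can do this because he knows $\cB$) and each $Y_i$ is independently uniform on $[0,1]$. Alice, who additionally knows $\cA$, applies a deterministic acceptance rule based on the likelihood ratio $r(x) := \cA(x)/\cB(x)$ to select a single index $I^{*}$ and transmits $I^{*}$ to Bob using a prefix-free integer code; Bob then outputs $X_{I^{*}}$. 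Because $I^{*}$ is a deterministic function of the public randomness and Alice's selection rule, this protocol commits no error.

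The acceptance rule is the standard multi-level rejection scheme: partition the candidates into geometric ``levels'' indexed by integers $k \geq 0$ according to the value of $Y_i$, and at level $k$ accept $X_i$ whenever $Y_i \cdot \cB(X_i) \leq \cA(X_i) \cdot 2^{-k}$, with a priority ordering over pairs $(i,k)$ so that Alice selects the ``earliest'' accepted candidate. Correctness, namely that $X_{I^{*}}$ has exactly the distribution $\cA$, follows from the standard rejection-sampling identity $\Pr[X_{I^*} = x \text{ and accepted at level } k] \propto \cA(x)$, obtained by integrating out $Y_i$ uniformly within each bucket and summing over levels.

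To bound the expected message length, Alice encodes $I^{*}$ with an Elias-style code, which uses $\log I^{*} + 2 \log (1+\log I^{*}) + O(1)$ bits for any positive integer. It therefore suffices to control $\Exp[\log I^{*}]$. A per-level analysis of the acceptance probabilities shows that, after summing over the geometric levels, the expected logarithm of the index is controlled by $\Exp_{\cA}[\log r(X)]$, which equals $\kl{\cA}{\cB}$ by definition of KL-divergence. Combining this with Jensen's inequality to bound $\Exp[\log(1+\log I^{*})]$ by $\log(1+\kl{\cA}{\cB})$ and substituting into the Elias bound yields exactly the expected-length bound~\eqref{eq:kl-msg-compress}.

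The main obstacle is the careful accounting across levels: one must handle points where $r(x)$ can be unbounded (and even infinite on a subset of the support), and verify that the total expected number of candidates examined telescopes across levels without blowing up the implicit constants. A second delicate step is the conversion from the expected index to an expected code length, since the $2\log(1+\kl{\cA}{\cB})$ correction arises precisely from the prefix-free integer encoding rather than from the sampling argument itself. Since this analysis is worked out carefully in~\cite{HarshaJMR07,BravermanG14} and packaged in the exact form we need in~\cite{RaoY20}, I would invoke their result as a black box rather than redo the technical details.
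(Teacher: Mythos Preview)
Your proposal is correct and aligns with the paper's treatment: the paper does not prove this proposition at all, citing it directly from~\cite{HarshaJMR07,BravermanG14,RaoY20} as a black box, which is exactly what you conclude at the end. Your additional sketch of the Harsha et al.\ rejection-sampling protocol is accurate and a reasonable summary of the underlying argument, though the paper itself provides none of this intuition.
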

\noindent
We note that somewhat weaker bounds on the KL-Divergence in \Cref{eq:kl-msg-compress} already suffice for our purposes. Thus, to simplify the exposition, we use,
\begin{align}
		\kl{\cA}{\cB} + 2 \log (1+\kl{\cA}{\cB}) + O(1)  &\leq 	\kl{\cA}{\cB} + 10 	\kl{\cA}{\cB}  + O(1) \tag{as $\log(1+x) \leq 5 x$ for every $x \geq 0$ } \\
		&\leq \raoconst \cdot \paren{\kl{\cA}{\cB}  + 1},\label{eq:msg-compress-final}
\end{align}
for some \textbf{absolute constant} $\raoconst  \geq 1$ that we use from now on in our proofs.

\newcommand{\Gdup}{\ensuremath{G_{\textsc{dup}}}}

\section{Disjoint-Unique-Paths Graphs and Embedding Products} \label{sec:dup-embed} 

Our lower bound for MIS relies on two graph-theoretic ingredients: $(i)$ an extremal family of graphs which can be seen as a generalization of \emph{RS graphs}~\cite{RuzsaS78}, 
and $(ii)$ a graph product which we call the \emph{embedding product} for this particular
family of graphs. We now define these constructions and establish their key properties. To continue, we need the following basic definitions (see \Cref{fig:layered}). 

\begin{Definition}\label{def:layered-graph}
	For any $n, k \geq 1$, a graph $G = (V, E)$ is called a $(n,k)$-\textbf{layered graph} if there exists an equipartition of $V$ into $k$ \emph{independent sets} $(V_1, V_2, \ldots, V_k)$ of size $n/k$ each, referred to
	as \textbf{layers} of $G$. We additionally say that $G$ is a \textbf{strictly-layered graph} if all edges of $G$ are between consecutive layers. 
\end{Definition}
\vspace{-0.2cm}
\begin{Definition}\label{def:layered-path}
	For any $n, k \geq 1$, a path $P$ in an $(n,k)$-strictly-layered graph $G$ is called a \textbf{layered path} if it has edges between consecutive layers and exactly one vertex per layer.
	For any layered path $P$, we define $\startvert(P)$ as the vertex of $P$ in the layer $V_1$ of $G$ and $\finalvert(P)$ as the vertex of $P$ in the layer $V_k$. 
\end{Definition}

\begin{figure}[h!]
	\centering
	\subcaptionbox{ A $(12,4)$-layered graph.}%
	[.4\linewidth]{
		\includegraphics[scale=0.36]{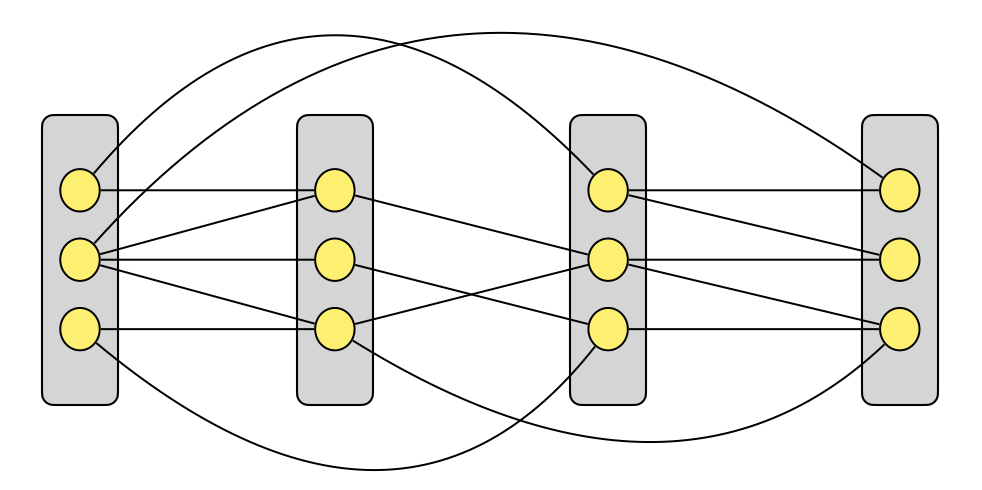}
	} 
	\hspace{0.2cm} 
	\subcaptionbox{ A $(12,4)$-strictly-layered graph and a layered path inside it.}%
	[.55\linewidth]{
	\includegraphics[scale=0.38]{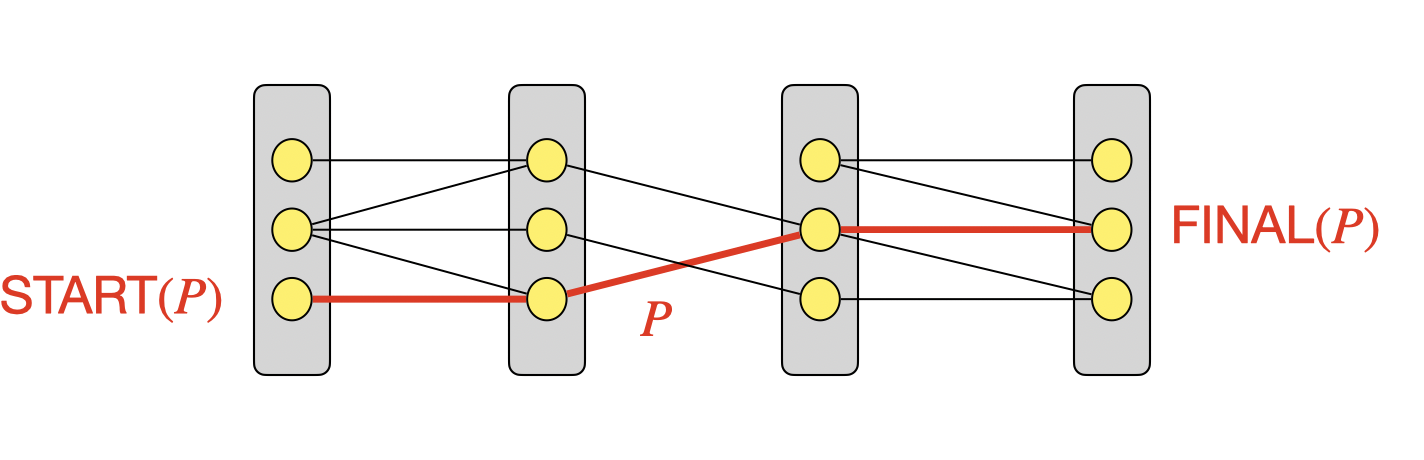}
	}
	\caption{An illustration of layered graphs and strictly-layered graphs with a layered path.}
	\label{fig:layered}
\end{figure}

\subsection{Disjoint-Unique-Paths (DUP) Graphs}\label{sec:dup}

The following definition captures the key concept we need from our extremal graphs. 

\begin{Definition}\label{def:upc}
A set of paths $\cP$ in a \emph{strictly}-layered graph $G$ is called a \textbf{unique path collection (UPC)} iff $(i)$ $\cP$ is a vertex-disjoint set of layered paths, 
and, $(ii)$ for any pair of vertices $s$ in $\startvert(\cP) := \set{\startvert(P) \mid P \in \cP}$ and $t$ in $\finalvert(\cP) := \set{\finalvert(P) \mid P \in \cP}$, 
the only layered path between $s$ and $t$ in the entire graph $G$ is a path in $\cP$ (if one exists, $s$ and $t$ are the two ends of the same path $P \in \cP$). 
\end{Definition}
\begin{figure}[h!]
\centering
\includegraphics[scale=0.35]{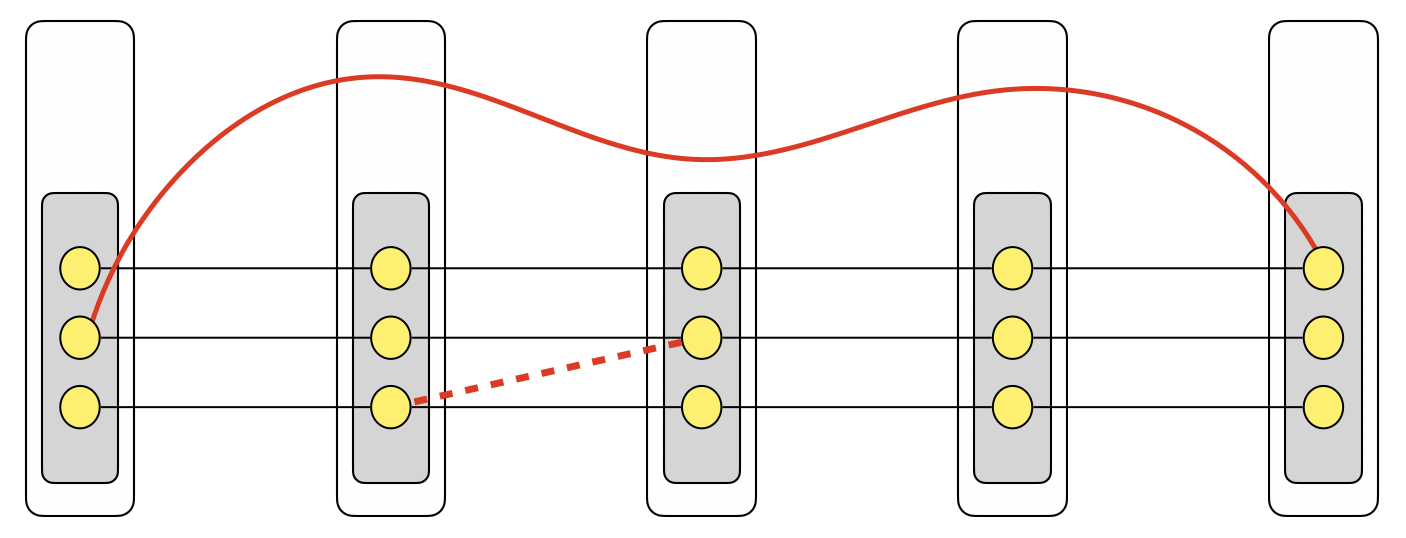}
\caption{An illustration of a UPC in a strictly-layered graph. Neither the dashed red edges (between the vertices of different paths in a UPC) nor the edges of the solid red path (using outside vertices) are allowed.
}\label{fig:upc}
\end{figure}

See~\Cref{fig:upc} for an example of UPCs. We use UPCs to define our extremal graph family. 

\begin{Definition}\label{def:unique-paths}
	For any $n,k, p,q \geq 1$, a $(p,q,k)$-\textbf{disjoint-unique-paths (DUP)} graph is a $(n,k+1)$-strictly-layered graph $G$ whose edges can be partitioned into $q$ 
	UPCs $\cP_1,\ldots,\cP_q$, each consisting of exactly $p$ layered paths $P_{i,1},\ldots,P_{i,p} \in \cP_i$ for $i \in [q]$. 
	
\end{Definition}
The following result shows the existence of $(p,q,k)$-DUP graphs with $p,q = n^{1-o(1)}$ (that is necessary for our application) and $k = \omega( \poly\!\log{(n)})$ (that is sufficient for our application). 
The existence of such results is a-priori surprising as UPCs of size $p$ are ``locally sparse'' and restrict many edges from appearing in the graph, and yet large $q$ implies that the graph is ``globally dense''. 
Nevertheless, we can use the same ideas as they are used in the construction of RS graphs~\cite{RuzsaS78} or for `graphs with many long disjoint shortest paths' in~\cite{AbboudB16} to construct DUP graphs. 
\begin{proposition}\label{prop:dup}
	For any sufficiently large $n \in \IN$ and any given $k \geq 1$ satisfying $k \leq 2^{(\log{n})^{1/4}}$, there exists a $(p,q,k)$-DUP graph on $n$ vertices with the following parameters for two absolute constants $\eta_p,\eta_q > 0$,
	\[
		p = \frac{n}{\exp\paren{\eta_p \cdot (\ln{n})^{3/4}}} \qquad \text{and} \qquad q = \frac{n}{\exp\paren{\eta_q \cdot (\ln{n})^{3/4}}}.
	\]
\end{proposition}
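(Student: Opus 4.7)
The plan is to mimic the classical Ruzsa–Szemerédi–type constructions used in the spanner/shortcutting literature (e.g.\ \cite{AbboudB16}), but adapted so that the extremal object is a whole \emph{collection} of UPCs rather than a single family of disjoint shortest paths. The construction will be based on a higher-dimensional Behrend-type set $B$ of integers whose defining algebraic property is exactly the one needed to forbid shortcut layered paths, together with a simple arithmetic-progression encoding into a layered graph.

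First, I would fix integers $m$ and $d$ and consider the set of lattice points in $[m]^d$ that lie on a fixed sphere of radius $R$ in $\mathbb{R}^d$. Projecting these via the base-$(2km)$ encoding $(x_1,\dots,x_d)\mapsto \sum_j x_j(2km)^{j-1}$ gives a set $B\subseteq [N']$ with $N'=(2km)^d$. The strict convexity of the squared Euclidean norm then implies the key algebraic property of $B$: if $b_1,\dots,b_k,b_0\in B$ satisfy $b_1+\cdots+b_k=k b_0$, then $b_1=\cdots=b_k=b_0$. (The choice of base $2km$ ensures no carries occur in any such $k$-sum, so the identity holds coordinate-wise, and then the sphere condition forces equality by convexity.) A standard pigeonhole over the possible values of $R^2$ shows that for suitable choices of $m$ and $d$ we can guarantee $|B|\ge N'/\exp(\eta\cdot (\ln N')^{3/4})$ for an absolute constant $\eta$, where the dimension $d$ is optimized against $\log k$; the hypothesis $k\le 2^{(\log n)^{1/4}}$ is exactly what keeps the overhead from $\log k$ under the target $(\ln n)^{3/4}$ exponent.

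Second, I would build the DUP graph $G$ as follows. Let $N=\lfloor n/(k+1)\rfloor$, take layers $V_1,\dots,V_{k+1}$ each identified with $\{0,1,\dots,N-1\}$, and (after rescaling $B$ so that $k\cdot \max(B)\le N/2$) declare the edge set of $G$ to consist of all pairs $(a+(i-1)d,\ a+id)$ between consecutive layers $V_i$ and $V_{i+1}$, ranging over $d\in B$ and $a\in\{0,1,\dots,N-1-kd\}$. For each $d\in B$, let $\cP_d$ be the collection of layered paths of the form $P_{d,a}=(a,a+d,a+2d,\dots,a+kd)$. These paths are pairwise vertex-disjoint within a fixed $\cP_d$, and the UPCs $\{\cP_d\}_{d\in B}$ partition $E(G)$ by construction, giving an $(n,k+1)$-strictly-layered graph with the required block structure.

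Third, I would verify the uniqueness condition of \Cref{def:upc}. Consider any layered path $(v_1,v_2,\dots,v_{k+1})$ in $G$ with $v_i\in V_i$, and suppose $v_1=\startvert(P_{d,a})=a$ and $v_{k+1}=\finalvert(P_{d,a})=a+kd$ for some $P_{d,a}\in \cP_d$. Each edge $(v_i,v_{i+1})$ must come from some UPC, say $v_{i+1}-v_i=d_i\in B$. Then $\sum_i d_i=v_{k+1}-v_1=kd$, and the algebraic property of $B$ forces $d_1=\cdots=d_k=d$, so the path coincides with $P_{d,a}$ itself; in particular it belongs to $\cP_d$. Thus each $\cP_d$ is a genuine UPC. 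Setting $p=|\{a:a+kd\le N-1\}|\ge N/2$ and $q=|B|$, and then substituting the bound $|B|\ge N'/\exp(\eta(\ln N')^{3/4})$ obtained in the first step (with $N'$ chosen so that $N\ge |B|$), gives $p,q\ge n/\exp(\Theta((\ln n)^{3/4}))$ as required.

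The main technical obstacle will be the parameter optimization in the Behrend step: one must simultaneously control the dimension $d$ (so that $(2km)^d$ lands near $n$), the sphere radius (to get a dense layer on the sphere via pigeonhole), and the dependence on $k$ (which is the only place the hypothesis $k\le 2^{(\log n)^{1/4}}$ enters). Balancing these to obtain the clean exponent $(\ln n)^{3/4}$ — as opposed to the $\sqrt{\log n}$ typical of the $k=O(1)$ regime — is the delicate step; everything else (defining the layered graph and checking inducedness/uniqueness of layered paths) is a straightforward consequence of the algebraic property of $B$.
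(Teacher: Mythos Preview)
Your construction does not satisfy the UPC property of \Cref{def:upc}, and your verification in the third step only checks a special case. The definition requires that for \emph{every} pair $s\in\startvert(\cP_d)$ and $t\in\finalvert(\cP_d)$, any layered $s$--$t$ path in $G$ lies in $\cP_d$. But in your construction $\startvert(\cP_d)=\{0,1,\ldots,N-1-kd\}$ and $\finalvert(\cP_d)=\{kd,\ldots,N-1\}$, so these sets are essentially all of $V_1$ and $V_{k+1}$. Concretely, take any $d'\in B$ with $d'>d$; then $s=0\in\startvert(\cP_d)$ (it is the start of $P_{d,0}$) and $t=kd'\in\finalvert(\cP_d)$ (it is the end of $P_{d,\,k(d'-d)}$), yet the path $(0,d',2d',\ldots,kd')=P_{d',0}$ is a layered $s$--$t$ path in $G$ that lies in $\cP_{d'}$, not in $\cP_d$. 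Your Behrend property $\sum d_i=kd_0\Rightarrow d_i\equiv d_0$ never fires here because $\sum d_i=t-s=kd'+(a'-a)$ need not equal $k$ times anything in $B$.

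The paper avoids exactly this pitfall by reversing the roles you chose: it indexes UPCs by the \emph{anchor} $x$ and lets the \emph{direction} $y$ (from the sphere/Behrend set $A$) vary within a UPC, and crucially it starts $P_{x,y}$ at $x+y$ rather than at $x$ so that the paths within one $\cP_x$ remain vertex-disjoint. With that indexing, taking $s=x+y\in\startvert(\cP_x)$ and $t=x+(k+1)y'\in\finalvert(\cP_x)$ and summing the step differences $y_1,\ldots,y_k\in A$ gives $(k+1)y'=y+\sum_i y_i$, i.e.\ $y'$ is an average of $k+1$ elements of $A$, and the average-free property forces $y=y_1=\cdots=y_k=y'$. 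Your argument is essentially the right one, but applied to the wrong partition into UPCs; swapping the roles of anchor and direction (and adding the $+y$ shift at the start) fixes it, at which point the parameters come out to $q=\ell^d$ and $p=|A|\ge \ell^d/(d\ell^2)$, matching the target bounds.
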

Given that this result does not seem to have appeared in prior work, we provide a self-contained proof in~\Cref{sec:dup-construct}, which may also provide some more intuition about these graphs.

\subsection{Embedding Products}\label{sec:embed}

The next key ingredient of our construction is a graph product that treats DUP graphs as an ``outer'' graph and replace their \emph{entire} paths in UPCs with 
different ``inner'' layered graphs (not necessarily strictly-layered ones). See~\Cref{fig:embed} for an illustration.

\begin{Definition}\label{def:embed}
	Let $\Gdup$ be a $(p,q,k)$-DUP graph on $n_1$ vertices and layers $U_1,\ldots,U_{k+1}$ for some $p,q,k \geq 1$. 
	Let $\cH := \set{H_{i,j} \mid i \in [q] , j \in [p]}$ be a family of $(n_2,k+1)$-layered graphs where all the graphs in $\cH$ are on the same layers $W_1,\ldots,W_{k+1}$ but may have different edges. 

We define the \textbf{embedding of $\cH$ into} $\Gdup$, denoted by $G:=\embed(\cH \rightarrow \Gdup)$, as the following $((n_1 \cdot n_2)/(k+1) , k+1)$-layered graph: 
	\begin{itemize}[leftmargin=10pt]
		\item \textbf{Vertices:} We have layers $V_1,\ldots,V_{k+1}$ where, for every $\ell \in [k+1]$, $V_\ell := U_\ell \times W_\ell$; 
		\item \textbf{Edges:} For any layered path $P_{i,j} = (u_1, u_2, ..., u_{k+1})$ of UPC $\cP_i$ in $\Gdup$ with $i \in [q]$ and $j \in [p]$, and any edge $(x,y) \in H_{i,j}$ between layers $W_{\ell_x}$ and $W_{\ell_y}$, 
		we add an edge $e$ between $(u_{\ell_x},x)$ and $(u_{\ell_y},y)$ to $G$. We say that the edge $e$ is added \emph{w.r.t.}\ the path $P_{i,j}$. 
	\end{itemize}
\end{Definition}

\begin{figure}[h!]
	\centering
	\subcaptionbox{A family $\cH$ including $(12,4)$-layered graphs $H_{1,1}$ and $H_{1,2}$ (remaining graphs are unspecified) on the left and a DUP graph $\Gdup$ with $p=2$, unspecified $q$, and $k=3$ on the right. 
	}%
	[1\linewidth]{
		\includegraphics[scale=0.38]{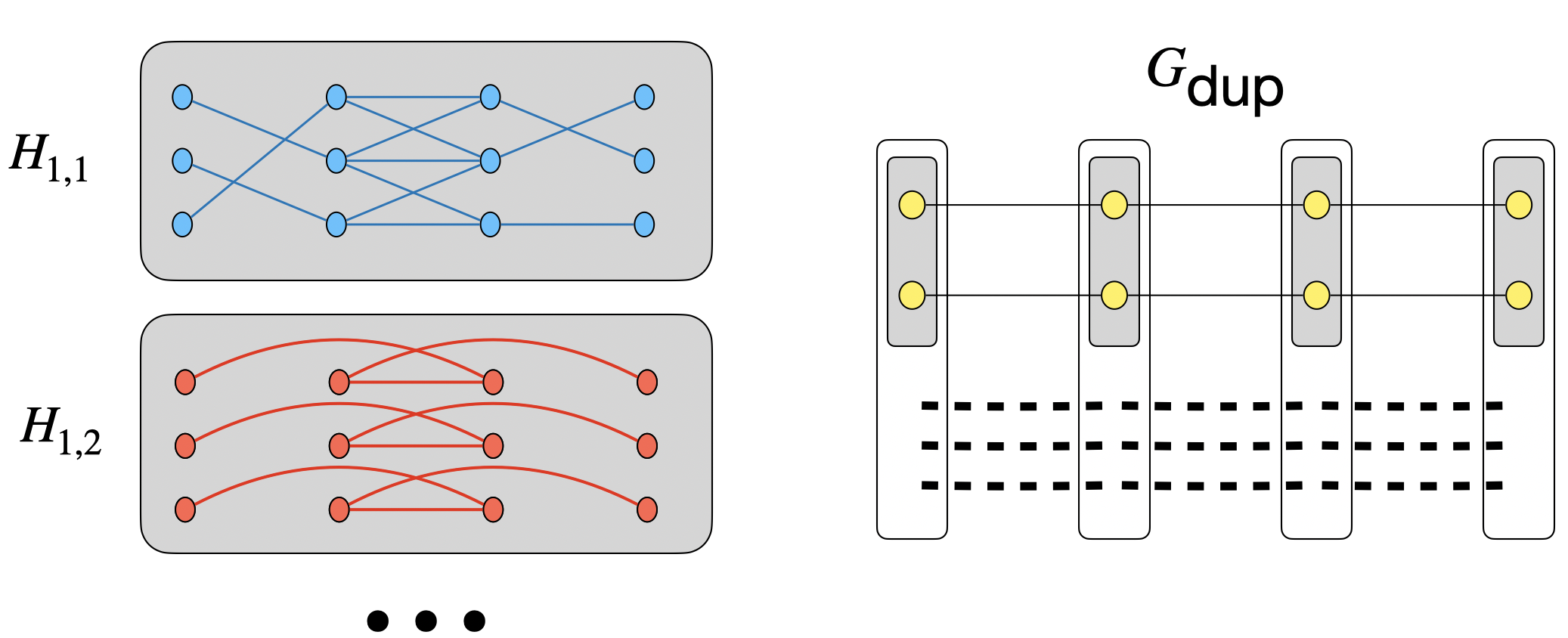}
	} 
	
	\vspace{0.5cm} 
	
	\subcaptionbox{The resulting layered graph as the embedding of $\cH$ into $\Gdup$ -- the drawing only shows the effect of the embedding on one UPC of $\Gdup$ and the remainder of the graph is unspecified.}%
	[1\linewidth]{
	\includegraphics[scale=0.38]{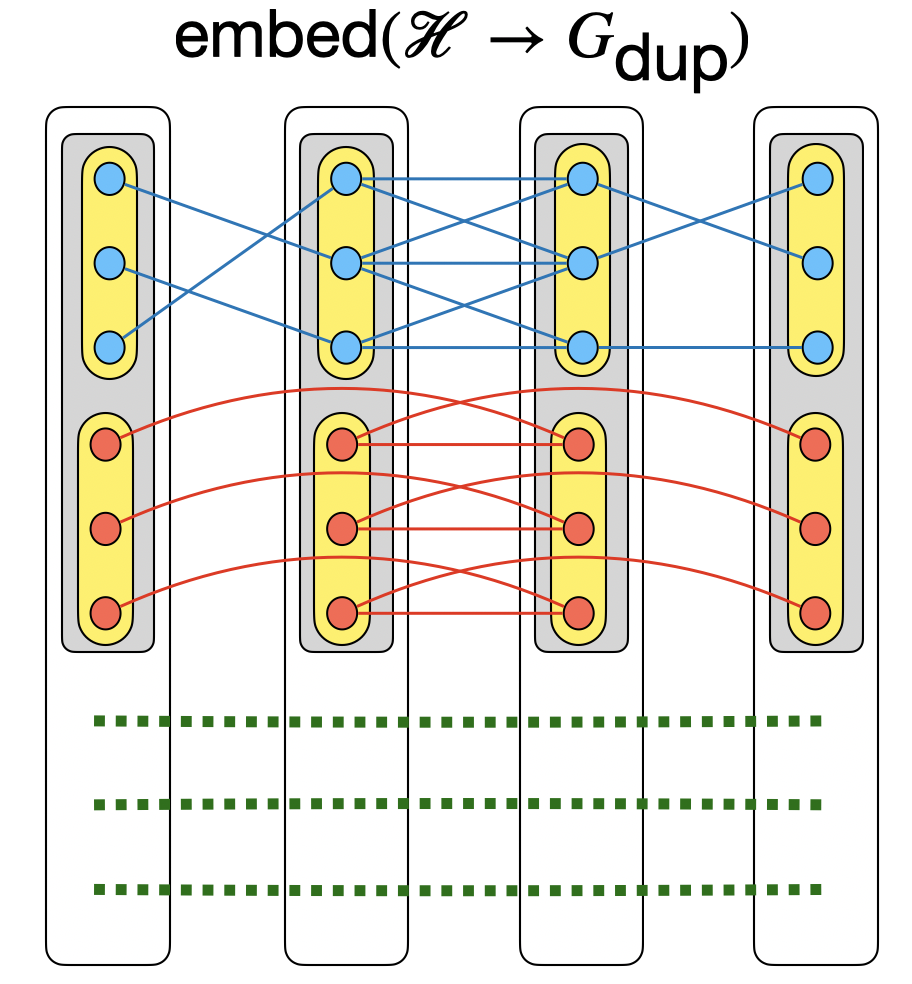}
	}
	\caption{An illustration of the embedding product. For simplicity of exposition, this figure only shows a part of the final graph obtained via embedding inside a single UPC of the DUP graph; in the actual construction, every UPC is replaced by an embedded graph.}
	\label{fig:embed}
\end{figure}

The following lemma captures the main property of these embeddings. 
\begin{lemma}\label{lem:embed-induced}
	Let $\Gdup$ be a $(p,q,k)$-DUP graph for some $p,q,k \geq 1$, $\cH := \set{H_{i,j} \mid i \in [q] , j \in [p]}$ be a family of layered graphs with size $p \cdot q$, and $G := \embed(\cH \rightarrow \Gdup)$. 
	For any $i \in [q]$, the \textbf{\emph{induced subgraph}} of $G$ on vertices corresponding to $\cP_i$, i.e., $\set{(v,*) \mid v \in P_{i,j}~\text{for some $P_{i,j} \in \cP_i$}}$, is a \emph{\textbf{vertex-disjoint union}} of graphs $H_{i,j}$ for $j \in [p]$ in $\cH$. 
\end{lemma}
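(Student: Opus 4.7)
Fix any $i \in [q]$ and let $S_i := \{(v, w) \mid v \in P_{i,j}, j \in [p]\}$ be the vertex set corresponding to $\cP_i$, with $G_i$ denoting the induced subgraph of $G$ on $S_i$. The plan is split into two parts: first verifying that the embedded copies of $H_{i,1}, \ldots, H_{i,p}$ sit inside $G_i$ as vertex-disjoint subgraphs, and then showing that no extraneous edges appear in $G_i$.

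For the first part, I would observe that the paths $P_{i,1}, \ldots, P_{i,p} \in \cP_i$ are vertex-disjoint by \Cref{def:upc}, so writing $P_{i,j} = (u_1^{(j)}, \ldots, u_{k+1}^{(j)})$, the sets $T_j := \{(u_\ell^{(j)}, w) \mid \ell \in [k+1], w \in W_\ell\}$ are pairwise disjoint subsets of $S_i$, and their union is exactly $S_i$. By \Cref{def:embed}, the edges added to $G$ w.r.t.\ the path $P_{i,j}$ give an isomorphic copy of $H_{i,j}$ on $T_j$. So it remains to rule out any other edges from appearing between vertices of $S_i$.

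For the second part, the only way such an extraneous edge could exist is if some edge $e$ of $G$ is added w.r.t.\ a path $P_{i',j'}$ with $i' \neq i$ and both endpoints of $e$ lie in $S_i$. Writing the endpoints as $(a, x)$ and $(b, y)$, with $a, b$ the vertices of $P_{i',j'}$ in layers $U_{\ell_x}, U_{\ell_y}$ respectively (WLOG $\ell_x < \ell_y$), the assumption forces $a$ to lie on some path $P_{i,j_a} \in \cP_i$ and $b$ on some path $P_{i,j_b} \in \cP_i$. Let $Q_1$ denote the sub-path of $P_{i',j'}$ from $a$ to $b$, and construct a candidate layered path $P^\star$ by concatenating the prefix of $P_{i,j_a}$ up to $a$, then $Q_1$, then the suffix of $P_{i,j_b}$ from $b$ onwards. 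Since $Q_1$ has exactly one vertex per layer between $\ell_x$ and $\ell_y$, and the prefix and suffix are layered sub-paths of $\cP_i$-paths, $P^\star$ is a layered path from $\startvert(P_{i,j_a}) \in \startvert(\cP_i)$ to $\finalvert(P_{i,j_b}) \in \finalvert(\cP_i)$.

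The main obstacle—and the crux of the argument—is to extract a contradiction from the UPC property of $\cP_i$ in both sub-cases. If $j_a \neq j_b$, then the UPC property immediately gives a contradiction, since the only layered path between a start and end vertex of $\cP_i$ must be a single path in $\cP_i$, which would force $\startvert(P_{i,j_a})$ and $\finalvert(P_{i,j_b})$ to be endpoints of the same path, impossible since paths in $\cP_i$ are vertex-disjoint. If $j_a = j_b = j$, then the UPC property forces $P^\star = P_{i,j}$, which in turn forces $Q_1$ to coincide edge-for-edge with the sub-path $Q_2$ of $P_{i,j}$ between layers $\ell_x$ and $\ell_y$; but then every edge of $Q_1 = Q_2$ belongs simultaneously to $\cP_{i'}$ (through $P_{i',j'}$) and to $\cP_i$ (through $P_{i,j}$), contradicting the fact that the edge set of $\Gdup$ is \emph{partitioned} among the UPCs $\cP_1, \ldots, \cP_q$. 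This exhausts all cases and completes the proof.
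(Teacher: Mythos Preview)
Your proof is correct and follows essentially the same approach as the paper: both arguments split into showing that the embedded copies of $H_{i,j}$ are vertex-disjoint (via vertex-disjointness of paths in $\cP_i$), and then ruling out extraneous edges by constructing a layered path from a start vertex of $\cP_i$ to a final vertex of $\cP_i$ via a subpath of the offending $P_{i',j'}$, which contradicts the UPC property. The only difference is cosmetic: the paper handles the cases $j_a = j_b$ and $j_a \neq j_b$ uniformly by noting that the constructed path uses an edge of $P_{i',j'}$ and hence (by the edge-partition of $\Gdup$ into UPCs) is not a path of $\cP_i$, whereas you split into two cases and invoke the partition property explicitly only in the $j_a = j_b$ case.
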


Before getting to the proof of~\Cref{lem:embed-induced}, an important remark is in order. Recall that our layered graphs can have edges between any pairs of layers and not only consecutive ones. 
As such, to be able to establish the property in~\Cref{lem:embed-induced}, we crucially rely on the fact that DUP graphs not only 
disallow any additional edges between vertices of a UPC, but in fact, no layered paths can connect them also even if the paths are using vertices outside the UPC. This is the key property of DUP graphs that is needed for our constructions. 

\begin{proof}[Proof of~\Cref{lem:embed-induced}]
	For any $i \in [q]$, let $U_i := \set{(v,*) \mid v \in P_{i,j}~\text{for some $P_{i,j} \in \cP_i$}}$ denote the vertices of $G$ corresponding to the UPC $\cP_i$. 
	Let $G_i := G[U_i]$ be the induced subgraph of $G$ on $U_i$. We first prove that $(i)$ $G_i$ contains the vertex-disjoint union of $H_{i,j}$ for $j \in [p]$, and then, that $(ii)$ it does 
	not contain any other edges. See~\Cref{fig:embed-lemma} for an illustration of this proof. 
	
	\paragraph{Part (i).} 
	Fix $j \in [p]$ and observe that all edges of $H_{i,j}$ are added to $G_i$ by the embedding product w.r.t.\ layered path $P_{i,j}$ in a way that ``preserves its structure''. Formally, 
since $P_{i,j} = (u_1, \ldots, u_{k+1})$ is a \emph{layered} path (with edges between consecutive layers and one vertex per layer), the edges between layers $W_\ell$ and $W_{\ell'}$ in $H_{i,j}$ are added as edges between vertices $\{ u_\ell \} \times W_\ell \subseteq V_\ell$ and $\{ u_{\ell'} \} \times W_{\ell'} \subseteq V_{\ell'}$ of $G_i$.  
	Also, since the layered paths of the UPC $\cP_i$ must be vertex-disjoint (\Cref{def:upc}), the edges added w.r.t.\ paths in $\cP_i$ form a vertex-disjoint union of $H_{i,j}$'s for $j \in [p]$ in $G_i$.

	\paragraph{Part (ii).}
	It remains to argue that no other edges are added to $G_i$ by the embedding product. For a contradiction, suppose that $G_i$ contains an edge $e$ that is \emph{not} added w.r.t.\ $\cP_i$. As such, it must be between vertices that correspond to $u \in P_{{i,j_1}}$ and $v \in P_{i,j_2}$ for some $j_1,j_2 \in [p]$ (possibly $j_1 = j_2$), respectively. By the embedding product, the edge $e$ is added w.r.t.\ a layered path $P_{i',j'}$ and layered graph $H_{i',j'}$ for some $i' \neq i \in [q]$ and $j' \in [p]$.
	The layered graph $H_{i',j'}$ may contain any edge between layers even non-consecutive ones. 
	Thus, the existence of edge $e$ implies that there is a \emph{subpath} of $P_{i',j'}$ that connects the vertices $u \in P_{{i,j_1}}$ and $v \in P_{i,j_2}$ in $\Gdup$, which is a strictly-layered graph and only has edges between consecutive layers. This creates a layered path from $\startvert(P_{i,j_1})$ to $\finalvert(P_{i, j_2})$ (or $\startvert(P_{i,j_2})$ to $\finalvert(P_{i, j_1})$) that uses 
	one or more edges of $P_{i',j'}$ and is thus not in $\cP_i$, which is a contradiction by the definition of a UPC (\Cref{def:upc}).
\end{proof}

\begin{figure}[h!]
\centering
\includegraphics[scale=0.30]{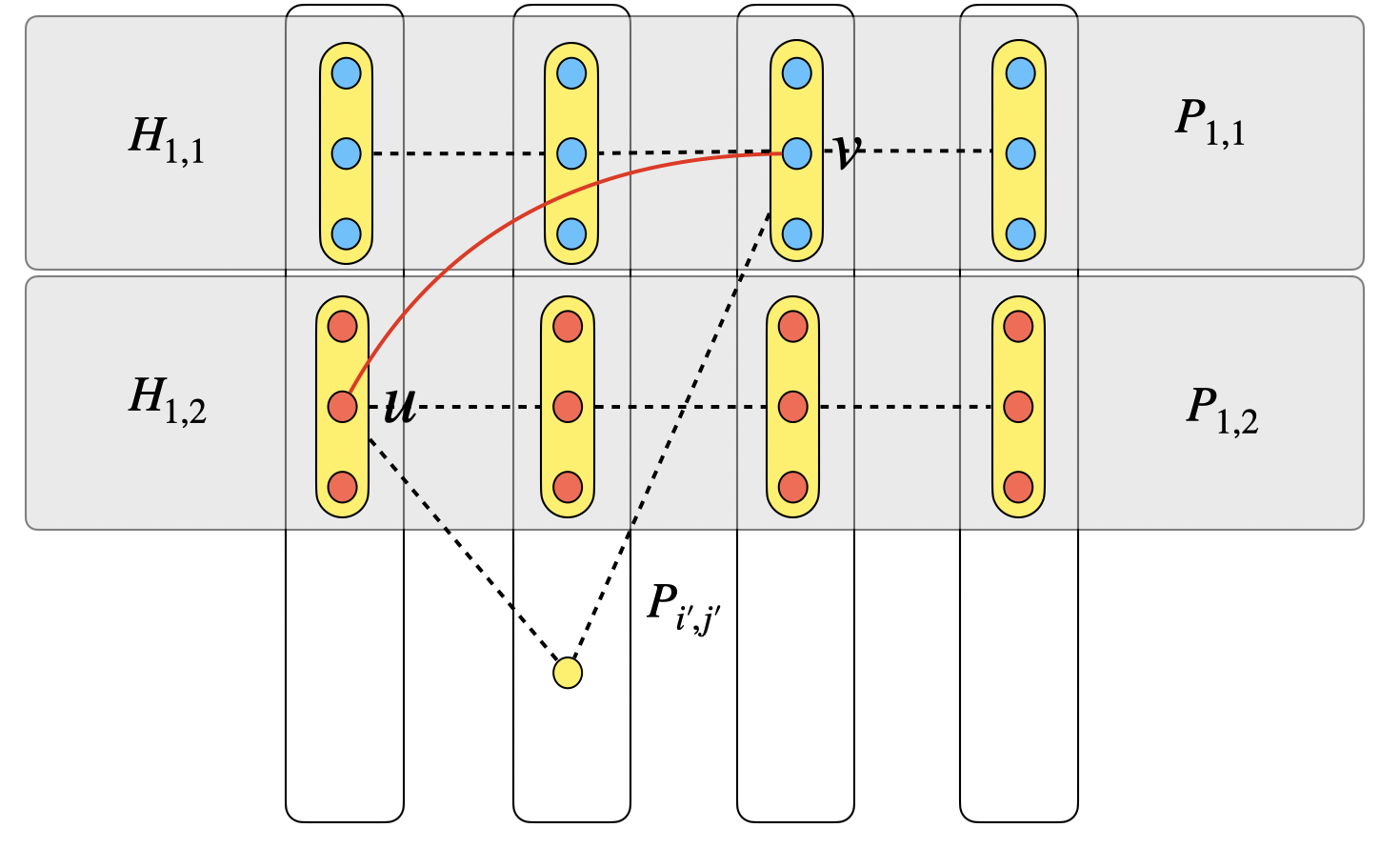}
\caption{The gray boxes denote the subgraphs $H_{1,1}$ and $H_{1,2}$ that are created w.r.t.\ paths $P_{1,1}$ and $P_{1,2}$ in the same UPC of $\Gdup$. \emph{If} there is an edge between vertices $u$ and $v$ that is added
w.r.t.\ another path $P_{i',j'}$, it implies that some subpath of $P_{i',j'}$ connects two vertices in paths $P_{1,1}$ and $P_{1,2}$ together -- this is because any edge of $G$ is added ``alongside'' a path of $\Gdup$, although it is possible that 
it \emph{shortcuts} a couple of layers in between as $H_{i',j'}$ may have edges between non-consecutive layers. This in turn creates another layered path between $\startvert(P_{1,2})$ and $\finalvert(P_{1,1})$, which 
is not allowed in a UPC, a contradiction.}\label{fig:embed-lemma}
\end{figure}

\subsection{A Construction of DUP Graphs}\label{sec:dup-construct}
We now present a construction of DUP graphs and prove~\Cref{prop:dup}. The construction of these graphs follows a standard approach in the literature, e.g., in~\cite{AbboudB16}, for creating graphs with many long disjoint shortest paths. For completeness, we provide the construction to incorporate UPCs explicitly (beyond just bounding the number of ``critical paths'' as in~\cite{AbboudB16}) and fine tune the parameters to the range that we need (which is different from the typical range of parameters used in other applications of these graphs that we are aware of). 

We first need a basic claim on the existence of a large set of ``average-free'' \emph{vectors}. 

\begin{claim}\label{clm:l2}
	For any $\ell, d \geq 1$, there exists a set $A \subseteq [\ell]^d$ with $\card{A} \geq \frac{\ell^d}{d \cdot \ell^2}$ such that for every multi-set of $t \geq 1$ not-all-equal vectors $y_1,\ldots,y_t \in A$, their average $(1/t) \sum_{i=1}^t y_i$ 
	is not in $A$. 
\end{claim}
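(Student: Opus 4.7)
The plan is a Behrend-style sphere construction: pick as $A$ the intersection of $[\ell]^d$ with a sphere in $\mathbb{R}^d$, chosen to be the most populous sphere. Concretely, for each integer $r$, let $S_r := \{y \in [\ell]^d : \|y\|_2^2 = r\}$. Since every $y \in [\ell]^d$ satisfies $d \leq \|y\|_2^2 \leq d\ell^2$, there are at most $d \ell^2$ nonempty $S_r$. By pigeonhole, some $S_{r^\star}$ satisfies
\[
|S_{r^\star}| \geq \frac{\ell^d}{d \ell^2},
\]
and I will set $A := S_{r^\star}$, which already achieves the required cardinality lower bound.

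The averaging property then follows from the \emph{strict convexity} of the squared Euclidean norm. Suppose $y_1,\ldots,y_t \in A$ are not all equal (so in particular $t \geq 2$). Let $\bar y := (1/t)\sum_{i=1}^t y_i$. By strict convexity,
\[
\|\bar y\|_2^2 \;=\; \Bigl\|\tfrac{1}{t}\sum_{i=1}^t y_i\Bigr\|_2^2 \;<\; \tfrac{1}{t}\sum_{i=1}^t \|y_i\|_2^2 \;=\; r^\star,
\]
with strict inequality precisely because at least two of the $y_i$'s differ. Hence $\bar y$ cannot lie on the sphere $S_{r^\star}$, and therefore $\bar y \notin A$. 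Note the argument is indifferent to whether $\bar y$ has integer coordinates: if it fails to lie in $[\ell]^d$ it is trivially outside $A \subseteq [\ell]^d$, and if it does lie in $[\ell]^d$ the strict norm inequality rules it out. For $t = 1$, the "not-all-equal" hypothesis is vacuous, so there is nothing to check.

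I do not expect any real obstacle here; the only points that require minor care are $(a)$ checking the pigeonhole count on the range of $\|y\|_2^2$, and $(b)$ ensuring the strict-convexity step is genuinely strict when the $y_i$ are not all equal, which is where the hypothesis is used. Both are routine. The proof fits comfortably in a few lines, mirroring the classical Behrend construction but stated for averages of arbitrarily many (not just three) vectors, which causes no additional difficulty since strict convexity of $\|\cdot\|_2^2$ gives Jensen's inequality with strict inequality for any non-constant finite convex combination.
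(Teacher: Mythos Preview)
Your proposal is correct and takes essentially the same approach as the paper: both pick $A$ to be the most populous level set of $\|\cdot\|_2^2$ in $[\ell]^d$ via pigeonhole, and both show the average has strictly smaller squared norm. The only cosmetic difference is that you invoke strict convexity of $\|\cdot\|_2^2$ (Jensen), whereas the paper expands the square and applies Cauchy--Schwarz to the cross terms; these are equivalent formulations of the same inequality.
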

\begin{proof}
	Let $A$ be the largest subset of $[\ell]^d$ such that all vectors in $A$ have the same $\ell_2$-norm. Since the number of different values possible for squared $\ell_2$-norm of vectors in $[\ell]^d$ 
	is at most $d \cdot \ell^2$, we obtain the lower bound in the claim statement on the size of $A$ by the pigeonhole principle. 
	
	Let $s$ denote the $\ell_2$-norm of the vectors in $A$. Consider any multi-set of $t > 1$ not-all-equal vectors $y_1,\ldots,y_t \in A$. We have, 
	\begin{align*}
	\norm{\frac{1}{t}\sum_{i=1}^{t}y_i}_2^2 &= \frac{1}{t^2} \paren{\sum_{i=1}^{t} \norm{y_i}^2 + \sum_{i \neq j} \inner{y_i}{y_j}} < \frac{1}{t^2} \paren{\sum_{i=1}^{t} \norm{y_i}^2 + \sum_{i \neq j} \norm{y_i}\norm{y_j}} \tag{the Cauchy-Schwartz inequality 
	is strict for the pair $y_i \neq y_j$ which exists as $y_i$'s are not all equal} \\
	 &= \frac{1}{t^2} \paren{t \cdot s^2 + t \cdot (t-1) \cdot s^2} = s^2. 
	\end{align*}
	The $\ell_2$-norm of the average is strictly less than $s$, hence, it cannot be in $A$. 
\end{proof}

We now use the existence of the set $A$ in~\Cref{clm:l2} to construct our DUP graphs. 

\begin{ourbox}
	\textbf{A construction of $(p,q,k)$-DUP graphs:}
	\begin{enumerate}[label=$(\roman*)$]
		\item Set $d := \sqrt{\log{\big(\frac{n}{k+1}\big)}}$ and $\ell: =\paren{\frac{n}{(k+1) \cdot (k+2)^d}}^{1/d}$ and let $A \subseteq [\ell]^d$ be the set of vectors in~\Cref{clm:l2}. 
		\item Fix the vertex set $V = V_1 \cup V_2 \cup \ldots \cup V_{k+1}$ with $V_i := [(k+2) \cdot \ell]^d$ for each $i \in [k+1]$. 
		\item For every $x \in [\ell]^d$ and $y \in A$, and all $i \in [k]$, add an edge between 
		\[
		x+i \cdot y \in V_{i} \quad \text{and} \quad x+(i+1) \cdot y \in V_{i+1}.
		\]
	\end{enumerate}
\end{ourbox}

\begin{proof}[Proof of \Cref{prop:dup}]
	The partition of the edge set into $q = \ell^d$ UPCs of size $p = \card{A} \geq \ell^d/(d \cdot \ell^2)$ each is as follows. 
	For every $x \in [\ell]^d$, we have a UPC $\cP_x := \set{P_{x,y} \mid y \in A}$ of $p$ paths. Each path $P_{x,y} \in \cP_x$ for $y \in A$ is defined as: 
	\[
		P_{x, y} := (x+y~,~ x+2y~,~ x+3y~,~ \ldots~,~ x + (k+1) \cdot y),
	\]		
where edge $(x+i \cdot y, x+(i+1) \cdot y)$ is from the layer $V_{i}$ to layer $V_{i+1}$ for $i \in [k]$. See~\Cref{fig:dup} for an illustration. We now prove that $G$ is indeed a DUP graph with the UPCs $\cP_x$ for $x \in [\ell]^d$. 

\begin{figure}[h!]
	\centering
\includegraphics[width=0.76\textwidth]{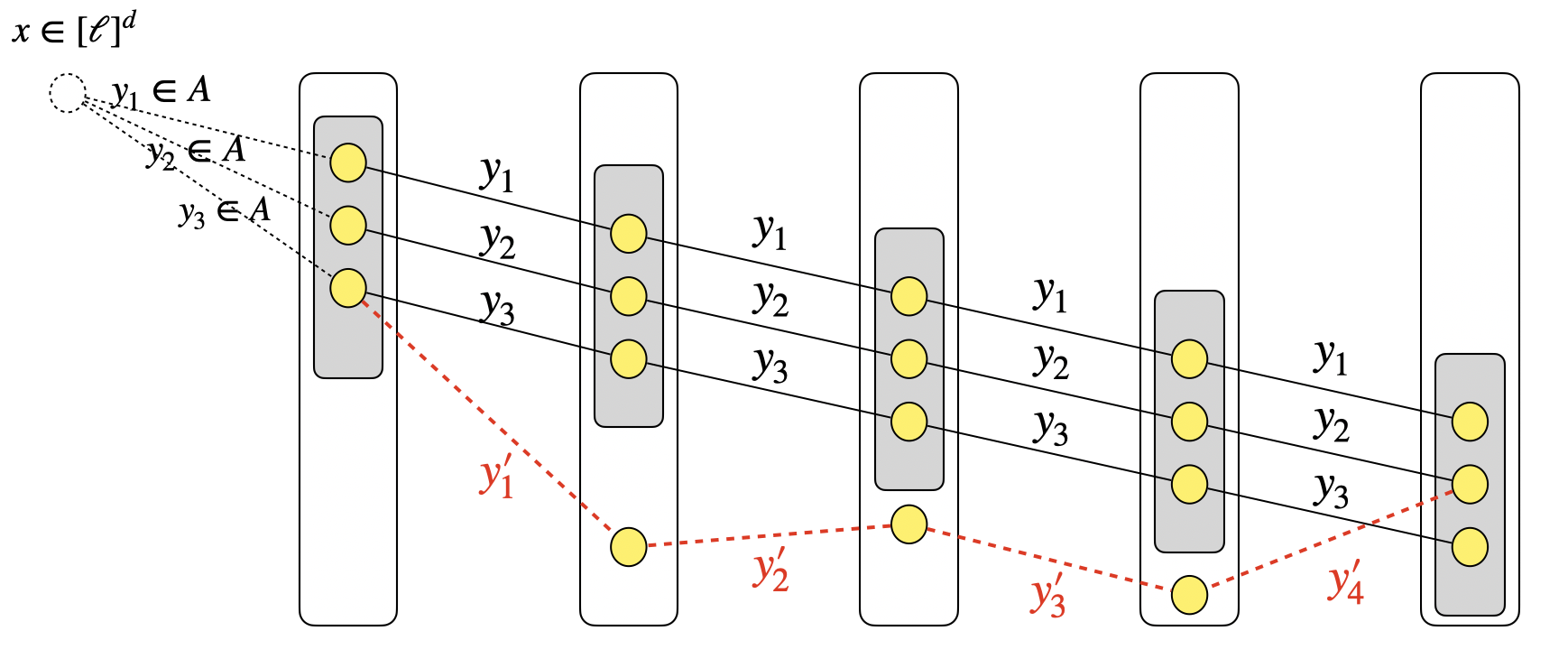}
\caption{An illustration of the construction of DUP graphs. The circle $x$ at left does not correspond to a vertex and instead is used to define a UPC. The red path cannot exist because it requires $y_3 + y'_1+y'_2+y'_3+y'_4 = 5 \cdot y_2$ which
cannot happen because $A$ is average free.}\label{fig:dup}
\end{figure}

Firstly, any edge of the graph $G$ is between some $x+i \cdot y$ and $x + (i+1) \cdot y$ for $x \in [\ell]^d$ and $y \in A$ and $i \in [k]$. This edge thus belongs to the path $P_{x,y}$ in the UPC $\cP_x$. 
	Hence, the edges of the graph are partitioned between the UPCs.
	
Secondly, by the definition, each $P_{x,y}$ is a layered path. Furthermore, the paths in $\cP_x$ are vertex-disjoint since, for any pair of paths $P_{x,y}$ and $P_{x,y'}$ for some $y,y' \in A$, their respective vertices in layer $i$, given by $x + i \cdot y$ and $x + i \cdot y'$, are only the same if $y = y'$.

Finally, consider $z_s := \startvert(P_{x,y})$ and $z_f :=\finalvert(P_{x,y'})$ for some $y,y' \in A$; 
we prove that $z_s$ has no layered path to $z_f$ in $G$ unless $y=y'$, in which case there is only one layered path $P_{x,y}$ (in the entire graph $G$), between $z_s$ and $z_f$.

By definition, we have 
\[
z_s = x+y \qquad \text{and} \qquad z_f = x + (k+1) \cdot y'.
\] 
Suppose there exists a layered path between $z_s$ and $z_f$. As each edge of the graph connects its both endpoints via a vector in $A$, this means that there exist vectors $y_1,\ldots,y_k \in A$ 
such that $z_f - z_s = \sum_{i=1}^{k} y_i$. This implies that 
\[
	x+(k+1) \cdot y' = z_f = z_s + \sum_{i=1}^{k} y_i = x+y+\sum_{i=1}^{k} y_i,
\]
which in turn means
\[
	y' = \frac{1}{k+1} \cdot \paren{y+\sum_{i=1}^{k} y_i}. 
\]
By~\Cref{clm:l2}, since $y' \in A$, the only possibility is that $(y,y_1,\ldots,y_k)$ are all-equal, which means that they are also all equal to $y'$. Thus, the only possible layered path between $z_s$ and $z_t$ is if 
they belong to the same path $P_{x,y} \in \cP_x$. 

To finalize the proof, we can work out the parameters as follows. The number of vertices is
\[
	(k+1) \cdot \paren{(k+2) \cdot \ell}^d = (k+1) \cdot \paren{(k+2) \cdot  \paren{\frac{n}{(k+1) \cdot (k+2)^d}}^{1/d}}^d = n, 
\]
as it should be, hence the choice of $\ell$ and $d$ are consistent with $n$. 
Moreover, by the upper bound on $k$ and the choice of $d$, we have, 
\begin{align*}
	q &= \ell^d = \paren{\paren{\frac{n}{(k+1) \cdot (k+2)^d}}^{1/d}}^d \geq \frac{n}{(k+2)^{d+1}} \geq \frac{n}{2^{\Theta(1) \cdot (\log{n})^{1/4} \cdot (\log{n})^{1/2}}} \geq \frac{n}{2^{\Theta((\log{n})^{3/4})}}; \\
	p &\geq \frac{\ell^d}{\ell^2 \cdot d} = \frac{q}{\ell^2 \cdot d} \geq \frac{q}{n^{2/d} \cdot \sqrt{\log{n}}}  =  \frac{q}{2^{2\log n /\sqrt{\log{n/k}}} \cdot \sqrt{\log{n}}} \geq \frac{q}{2^{\Theta(\sqrt{\log{n}})}}  \geq \frac{n}{2^{\Theta((\log{n})^{3/4})}},
\end{align*}
as desired. This concludes the proof for all graph sizes determined this way from $k$ and $\ell$ by picking $\eta_p$ and $\eta_q$ to match
the hidden constants in the $\Theta$-notation above. 

To conclude the proof, we need the construction to work for all large enough integers $n$ as in the statement of the proposition. 
However, this can be fixed easily using a padding argument, which we postpone to~\Cref{app:prop-dup}. 
\end{proof}

\newcommand{\player}[1]{\ensuremath{{Q}_{#1}}}

\newcommand{\MIS}[2]{\ensuremath{\textnormal{\textbf{MIS}}_{#1,#2}}}

\newcommand{\Edup}{\ensuremath{E_{\textsc{dup}}}}

\newcommand{\tG}{\widetilde{G}}

\newcommand{\tE}{\widetilde{E}}

\newcommand{\sG}{G^{\star}}

\renewcommand{\cHstar}{\cH^{\star}}

\renewcommand{\probconst}{\eta_0}

\section{A Round vs Communication Tradeoff for MIS}\label{sec:cc-lower}

We now switch to proving a multi-party communication lower bound for MIS, defined formally as: 

\begin{Definition}\label{def:mis}
	For any integers $n,t \geq 1$, we define $\MIS{n}{t}$ as the communication game of outputting \emph{any} MIS of a given $n$-vertex graph whose edges are partitioned between $t$ players.  
\end{Definition}
\noindent
We prove an almost optimal round vs communication tradeoff for MIS in the model of~\Cref{sec:cc}, formalizing~\Cref{res:cc} from \Cref{sec:intro}. 
\begin{theorem}\label{thm:mis-cc}
	For any $r \geq 1$ and sufficiently large $n \in \IN$, any $r$-round $(r+1)$-party protocol $\prot$ for $\MIS{n}{r+1}$ with any constant probability of success strictly more than zero has communication cost 
	\[
		\cc{\prot} = \Omega\Paren{ \frac1{2^{\Theta(r \log^{5/6}(n))}} \cdot n^{1+1/(2^{r}-1)}}.
	\]
\end{theorem}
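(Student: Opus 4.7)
The plan is to prove the bound by induction on the number of rounds $r$, with the base case $r=1$ supplied by the $\Omega(n^2)$ one-round lower bound of \cite{AssadiCK19a,CormodeDK19} (specifically, the ``two identical copies plus connecting clique'' construction sketched in \Cref{sec:ideal}, which is actually a $2$-party problem that already forms the starting point of the recursion). I would fix a constant $\probconst$ and define, for each $r \geq 1$, a hard distribution $\GG_r$ on $n$-vertex $(r+1)$-partitioned instances and an associated communication lower bound $C(n,r) = n^{1+1/(2^{r}-1)} / 2^{\Theta(r \log^{5/6}(n))}$. The induction hypothesis will assert that any $r$-round $(r+1)$-party protocol solving $\GG_r$ with constant success probability has communication cost at least $C(n,r)$, and in fact external information cost at least a comparable quantity (carrying the stronger information-cost bound is essential for the recursion).

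The construction of $\GG_r$ proceeds as follows. Set parameters $b_r$, $q_r$, $p_r$ with $q_r \approx p_r \approx n/b_r$ chosen so that $q_r \cdot p_r \approx C(n,r)$ and $p_r \cdot C(b_r, r-1) \approx C(n,r)$; the $n^{o(1)}$ slack in \Cref{prop:dup} determines the $2^{\Theta(r \log^{5/6}(n))}$ loss. Draw $q_r \cdot p_r$ independent sub-instances $\{I_{i,j}\}_{i\in[q_r],\,j\in[p_r]}$ from $\GG_{r-1}$ on $b_r$ vertices each, invoke \Cref{prop:dup} to obtain a $(p_r, q_r, k_r)$-DUP graph with $k_r$ large enough to accommodate the chromatic number of $(r-1)$-round hard instances (which is $2^r$ after $r$ doublings from the bipartite base case), and form $\sG = \embed(\{I_{i,j}\} \to \Gdup)$ via \Cref{def:embed}. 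Create two identical copies of $\sG$, assign the input of the $a^{\text{th}}$ player in each sub-instance to $P_a$ of the outer game, and give the last player $P_{r+1}$ a clique-structure together with a uniformly random index $t \in [q_r]$ that makes the sub-instance group $I_{t,*}$ the \emph{special} one whose vertices alone can survive in any MIS of the instance. \Cref{lem:embed-induced} is exactly what guarantees that no edges are inserted between vertices of different sub-instances within $I_{t,*}$, so computing an MIS of the outer instance forces the protocol to solve \emph{all} $p_r$ sub-instances $I_{t,1}, \ldots, I_{t,p_r}$ correctly.

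For the inductive step, assume a protocol $\prot$ for $\GG_r$ with communication (and hence external information cost, by \Cref{prop:ic-cc}) $s = o(C(n,r))$. The round elimination proceeds in two steps. First (sampling step), since $t$ is uniform in $[q_r]$ and independent of the sub-instances on the first $r$ players, a standard averaging argument shows that the first message $\Prot_1$ of $\prot$ carries only $O(s/q_r) = o(p_r)$ bits of information about the entire group $I_{t,*}$, and hence, for a uniformly random $k \in [p_r]$, only $o(1)$ bits on average about $I_{t,k}$. Mimicking the sampling techniques of \cite{AlonNRW15,AssadiKZ22} adapted to the pointer-chasing-style multiparty model, two players can sample $(\Prot_1, \{I_{i,j}\}_{(i,j) \neq (t,k)})$ using public randomness and embed an externally-given fresh instance $I_{r-1} \sim \GG_{r-1}$ into slot $(t,k)$, producing a joint distribution whose statistical distance from $\GG_r \mid I_{t,k} = I_{r-1}$ is $o(1)$ on average. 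Second (simulation step), run $\prot$ from round $2$ onward on the sampled family and read off the answer on sub-instance $(t,k)$; this yields an $(r-1)$-round protocol $\prot'$ for $\GG_{r-1}$ whose external information cost is a $1/p_r$ factor of that of $\prot$ (because $I_{r-1}$ is hidden among $p_r$ statistically indistinguishable slots) and hence $o(C(b_r, r-1))$.

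The main obstacle — and the step that requires the most care — is closing the loop between information and communication across a mismatched distribution: $\prot'$ has low information cost under the simulated distribution but we need a low-\emph{communication} protocol under the true $\GG_{r-1}$ to invoke the induction hypothesis, and the simulated distribution differs from $\GG_r \mid I_{t,k} = I_{r-1}$. Here I would apply the \cite{HarshaJMR07} message-compression tool (\Cref{prop:msg-compress}) round-by-round to $\prot$ inside the simulation, paying the constant $\raoconst$ and an additive $O(1)$ per compressed message; the KL-divergence bound that drives \eqref{eq:msg-compress-final} is exactly what the sampling-step analysis delivers. After compression, $\prot'$ has \emph{communication} cost comparable to the information cost of $\prot$ on $\GG_r$ divided by $p_r$, which, combined with our choice of parameters, is $o(C(b_r, r-1))$; invoking the induction hypothesis yields the contradiction and closes the induction. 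The corollary for streaming (\Cref{cor:mis-stream}) then follows immediately from \Cref{prop:cc-stream} by setting $k = r+1$.
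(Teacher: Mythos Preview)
Your high-level plan matches the paper's: recursively build $\GG_r$ by embedding $q_r\cdot p_r$ copies of $\GG_{r-1}$ into a DUP graph via \Cref{def:embed}, duplicate, hand the last player a clique that selects a special group $I_{t,*}$, and then run a round-elimination/direct-sum argument with message compression. The use of \Cref{lem:embed-induced} to guarantee that an MIS must solve all $p_r$ special sub-instances is exactly right, and the parameter balancing you sketch is the one the paper uses.

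There is, however, a genuine ordering gap in your inductive step. You first sample $\Prot_1$ and embed $I_{r-1}$ into slot $(t,k)$, then run $\prot$ from round~2, and only afterwards argue that the resulting $(r-1)$-round protocol $\prot'$ has information cost a $1/p_r$ fraction of $\ic{\prot}{\GG_r}$ ``because $I_{r-1}$ is hidden among $p_r$ statistically indistinguishable slots.'' But in this fake distribution the slots are \emph{not} exchangeable: slot $k$ carries an instance drawn from $\GG_{r-1}$ \emph{unconditioned} on $\Prot_1$, while the other $p_r-1$ slots are sampled \emph{conditioned} on $\Prot_1$. The direct-sum calculation that yields the $1/p_r$ factor (the step where one drops the conditioning on $k$) requires the joint law of $(H_{t,k},\Prot,t)$ to be independent of $k$, and that holds only when the simulated graph is distributed exactly as $\GG_r$. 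Being $o(1)$ in total variation is not enough here, since mutual information is not stable under small TVD perturbations—this is precisely the obstruction the paper flags in \Cref{rem:faked-ic}.

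The fix, which the paper implements, is to reverse the order. First build an $r$-round $r$-party protocol $\sigma$ that embeds $I_{r-1}$ at a random slot and runs \emph{all} $r$ rounds of $\prot$; because the resulting outer instance is genuinely $\GG_r$-distributed (\Cref{obs:reald-is-real}), the direct-sum argument (\Cref{clm:low-info-cost}) gives $\ic{\sigma}{\GG_{r-1}}\le (1/p_r)\,\ic{\prot}{\GG_r}$. Now compress $\sigma$ via \Cref{prop:msg-compress} to obtain $\prot'$ with small \emph{communication} cost. Only then do you eliminate the first round by sampling $\Prot_1$ publicly and running $\prot'$ from round~2 on the fake distribution; since communication cost is distribution-free, the switch costs only the TVD error $\sqrt{s/(2p_rq_r)}$ in success probability (\Cref{lem:reald-faked}). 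A minor related point: you do not need to carry an information-cost lower bound in the induction hypothesis—the paper inducts purely on communication cost, converting to and from information cost within a single round-elimination step.
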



The lower bound for semi-streaming algorithms in~\Cref{res:main}, restated below, now follows. 

\begin{corollary}\label{cor:mis-stream}
	For any integer $p \geq 1$ and sufficiently large $n \in \IN$, any $p$-pass streaming algorithm for finding any maximal independent set with any constant probability of success strictly more than zero has space
	\[
		\Omega\Paren{   \frac1{2^{\Theta(p \log^{5/6}(n))}} \cdot n^{1+1/(2^{p}-1)}}.
	\]
	In particular, semi-streaming algorithms require $\Omega(\log\log{n})$ passes. 
\end{corollary}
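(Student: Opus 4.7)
The plan is to derive this corollary as an essentially immediate consequence of \Cref{thm:mis-cc} via the standard streaming-to-communication reduction recorded in \Cref{prop:cc-stream}. Since \Cref{thm:mis-cc} carries all the hard technical content, the corollary amounts to bookkeeping: turning an assumed streaming algorithm into a protocol, invoking the communication lower bound, and checking that the polynomial overhead introduced by the reduction is absorbed by the exponential factor already present in \Cref{thm:mis-cc}.

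First I would fix $p \geq 1$ and suppose, toward a contradiction, that there is a $p$-pass streaming algorithm for MIS using space $s$ with constant success probability $\delta > 0$. Applying \Cref{prop:cc-stream} with $k = p+1$ players (matching the number of parties in the hard instance of \Cref{thm:mis-cc} at $r = p$), I obtain a $(p+1)$-party protocol $\prot$ for $\MIS{n}{p+1}$ using $p$ rounds, succeeding with probability $\delta > 0$, and having communication cost $\cc{\prot} \leq p(p+1)\cdot s$.

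Next I would invoke \Cref{thm:mis-cc} with $r = p$ to lower bound $\cc{\prot}$. Combining it with the upper bound above gives
\[
p(p+1)\cdot s \;\geq\; \Omega\!\paren{\frac{1}{2^{\Theta(p \log^{5/6}(n))}} \cdot n^{1+1/(2^{p}-1)}}.
\]
The polynomial prefactor $p(p+1)$ can be absorbed into the $2^{\Theta(p \log^{5/6}(n))}$ denominator, since $\log(p(p+1)) = O(\log p) = o(p \log^{5/6}(n))$ whenever $p$ is not trivially small; for trivially small $p$ the claimed bound holds for independent reasons (the $\Omega(n^2)$ one-pass bound of~\cite{AssadiCK19a,CormodeDK19}, say). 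This yields the claimed space lower bound on $s$.

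For the $\Omega(\log\log n)$ pass bound, I would specialize to $s = n \cdot \polylog(n)$ and rearrange to conclude that the factor $n^{1/(2^p-1)}/2^{\Theta(p \log^{5/6}(n))}$ must be at most $\polylog(n)$. Taking logarithms, this requires
\[
\frac{\log n}{2^p - 1} - \Theta\!\paren{p \log^{5/6}(n)} \;\leq\; O(\log\log n).
\]
For any $p = O(\log\log n)$, the second term is $o(\log n)$ and is therefore negligible compared to the first. Hence the dominant constraint reduces to $\log n/(2^p - 1) = O(\log\log n)$, which forces $2^p - 1 = \Omega(\log n / \log\log n)$, i.e.\ $p = \Omega(\log\log n)$. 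There is no genuine obstacle at this stage beyond verifying that the overhead terms line up correctly; the entire difficulty of the result lives in \Cref{thm:mis-cc}.
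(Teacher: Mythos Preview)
Your proposal is correct and follows essentially the same approach as the paper: apply \Cref{prop:cc-stream} with $k=p+1$ to convert the streaming algorithm into a $(p+1)$-party $p$-round protocol, invoke \Cref{thm:mis-cc}, and observe that the $p(p+1)$ overhead from the reduction is absorbed into the $2^{\Theta(p\log^{5/6}(n))}$ denominator. The paper's proof is terser but makes the identical argument, phrasing the absorption as ``the number of players and rounds are $O(\log\log n)$ each, and hence the space lower bound is at most smaller than the communication one by an $O((\log\log n)^2)$ factor; this is subsumed by the $2^{\Theta(\log^{5/6})}$ term.''
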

\begin{proof}
	Follows from~\Cref{thm:mis-cc} and~\Cref{prop:cc-stream} since the number of players and rounds are $O(\log\log{n})$ each, and hence, the space lower bound
	is at most smaller than the communication one by an $O((\log\log{n})^2)$ factor; this is subsumed by the $2^{\Theta(\log^{5/6})}$ term of the lower bound. 
\end{proof}

We define our hard input distribution in the proof of~\Cref{thm:mis-cc} in this section, and list its main properties. We then use this distribution in the next section to conclude the proof of~\Cref{thm:mis-cc}.  

\subsection{A Family of Hard Distributions for MIS}\label{sec:hard-dist}

The distributions are defined recursively as a family $\set{\GG_r(n)}_{r \geq 0}$ where $\GG_r(n)$ is a hard distribution for $r$-round $(r+1)$-party protocols on $n$-vertex graphs. To avoid ambiguity with the notation for paths, 
we use $\player{1},\player{2},\ldots,\player{r+1}$ to denote the players in $\GG_r(n)$. In addition, the input to player $\player{a}$ for $a \in [r+1]$ is denoted by $G_a$. 

\paragraph{Base case for $r=0$.} The base case is defined vacuously for $r=0$ and $0$-round ``protocols'' that are supposed to solve a simple yet non-trivial MIS problem (and are hence trivially impossible). 

\begin{ourbox}\textbf{The distribution $\GG_0(n)$ for $0$-round protocols on $n$-vertex graphs for $n \geq 2$.}
	\begin{enumerate}
		\item Let $G=(V,E)$ be a graph on $p_0 = n/2$ vertices $V = \set{u_i, v_i \mid i \in [p_0]}$. 
		\item For each $i \in [p_0]$, the edge $(u_i, v_i)$ is independently included in $E$ with probability $1/2$ (independent of the inclusion of the other potential edges $(u_j, v_j)$, for $j \neq i$).
		\item Player $\player{1}$ receives $G(1) := G$ as the input (this is the only player when $r=0$). 
	\end{enumerate}
\end{ourbox}
The answer to MIS on $\GG_0$ contains either one of the vertices $u_i$ \emph{or} $v_i$ (but not both) with probability half, and otherwise contains \emph{both} $u_i$ and $v_i$ with the remaining probability for each $i \in [p_0]$ independently. 
Thus, a $0$-round ``protocol'', namely, one that has to commit to a fixed answer always without checking the input, can only succeed with probability $1/2^{p_0}$ on $\GG_0$. 

\paragraph{Distributions for $r \geq 1$.} A  graph generated by $\GG_r(n)$ will be an $(n, 2^{r+1})$-layered graph created by \emph{embedding} many $n_{r-1}$-vertex hard instances for $(r-1)$ rounds into $(p_r, q_r, 2^r - 1)$-DUP graphs with $b_r$ vertices in each layer. 
We will determine the values of the parameters $p_r,q_r,b_r$ and $n_{r-1}$ based on $n$ shortly, after defining the distribution itself. 
Throughout, if clear from the context or not relevant, we may omit the subscripts on $p_r,q_r$, and $b_r$, as well as $n$ from $\GG_r(n)$. 

\begin{ourbox}\textbf{Distribution $\GG_r(n)$ for $r$-round protocols for $r \geq 1$. }
	\begin{enumerate}
			\item Pick a \underline{fixed} $(p_r,q_r,2^r-1)$-DUP graph $\Gdup := (W=W_1 \cup \ldots \cup W_{2^r}, \Edup)$ with $b_r$ vertices in each layer and thus $b_r \cdot 2^r$ vertices in total. 
			\item Sample $p_r \cdot q_r$ \underline{independent} instances $\cH := \set{H_{i,j} \mid i \in [q_r], j \in [p_r]}$ from $\GG_{r-1}(n_{r-1})$. 
			\item Create two \underline{identical} copies of the graph $\tG_L = \tG_R := \embed(\cH \rightarrow \Gdup)$. 
			\item For $a \in [r]$, player $\player{a}$ receives the edges in both $\tG_L$ and $\tG_R$ that
			correspond to $H_{i,j,a}$ for  $i \in [q_r], j \in [p_r]$ where $H_{i,j,a}$ is the subgraph of $H_{i,j}$ given to the $a^\text{th}$ player in $\GG_{r-1}(n_{r-1})$. 
			\item Pick $t$ \underline{uniformly at random} from $[q_r]$ and let $\cP^{\star} := \cP_{t}$ be the $t$-th UPC in $\Gdup$, referred to as the \emph{special UPC}. Let $W^{\star} \subseteq W$ be the set of vertices in $\Gdup$ incident to $\cP^{\star}$. 
			\item Player $\player{r+1}$ receives a bipartite clique between $\set{(u,*) \mid u \in W \setminus W^{\star}}$ in $\tG_L$ and $\tG_R$. 
		\end{enumerate}
\end{ourbox}

\noindent\Cref{fig:dist-image} gives an illustration of our hard distribution. 

\begin{figure}[h!]
	\centering
	\includegraphics[width=0.9\textwidth]{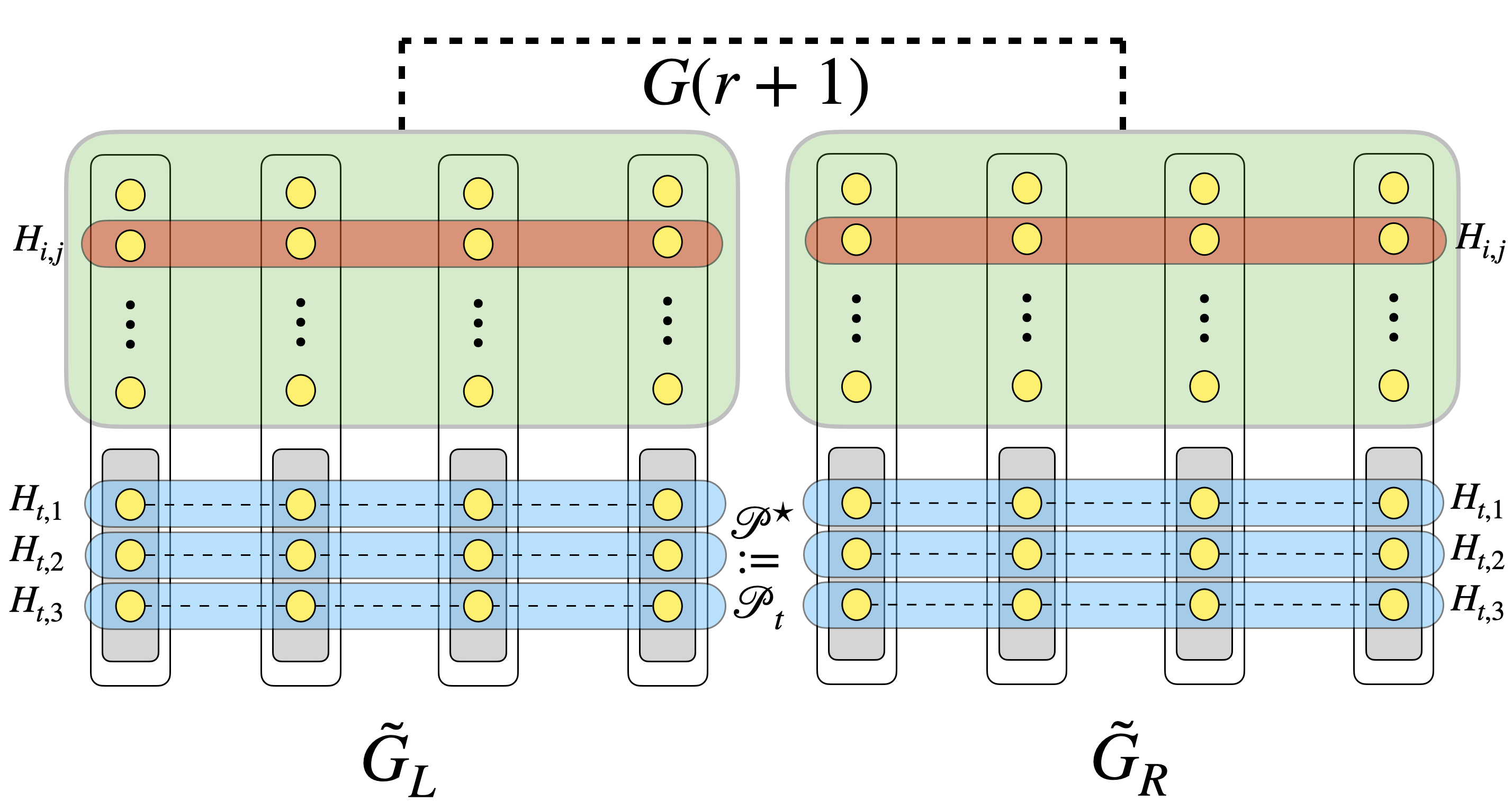}
	\caption{An illustration of the distribution $\GG_r(n)$. The paths at the bottom correspond to the UPC $\cP^{\star}$; each path is ``replaced'' with an instance $H$ from $\GG_{r-1}(n_{r-1})$ via the embedding product. 
	Since the embedding product affects the entire graph, the paths in other UPCs are also replaced by instances from $\GG_{r-1}(n_{r-1})$, e.g., for the top (red) block (remaining instances are not drawn). 
	The last player $\player{r+1}$ receives a bipartite clique between the vertices corresponding to $W \setminus W^{\star}$ in $\tG_L$ and $\tG_R$, denoted by the top (green) blocks. Other players $\player{a}$ for $a \in [r]$ 
	receive the inputs of the $a^\text{th}$ player in \emph{all} instances $H$ sampled from $\GG_{r-1}(n_{r-1})$ when placed inside graphs $\tG_L$ and $\tG_R$. 
	}\label{fig:dist-image}
\end{figure}

\subsubsection*{Definitions and Notation for Distribution $\GG_r(n)$} We set up the following notation for the distribution $\GG_r(n)$ for $r \geq 1$. For any $a \in [r+1]$, let $\GG_{r,a}(n)$ denote 
the distribution of edges in the input graph that are given to player $\player{a}$. Thus, the distribution $\GG_r(n)$ is a joint distribution of $(\GG_{r,1}(n),\ldots,\GG_{r,r+1}(n))$. 

An instance of the $\MIS{n}{r+1}$ sampled from $\GG_r(n)$ is a graph $G$ with $n$ vertices together with its edge partitioning between the $r+1$ players. With a slight abuse of notation, we sometimes refer to $G$ sampled from $\GG_r(n)$, i.e., $G \sim \GG_r(n)$, as the entire instance, where the edge partitioning between the players is implicit. Any $G \sim \GG_r(n)$ contains $q_r \cdot p_r$ instances $H_{1,1},\ldots,H_{q_r,p_r}$ sampled from $\GG_{r-1}(n_{r-1})$ that are embedded as part of both $\tG_L$ and $\tG_R$. 
We refer to these $(r-1)$-round instances as the \textbf{sub-instances} of $G$. For any $i \in [q_r], j \in [p_r]$, we further write $\tE_{L,i,j}$ and $\tE_{R,i,j}$ to denote the edges inserted to, respectively, $\tG_L$ and $\tG_R$ as part of the embedding of $H_{i,j}$
into $\Gdup$.

Among the sub-instances of any $G \sim \GG_r(n)$, there are 
$p_r$ sub-instances that correspond to the paths in the special UPC $\cP^{\star}$; we denote them by $\cHstar$ and call them  the \textbf{special sub-instances} of $G$. 
We enumerate these special sub-instances by $H_{t,j}$ for $t \in [q_r]$ being the index of the special UPC $\cP^{\star} = \cP_t$ and $j$ ranging over $[p_r]$. 

For any $j \in [p_r]$, define the \textbf{special subgraphs} $\sG_{L,j}$ and $\sG_{R,j}$ as the subgraphs of $\tG_{L}$ and $\tG_{R}$ on edges of special sub-instance $H_{t,j}$, i.e., with edges $\tE_{L,t,j}$ and $\tE_{R,t,j}$, respectively. 

\subsubsection*{The Choice of Parameters and Their Ranges}
 We now specify how the parameters $p_r,q_r,b_r$, and $n_{r-1}$ are chosen, the restriction we have on their ranges, and their ``validity'' in their recursive calls (and~\Cref{prop:dup}). 

\paragraph{Parameters $b_r,n_{r-1}$.} Let $n_0$ be an arbitrarily large constant to be determined later (this controls the probability of success of the protocols). 
The distribution $\GG_r(n)$ for $r \geq 0$ rounds requires that
\begin{align}
n \geq \frac{1}{2} \cdot (2n_0)^{{2}^{r}-1}. \label{eq:n-r-relation}
\end{align} 
For $r=1$, $n_0$ is already defined and $b_1$ is defined from $n$ as 
\begin{align}
	b_1 = \paren{\frac{n}{2 \cdot n_0}}. \label{eq:define-b1}
\end{align}
For $r > 1$, $b_r$ and $n_{r-1}$ are defined from $n$ via the following equations (recall that $(a)\uparrow(b) = a^b$)
\begin{align}
\begin{split}
	&b_r = \paren{n_{r-1}} \uparrow \paren{1+\frac{1}{2^{r-1}-1}}, \\ 
	&2 \cdot b_r \cdot n_{r-1} = n,
\end{split} \label{eq:parameter-defining}
\end{align}
which implies that
\begin{align}
\begin{split}
	n_{r-1} &= \paren{\frac{n}{2}} \uparrow \paren{\frac{2^{r-1}-1}{2^r-1}} \\
	b_{r} &= \paren{\frac{n}{2}} \uparrow \paren{\frac{2^{r-1}}{2^{r}-1}}. 
\end{split} \label{eq:parameter-n-b}
\end{align}

In the distribution $\GG_r(n)$ for any $r \geq 1$, we call the distribution $\GG_{r-1}(n_{r-1})$. For $r=1$, $\GG_1(n)$ calls $\GG_0(n_0)$ which satisfies \Cref{eq:n-r-relation} as 
\[
\frac{1}{2} \cdot (2n_0)^{2^{1}-1} = n_0.
\] 
For $r > 1$, $\GG_r(n)$ calls $\GG_{r-1}(n_{r-1})$ and by the choice of $n_{r-1}$ in~\Cref{eq:parameter-n-b}, we have, 
\begin{align}
	n_{r-1} = \paren{\frac{n}{2}} \uparrow \paren{\frac{2^{r-1}-1}{2^r-1}} \Geq{\Cref{eq:n-r-relation}} \paren{\frac{1}{2} \cdot (2n_0)^{{2}^{r}-1}} \uparrow \paren{\frac{2^{r-1}-1}{2^r-1}} \geq \frac{1}{2} \cdot \paren{2n_0}^{2^{r-1}-1}, \label{eq:r-1-n-check-validity}
\end{align}
which means the call to $\GG_{r-1}(n_{r-1})$ also satisfies~\Cref{eq:n-r-relation} for $r-1$. This gives us the following. 

\begin{observation}\label{obs:eq-n-r-relation}
For any $r \geq 1$, if the distribution $\GG_r(n)$ satisfies~\Cref{eq:n-r-relation} for $n$, then any distribution $\GG_{r'}(n')$ for $r' < r$ called recursively in $\GG_r$ also satisfies~\Cref{eq:n-r-relation} for $r',n'$. 
\end{observation}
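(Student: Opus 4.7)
The plan is to prove the observation by a straightforward induction on the recursion depth $k := r - r'$. The base case $k = 1$ (i.e., $r' = r-1$) is already established in the paragraph preceding the observation: the computation in \Cref{eq:r-1-n-check-validity} chains the assumed inequality $n \geq \tfrac{1}{2}(2n_0)^{2^{r}-1}$ with the definition $n_{r-1} = (n/2)^{(2^{r-1}-1)/(2^r-1)}$ from \Cref{eq:parameter-n-b} and uses the elementary fact that $(x^a)^b = x^{ab}$ to conclude $n_{r-1} \geq \tfrac{1}{2}(2n_0)^{2^{r-1}-1}$, which is precisely \Cref{eq:n-r-relation} at level $r-1$ for the parameter $n_{r-1}$.

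For the inductive step, suppose the claim is known when the outer index is some value smaller than $r$. Since the base case guarantees that $\GG_{r-1}(n_{r-1})$ itself satisfies \Cref{eq:n-r-relation} for $(r-1, n_{r-1})$, I would simply apply the inductive hypothesis to the sub-problem rooted at $\GG_{r-1}(n_{r-1})$: every distribution $\GG_{r'}(n')$ with $r' < r-1$ that appears in the recursive expansion of $\GG_r(n)$ arises as a recursive call of $\GG_{r-1}(n_{r-1})$, so the inductive hypothesis yields the desired inequality. Combining with the base case, the claim follows for all $r' < r$.

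I do not anticipate any real obstacle; the argument is essentially a bookkeeping exercise that iterates the one-step computation in \Cref{eq:r-1-n-check-validity}. If a direct (non-inductive) proof is preferred, one could instead unroll the recurrence in \Cref{eq:parameter-n-b} to obtain a closed form $n_{r'} = (n/2)^{(2^{r'}-1)/(2^r-1)}$ and then verify $n_{r'} \geq \tfrac{1}{2}(2n_0)^{2^{r'}-1}$ directly from the assumed bound on $n$. The only mild care needed is to track the exponents $2^{r'} - 1$ correctly across levels, which is handled uniformly because the exponent ratios telescope: $(2^{r'}-1)/(2^r-1)$ applied to $n$ yields the same quantity as iteratively applying $(2^{r''-1}-1)/(2^{r''}-1)$ at each intermediate level $r''$.
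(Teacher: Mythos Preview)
Your proposal is correct and matches the paper's approach: the paper states the observation without a separate proof, relying on the one-step computation in \Cref{eq:r-1-n-check-validity} (together with the $r=1$ case immediately preceding it) to be iterated down the recursion, which is exactly the induction you spell out.
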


\paragraph{Parameters $p_r,q_r$ for $r \geq 1$.} Recall that we use a $(p_r,q_r,2^{r}-1)$-DUP graph with $b_r$ vertices in each layer in $\GG_r$. This means the the total number of vertices in this DUP graph is $b_r \cdot 2^{r}$. 
We define the parameters $p_r$ and $q_r$ as follows: 
 \begin{align}
	\begin{split}
   p_r&:= \frac{b_r\cdot 2^r}{\exp(\eta_p \cdot (\ln (b_r \cdot 2^r))^{3/4})}, \\
   q_r&:= \frac{b_r\cdot 2^r}{ \exp(\eta_q \cdot (\ln (b_r \cdot 2^r))^{3/4})}. 
  \end{split}\label{eq:parameter-p-q}
\end{align}
Firstly, these bounds match those of~\Cref{prop:dup} and thus we can apply the proposition to prove the existence of the required DUP graphs in our distribution.  We should also verify that
\[
	2^{r}-1 \leq 2^{(\log{(b_r \cdot 2^r)})^{1/4}},
\]
which is required by~\Cref{prop:dup}. For $r=1$, the LHS is $1$ and this holds trivially. For $r > 1$, we have by~\Cref{eq:parameter-n-b} that
\[
	b_r =\paren{\frac{n}{2}} \uparrow \paren{\frac{2^{r-1}}{2^{r}-1}} \Geq{\Cref{eq:n-r-relation}} \paren{\frac{1}{2} \cdot (2n_0)^{{2}^{r}-1}} \uparrow \paren{\frac{2^{r-1}}{2^{r}-1}} \geq \frac{1}{2} \cdot \paren{(2n_0)^{{2}^{r-1}}}. 
\]
This satisfies the above equation because 
\[
	\log{(2^{r}-1)} \leq r \qquad \text{whereas} \qquad (\log{(b_r \cdot 2^r)})^{1/4} \geq 2^{r-4} \cdot \log{(n_0)}
\]
and thus we obtain the equation for any large constant $n_0$ (with quite some room to spare). 

\begin{observation}\label{obs:graphs-exists}
For any $r \geq 1$, if the distribution $\GG_r(n)$ satisfies~\Cref{eq:n-r-relation} for $n$, then the DUP graphs created in $\GG_r$ (and in any recursive call to $\GG_{r'}$ for $r' < r$) 
with the given parameters $p_r$ and $q_r$ do exist (by~\Cref{prop:dup}). 
\end{observation}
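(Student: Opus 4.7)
The plan is that this observation is essentially a recording of what has already been verified in the two paragraphs immediately preceding it, so the proof will be a short packaging step combined with an appeal to \Cref{obs:eq-n-r-relation}. First I would handle the top level: the construction of $\GG_r(n)$ asks for a $(p_r,q_r,2^r-1)$-DUP graph on $b_r \cdot 2^r$ vertices, so I would apply \Cref{prop:dup} with vertex count $b_r \cdot 2^r$ and layer parameter $k = 2^r - 1$. The proposition has two preconditions: the vertex count must be sufficiently large, which is immediate from \Cref{eq:n-r-relation} together with the lower bound on $b_r$ derived from \Cref{eq:parameter-n-b}, and the layer-depth condition $k \leq 2^{(\log n)^{1/4}}$, which was checked in the preceding paragraph by showing that $b_r \geq \tfrac{1}{2}(2n_0)^{2^{r-1}}$ forces $(\log(b_r \cdot 2^r))^{1/4} \geq 2^{r-4}\log n_0$, dominating $\log(2^r - 1) \leq r$ once $n_0$ is a sufficiently large absolute constant. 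With both preconditions in place, \Cref{prop:dup} produces a DUP graph whose path-count and UPC-count match the formulas in \Cref{eq:parameter-p-q} exactly, so these are precisely the values $p_r$ and $q_r$ used in the construction.

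Second, I would descend to the recursive calls. By \Cref{obs:eq-n-r-relation}, any recursive invocation $\GG_{r'}(n_{r'})$ with $r' < r$ inside $\GG_r(n)$ also satisfies \Cref{eq:n-r-relation} at the pair $(r', n_{r'})$. I would then simply repeat the argument of the previous paragraph with $r'$ in place of $r$ for every such level, producing the required DUP graphs all the way down the recursion.

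The main step is pure bookkeeping; there is no real obstacle, since the substantive arithmetic -- verifying the layer-depth constraint of \Cref{prop:dup} -- was already performed in the paragraph defining $p_r$ and $q_r$. The only point to watch is that the constant $n_0$ must be fixed once and for all to be large enough that the layer-depth inequality $2^{r'} - 1 \leq 2^{(\log(b_{r'} \cdot 2^{r'}))^{1/4}}$ holds uniformly at every recursion level $r' \leq r$; because \Cref{eq:n-r-relation} implies doubly-exponential growth of $b_{r'}$ in $r'$, any sufficiently large absolute $n_0$ works simultaneously for all levels, so a single choice suffices throughout.
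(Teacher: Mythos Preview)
Your proposal is correct and matches the paper's approach exactly: the observation has no separate proof in the paper and is simply a summary of the arithmetic carried out in the preceding paragraphs, together with the appeal to \Cref{obs:eq-n-r-relation} for the recursive levels. Your packaging of those two pieces (top-level verification of the layer-depth constraint of \Cref{prop:dup}, then propagation down the recursion) is precisely what the paper intends.
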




\subsection{Basic Properties of the Distributions $\set{\GG_{r}(n)}_{r \geq 1}$}\label{sec:properties} 

We establish some basic properties of the distribution $\GG_r(n)$ for $r \geq 1$ in this subsection.  
We then use these 
to define \emph{search predicates} that reduce our task of proving the lower bound for MIS to determining the ability of protocols 
in figuring out a certain {predicate} for graphs sampled from $\GG_r$.

The first property identifies the ``real input'' to the players. 

\begin{property}\label{prop:input-create}
	For any $a \in [r]$, the input of player $\player{a}$ in $G_a \sim \GG_{r,a}(n)$ is determined deterministically by the input of the $a^\text{th}$ player in all sub-instances in $\cH$, i.e., $\cH_a := \set{H_{i,j,a} \mid i \in [q_r], j \in [p_r]}$. 
	The input of player $\player{r+1}$ is deterministically determined by the index $t \in [q_r]$ of the special UPC. 
\end{property}
\begin{proof}
	The choice of $\Gdup$ is fixed in $\GG_r(n)$ and does not include any randomness (think of it as being ``hardcoded'' in the definition of the distribution). For any $a \in [r]$, given only $\cH_{a}$, player $\player{a}$ can know
	the edges of $\tG_L$ and $\tG_R$ that belong to its input. Similarly, given only $t$, player $\player{r+1}$ can know what vertices of $\tG_L$ and $\tG_R$ are in $W^\star$ and thus 
	which edges should be added to the bipartite clique between vertices corresponding to $W \setminus W^{\star}$ in $\tG_L$ and $\tG_R$. 
\end{proof}

The next property shows that the distributions $\set{\GG_r(n)}_{r \geq 0}$ are product distributions -- a fact which is used crucially in our lower bound arguments. 

\begin{property}\label{prop:product-dist}
		For any $r \geq 0$, the distribution $\GG_r(n)$ is a product distribution, i.e., 
		\[
			\GG_r(n) = \GG_{r,1}(n) \times \GG_{r,2}(n) \times \cdots \times \GG_{r,r+1}(n). 
		\] 
\end{property}
\begin{proof}
		The proof is by induction on $r$. The base case when $r = 0$ is trivial as the entire graph is given to $\player{1}$. 
		Let us assume that the statement is true for $r = \ell$ and we prove it for $r=\ell+1$. 
		
		The input to players in $\GG_{\ell+1}(n)$, by~\Cref{prop:input-create} is determined by $q_{\ell+1} \cdot p_{\ell+1}$ sub-instances sampled from $\GG_{\ell}(n_{\ell})$ (for players $\player{1}$ to $\player{\ell+1}$) and by $t \in [q_{\ell+1}]$ (for player $\player{\ell+2}$). 
		The distribution of each sub-instance $H_{i,j}$ for $i \in [q_{\ell+1}]$ and $j \in [p_{\ell+1}]$ is a product distribution of the inputs to $\player{a}$ for all $a \in [\ell+1]$ by the induction hypothesis. 
		These sub-instances are also all sampled independently of each other, thus making the collective input to players $\player{1}$ to $\player{\ell+1}$ independent of each other. Finally, 
		the index $t \in [q_{\ell+1}]$ is sampled independent of all other variables, implying the input to player $\player{\ell+2}$ is also independent of the rest, completing the proof. 
\end{proof}

The next two properties together identify the key properties of special subgraphs and their role w.r.t.\ any MIS of the input graph. 

\begin{property}\label{prop:induced-subgraph}
	The special subgraphs are all {{vertex-disjoint}}. Moreover, the induced subgraph of $\tG_L$ (resp. $\tG_R$) on 
	the vertices corresponding to the special subgraphs contains only the edges of these subgraphs, i.e., the edges of $\sG_{L,j}$ (resp. $\sG_{R,j}$) for all $j \in [p_r]$.
\end{property}
\begin{proof}
	The special subgraphs inserted to either $\tG_L$ or $\tG_R$ are part of the embedding product of $\embed{(\cH \rightarrow \Gdup)}$ corresponding to the special UPC $\cP^{\star}$. 
	Thus, by \Cref{lem:embed-induced}, we know that they are vertex-disjoint and no other edges from any other subgraphs are present in the vertices of the special subgraphs. 
\end{proof}

\begin{property}\label{prop:recursive-mis}
	For any graph $G \sim \GG_r(n)$ for $r \geq 1$, any MIS $S$ of $G$ also contains an MIS for every special subgraph $\sG_{L,1},\ldots,\sG_{L,p_r}$ or every special subgraph $\sG_{R,1},\ldots,\sG_{R,p_r}$. 
\end{property}
\begin{proof}
	We consider three possible cases for the proof. 
	
	\paragraph{Case 1.}	Suppose first that there exists a vertex $v \in S$ which (1) belongs to $\tG_L$ and (2) corresponds to some vertex in $W \setminus W^{\star}$. 
	\Cref{fig:mis} gives an illustration of this case.

\begin{figure}[h!]
	\centering
	\includegraphics[width=0.8\textwidth]{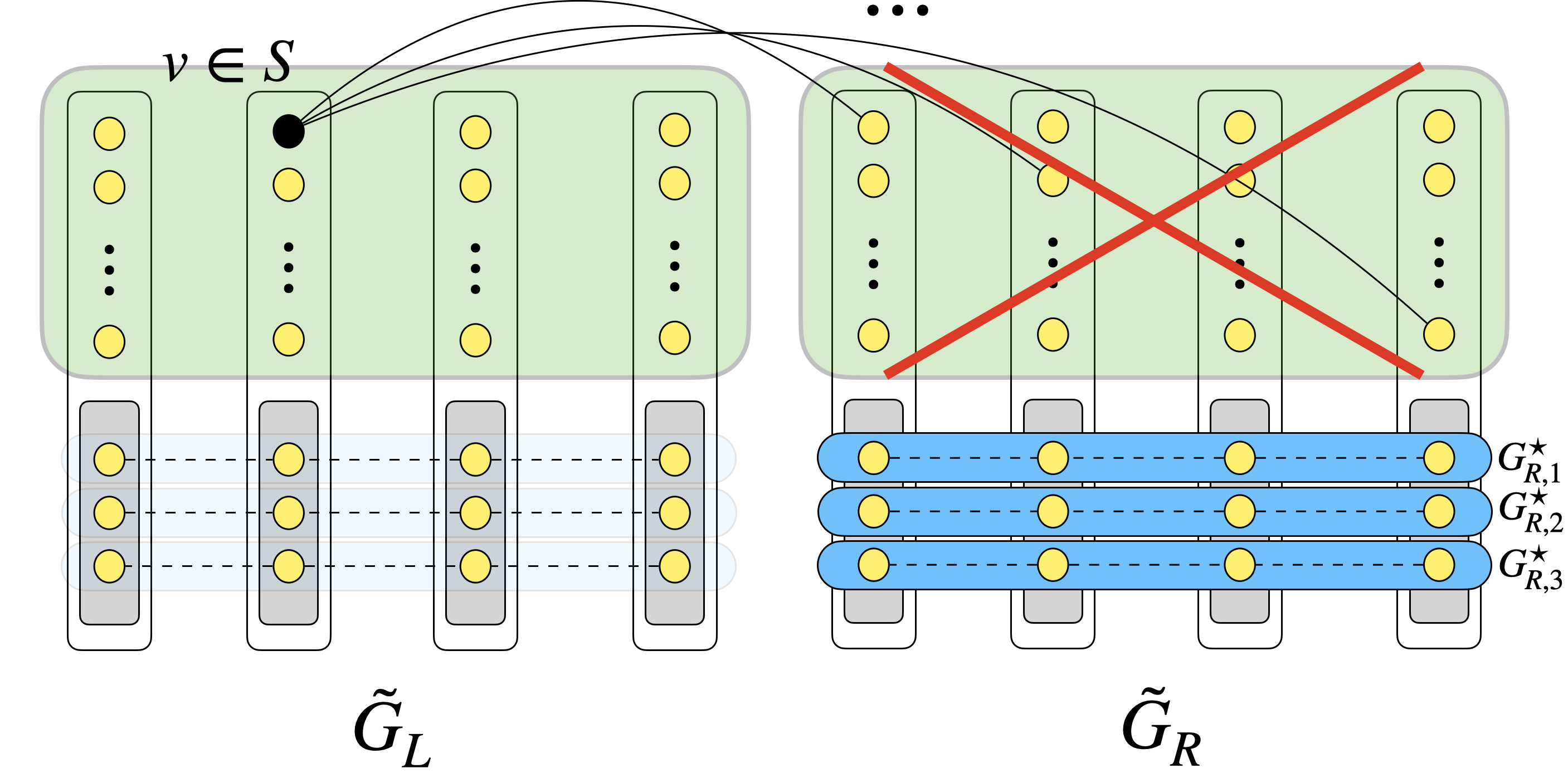}
	\caption{An illustration of case $1$: picking vertex $v$ in the MIS ``removes'' all vertices on $\tG_L$ that are not part of the special subgraphs as the remaining ones are all neighbor to $v$ (not all edges of $v$ are drawn). Since the 
	special subgraphs are now isolated, the MIS of $G$ now needs to contain a separate MIS for each of the special subgraphs in $\tG_R$ in bottom right (drawn in blue).}\label{fig:mis}
\end{figure}
	
	Since $G_{r+1} \subseteq G$ is a bipartite clique between vertices corresponding to $W \setminus W^{\star}$ in $\tG_L$ and $\tG_R$, we have that all vertices corresponding 
	to $W \setminus W^{\star}$ in $\tG_R$ are now incident on $v$ and cannot be part of the MIS $S$. The remaining vertices in $\tG_R$ are incident on $W^{\star}$ and thus $S$ 
	also needs to contain an MIS of the induced subgraph of $\tG_R$ on vertices corresponding to $W^{\star}$ or equivalently $\cP^{\star}$ -- this is because these vertices do not have any edge
	to $\tG_L$ and also no vertex of $\tG_R$ outside these can be part of $S$. 
	
	We can now apply the main property of the embedding product, i.e.,~\Cref{lem:embed-induced}, captured in~\Cref{prop:induced-subgraph} to have that the induced subgraph of $\tG_R$ on the vertices corresponding to the UPC $\cP^{\star}$ 
	is a vertex-disjoint union of subgraphs $\sG_{R,1},\ldots,\sG_{R,p_r}$. This immediately implies that the MIS of this subgraph of $\tG_R$ should be a union of the MISes of $\sG_{R,1},\ldots,\sG_{R,p_r}$, 
	which means the MIS $S$ contains an MIS for each of these special subgraphs. 
	
	\medskip

	\paragraph{Case 2.} Suppose the symmetric case that there exists a vertex $v \in S$ which (1) belongs to $\tG_R$ and (2) corresponds to some vertex in $W \setminus W^{\star}$.  The same exact argument implies that now $S$ should contain an MIS for every special subgraph $\sG_{L,1},\ldots,\sG_{L,p_r}$ instead. 

	\paragraph{Case 3.} Finally, suppose that no vertex in $S$ corresponds to a vertex of $W \setminus W^{\star}$ in either of $\tG_L$ or $\tG_R$. This, similar to the above, implies that $S$ should 
	contain an MIS for the induced subgraph of $\tG_L$ on vertices corresponding to $W^{\star}$ \emph{and} the induced subgraph of $\tG_R$ on vertices corresponding to $W^{\star}$. This in turn, again, as above, implies
	that $S$ now contains an MIS for  every special subgraph $\sG_{L,1},\ldots,\sG_{L,p_r}$ \emph{and} every $\sG_{R,1},\ldots,\sG_{R,p_r}$. 
\end{proof}

\subsection{Search Sequences and Predicates}\label{sec:search-sequence}

Before getting to analyze the distributions $\set{\GG_r(n)}_{r\geq 1}$, we need one other set of definitions, which capture
the notion of \emph{hierarchical embeddings} in our lower bounds. 

 At a high level, the analysis goes as follows. 
We are \emph{hiding} the choice of the special UPC of the instance from the first $r$ players in the first round, then UPC of special sub-instances from the first $r-1$ players in the second round, 
and so on and so forth, until at the end of the last round; at that point, we reach the ``inner most'' graphs sampled from $\GG_0$, whose edges are still hidden from the players, and now there is no more round to compute the answer. 
Making this intuition precise is going take work, but hopefully this provides some intuition for the following definitions.

\paragraph{Notation.}
Fix any $r \geq 1$. We use $p_j,q_j,n_j$ for $0 \leq j < r$ to denote the parameters $p,q,n$ in the instances sampled from the hard distribution for $j$ rounds, i.e., $\GG_j$, in the construction of the hard distribution on $r$ rounds with $n$ vertices, 
i.e., $\GG_r(n)$. 

\begin{Definition}\label{def:search-seq}
	For $r \geq 0$, we define $K = (k_r, k_{r-1}, \ldots, k_{1})$ to be a valid \textbf{search sequence} if $k_i \in [p_i]$ for each $i \in [r]$ ($K=\emptyset$ is the only valid search sequence for $r=0$). We interpret a search sequence on a graph $G=G_r \sim \GG_r(n)$ as: 
	\begin{itemize}
		\item $k_r$ points to the $k_r^{\textnormal{th}}$ special sub-instance of $G_r$, namely, $G_{r-1}:= H_{t_r,k_r} \sim \GG_{r-1}$ where $t_r$ is the index of the special UPC of $G_r$; 
		\item $k_{r-1}$ points to the $k_{r-1}^{\textnormal{th}}$ special sub-instance of $G_{r-1}$, namely, $G_{r-2} := H_{t_{r-1},k_{r-1}} \sim \GG_{r-2}$ where $t_{r-1}$ is the index of the special UPC of $G_{r-1}$; 
		\item We continue like this until $k_1$ points to the $k_{1}^{\textnormal{th}}$ special sub-instance of $G_1$, namely, $G_0 := H_{t_1,k_1} \sim \GG_{0}$ where $t_1$ is the index of the special UPC of $G_1$. 
		\item We now have a unique instance $G_0$ of $\GG_0$ defined by $K$. 
	\end{itemize}
	Finally, we define the \textbf{search predicate} of a $G \sim \GG_r(n)$ and a search sequence $K$ as:
	\[
		\PP_r(G, K) := \textnormal{a string in $\set{0,1}^{p_0}$ where the $i$-th bit is $1$ iff $(u_i,v_i)$ is an edge in $G_0$.}
	\]
\end{Definition}
\Cref{fig:search-predicate} gives an illustration of this definition. 
\begin{figure}[h!]
	\centering
	\includegraphics[scale=0.25]{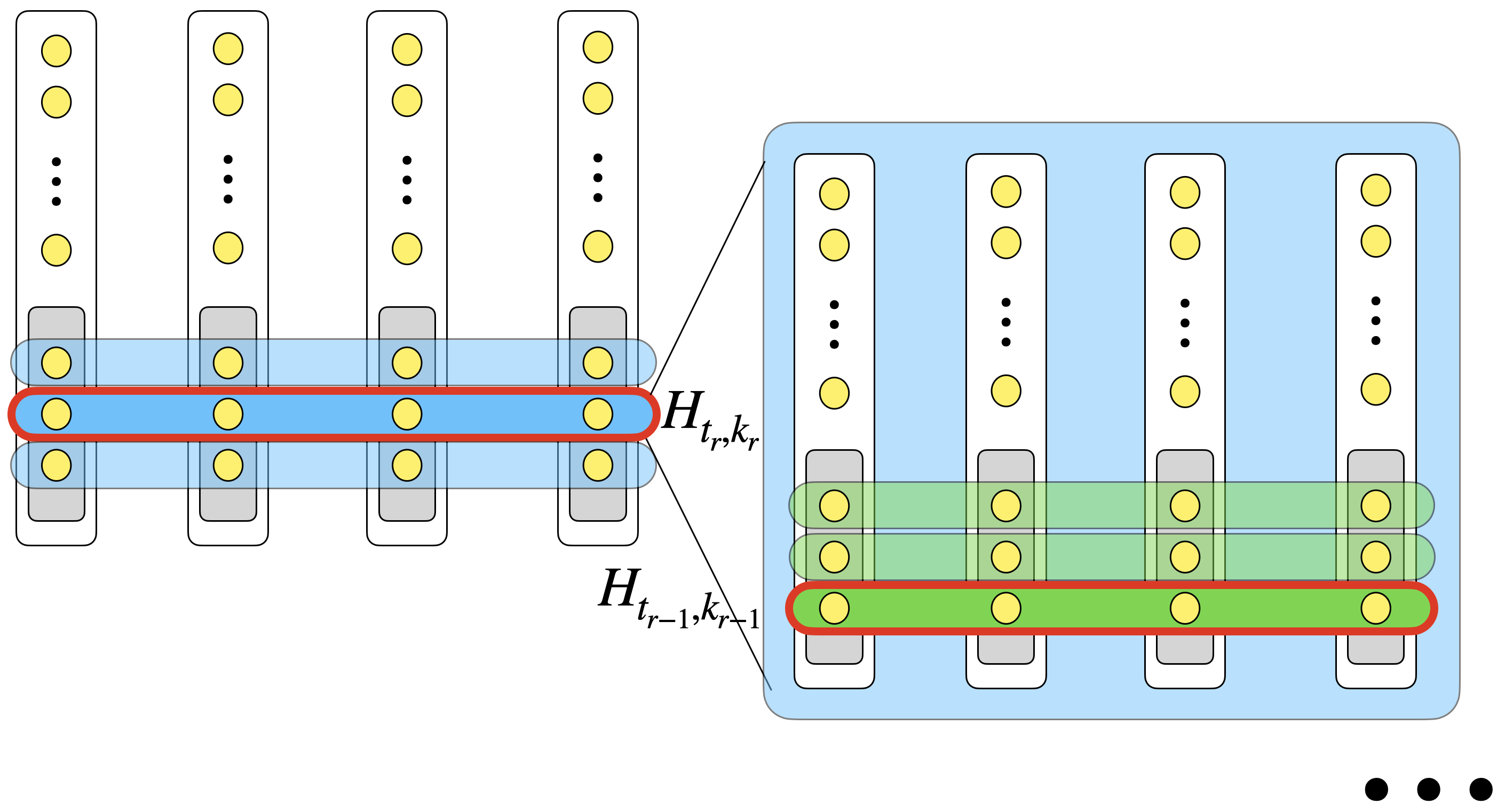}
	\caption{An illustration of a search sequence. Here, $k_r$ picks a special sub-instance $H_{t_r,k_r}$ from $\GG_{r-1}$ (with $t_r$ being the index of the special UPC of the outer instance), and then ``inside'' that instance, 
	$k_{r-1}$ picks another sub-instance $H_{t_{r-1},k_{r-1}}$ from $\GG_{r-2}$ (with $t_{r-1}$ being the index of the special UPC of the instance $H_{t_r,k_r}$), and this continues on.}\label{fig:search-predicate}
\end{figure}

Let $\prot$ be any $(r+1)$-party communication protocol on inputs sampled from $\GG_r(n)$, and let $\Prot = \Prot(G)$ be the transcript of $\prot$ at the end of $r$ rounds on an input $G \sim \GG_r(n)$. 
We say that the protocol $\prot$ \textbf{solves the search predicate} $\PP_r$ for  input graph $G$, if only given the transcript $\Prot$ and any search sequence $K$ (with no further access to $G$), 
$\PP_r(G,K)$ can be deterministically determined for \emph{all} valid search sequences $K$. 
We shall emphasize that the search sequence $K$ is \emph{not} a part of the input to the protocol $\prot$ (and is in fact even independent of the input graph $G$). 

The following lemma shows why search sequences are important to us.

\begin{lemma}\label{lem:reveal-search-bit}
	 For every $\delta \in (0,1)$, any $(r+1)$-party protocol $\prot$ that outputs an MIS of $G \sim \GG_r(n)$ with probability at least $\delta$, 
	 also solves the search predicate $\PP_r(G,K)$ for \emph{every} valid search sequence $K$ with probability at least $\delta$. 
\end{lemma}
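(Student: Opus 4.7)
The plan is to prove the following combinatorial strengthening by induction on $r$: for every $G \sim \GG_r(n)$, every MIS $S$ of $G$, and every valid search sequence $K$, the predicate $\PP_r(G,K)$ is a deterministic function of $S$ together with the publicly fixed scaffolding (the DUP graph $\Gdup$, the embedding product, and the vertex labels) used to define $\GG_r$. The lemma follows immediately: with probability at least $\delta$ the transcript $\Prot$ contains an MIS of $G$, and on that event applying the above deterministic procedure to $\Prot$ and $K$ outputs $\PP_r(G,K)$.

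For the base case $r=0$, a graph $G\sim\GG_0$ is a disjoint union of potential edges $\{(u_i,v_i)\}_{i\in[p_0]}$ and the only valid search sequence is $K=\emptyset$. Every MIS $S$ contains both endpoints of $(u_i,v_i)$ iff the edge is absent, and exactly one endpoint iff the edge is present; hence the $i$-th bit of $\PP_0(G,\emptyset)$ equals $\mathbf{1}[\,\card{S\cap\{u_i,v_i\}}=1\,]$, which is computable from $S$ alone.

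For the inductive step $r\ge 1$, fix $K=(k_r,k_{r-1},\ldots,k_1)$. By \Cref{prop:recursive-mis}, $S$ contains an MIS of every special subgraph on at least one of the two sides $\sG_{L,1},\ldots,\sG_{L,p_r}$ or $\sG_{R,1},\ldots,\sG_{R,p_r}$; this side is identifiable from $S$ by simply testing whether the candidate restrictions are valid MISes of the corresponding pieces. To identify the index $t_r\in[q_r]$ of the special UPC I will use \Cref{prop:induced-subgraph} combined with the bipartite-clique structure of $\player{r+1}$'s input: for $i=t_r$, the induced subgraph on the vertices embedded along $\cP_i$ decomposes into $p_r$ independent $\GG_{r-1}$-instances, so $S$ restricted to these vertices is a legitimate union of MISes of these pieces; for $i\neq t_r$, the bipartite clique spanning the non-special vertices of $\tG_L$ and $\tG_R$ makes that pattern impossible. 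Thus $t_r$ is forced by $S$. With $t_r$ in hand, the restriction $S':=S\cap V(\sG_{\cdot,k_r})$ is an MIS of a copy of $H_{t_r,k_r}\sim\GG_{r-1}(n_{r-1})$, and applying the induction hypothesis to this sub-instance with residual search sequence $K':=(k_{r-1},\ldots,k_1)$ recovers $\PP_{r-1}(H_{t_r,k_r},K')$, which by \Cref{def:search-seq} equals $\PP_r(G,K)$.

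The main obstacle I anticipate is making the identification of $t_r$ from $S$ fully rigorous; the rest of the argument is an essentially automatic descent through the hierarchy once \Cref{prop:recursive-mis} is in place. The sketched argument combines the bipartite-clique structure of $\player{r+1}$'s input with \Cref{prop:induced-subgraph} and should go through by a short case analysis mirroring the three cases in the proof of \Cref{prop:recursive-mis}. If that proves delicate, a cleaner fallback is to let the deterministic reconstruction use the full transcript rather than only its final MIS---since $\player{r+1}$'s input is exactly $t_r$, the transcript unavoidably carries at least as much information about $t_r$ as $S$ does, and the argument then proceeds identically.
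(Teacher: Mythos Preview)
Your approach is essentially the paper's: induct on $r$, with the identical base case, and for the inductive step invoke \Cref{prop:recursive-mis} to find inside $S$ an MIS of (a copy of) $H_{t_r,k_r}$, then apply the induction hypothesis with the residual sequence $K'=(k_{r-1},\ldots,k_1)$ to get $\PP_r(G,K)=\PP_{r-1}(H_{t_r,k_r},K')$. The paper's proof is shorter only because it does not address how to locate $t_r$ (or the side $L/R$) from $S$; it simply names $\sG_{L,k_r},\sG_{R,k_r}$ using $t_r$ and moves on.

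You correctly flag that making the reconstruction a function of $S$ alone requires explaining how to find the relevant special subgraph, and the paper does gloss over this. However, neither of your two fixes is complete as written. For your first test, note that by \Cref{lem:embed-induced} the induced subgraph along \emph{every} UPC $\cP_i$ decomposes into $p_r$ vertex-disjoint $\GG_{r-1}$-instances, so the decomposition itself is not a distinguishing feature; you point to the bipartite clique as what rules out $i\neq t_r$, which is the right instinct, but the argument is not spelled out (and ``check whether the restriction is a valid MIS of the pieces'' would require knowing the random sub-instance edges, which are not public). Your fallback is also wrong as stated: the transcript records the players' \emph{messages}, not their inputs, so nothing forces it to encode $t_r$ merely because $\player{r+1}$ holds $t_r$. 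A clean fix, sufficient for how the lemma is used downstream, is to replace $\prot$ by the protocol that runs $\prot$ and has $\player{r+1}$ append $t_r$ to its final message, at a cost of $O(\log q_r)$ extra bits; with $t_r$ in hand your inductive reconstruction goes through verbatim.
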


\begin{proof}
	We claim that:
	\begin{quote}
		\vspace{-0.5em}
	For any $r \geq 0$, any MIS of $G \sim \GG_r(n)$ uniquely determines $\PP_r(G,K)$ for all valid search sequences $K$. 
		\vspace{-0.5em}
	\end{quote} 
	We can then conclude the proof as follows.
	Protocol $\prot$ will be computing some MIS of $G$ with probability at least $\delta$. Thus, whenever $\prot$ succeeds, the transcript $\Prot$ contains an MIS that can be used to solve the predicate $\PP_r(G,K)$ for 
	all $K$ with probability at least $\delta$ as well. 

	We now prove the above statement by induction on $r$. The base case when $r=0$ 
	is a graph $G_0$ consisting of $n_0 = 2p_0$ vertices $u_i$ and $v_i$ for each $i \in [p_0]$ and $K=\emptyset$. Each such $u_i$ and $v_i$ pair for $i \in [p_0]$ is possibly an edge in $G_0$. If the edge does exist, any MIS necessarily should have exactly one of $u_i$ or $v_i$ but not both, whereas if the edge
	does not exist, then any MIS of $G_0$ should contain both $u_i$ and $v_i$. Thus, given any MIS of $G_0 \sim \GG_0$, we can determine the predicate $\PP_0(G,\emptyset)$. 

	For the induction step, suppose the statement is true for $r-1$ and we prove it for $r$. Let $K:=(k_{r},\ldots,k_1)$ be any search sequence. By~\Cref{prop:recursive-mis}, any MIS of $G \sim \GG_{r}$ also 
	contains an MIS either for \emph{every} special subgraph $\sG_{L,1},\ldots,\sG_{L,p_r}$ or for \emph{every} special subgraph $\sG_{R,1},\ldots,\sG_{R,p_r}$. 
	In particular, it contains an MIS for either $\sG_{L,k_{r}}$ or $\sG_{R,k_{r}}$. Both these graphs are identical and correspond to the $k_{r}^{\textnormal{th}}$ special sub-instance $H_{t,k_{r}}$ in $G$ 
	where $t$ is the index of the special UPC. As $H_{t,k_{r}} \sim \GG_{r-1}$, by the induction hypothesis, an MIS of $H_{t,k_{r}}$ uniquely determines 
	\[
	\PP_{r-1}(H_{t,k_{r}},(k_{r-1},k_{r-2},\ldots,k_1)).
	\]
	But, by definition, we also have 
	\[
		\PP_{r}(G,K) = \PP_{r-1}(H_{t,k_{r}},(k_{r-1},k_{r-2},\ldots,k_1)),
	\] 
	as $k_{r} \in K$ is just pointing to the special sub-instance $H_{t,k_{r}}$. This finalizes the proof. 
\end{proof}


\newcommand{\rT}{\rv{t}}
\newcommand{\rH}{\rv{H}}
\newcommand{\rK}{\rv{k}}

\newcommand{\sigmae}{\overline{\sigma}}
\newcommand{\Sigmae}{\overline{\Sigma}}

\newcommand{\rSigma}{\rv{\Sigma}}

\section{Analysis of the Hard Distribution}\label{sec:analysis} 

We present the analysis of the lower bound for distributions $\set{\GG_r}_{r \geq 1}$ in this section, and conclude the proof of~\Cref{thm:mis-cc}. 
By~\Cref{lem:reveal-search-bit}, we need to focus on the ability of protocols for solving the search predicate. The following lemma 
captures our lower bound for this task. 

\begin{lemma}\label{lem:communication-lb}
	For any $r \geq 1$, any $r$-round protocol $\prot$ that given $G \sim \GG_r(n)$ (for $n,r$ satisfying~\Cref{eq:n-r-relation}), can solve $\PP_r$ on input graph $G$ with
	 probability of success at least 
	 \[
	 2^{-p_0} \cdot \paren{1 + \frac{r}{20 \cdot (r+1)}}
	 \]
	 has communication cost
	\[
		\cc{\prot} \geq s_r(n) := \frac1{{n_0}^2 \cdot (2^{p_0} \cdot 40 \cdot \raoconst)^r \cdot ((r+1)!)^2 \cdot e^{3r\ln^{5/6}(n)}} \cdot  \paren{{(n)}\uparrow \paren{1+\frac{1}{2^{r}-1}}}.
	\]
\end{lemma}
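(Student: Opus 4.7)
I would proceed by induction on $r$, realizing the hierarchical round-elimination strategy of Section~2. The base case $r=1$ can be handled by a direct argument: $\GG_1(n)$ is structurally the one-round hard distribution underlying~\cite{AssadiCK19a, CormodeDK19}, and solving $\PP_1$ requires revealing the $p_0$ indicator bits of a $\GG_0$-instance hidden inside one of $q_1$ UPCs of the outer DUP graph; a reduction to an \textsc{Index}-style task, tracking the $(1+1/40)$ advantage over $2^{-p_0}$, yields $\Omega(s_1(n))$. The inductive step, where I expect the real difficulty to lie, is sketched below.

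\textbf{The round-elimination reduction.} Assuming the lemma for $r-1$, suppose towards contradiction that some $r$-round $(r+1)$-party protocol $\prot$ with $\cc{\prot} < s_r(n)$ solves $\PP_r$ on $\GG_r(n)$ with the stated success probability. I would build an $(r-1)$-round $r$-party protocol $\prot'$ for $\PP_{r-1}$ on $\GG_{r-1}(n_{r-1})$ as follows. On input $G_{r-1}$, the players use public randomness to jointly sample a first-round transcript $\Prot_1$ of $\prot$ together with the family $\cH$ of sub-instances and the special-UPC index $t$, all distributed as under $\GG_r(n)$, and additionally sample $k \in [p_r]$ uniformly. Each player then substitutes its portion of $G_{r-1}$ into the slot $H_{t,k}$ of $\cH$ --- well-defined because the inputs split player-wise by~\Cref{prop:input-create} --- and simulates rounds $2,\ldots,r$ of $\prot$ on the resulting instance. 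To answer $\PP_{r-1}(G_{r-1}, K')$ for an arbitrary valid search sequence $K'$, the last player extracts $\PP_r(G, (t,k,K'))$ from $\prot$'s output; these coincide since the search sequence $(t,k,K')$ descends into the embedded slot.

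\textbf{Success probability.} The simulated joint distribution differs from the true conditional $\GG_r(n) \mid H_{t,k} = G_{r-1}$ only because $\Prot_1$ is drawn independently of the embedded $G_{r-1}$ in the simulation, whereas it would be coupled under the true distribution. By product structure (\Cref{prop:product-dist}) and $\ic{\prot}{\GG_r} \leq \cc{\prot}$ (via~\Cref{prop:ic-cc}), the expected information $\Prot_1$ carries about a uniformly random slot $H_{t,k}$ is at most $s_r(n)/(q_r p_r)$; Pinsker's inequality then bounds the total variation deviation of the simulation by roughly $\sqrt{s_r(n)/(q_r p_r)}$ per slot. With the parameter choices in~\Cref{eq:parameter-n-b,eq:parameter-p-q} this is comfortably smaller than the slack $1/(20\,r\,(r+1))$ that I would budget per level, so $\prot'$ inherits a success probability of at least $2^{-p_0}\bigl(1+(r-1)/(20\,r)\bigr)$ as required by the inductive statement.

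\textbf{Communication cost; the main obstacle.} The principal challenge is that naively $\prot'$ pays the full $\cc{\prot}$ bits to simulate rounds $2,\ldots,r$, whereas to contradict the inductive hypothesis I need a communication bound below $s_{r-1}(n_{r-1})$, which by $2 b_r n_{r-1} = n$ and $b_r = n_{r-1}^{1+1/(2^{r-1}-1)}$ scales like $\cc{\prot}/p_r$ up to lower-order factors. I plan to close this gap with message compression (\Cref{prop:msg-compress}): for each round $\rho \geq 2$ and each player, I would take as target distribution $\cA$ the true next-message distribution conditioned on the full simulated history, and as proposal $\cB$ the marginal of that message given only the public transcript so far. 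Aggregating $\kl{\cA}{\cB}$ over rounds, players, and the random slot $(t,k)$ telescopes to the external information cost $\ic{\prot}{\GG_r} \leq \cc{\prot}$; since the embedded slot is uniform among $p_r$ candidates inside the hidden UPC, averaging over $k$ produces an expected per-round cost of at most $\raoconst\bigl((\cc{\prot}/p_r)+1\bigr)$, which Markov truncation upgrades to a worst-case bound. I expect the most delicate step to be reconciling the distributional mismatch between the simulated inputs and the true $\GG_r$-samples: compression must be applied against the \emph{true} conditional message distributions rather than the simulated ones, which is precisely what the multi-party external information cost formalism (\Cref{def:ext-info}) is designed to accommodate. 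The resulting bookkeeping --- $r$ rounds, $r+1$ players, a compression overhead of $\raoconst$ per message, a $2^{p_0}$ factor from amplifying the advantage over $2^{-p_0}$, and the $e^{\Theta(\ln^{3/4}(\cdot))}$ losses in $p_r,q_r$ from~\Cref{prop:dup} --- then produces exactly the factors $n_0^2$, $(2^{p_0}\cdot 40\cdot \raoconst)^r$, $((r+1)!)^2$, and $e^{3r\ln^{5/6}(n)}$ appearing in the stated $s_r(n)$.
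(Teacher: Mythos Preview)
Your overall architecture is right---induction via round elimination, direct-sum for the $1/p_r$ savings, Pinsker for the TVD loss, and message compression---and matches the paper's. But there is a genuine order-of-operations gap in your compression argument.

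You sample $\Prot_1$ publicly \emph{first} (independent of the embedded $G_{r-1}$) and only \emph{then} compress rounds $2,\ldots,r$. At that point the joint distribution of $(\Prot_1, G_r)$ is the ``fake'' distribution $\faked$, not $\GG_r$. Your claim that ``aggregating $\kl{\cA}{\cB}$ over rounds, players, and $(t,k)$ telescopes to $\ic{\prot}{\GG_r}$'' is false on $\faked$: the telescoping identity (\Cref{fact:kl-info}) gives you the information cost of $\prot$ on whatever input distribution you are actually running it on, and that is $\faked$, not $\GG_r$. Likewise the direct-sum factor $1/p_r$ comes from the embedded slot being exchangeable with the other $p_r-1$ special slots, but in $\faked$ your slot is drawn from $\GG_{r-1}$ unconditionally while the others are drawn conditional on $\Prot_1$, so exchangeability fails. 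Your attempted fix---``compression must be applied against the \emph{true} conditional message distributions''---does not work: the sender in \Cref{prop:msg-compress} must know $\cA$, and $\cA$ is determined by her actual (fake) input, not by a hypothetical true one she does not possess. More generally, information cost is not robust under small TVD perturbations of the input distribution (KL is unbounded), so closeness of $\faked$ and $\reald$ does not transfer the information-cost bound.

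The paper's resolution, which your write-up is missing, is to \emph{reverse the order}: first build an $r$-round embedding protocol $\sigma$ in which the players genuinely \emph{compute} $\Prot_1$ from their constructed $G_r$ (so the input distribution is exactly $\GG_r$, see \Cref{obs:reald-is-real}); prove $\ic{\sigma}{\GG_{r-1}} \leq \ic{\prot_r}{\GG_r}/p_r$ by direct-sum on this \emph{real} distribution (\Cref{clm:low-info-cost}); compress and Markov-truncate $\sigma$ to a protocol $\prot'$ with bounded \emph{communication} (\Cref{clm:prot'}); and only then sample $\Prot_1$ publicly to drop a round. Because communication cost is a worst-case quantity, $\prot'$ keeps its communication bound on $\faked$ for free, and the success-probability loss is handled by the $\reald$--$\faked$ TVD bound (\Cref{lem:reald-faked}). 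The paper's \Cref{rem:faked-ic} flags exactly this subtlety.

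Two minor corrections: the search sequence you feed to $\PP_r$ should be $(\kstar, K')$, not $(t, \kstar, K')$---the index $t$ of the special UPC is part of the instance, not of the search sequence (\Cref{def:search-seq}). And for the base case the paper does not do a separate Index-style reduction; it applies the same round-elimination lemma once more to reach a $0$-round protocol, which trivially succeeds with probability exactly $2^{-p_0}$.
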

We do note that ignoring all extra (and lower order) terms in the lemma (that are needed for a proper inductive argument), the lemma simply says that obtaining any probability of success better than $2^{-p_0}$ (an arbitrarily small constant) 
requires roughly $n^{1+1/(2^r-1)}$ communication.

We prove~\Cref{lem:communication-lb} inductively using a \emph{round elimination} argument: if we have a ``very good'' $r$-round protocol, then we should be able to eliminate its first round and 
also obtain a ``good enough'' $(r-1)$-round protocol; keep doing this then eventually bring us to a ``non-trivial'' protocol for the $r=0$ case, which we know cannot exist. 
The following lemma---which is the heart of the proof---allows us to establish the induction step.

\begin{lemma}\label{lem:round-elim-final}
	For every $0 < \eps<\delta <1$ and integer $s \geq 1$, the following is true.  
	Suppose there is a deterministic $r$-round $(r+1)$-party protocol $\prot_r$ with communication cost $\cc{\prot_r} \leq s$ that solves predicate $\PP_r$ with probability at least $\delta$ for a graph $G \sim \GG_r(n)$ (for $n,r$ satisfying~\Cref{eq:n-r-relation}). 
	Then, for constant $\raoconst$ (from \Cref{eq:msg-compress-final}), there is an $(r-1)$-round $r$-party deterministic protocol $\prot_{r-1}$ with communication cost 
	\[
	\cc{\prot_{r-1}} \leq \frac{\raoconst}{\eps} \cdot (\frac{s}{p_r}+ r^2)
	\]
	that solves $\PP_{r-1}$ for $G' \sim \GG_{r-1}(n_{r-1})$ (for $n_{r-1}$ from~\Cref{eq:parameter-n-b}) with probability of success at least
	\[
	\delta - \eps - \sqrt{\frac{s}{2p_r \cdot q_r}}. 
	\]
\end{lemma}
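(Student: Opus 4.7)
The plan is to execute a round-elimination-via-direct-sum argument. Given the deterministic $r$-round $(r+1)$-party protocol $\prot_r$, we build a randomized $(r-1)$-round $r$-party protocol $\prot_{r-1}$ for $\GG_{r-1}(n_{r-1})$ by publicly sampling a ``context'' around the real input $G'$ and simulating $\prot_r$ from its second round onward. Concretely, on input $G'\sim\GG_{r-1}(n_{r-1})$ partitioned among $r$ players, the parties use public randomness to jointly sample $(\rT,\rK,\cH_{-(t,k)},\rProt_1)$ from the marginal induced by $\GG_r$, where $\rT=t\in[q_r]$ is the special-UPC index, $\rK=k\in[p_r]$ is uniform, $\cH_{-(t,k)}$ is the family of all other sub-instances, and $\rProt_1$ is the first-round transcript. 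They then embed $G'$ into slot $(t,k)$. By \Cref{prop:input-create} each simulating player can reconstruct its input to $\prot_r$ from its share of $G'$ and the publicly sampled context; the role of the absent $(r{+}1)$st player is absorbed entirely by public randomness since its input is a deterministic function of $\rT$ alone. The parties then run $\prot_r$ from round~$2$ onward and output its answer. By \Cref{lem:reveal-search-bit} together with \Cref{def:search-seq}, whenever $\prot_r$ solves $\PP_r$ on the embedded graph, the output solves $\PP_{r-1}(G',K')$ for every $(r{-}1)$-search sequence $K'$, since any such $K'$ extends uniquely to an $r$-search sequence starting with $\rK=k$ that picks out the slot holding $G'$.

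The first part of the analysis controls the distributional gap introduced by sampling $\rProt_1$ independently of $G'$. Under the true $\GG_r$ law conditioned on $(\rT,\rK,H_{t,k})=(t,k,G')$, the first-round transcript $\rProt_1$ is correlated with $H_{t,k}$; in our simulation, that correlation is missing. By \Cref{prop:product-dist} the prior on $H_{t,k}$ is exactly $\GG_{r-1}(n_{r-1})$, and since players $\player{1},\ldots,\player{r}$ do not see $\rT$, a standard direct-sum decomposition of $\mi{\rProt_1}{\cH\mid \rT}$ over the $p_r q_r$ independent slots (using $\cc{\prot_r}\le s$ as an upper bound on $\en{\rProt_1}$) gives $\Exp_{t,k}\bigl[\kl{\distribution{H_{t,k}\mid t,k,\Prot_1}}{\GG_{r-1}(n_{r-1})}\bigr]\le s/(p_r q_r)$. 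Pinsker's inequality then yields the advertised $\sqrt{s/(2 p_r q_r)}$ additive loss in success probability.

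The second part handles communication, which I expect to be the main obstacle. Naively, the simulation transmits up to $s$ bits, way over budget. The fix is to compress each message of rounds $2,\ldots,r$ individually via \Cref{prop:msg-compress}: the sender knows its own true conditional message distribution (computable from the public sample and its input) and the external decoder uses the marginal unconditional on that input as its reference. Summing \eqref{eq:msg-compress-final} over messages telescopes to $\raoconst\cdot(\ic{\prot_r}{\mu^{\star}}+O(r))$, where $\mu^\star$ is the sampled joint law, which has the same distribution as $\GG_r$ by construction. Because the index $\rK$ is uniform over $[p_r]$ and independent of the protocol's view, the external information about the particular slot $(t,k)$ is $1/p_r$ of the total, so the expected compressed cost is $O(s/p_r+r^2)$, where the $r^2$ absorbs $O(1)$ compression overhead across $O(r)$ messages over $r-1$ rounds. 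Converting expected to worst-case by Markov costs a $1/\eps$ blowup and $\eps$ additive probability loss, and fixing the best public randomness derandomizes $\prot_{r-1}$.

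The delicate point is reconciling (i) information cost being a property of the underlying distribution with (ii) the fact that our simulation runs $\prot_r$ on $\GG_r$ conditioned on $H_{t,k}=G'$, not on $\GG_r$ itself. Message compression sidesteps this because it only pays KL between the sender's true conditional and an \emph{arbitrarily chosen} reference; taking that reference to be the sampled marginal makes the resulting sum exactly the external information of $\prot_r$ on $\mu^\star=\GG_r$, so the bound $\ic{\prot_r}{\GG_r}\le \cc{\prot_r}\le s$ from \Cref{prop:ic-cc} applies directly. This avoids the standard obstruction, familiar from direct-product work such as \cite{BravermanRWY13a,BravermanRWY13b,BravermanW15}, that information cost does not transfer between close-but-not-equal distributions, and is what lets the per-slot savings of $1/p_r$ propagate cleanly through $r-1$ rounds of simulation.
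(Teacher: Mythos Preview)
Your high-level plan is the right one, but there is a genuine gap in the compression step, and it stems from a single design choice: you make \emph{all} other sub-instances $\cH_{-(t,k)}$ part of the public randomness.

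Once $\cH_{-(t,k)}$ is public, the sender's ``true conditional'' message distribution that you feed into \Cref{prop:msg-compress} is conditioned on $\cH_{-(t,k)}$ as well as on $G'_a$. Since $\prot_r$ is deterministic and the sender now knows its entire input $G_{r,a}$, this conditional is a point mass, and the expected compression cost telescopes to $\mi{\rProt_{\geq 2}}{G' \mid \rT,\rK,\cH_{-(t,k)},\rProt_1}$. The $1/p_r$ direct-sum saving you invoke does \emph{not} survive this extra conditioning: averaging over $\rK$ gives $\frac{1}{p_r}\sum_k \mi{\rProt}{\rH_{t,k}\mid \rH_{t,-k},\cdots}$, and for independent $\rH_{t,1},\ldots,\rH_{t,p_r}$ one has in general $\sum_k \mi{\rProt}{\rH_{t,k}\mid \rH_{t,-k}} \geq \mi{\rProt}{\rH_{t,*}}$ (take $\rProt=\bigoplus_k \rH_{t,k}$ on bits: each summand is $1$, the right side is $1$). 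So the average is not bounded by $\frac{1}{p_r}\cdot\ic{\prot_r}{\GG_r}$, and your compressed cost can be as large as $s$, not $s/p_r$. Relatedly, the assertion that ``$\mu^\star$ has the same distribution as $\GG_r$'' is false: under your sampling, $H_{t,k}=G'$ is independent of $\rProt_1$, whereas under $\GG_r$ it is not---this is precisely the TVD gap you analyze in the first part, so you cannot also use $\mu^\star=\GG_r$ in the second.

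The paper's fix is to separate the two concerns and reverse their order. First, define an $r$-round $r$-party protocol $\sigma$ whose only public randomness is $(\rT,\rK)$; each player \emph{privately} samples its own slice $H_{i,j,a}$ of the other sub-instances, so that the embedded graph is distributed exactly as $\GG_r$. One then proves $\ic{\sigma}{\GG_{r-1}}\leq \frac{1}{p_r}\ic{\prot_r}{\GG_r}$ by the standard direct-sum (this is where keeping $\cH_{-(t,k)}$ private is essential), and compresses \emph{all} $r$ rounds of $\sigma$ to get a protocol $\prot'$ with \emph{worst-case} communication $\leq \frac{\raoconst}{\eps}(s/p_r+r^2)$ and success $\delta-\eps$. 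Only \emph{after} this does one eliminate round~1, by sampling $\rProt_1$ from its marginal and running $\prot'$ from round~2 onward. Because the communication bound on $\prot'$ is now worst-case, it is unaffected by the change of distribution; only the success probability drops, by exactly the TVD term $\sqrt{s/(2p_rq_r)}$ you computed. In short: compress under the real distribution first, then change the distribution---not the other way around.
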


We spend the bulk of this section in proving~\Cref{lem:round-elim-final}. We then use this lemma to prove~\Cref{lem:communication-lb} easily in \Cref{sec:proof-communication-lb} 
and subsequently use it to conclude the proof of~\Cref{thm:mis-cc} in \Cref{sec:proof-mis-cc}.

\subsection{The Setup for the Proof of~\Cref{lem:round-elim-final}} 

We now start the proof of~\Cref{lem:round-elim-final} which is the most technical part of the paper. Fix any $r \geq 1$ and let $\prot_r$ be a $r$-round $(r+1)$-party protocol for solving $\PP_{r}$ 
with the parameters specified in~\Cref{lem:round-elim-final}. We further define (or recall) the following notation: 

\begin{itemize}
	\item $\GG_{r,a}(n)$ for $a \in [r+1]$ to denote the distribution of the input subgraph $G_a$ given to player $\player{a}$; we further use $H_a$ to denote the input of $\player{a}$ in a specific sub-instance $H \in \cH(G)$. 
	\item $\rT$ to denote the random variable for the index $t \in [q_r]$ of the special UPC picked in $\GG_r(n)$. 
	\item $H_{i,*}$ for any $i \in [q_r] $ to denote the  sub-instances $(H_{i, 1}, H_{i, 2}, \ldots, H_{i, p_r})$ together in $G \sim \GG_r(n)$. The sub-instances given to player $\player{a}$ for $a \in [r+1]$ 
	are denoted by $H_{i,*,a} = (H_{i, 1,a}, H_{i, 2,a}, \ldots, H_{i, p_r,a})$. 
	\item $\Pi = (\Pi_1,\ldots,\Pi_r)$ as the set of messages communicated by the players in rounds $1$ to $r$. Similarly, we use $\Pi_{i,a}$ for $a \in [r+1]$ to denote the messages of player $\player{a}$ in round $i \in [r]$, 
	and $\Pi_{*,a}$ to denote all messages of $\player{a}$. We use $\rProt, \rProt_i,$ and $\rProt_{i,a}$ as the corresponding random variables for these messages. 
\end{itemize}

Our goal is to go from a $r$-round $(r+1)$-party protocol $\prot_r$ for $\PP_r$ with communication cost $s$ and probability of success $ \delta$, to a 
$(r-1)$-round $r$-party protocol $\prot_{r-1}$ for $\PP_{r-1}$ with communication cost $\approx s/(\eps \cdot p_r)$ and success probability $\approx \delta-\eps - \sqrt{s/(p_r \cdot q_r)}$. We do this in two steps: 
\begin{itemize}
\item \textbf{Step 1:} We first \emph{shave off} one player from $\prot_r$ and obtain an intermediate $r$-party protocol $\prot'$ with communication cost $\approx s/(\eps \cdot p_r)$ and success probability $\approx\delta - \eps$ 
for $\PP_{r-1}$. However, $\prot'$ still has $r$ rounds instead of our desired $(r-1)$ rounds. 
\item \textbf{Step 2:} We then \emph{shave off} one round from $\prot'$ and obtain the protocol $\prot_{r-1}$ without increasing the communication but by decreasing the success probability with another $\approx \sqrt{s/(p_r \cdot q_r)}$ term. 
\end{itemize}

We implement each step in the following two subsections. We do emphasize that these steps are not entirely blackbox and this partitioning into the two steps is more for the simplicity of 
exposition (and there will be some intermediate steps as well). \Cref{fig:schematic} gives a schematic organization of these steps, in particular, the protocols we build along the way and their properties.

\begin{figure}[h!]
	\centering
	\begin{subfigure}{.75\textwidth}
	\includegraphics[scale=0.18]{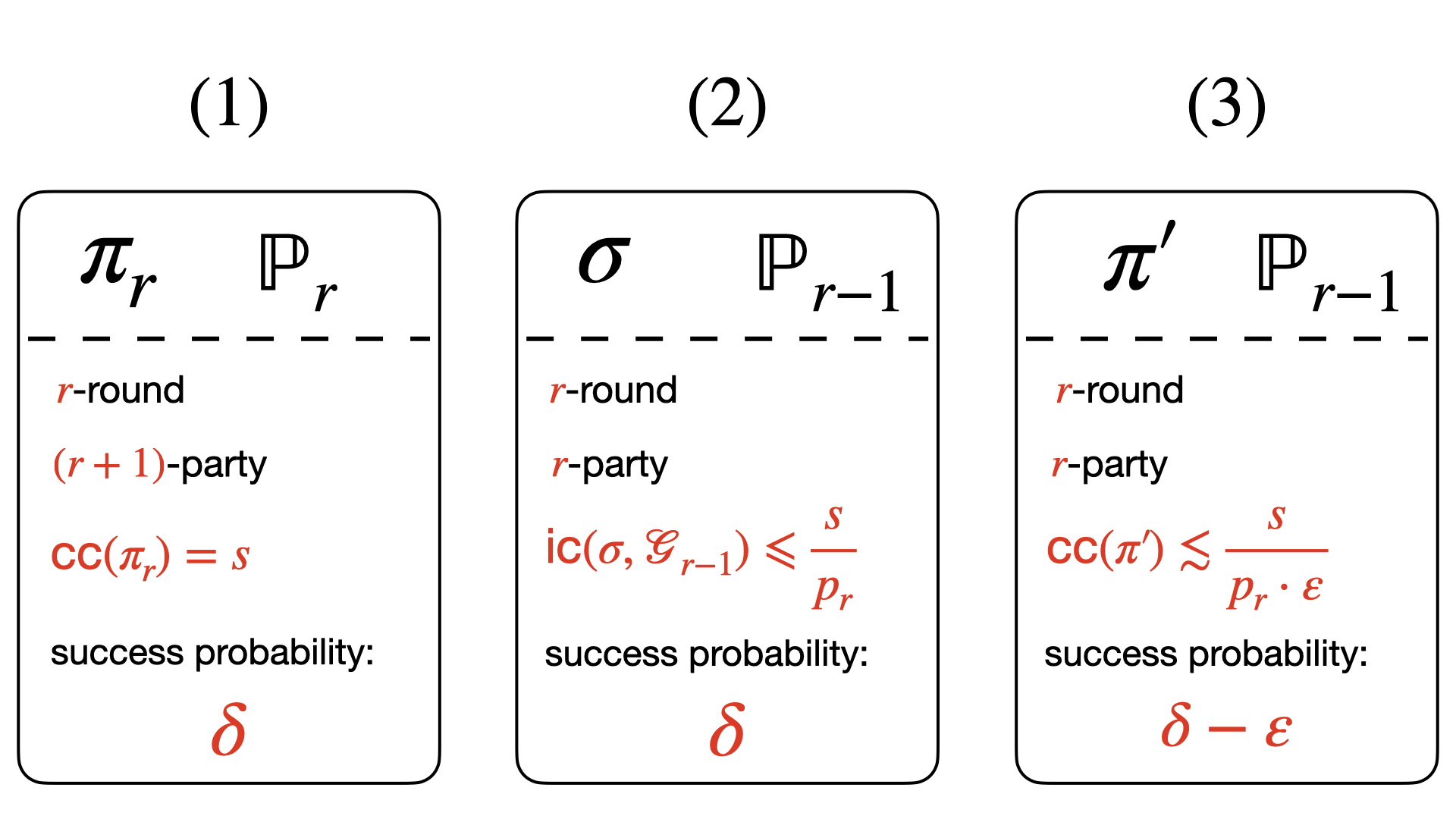}
	\end{subfigure}
	~\!\!\!\!\!\!
	\begin{subfigure}{0.25\textwidth}
	\includegraphics[scale=0.18]{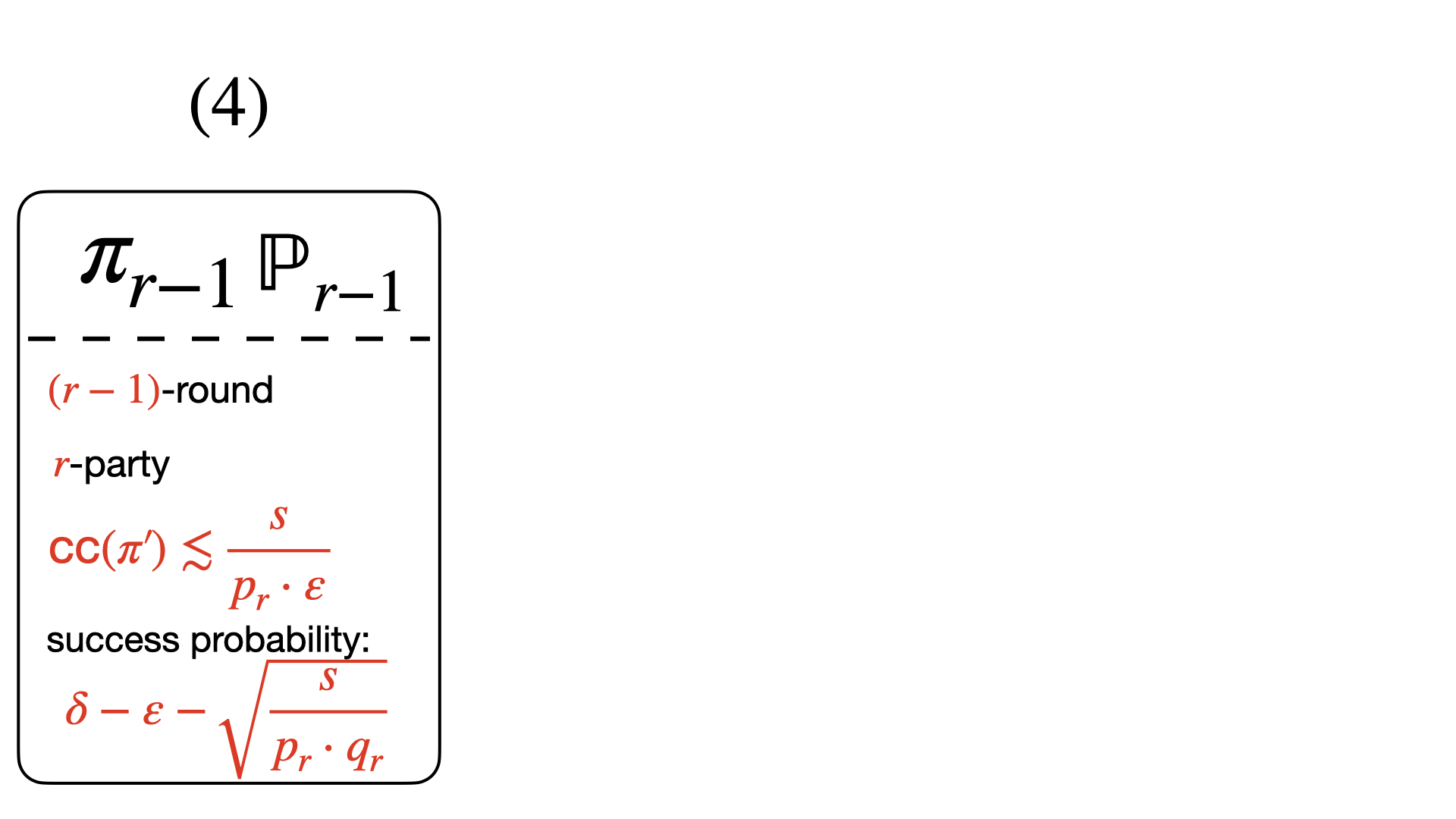}
	\end{subfigure}
	\caption{A schematic organization of the proof of~\Cref{lem:round-elim-final}: step 1 encompasses moving from part 1 to part 3 here, and step 2 is for moving from part 3 to part 4.}
	\label{fig:schematic}
\end{figure}

\subsection{Step 1: A Low Communication Protocol for $\PP_{r-1}$ in $r$ Rounds}\label{sec:step-one}

Fix $r \geq 1$ and $n \in \IN$ satisfying~\Cref{eq:n-r-relation} and let $n_{r-1}$ be determined from $n$ as in~\Cref{eq:parameter-n-b}. 
Suppose we have an instance $G_{r-1}$ sampled from $\GG_{r-1}(n_{r-1})$, and we want to use $\prot_r$, which is for $\GG_r(n)$, to solve $\PP_{r-1}$ for all search sequences $K$ (note that we use $K$ to denote search sequences for $\PP_{r-1}$ and not $\PP_r$). 
Consider the following direct way of doing this: 

\begin{ourbox}
	\textbf{An $r$-round $r$-party protocol $\sigma$ for $\PP_{r-1}$ on input graph $G_{r-1}(n_{r-1})$}: 
	\begin{enumerate}
		\item Sample indices $\kstar \in [p_r]$ and $t \in [q_r]$ uniformly at random  using \underline{public randomness}. 
		\item For every $a \in [r]$, each player $\player{a}$ samples 
		$
		\Paren{G_{r,a} \sim \GG_r \mid \rT=t , \rH_{t,\kstar,a}=G_{r-1,a}}
		$
		using \underline{private randomness}. Let $G_r=(G_{r,1},G_{r,2},\ldots,G_{r,r+1})$ be the 
		input of all players in $\GG_r$ where, using~\Cref{prop:input-create}, the input of the ``simulated'' player $\player{r+1}$ is fixed by $\rT=t$.
		\item The players run $\prot_r(G_r)$ by \underline{simulating} the messages of player $\player{r+1}$ in $\prot_r$ given that $t$ is public knowledge. 
		\item At the end, for any search sequence $K=(k_{r-1},\ldots,k_1)$, to solve $\PP_{r-1}(G_{r-1},K)$, we return the answer of $\prot_r$ for $\PP_r(G_r, (\kstar, k_{r-1},k_{r-2},\ldots,k_1))$.  
	\end{enumerate}
\end{ourbox}

Define the \textbf{distribution $\reald$} as the distribution of the graph $G_r$ obtained in the protocol $\sigma$ when $G_{r-1} \sim \GG_{r-1}$. We claim that this is the ``right'' distribution. 

\begin{observation}\label{obs:reald-is-real}
	Distribution $\reald$ is the same as $\GG_r$. 
\end{observation}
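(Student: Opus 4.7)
The plan is to verify that the joint distribution of $G_r = (G_{r,1}, \ldots, G_{r,r+1})$ produced by $\sigma$ when $G_{r-1} \sim \GG_{r-1}(n_{r-1})$ coincides with $\GG_r(n)$, by decomposing it across players and invoking the product-distribution structure from \Cref{prop:product-dist}. First I would check the marginals: the index $t$ in $\sigma$ is uniform on $[q_r]$, matching the marginal of $\rT$ in $\GG_r$, and in $\GG_r$ this $\rT$ is independent of the collection of sub-instances $\{\rH_{i,j}\}$. Moreover, by \Cref{prop:input-create}, the input $G_{r,r+1}$ of the ``simulated'' player $Q_{r+1}$ is a deterministic function of $\rT$, so fixing $t$ by public randomness and having the other players simulate $Q_{r+1}$'s messages produces exactly the distribution of $G_{r,r+1}$ in $\GG_r$ conditioned on $\rT=t$.

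Next I would establish the factorization
\[
\GG_r \mid \rT=t,\ \rH_{t,k^\star} = G_{r-1} \;=\; \bigotimes_{a\in[r]} \Paren{\GG_{r,a} \mid \rT=t,\ \rH_{t,k^\star,a} = G_{r-1,a}}.
\]
This follows from \Cref{prop:product-dist}: conditioning on the independent event $\rT=t$ preserves the product structure across players, and by \Cref{prop:input-create} applied inside the sub-instance $\rH_{t,k^\star}\sim\GG_{r-1}(n_{r-1})$, each slice $\rH_{t,k^\star,a}$ is a deterministic function of $G_{r,a}$ alone. Consequently, the conditional samples that the players draw independently from their private randomness in step~2 of $\sigma$ combine into a correctly distributed draw from $\GG_r \mid \rT=t,\rH_{t,k^\star}=G_{r-1}$.

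Finally I would take the full average. Since in $\GG_r$ the sub-instances are i.i.d.\ from $\GG_{r-1}(n_{r-1})$ and independent of $\rT$, and in $\sigma$ the pair $(t,k^\star)$ is uniform on $[q_r]\times[p_r]$ and independent of the externally supplied $G_{r-1}\sim \GG_{r-1}(n_{r-1})$, the act of ``planting'' $G_{r-1}$ at position $(t,k^\star)$ replaces an i.i.d.\ draw with another i.i.d.\ draw and therefore induces no change in the joint law. Combining the three observations above gives $\reald = \GG_r$.

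The only delicate point, and the one I would spell out carefully, is that separate players sampling from their respective conditional marginals via private randomness really reassemble into the correct joint distribution on $(G_{r,1},\dots,G_{r,r})$; this is not automatic for general conditionings but is forced here by the factorization displayed above, itself a consequence of \Cref{prop:product-dist} together with \Cref{prop:input-create}. All remaining verifications are routine bookkeeping on marginals and independence.
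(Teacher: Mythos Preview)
Your proof is correct and follows essentially the same approach as the paper: both invoke \Cref{prop:product-dist} for the product structure across players and \Cref{prop:input-create} to see that each player's input depends only on their own slices of the sub-instances (plus $t$ for $Q_{r+1}$), so that planting a fresh $G_{r-1}\sim\GG_{r-1}$ at a random position and sampling the rest via private randomness recovers exactly $\GG_r$. The paper's proof is a terse three sentences, while you spell out more carefully the factorization and the ``delicate point'' about private sampling reassembling the joint law---but the underlying argument is the same.
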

\begin{proof}
	We know from \Cref{prop:product-dist} that $\GG_r$ is a product distribution of $\GG_{r,1} \times \ldots \GG_{r,r+1}$. Moreover, each $\GG_{r,a}$ for $a \in [r]$ is, by~\Cref{prop:input-create}, 
	a collection of \emph{independent} instances $H_{i,j,a} \sim \GG_{r-1,a}$ and $t \in [q_r]$ is chosen uniformly at random in $\GG_r$ to define the input of player $\player{r+1}$. This matches 
	exactly the distribution of $\reald$ in $\sigma$, given that $G_{r-1,a}$ used by player $\player{a}$ is sampled from $\GG_{r-1,a}$ by definition. 
\end{proof}

As protocol $\prot_r$ can solve $\PP_r$, the transcript of $\prot_r$ can determine $\PP_r(G_r, (\kstar, k_{r-1}, \ldots, k_1))$ for every $K = (k_{r-1}, \ldots, k_1)$ and any $\kstar \in [p_r]$.
By~\Cref{obs:reald-is-real}, we know that
the probability of success of $\sigma$ in solving $\PP_{r-1}$ is the same as that of $\prot_r$ for solving $\PP_{r}$ and is thus at least $\delta$ by the statement of \Cref{lem:round-elim-final}. 

It seems however that we have done nothing yet: $\sigma$ is a $r$-round protocol with the \emph{same} communication cost as $\prot_r$, so effectively we made no progress. 
The silver lining is that we can actually prove $\sigma$ has a much lower \emph{information cost} compared to $\prot_r$, using a direct-sum style argument. 

\begin{claim}\label{clm:low-info-cost}
	The information cost of protocol $\sigma$ on the distribution $\GG_{r-1}$ is at most 
	\[
		\ic{\sigma}{\GG_{r-1}} \leq \frac{1}{p_r} \cdot \ic{\prot}{\GG_r}. 
	\]
\end{claim}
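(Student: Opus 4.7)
The plan is to follow a standard direct-sum style argument that exploits the symmetry of the embedding position. Let $\rR$ be the public randomness of $\prot_r$ and let $\rR_\sigma = (\rR, \rT, \rK^*)$ be the public randomness of $\sigma$, where $\rK^*$ is uniform over $[p_r]$ and $\rT$ is uniform over $[q_r]$. First I would unfold the definition to write
\[
\ic{\sigma}{\GG_{r-1}} = \mi{\rG_{r-1}}{\rSigma \mid \rR, \rT, \rK^*}.
\]
Using \Cref{obs:reald-is-real}, the joint distribution of $(\rG_{r-1}, \rT, \rK^*, \rSigma)$ under $\sigma$ is identical to $(\rH_{\rT, \rK^*}, \rT, \rK^*, \rProt)$ where $\rG \sim \GG_r$ jointly with an independent uniform $\rK^*$ and $\rProt$ is the transcript of $\prot_r$ on $\rG$. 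Thus the quantity above equals $\mi{\rH_{\rT, \rK^*}}{\rProt \mid \rR, \rT, \rK^*}$.

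Next I would expand the averaging over $\rK^*$: since $\rK^*$ is uniform over $[p_r]$ and independent of $(\rR, \rT, \rG, \rProt)$,
\[
\mi{\rH_{\rT, \rK^*}}{\rProt \mid \rR, \rT, \rK^*} = \frac{1}{p_r} \sum_{k=1}^{p_r} \mi{\rH_{\rT, k}}{\rProt \mid \rR, \rT}.
\]
The key step is now to apply super-additivity of mutual information for independent random variables: by \Cref{prop:product-dist}, the sub-instances $\set{\rH_{i,j}}_{i \in [q_r], j \in [p_r]}$ are mutually independent, and in particular $\rH_{\rT, 1}, \ldots, \rH_{\rT, p_r}$ remain independent after conditioning on $\rT$ (which is independent of all sub-instances) and on $\rR$ (public randomness of the protocol, independent of the inputs). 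This gives
\[
\sum_{k=1}^{p_r} \mi{\rH_{\rT, k}}{\rProt \mid \rR, \rT} \leq \mi{\rH_{\rT, 1}, \ldots, \rH_{\rT, p_r}}{\rProt \mid \rR, \rT}.
\]
Finally, since $(\rH_{\rT, k})_{k \in [p_r]}$ are deterministic functions of $\rG$ and $\rT$ itself is determined by $\rG$ (via the input to $\player{r+1}$ per \Cref{prop:input-create}), the right-hand side is at most $\mi{\rG}{\rProt \mid \rR} = \ic{\prot_r}{\GG_r}$, combining to yield the claim.

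I do not expect any real obstacle here; the proof is essentially bookkeeping once one sets up the equivalence between the simulated distribution in $\sigma$ and $\GG_r$. The only point requiring minor care is ensuring that $\rK^*$ is genuinely independent of the remaining randomness so that the averaging step is clean, and that the independence of the $\rH_{\rT, k}$'s survives the conditioning on $\rR$ and $\rT$ — both of which follow immediately from the product structure of $\GG_r$ given by \Cref{prop:product-dist}.
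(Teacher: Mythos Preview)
Your proposal is correct and follows essentially the same approach as the paper: expand over the uniform embedding index $\rK^*$, use independence of the sub-instances $\rH_{\rT,k}$ (conditioned on $\rT$) to apply super-additivity/chain rule, and then data-process up to $\mi{\rG}{\rProt}$. The only cosmetic difference is that the paper works directly with a deterministic $\prot_r$ (so your $\rR$ is vacuous) and spells out the final step via explicit chain-rule expansion, whereas you invoke that $\rT$ is a function of $\rG$---both yield the same inequality.
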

\begin{proof}
Let $\rK$ denote the random variable for the index $\kstar$ in $\sigma$ and $\rProt$ denote both the messages of players in $\sigma$ as well as $\prot_r$ -- this is because, 
the players in $\sigma$ communicate exactly the same messages as $\prot_r$ (without loss of generality, we can assume $\player{r}$ also writes the message of $\player{r+1}$ on the board, even though all players 
can calculate that message on their own also). And, by~\Cref{obs:reald-is-real}, we obtain that these messages are also distributed exactly the same way as $\prot_r$ is a deterministic function of the samples from $\GG_r = \reald$. 
Moreover, since $\prot_r$ is a deterministic protocol, the only \emph{public} randomness of $\sigma$ is $\rK$ and $\rT$. 
Thus, 
	\begin{align}
		\ic{\sigma}{\GG_{r-1}} &= \mi{\rG_{r-1}}{\rProt \mid \rK,\rT} \tag{by~\Cref{def:ext-info}} \\
		&= \sum_{k=1}^{p_r} \frac{1}{p_r} \cdot \mi{\rH_{\rT,k}}{\rProt \mid \rT,\rK=k} \tag{by the distribution of $\rK$ and since $G_{r-1} = H_{t,k}$ in $\sigma$ for any choice of $t \in [q_r]$}  \\
		&= \sum_{k=1}^{p_r} \frac{1}{p_r} \cdot \mi{\rH_{\rT,k}}{\rProt \mid \rT}, \label{eq:plugin-1}
		\end{align}
		where the final equality holds because of the following: the joint distribution of $(\rH_{\rT,k},\rProt,\rT)$ is a deterministic function of the choice of $\rG_r \sim \reald$ as fixing the graph $G_r$ also 
		fixes the sub-instance $H_{t,k}$ (where $t$ is the index of the special UPC of $G_r$) as well as all the messages of protocol $\prot_r = \prot_r(G_r)$ which is deterministic. On the other hand, 
		even given a fixed choice of $G_r$, we can still pick $\rK$ uniformly at random from $[p_r]$ as it is entirely independent of $G_r$. Thus, the joint distribution of $(\rH_{\rT,k},\rProt,\rT)$ is independent 
		of the event $\rK = k$ and we can drop the conditioning. 
		
		The next observation is that for every $k \in [p_r]$, we have $\rH_{\rT, k} \perp \rH_{\rT, <k} \mid \rT$ because the sub-instances are sampled independently and, after conditioning on $\rT$, 
		the choice of $\rH_{\rT, k}$ and $\rH_{\rT, <k}$ only depend on the sub-instances. Thus, we can apply~\Cref{prop:info-increase} to each of the mutual information terms above and get:
	\begin{align*}
		\sum_{k=1}^{p_r} \mi{\rH_{\rT, k}}{\rProt \mid \rT}&\leq \sum_{k=1}^{p_r} \mi{\rH_{\rT, k}}{\rProt \mid \rT, \rH_{\rT,<k}} \\
		&= \mi{\rH_{\rT, *}}{\rProt \mid \rT}  \tag{by the chain rule of mutual information in \itfacts{chain-rule}} \\
		&\leq \mi{\rH_{\rT, *}}{\rProt \mid \rT} + \mi{\rH_{-\rT, *}}{\rProt \mid \rH_{\rT, *},\rT} + \mi{\rT}{\rProt} \tag{as mutual information is non-negative (\itfacts{info-zero})} \\
		&= \mi{\set{\rH_{i,j}}_{i \in [q_r],j\in [p_r]}, \rT}{\rProt}  \tag{by the chain rule of mutual information in \itfacts{chain-rule}} \\
		&=\mi{\rG_r}{\rProt} \tag{as $G_r$ fixes $\set{H_{i,j}}_{i,j}, T$ and vice-versa by \Cref{prop:input-create}} \\
		&= \ic{\prot_r}{\GG_r} \tag{by~\Cref{def:ext-info}}. 
	\end{align*}	
	Plugging in this bound in~\Cref{eq:plugin-1}, we conclude that,
	\[
		\ic{\sigma}{\GG_{r-1}}  \leq \frac{1}{p_r} \cdot \ic{\prot_r}{\GG_r}. \qedhere
	\]
\end{proof}

We are not done however as in the next step, we \emph{really} need the communication cost of the protocol to be small and not only its information cost (see~\Cref{rem:faked-ic}). We do this by ``compressing'' the messages of $\sigma$ down to their information cost. 
This part uses standard ideas except for the fact that we are applying them to a multi-party protocol instead of their typical two-party application.

In the following, we design a protocol $\sigmae$ by compressing the messages of $\sigma$ on an input $G_{r-1} \sim \GG_{r-1}$ down to their information cost. To avoid ambiguity, here, for every $a \in [r]$ and $i \in [r]$, 
we use $\Sigma_{i,a}$ to denote the message sent by player $\player{a}$ of $\sigma$ in round $i$. We also use $\rR_{\sigma}$ to denote the public randomness of $\sigma$. 
We define $\Sigmae_{i,a}$ and $\rR_{\sigmae}$ analogously for $\sigmae$. Protocol $\sigmae$ is as follows: 
\begin{ourbox}
	\textbf{A \underline{communication efficient} implementation of $\sigma$ as a protocol $\sigmae$}: 
	\begin{itemize}
		\item For $i=1$ to $r$ rounds and $a=1$ to $r$ players in this order: 
			\begin{enumerate}
				\item Suppose the players at this point \emph{all} know $R_{\sigma}$ as well as
				\[
					\Sigma_{<i} := \Sigma_{<i,r},\ldots,\Sigma_{<i,r}, \quad \text{and} \quad \Sigma_{i,<a} := \Sigma_{i,1},\ldots,\Sigma_{i,a-1},
				\]
				i.e., the public randomness of $\sigma$ and the messages that ``should have been'' communicated by $\sigma$ in the first $i-1$ rounds plus the ones by $\player{1},\ldots,\player{a-1}$ in round $i$. 
				\item Define the distributions 
				\[
					\mathcal{A}_{i,a} := \rSigma_{i,a} \mid \Sigma_{<i},\Sigma_{i,<a}, R_{\sigma}, G_{r-1,a}, \quad \text{and} \quad \mathcal{B}_{i,a} := \rSigma_{i,a} \mid \Sigma_{<i},\Sigma_{i,<a}, R_{\sigma}; 
				\]
				notice that the difference is only that in $\mathcal{A}_{i,a}$ we also condition on the input of player $\player{a}$ in $G_{r-1}$, i.e., $G_{r-1,a}$ but in $\mathcal{B}_{i,a}$, we do not. 
				\item The player $\player{a}$ plays the role of Alice in~\Cref{prop:msg-compress}, who knows both $\mathcal{A}_{i,a}$ and $\mathcal{B}_{i,a}$ and all other players play
				the role of Bob who only knows $\mathcal{B}_{i,a}$. The players use public randomness and Alice (i.e., player $\player{a}$) writes the message of~\Cref{prop:msg-compress}, denoted by $\sigmae_{i,a}$, on the blackboard
				which allows \emph{all} players to sample a \emph{single} message
				\[
					\Sigma_{i,a} \sim \mathcal{A}_{i,a} = \rSigma_{i,a} \mid \Sigma_{<i},\Sigma_{i,<a}, G_{r-1,a},R_{\sigma}. 
				\] 
				This allows the players to obtain the message $\Sigma_i(a)$ and continue this for-loop. 
			\end{enumerate}
	\end{itemize}
\end{ourbox}

The fact that $\sigmae$ is \emph{faithfully} simulating the protocol $\sigma$ follows immediately from~\Cref{prop:msg-compress} as the message compression approach 
has no error. We now bound the length of messages communicated by $\sigmae$ \emph{in expectation}. 

\begin{claim}\label{clm:simulate-prot-1}
		The \emph{expected} length of the messages communicated by the protocol $\sigmae$ is at most 
		\[
			\raoconst \cdot \Paren{\ic{\sigma}{\GG_{r-1}} + r^2}, 
		\]
		where $\raoconst \geq 1$ is an absolute constant defined in~\Cref{eq:msg-compress-final} (derived from \Cref{prop:msg-compress}). 
\end{claim}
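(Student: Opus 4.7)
\medskip
\noindent \textbf{Proof proposal for \Cref{clm:simulate-prot-1}.}

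The plan is to bound the expected communication of $\sigmae$ by summing the expected message length produced by each invocation of the compression routine. At the $(i,a)$-th step of $\sigmae$, player $\player{a}$ invokes \Cref{prop:msg-compress} with distributions $\mathcal{A}_{i,a}$ and $\mathcal{B}_{i,a}$, so by \Cref{eq:msg-compress-final} the expected length of that single message, conditioned on the realizations $(G_{r-1,a}, \Sigma_{<i}, \Sigma_{i,<a}, R_\sigma)$ that define these distributions, is at most $\raoconst \cdot (\kl{\mathcal{A}_{i,a}}{\mathcal{B}_{i,a}} + 1)$. Taking the full expectation over these realizations and applying the standard identity that the expected KL divergence between a conditional distribution and its marginal equals a conditional mutual information (see \Cref{app:info}), I obtain that this expectation equals $\mi{\rSigma_{i,a}}{\rG_{r-1,a} \mid \rSigma_{<i}, \rSigma_{i,<a}, \rR_\sigma}$.

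Summing over all $i \in [r]$ and $a \in [r]$, linearity of expectation gives that the total expected length of the transcript of $\sigmae$ is at most
\[
\raoconst \cdot \Paren{\sum_{i=1}^r \sum_{a=1}^r \mi{\rSigma_{i,a}}{\rG_{r-1,a} \mid \rSigma_{<i}, \rSigma_{i,<a}, \rR_\sigma} + r^2},
\]
where the $r^2$ absorbs the $+1$ added at each of the $r \cdot r$ compression steps. It remains to upper bound the double sum by $\ic{\sigma}{\GG_{r-1}}$.

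For this, I plan to use the Markov structure of any communication protocol: the message $\rSigma_{i,a}$ is a function of $(\rG_{r-1,a}, \rSigma_{<i}, \rSigma_{i,<a}, \rR_\sigma)$ and player $\player{a}$'s private randomness, which is independent of $\rG_{r-1,-a}$. Combined with \Cref{prop:product-dist}, this gives that $\rSigma_{i,a}$ is conditionally independent of $\rG_{r-1,-a}$ given $(\rG_{r-1,a}, \rSigma_{<i}, \rSigma_{i,<a}, \rR_\sigma)$, and hence by the chain rule of mutual information (\itfacts{chain-rule}) applied to $\rG_{r-1} = (\rG_{r-1,a}, \rG_{r-1,-a})$,
\[
\mi{\rSigma_{i,a}}{\rG_{r-1,a} \mid \rSigma_{<i}, \rSigma_{i,<a}, \rR_\sigma} = \mi{\rSigma_{i,a}}{\rG_{r-1} \mid \rSigma_{<i}, \rSigma_{i,<a}, \rR_\sigma}.
\]
Then a second application of the chain rule telescopes the double sum over $(i,a)$ exactly into $\mi{\rG_{r-1}}{\rSigma \mid \rR_\sigma} = \ic{\sigma}{\GG_{r-1}}$, completing the bound.

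The main potential obstacle is the Markov step: one has to track the conditioning carefully because the earlier messages $\rSigma_{<i}$ already depend on all players' inputs, so the independence of $\rSigma_{i,a}$ from $\rG_{r-1,-a}$ is only conditional on the full context $(\rG_{r-1,a}, \rSigma_{<i}, \rSigma_{i,<a}, \rR_\sigma)$ together with player $\player{a}$'s private coins. Beyond this, the argument is a clean direct-sum-style decomposition of the total expected length.
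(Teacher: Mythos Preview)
Your proposal is correct and follows essentially the same approach as the paper: linearity of expectation, then \Cref{prop:msg-compress} plus \Cref{fact:kl-info} to convert each compressed message's expected length into a conditional mutual information term, then the chain rule to telescope into $\ic{\sigma}{\GG_{r-1}}$. The only deviation is in the step replacing $\rG_{r-1,a}$ by $\rG_{r-1}$: the paper simply invokes the data processing inequality (\itfacts{data-processing}, since $\rG_{r-1,a}$ is a function of $\rG_{r-1}$) to get an inequality, whereas you use the rectangle/Markov property of the protocol to obtain an exact equality---your route is tighter but requires the extra conditional-independence verification you flagged, while the paper's one-line data-processing argument is all that is needed for the bound.
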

\begin{proof}
	With a slight abuse of notation, we denote $\card{\Sigmae}$ to denote the length of the messages communicated by $\sigmae$ (on a particular input and realization of randomness) 
	and $\card{\Sigmae_{i,a}}$ for $i \in [r]$ and $a \in [r]$ as the length of the message $\Sigmae_{i,a}$. Thus, by linearity of expectation, 
	\[
		\Exp\card{\Sigmae} = \sum_{i=1}^{r} \sum_{a=1}^{r} \Exp\card{\Sigmae_{i,a}}. 
	\]
	Moreover, by~\Cref{prop:msg-compress}, for every $i \in [r]$ and $a \in [r]$, we have, 
	\begin{align*}
		\Exp\card{\Sigmae_{i,a}} &\leq \raoconst \cdot \Paren{\Exp_{\Sigma_{<i},\Sigma_{i,<a}, G_{r-1,a},R_{\sigma}} \kl{\mathcal{A}_{i,a}}{\mathcal{B}_{i,a}} + 1} 
		\tag{as the distributions of~\Cref{prop:msg-compress} are $\mathcal{A}_{i,a},\mathcal{B}_{i,a}$ (conditioned on the prior messages)} \\
		&\leq \raoconst \cdot \Paren{\Exp_{\Sigma_{<i},\Sigma_{i,<a}, G_{r-1,a},R_{\sigma}} \kl{\rSigma_{i,a} \mid \Sigma_{<i},\Sigma_{i,<a}, R_{\sigma}, G_{r-1,a}}{\rSigma_{i,a} \mid \Sigma_{<i},\Sigma_{i,<a}, R_{\sigma}} + 1}
		\tag{by the definition of $\mathcal{A}_{i,a},\mathcal{B}_{i,a}$} \\
		&= \raoconst \cdot \Paren{\mi{\rSigma_{i,a}}{\rG_{r-1,a} \mid \rSigma_{<i},\rSigma_{i,<a}, \rR_{\sigma}} + 1}. \tag{by~\Cref{fact:kl-info}}
	\end{align*}
	Plugging these bounds in the expectation-term above implies 
	\begin{align*}
		\Exp\card{\Sigmae} &\leq \sum_{i=1}^{r} \sum_{a=1}^{r}\raoconst \cdot \Paren{\mi{\rSigma_{i,a}}{\rG_{r-1,a} \mid \rSigma_{<i},\rSigma_{i,<a}, \rR_{\sigma}} + 1} \\
		&\leq \sum_{i=1}^{r} \sum_{a=1}^{r}\raoconst \cdot \Paren{\mi{\rSigma_{i,a}}{\rG \mid \rSigma_{<i},\rSigma_{i,<a}, \rR_{\sigma}} + 1} \tag{by~\itfacts{data-processing}  as $\rG_{r-1,a}$ is a deterministic function of $\rG_{r-1}$} \\
		&=  \raoconst \cdot \Paren{\mi{\set{\rSigma_{i,a}}_{i \in [r], a \in [r]}}{\rG \mid \rR_{\sigma}} + r^2} \tag{by the chain rule of mutual information in~\itfacts{chain-rule}} \\
		&= \raoconst \cdot \Paren{\ic{\sigma}{\GG_{r-1}} + r^2},
	\end{align*}
	by the definition of information cost of $\sigma$ in~\Cref{def:ext-info}. 
\end{proof}

Finally, we do a basic clean up of $\sigmae$ to bound its communication cost (in worst-case, not in expectation). This step is entirely straightforward and is basically by truncating the protocol $\sigmae$ whenever
 a player is going to communicate more than their ``budget''. We only provide the full protocol here for completeness. 

\begin{ourbox}
	\textbf{Protocol $\prot'$: the truncated version of the protocol $\sigmae$ for a given parameter $\eps > 0$}
	\begin{itemize}
		\item Run the protocol $\sigmae$ as is. If at any point, a player is going to send a message that results in the communication 
		cost of the protocol to become more than 
		\[
			\frac{\raoconst}{\eps} \cdot \Paren{\ic{\sigma}{\GG_{r-1}} + r^2}, 
		\]
		terminate the protocol and return `fail'. 
	\end{itemize}
\end{ourbox}

We list the properties of the protocol $\prot'$ in the following. 

\begin{claim}\label{clm:prot'}
	For every $\eps > 0$, protocol $\prot'$ is a $r$-round $r$-party protocol for solving $\PP_{r-1}$ on a graph $G_{r-1} \sim \GG_{r-1}$ with probability of success at least $\delta-\eps$ and communication cost
	\[
		\cc{\prot'} \leq \frac{\raoconst}{\eps} \cdot \Paren{\frac{s}{p_r} + r^2}. 
	\]
\end{claim}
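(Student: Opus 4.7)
The plan is to combine the bounds established for $\sigmae$ in Claims 5.1 and 5.2 with a simple Markov's inequality argument applied to the truncation that defines $\prot'$.

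First, I would track the success probability. By~\Cref{obs:reald-is-real}, the distribution $\reald$ of the simulated graph $G_r$ in $\sigma$ coincides with $\GG_r$, so $\sigma$ inherits the success probability of $\prot_r$ for $\PP_r$. Moreover, by the construction of $\sigma$ and the definition of search sequences, $\PP_r(G_r,(\kstar,k_{r-1},\ldots,k_1))$ equals $\PP_{r-1}(G_{r-1},(k_{r-1},\ldots,k_1))$ for every search sequence $K$ for $\GG_{r-1}$. Hence $\sigma$ solves $\PP_{r-1}$ on $G_{r-1} \sim \GG_{r-1}$ with probability at least $\delta$. Since message compression (\Cref{prop:msg-compress}) produces no error, $\sigmae$ is a faithful simulation of $\sigma$ and also succeeds with probability at least $\delta$.

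Next, I would bound the expected communication of $\sigmae$. Apply~\Cref{prop:ic-cc} to get $\ic{\prot_r}{\GG_r} \leq \cc{\prot_r} \leq s$, then combine with~\Cref{clm:low-info-cost} to obtain $\ic{\sigma}{\GG_{r-1}} \leq s/p_r$. Plugging this into~\Cref{clm:simulate-prot-1}, the expected transcript length of $\sigmae$ is at most
\[
\raoconst \cdot \Paren{\frac{s}{p_r} + r^2}.
\]

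Finally, I would apply Markov's inequality: the probability that the communication of $\sigmae$ exceeds $\tfrac{1}{\eps}$ times its expectation, i.e., exceeds $\frac{\raoconst}{\eps}(s/p_r + r^2)$, is at most $\eps$. The protocol $\prot'$ aborts exactly when this threshold is reached, so by a union bound it still succeeds with probability at least $\delta - \eps$, while by construction its worst-case communication cost is at most $\frac{\raoconst}{\eps}(s/p_r + r^2)$. The round and party counts ($r$ rounds, $r$ players) are inherited directly from $\sigma$ (which shaved off player $\player{r+1}$ via public-randomness simulation from $\rT$), completing the claim.

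There is no real obstacle here: all the heavy lifting was done in~\Cref{clm:low-info-cost} (the direct-sum-style information bound) and~\Cref{clm:simulate-prot-1} (the message-compression bound); the present claim is simply the clean packaging step where Markov's inequality converts an expected-communication guarantee into a worst-case-communication guarantee at the cost of an additive $\eps$ in the success probability.
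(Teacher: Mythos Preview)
Your proposal is correct and follows essentially the same approach as the paper: Markov's inequality on the expected communication of $\sigmae$ from~\Cref{clm:simulate-prot-1}, combined with~\Cref{clm:low-info-cost} and~\Cref{prop:ic-cc} to bound $\ic{\sigma}{\GG_{r-1}} \leq s/p_r$, and inheriting the success probability of $\sigma$ up to the $\eps$ loss from truncation. The only cosmetic difference is that the paper applies Markov directly with the threshold $\frac{\raoconst}{\eps}\bigl(\ic{\sigma}{\GG_{r-1}}+r^2\bigr)$ as written in the definition of $\prot'$ and only afterward upper-bounds this by $\frac{\raoconst}{\eps}\bigl(s/p_r+r^2\bigr)$, whereas you substitute the bound on information cost first; both orderings are equivalent.
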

\begin{proof}
	The fact that $\prot'$ is $r$-round and $r$-party is by definition. A direct application of Markov bound, plus~\Cref{clm:simulate-prot-1} also implies that the probability that $\prot'$ terminates is at most $\eps$. 
	This in turn implies that the error probability of $\prot'$ compared to that $\sigmae$ and equivalently $\sigma$ is at most $\eps$ more. Thus, $\prot'$ succeeds with probability at least $\delta-\eps$. 
	Finally,
	\[
		\cc{\prot'} \Leq{(1)} \frac{1}{\eps} \cdot \raoconst \cdot \Paren{\ic{\sigma}{\GG_{r-1}} + r^2} \Leq{(2)} \frac{\raoconst}{\eps} \cdot \Paren{\frac{1}{p_r} \cdot \ic{\prot_r}{\GG_r} + r^2} 
		\Leq{(3)}  \frac{\raoconst}{\eps} \cdot \Paren{\frac{s}{p_r} + r^2},
	\]
	where $(1)$ holds because of the truncation step in $\prot'$, $(2)$ holds by~\Cref{clm:low-info-cost}, and $(3)$ holds since by~\Cref{prop:ic-cc}, $\ic{\prot_r}{\GG_r} \leq \cc{\prot_r}$, which is the 
	parameter $s$ in~\Cref{lem:round-elim-final}. 
\end{proof}

\subsection{Step 2: A Protocol for $\PP_{r-1}$ in $(r-1)$ Rounds via Round Elimination}\label{sec:step-two}

Up until this point, we managed to obtain a protocol $\prot'$ which has the desired communication cost and probability of success, but the main issue remains: it still requires $r$ rounds as opposed to $(r-1)$ rounds. 
In the following, we try a more nuanced way of creating an $(r-1)$-round protocol $\prot_{r-1}$ which follows the same approach as $\prot'$ but no longer uses the first round of messages of $\prot_r$ at all -- instead, the players 
simply sample those messages using public randomness. 

Before we proceed however, we need to establish an important property of the original protocol $\prot_r$ using the fact the distribution $\GG_r$ is a product distribution. 

\subsubsection*{Conditional Independence of Inputs Even After Messages}

We can prove that the distribution of the inputs to the players in $\GG_r$ \emph{remains} a product distribution (see \Cref{prop:product-dist}), even conditioned on the messages of the first round of $\prot_r$. 
This is a direct consequence of the rectangle property of protocols (and is a standard fact, which is proven here merely for completeness given we are conditioning on a subset of the input and not all of it).  

\begin{lemma}\label{lem:cond-ind}
	Let $\Prot_1$ denote all messages of $\prot_r$ in the first round. For any $i \in [q_r]$ and $j \in [p_r]$, 
	\[
	\distribution{H_{i, >j}, H_{>i,*} \mid H_{i, \leq j} , \Prot_1, H_{< i,*}} = \prod_{a=1}^{r} \distribution{H_{i, >j,a}, H_{>i,*,a} \mid H_{i, \leq j,a} , \Prot_1, H_{< i,*}} .
	\]
\end{lemma}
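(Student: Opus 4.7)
The plan is to invoke the standard \emph{rectangle property} of multi-party deterministic protocols together with the product structure of $\GG_r$. By \Cref{prop:product-dist}, the inputs $G_1, \ldots, G_{r+1}$ of the $r+1$ players are mutually independent under $\GG_r$. Since $\prot_r$ is assumed to be deterministic in \Cref{lem:round-elim-final}, each first-round message $\Prot_{1,a}$ is a deterministic function of the input $G_a$ of player $\player{a}$ alone. This implies that conditioning on $\Prot_1 = (\Prot_{1,1},\ldots,\Prot_{1,r+1})$ preserves the product structure: for any realization $\pi_1$, the joint density $\Pr[G_1=g_1,\ldots,G_{r+1}=g_{r+1} \mid \Prot_1 = \pi_1]$ factors as $\prod_a \Pr[G_a = g_a \mid \Prot_{1,a} = \pi_{1,a}]$, because each indicator $\mathbf{1}[\Prot_{1,a}(g_a) = \pi_{1,a}]$ is a function of $g_a$ only.

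The second step is to re-express the additional conditioning. By \Cref{prop:input-create} and the definition of $\GG_r$, each $H_{i,j,a}$ is part of player $\player{a}$'s input, so both $H_{i,\leq j} = \{H_{i,\leq j, a}\}_{a\in[r]}$ and $H_{<i,*} = \{H_{<i,*,a}\}_{a\in[r]}$ decompose cleanly across players. Combining this with the factorization from the first step, a direct calculation gives
\[
\distribution{H_{i,>j}, H_{>i,*} \mid H_{i,\leq j}, \Prot_1, H_{<i,*}} = \prod_{a=1}^{r} \distribution{H_{i,>j,a}, H_{>i,*,a} \mid H_{i,\leq j,a}, \Prot_{1,a}, H_{<i,*,a}}.
\]

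The final step is to match this to the form claimed in the lemma, where each factor of the product conditions on the \emph{entire} $\Prot_1$ and $H_{<i,*}$ rather than only the $a$-indexed parts. This follows from the same independence: conditioned on $(H_{i,\leq j,a}, \Prot_{1,a}, H_{<i,*,a})$, the variables $(H_{i,>j,a}, H_{>i,*,a})$ are determined by $G_a$ alone, while the extra conditioning data $\Prot_{1,-a}$ and $H_{<i,*,-a}$ are functions of $\{G_{a'}\}_{a'\neq a}$, which are independent of $G_a$; so the extra conditioning is vacuous and can be added freely.

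The only real obstacle is bookkeeping: one must verify that $\Prot_{1,r+1}$ (the message of the clique-holding player, a function of $\rT$) does not break the factorization across the first $r$ players. But since $\rT$ is independent of all $\{H_{i,j,a}\}_{i,j,a}$ under $\GG_r$ and $\Prot_{1,r+1}$ depends only on $\rT$, conditioning on $\Prot_{1,r+1}$ has no effect on the joint distribution of the sub-instance variables we care about, so it can be absorbed without disturbing the product structure.
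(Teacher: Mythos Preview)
Your approach is correct and reaches the same conclusion as the paper, but by a somewhat different route. The paper proves the conditional independence via an auxiliary mutual-information claim (\Cref{clm:2-ind-dist}): it shows the relevant mutual information is zero by peeling off $\rProt_{1,>a}$, $\rProt_{1,a}$, and $\rProt_{1,<a}$ one at a time using \Cref{prop:info-decrease}, and then applies the chain rule over $a$. You instead invoke the rectangle property as a direct probabilistic factorization of the conditional density. Both arguments are standard; yours is more elementary and arguably cleaner for this isolated statement, while the paper's information-theoretic phrasing is consistent with the surrounding machinery.

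Two small imprecisions to tidy up. First, $\Prot_{1,a}$ is not a function of $G_a$ alone: since players speak sequentially and see earlier messages, it is a function of $G_a$ and $\Prot_{1,<a}$. The rectangle property survives because, for any \emph{fixed} transcript $\pi_1$, the indicator $\mathbf{1}[\Prot_{1,a}=\pi_{1,a}]$ does become a function of $g_a$ only (with $\pi_{1,<a}$ hard-coded), so your indicator-factorization sentence is fine even though the sentence preceding it is not. Second, your intermediate display conditions each factor on $\Prot_{1,a}$ rather than on the full $\Prot_1$; this is not justified (the marginal of $G_a$ given $\Prot_1$ generally depends on all of $\Prot_{1,\leq a}$), but it is harmless here—simply keep $\Prot_1$ in every factor throughout, and your final step of adding back $H_{<i,*,-a}$ goes through unchanged by the same product structure.
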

 We will first show some necessary independence of distributions before proving the lemma. 

\begin{claim}\label{clm:2-ind-dist}
	For any $a \in [r], j \in [p_r]$, $i \in [q_r]$, and any choice of $\Pi_1, H_{i, \leq j,a}, H_{< i, *}$,
	\[
	\rH_{i, \leq j,-a}, \rH_{i, >j,< a}, \rH_{>i,*,< a} \perp \rH_{i, >j,a},\rH_{>i,*,a} \mid \Pi_1, H_{i, \leq j,a}, H_{< i, *}.
	\]  
\end{claim}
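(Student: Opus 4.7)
The plan is to combine the product structure of $\GG_r$ established in \Cref{prop:product-dist} with the rectangle property of deterministic communication protocols. Since $\prot_r$ is deterministic, each first-round message $\Prot_{1,b}$ is a function of $G_b$ (and public randomness, which we can fix) alone, so the event $\{\rProt_1 = \Pi_1\}$ is a rectangle, i.e., it factorizes as $\bigcap_{b=1}^{r+1} \{\Pi_{1,b}(G_b) = \pi_{1,b}\}$. Because $\GG_r = \GG_{r,1} \times \cdots \times \GG_{r,r+1}$, conditioning on this rectangle preserves the mutual independence of $(\rG_1,\ldots,\rG_{r+1})$.

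Next, I would verify that the remaining conditioning variables decompose into per-player functions. By \Cref{prop:input-create}, $\rH_{i,\leq j, a}$ is a function of $\rG_a$ alone; and $\rH_{<i,*}$ decomposes as the tuple $(\rH_{<i,*,b})_{b=1}^{r}$ where each coordinate is determined by $\rG_b$ (note that $\rG_{r+1}$ is determined solely by the random index $\rT$ and carries no sub-instance content). Hence conditioning further on fixed values of these variables restricts each $\rG_b$ separately and does not destroy the cross-player factorization established in the previous step.

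Finally, the variables on the left-hand side of the claimed independence, namely $\rH_{i,\leq j, -a}$, $\rH_{i,>j,<a}$, and $\rH_{>i,*,<a}$, are each functions of $\{\rG_b : b \neq a\}$ (the indices $-a$ and $<a$ both exclude player $a$), while the variables on the right-hand side, $\rH_{i,>j,a}$ and $\rH_{>i,*,a}$, are functions of $\rG_a$ only (again by \Cref{prop:input-create}). Since functions of disjoint sets of independent random variables are independent, the conditional independence follows.

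The main obstacle is conceptual rather than technical: one must carefully unwind the recursive definition to see that the joint inputs to the $r+1$ players really are independent (both across players within a single sub-instance $\rH_{i,j}$ and across distinct sub-instances), and to confirm that the somewhat asymmetric index ranges on the left-hand side ($-a$ versus $<a$) still lie entirely outside $\{a\}$. Once these two bookkeeping points are settled, the argument reduces to the standard observation that the rectangle property of deterministic protocols, applied over an input distribution that is already product across players, preserves independence under conditioning on any collection of per-player functions.
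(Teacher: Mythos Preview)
Your argument is correct and is essentially the direct ``rectangle'' proof of the same fact the paper establishes information-theoretically. One small imprecision: $\Prot_{1,b}$ is not literally a function of $G_b$ alone---it also depends on the earlier messages $\Prot_{1,<b}$---but your conclusion that $\{\rProt_1=\Pi_1\}$ is a product set over the players' inputs is of course still correct (once $\pi_{1,<b}$ is fixed, the $b^\text{th}$ message becomes a function of $G_b$ only), and everything after that goes through exactly as you say.

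The paper takes a different route: it bounds the mutual information $\mi{\rH_{i, \leq j,-a}, \rH_{i, >j,< a}, \rH_{>i,*,< a}}{\rH_{i, >j,a},\rH_{>i,*,a}}[\rProt_1, \rH_{i, \leq j,a}, \rH_{< i,*}]$ by first passing (via data processing and \Cref{prop:input-create}) to the coarser quantity $\mi{\rG_{-a}}{\rG_{a}}[\rProt_1, \rH_{i, \leq j,a}, \rH_{< i,*}]$, and then peels off the conditioning on $\rProt_{1,>a}$, $\rProt_{1,a}$, and $\rProt_{1,<a}$ one at a time using \Cref{prop:info-decrease}, each time invoking that the removed message is determined by one side given the other; the residual term vanishes by \Cref{prop:product-dist}. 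Your approach is more elementary and transparent---it identifies the rectangle structure up front and never touches mutual information---whereas the paper's argument stays inside the information-theoretic calculus that is already in use throughout \Cref{sec:analysis}, which keeps the presentation uniform. Both ultimately rest on the same two ingredients (the product distribution of \Cref{prop:product-dist} and the fact that each player's message depends only on her own input given the prior transcript).
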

\begin{proof}
	We will show that the following mutual information term is zero, which immediately implies the claim by~\itfacts{info-zero}: 
	\begin{align*}
		&\mi{\rH_{i, \leq j,-a}, \rH_{i, >j,< a}, \rH_{>i,*,< a}}{\rH_{i, >j,a},\rH_{>i,*,a} \mid \rProt_1, \rH_{i, \leq j,a}, \rH_{< i,*}} \\ 
		&\hspace{1cm}\leq \mi{\rG_{r-1,-a}}{\rG_{r-1,a} \mid \rProt_1, \rH_{i, \leq j,a}, \rH_{< i,*}}  \tag{by the data processing inequality of~\itfacts{data-processing} and~\Cref{prop:input-create}}\\
		&\hspace{1cm}= \mi{\rG_{r-1,-a}}{\rG_{r-1,a} \mid \rProt_{1,\leq a},\rProt_{1,>a}, \rH_{i, \leq j,a}, \rH_{< i,*}} \tag{by splitting $\rProt_1 = \rProt_{1,\leq a},\rProt_{1,> a}$} \\
		&\hspace{1cm}\leq \mi{\rG_{r-1,-a}}{\rG_{r-1,a} \mid \rProt_{1,\leq a}, \rH_{i, \leq j,a}, \rH_{< i,*}} \tag{by \Cref{prop:info-decrease} as  $\rProt_{1,>a} \perp \rG_{r-1,a} \mid G_{r-1,-a} , \Prot_{1,\leq a}, H_{i, \leq j,a}, H_{< i, *}$, and  $\rProt_{1,>a}$ is now fixed} \\
		&\hspace{1cm}= \mi{\rG_{r-1,-a}}{\rG_{r-1,a} \mid \rProt_{1,< a},\rProt_{1,a}, \rH_{i, \leq j,a},  \rH_{< i, *}} \tag{by further splitting $\rProt_{1,\leq  a} = \rProt_{1,< a},\rProt_{1,a}$}  \\
		&\hspace{1cm}\leq \mi{\rG_{r-1,-a}}{\rG_{r-1,a} \mid \rProt_{1,< a}, \rH_{i, \leq j,a}, \rH_{< i, *}}  \tag{by \Cref{prop:info-decrease} as $\rProt_{1,a} \perp \rG_{r-1,-a} \mid G_{r-1,a}, \Prot_{1,<a}, H_{i, \leq j,a}, H_{< i, *}$, and $\rProt_{1,a}$ is now fixed} \\
		&\hspace{1cm}\leq \mi{\rG_{r-1,-a}}{\rG_{r-1,a} \mid \rH_{i, \leq j,a}, \rH_{< i, *}} \tag{by \Cref{prop:info-decrease} as $\rProt_{1,< a} \perp \rG_{r-1,a} \mid G_{r-1,-a}, H_{i, \leq j,a}, H_{< i, *}$; $\rProt_{1,<a}$ is now fixed} \\
		&\hspace{1cm}= 0. \tag{by \Cref{prop:product-dist} all $H_{i,j}$'s are independent and we can apply~\itfacts{info-zero}} \qedhere
	\end{align*}
\end{proof}

\Cref{lem:cond-ind} follows easily now.

\begin{proof}[Proof of \Cref{lem:cond-ind}]
	We have that,
	\begin{align*}
		&\distribution{H_{i, >j}, H_{>i,*} \mid H_{i, \leq j}, \Prot_1, H_{< i,*}} \\
		&\hspace{1cm}= \prod_{a=1}^r \distribution{H_{i, >j,a}, H_{>i,*,a} \mid H_{i, >j,<  a}, H_{>i,*,< a}, H_{i, \leq j}, \Prot_1, H_{< i,*}} \tag{by chain rule} \\
		&\hspace{1cm}= \prod_{a=1}^r \distribution{H_{i, >j,a}, H_{>i,*,a}  \mid H_{i, >j,< a}, H_{>i,*,< a}, H_{i, \leq j,-a}, H_{i, \leq j,a}, \Prot_1, H_{<i,*}} \tag{by splitting $H_{i, \leq j} = H_{i, \leq j,-a}, H_{i, \leq j,a}$} \\
		&\hspace{1cm}= \prod_{a=1}^r \distribution{H_{i, >j,a}, H_{>i,*,a} \mid  \Prot_1, H_{i, \leq j,a}, H_{<i,*}} \tag{by~\Cref{clm:2-ind-dist}}, 
	\end{align*}
	completing the proof.
\end{proof}

\subsubsection*{Eliminating the First Round of the Protocol $\prot_r$} 
We are now ready to proceed with designing our $(r-1)$-round $r$-party protocol $\prot_{r-1}$ for $\PP_{r-1}$ using $\prot_r$ (and the intermediate protocol $\prot'$ designed in the previous step). 
Recall the intermediate protocol $\sigma$ in the previous step. This protocol (and subsequent ones $\sigmae$ and $\prot'$) consists of two separate parts: (1) an \emph{embedding} part that created 
an entire graph $G_r$ from $\GG_{r}$ by placing the input $G_{r-1} \sim \GG_{r-1}$ as one of its special sub-instances; and, (2) a \emph{simulation} part that ran the protocol $\prot_r$ (either directly in $\sigma$ or indirectly in $\sigmae$ and $\prot'$) 
on this input. In the following, we will change the embedding part (thoroughly) but stick with the same simulation part except that we will only run those simulations from the second round onwards. Formally, 

\begin{ourbox}
	\textbf{An $(r-1)$-round $r$-party protocol $\prot_{r-1}$ for $\PP_{r-1}(G_{r-1},K)$}: 
	\begin{enumerate}
		\item Sample first-round messages $\Prot_{1,1},\ldots,\Prot_{1,r}$ of $\prot_{r}$ from $\GG_r$ and indices $\kstar \in [p_r]$ and $t \in [q_r]$ uniformly at random and independently using \underline{public randomness}\footnote{We emphasize
		that we are \emph{not} sampling the message $\Prot_{1,r+1}$ in this step (as it is determined by $t$).}. 
		\item Sample $H_{<t, *}, H_{t, <\kstar}$ from $\GG_r \mid \Prot_{1,1},\ldots,\Prot_{1,r},\rT=t$ using \underline{public randomness}.  
		\item For every $a \in [r]$, each player $\player{a}$ sets $H_{t,\kstar,a} = G_{r-1,a}$ and samples
		\[
		\Paren{G_{r,a} \sim \GG_r \mid \Prot_{1,1},\ldots,\Prot_{1,r}, \rT=t , H_{<t, *}, H_{t, <\kstar}, \rH_{t,\kstar,a}=G_{r-1,a}}
		\]
		using \underline{private randomness}. Let $G_r=(G_{r,1},G_{r,2},\ldots,G_{r,r+1})$ be the 
		input of all players where, using~\Cref{prop:input-create}, the input of the simulated player $\player{r+1}$ is fixed by $\rT=t$.
		\item The players run $\prot_r(G_r)$ \underline{from its second round} onwards, following the protocol $\prot'$ of~\Cref{clm:prot'}. 
		\item At the end, for any search sequence $K=(k_{r-1},\ldots,k_1)$, to solve $\PP_{r-1}(G_{r-1},K)$, return the answer of $\prot_r$ for $\PP_r(G_r, (\kstar, k_{r-1},k_{r-2},\ldots,1))$ (which is possible given
		that $\prot'$ generates the messages of $\prot_r(G_r)$ implicitly).  
	\end{enumerate}
\end{ourbox}

Similar to step one and distribution $\reald$, we also define the \textbf{distribution $\faked$} as the distribution of the graph $G_r$ obtained in the protocol $\prot_{r-1}$ when $G_{r-1} \sim \GG_{r-1}$. 
Unlike before, however, it is no longer the case that $\faked$ is actually the ``right'' distribution of the input that $\prot_r$ expects; for instance, we now
\emph{embedded} $H_{t,\kstar} = G_{r-1}$ \emph{without} conditioning on the messages $\Prot_{1,1},\ldots,\Prot_{1,r}$ in the protocol\footnote{This is inevitable: we are given $G_{r-1} \sim \GG_{r-1}$ as part of the input
and \emph{not} from $\GG_{r-1} \mid \Prot_1$ (and we have to sample $\Prot_1$ instead of spending a whole round computing it). So, even though clearly in the protocol $\prot_r$, $\Prot_1 \not\perp H_{t,\kstar}$ can happen, these two variables are always independent in $\faked$.}. In particular, we can write these distributions as the following (where the random variables on the RHS of each term is distributed according to $\GG_r$): 
\begin{alignat}{2}
	&\distribution{\Prot_1,\kstar, G_r \sim \reald} :=&&\paren{\rProt_1, \rT, \rK} \times \paren{\rH_{<\rT,*},\rH_{\rT,<\rK} \mid \rProt_1,\rT,\rK} \notag \\
	& &&\times \paren{\rH_{\rT,\rK} \mid \rProt_1,\rT,\rK,  \rH_{<\rT,*},\rH_{\rT,<\rK}} \times \paren{\rG_r \mid \rProt_1,\rT,\rK,  \rH_{<\rT,*},\rH_{\rT,<\rK}, \rH_{\rT,\rK}} \label{eq:reald} \\ 
	\notag \\
	&\distribution{\Prot_1,\kstar, G_r \sim \faked} :=&&\paren{\rProt_1, \rT, \rK} \times \paren{\rH_{<\rT,*},\rH_{\rT,<\rK} \mid \rProt_1,\rT,\rK} \notag \\ 
	& &&\times \paren{\rH_{\rT,\rK}} \times \prod_{a=1}^{r} \paren{\rG_{r,a} \mid \rProt_1,\rT,\rK,  \rH_{<\rT,*},\rH_{\rT,<\rK}, \rH_{\rT,\rK,a}}. \label{eq:faked}
\end{alignat}
In the above, with a slight abuse of notation, we wrote a triple $(\Prot,t,\kstar,G_r) \sim \reald$ (or $\sim \faked$) 
to denote the joint distribution of \emph{all} these variables when the input graph $G_r$ is sampled from $\reald$ versus when they are sampled from $\faked$. 

Nevertheless, we are going to prove that $\faked$ is not that different from $\reald$ either. In particular, we prove the following lemma. 
\begin{lemma}[``distributions induced by $\reald$ and $\faked$ are close'']\label{lem:reald-faked}
	\[
		\tvd{\distribution{\Prot_1,\kstar, G_r \sim \reald}}{\distribution{\Prot_1,\kstar, G_r \sim \faked}} \leq \sqrt{\frac{s}{2\,p_r \cdot q_r}}. 
	\]
\end{lemma}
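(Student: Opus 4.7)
The plan is to bound the KL divergence between the two joint distributions and then apply Pinsker's inequality. Comparing the factorizations in \Cref{eq:reald,eq:faked}, the first two factors (the distributions of $(\Prot_1,\rT,\rK)$ and of $(H_{<\rT,*},H_{\rT,<\rK})$ given them) are identical in $\reald$ and $\faked$, so they contribute zero to the KL. The third factor differs: under $\reald$, $\rH_{\rT,\rK}$ is sampled from its conditional distribution given $(\rProt_1,\rT,\rK,H_{<\rT,*},H_{\rT,<\rK})$, whereas under $\faked$ it is drawn from its $\GG_{r-1}$ marginal. The fourth factor, which governs the completion to $G_r$, is also identical in the two distributions: under $\reald$ the remaining sub-instances are drawn jointly from $\GG_r$ conditioned on $\rProt_1$ and the already-sampled parts, but \Cref{lem:cond-ind} (applied with $i=\rT$, $j=\rK$) shows that this joint distribution factors as a product across the $r$ players, which is exactly what $\faked$ does.

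Thus, by the chain rule for KL divergence, only the third factor contributes, giving
\[
\kl{\distribution{\Prot_1,\kstar,G_r\sim\reald}}{\distribution{\Prot_1,\kstar,G_r\sim\faked}}
= \Exp\,\kl{\rH_{\rT,\rK}\mid \rProt_1,\rT,\rK,\rH_{<\rT,*},\rH_{\rT,<\rK}}{\rH_{\rT,\rK}}.
\]
By \Cref{prop:product-dist} together with the fact that the sub-instances $\{H_{i,j}\}$ are sampled independently (and independently of $\rT,\rK$), the marginal $\distribution{\rH_{\rT,\rK}}$ equals $\distribution{\rH_{\rT,\rK}\mid \rT,\rK,\rH_{<\rT,*},\rH_{\rT,<\rK}}$. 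So by \Cref{fact:kl-info} the displayed expression is exactly the conditional mutual information $\mi{\rH_{\rT,\rK}}{\rProt_1\mid \rT,\rK,\rH_{<\rT,*},\rH_{\rT,<\rK}}$.

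Now I average out $(\rT,\rK)$, which are uniform on $[q_r]\times[p_r]$ and independent of $(\rProt_1,\{H_{i,j}\})$, and apply the chain rule:
\[
\mi{\rH_{\rT,\rK}}{\rProt_1\mid \rT,\rK,\rH_{<\rT,*},\rH_{\rT,<\rK}}
= \frac{1}{p_r q_r}\sum_{t,k}\mi{\rH_{t,k}}{\rProt_1\mid \rH_{<t,*},\rH_{t,<k}}
= \frac{1}{p_r q_r}\,\mi{\{\rH_{i,j}\}_{i,j}}{\rProt_1}.
\]
Since $\{\rH_{i,j}\}$ is determined by $\rG$ (by \Cref{prop:input-create}) and $\rProt_1$ is a subtuple of $\rProt$, the data-processing inequality gives $\mi{\{\rH_{i,j}\}}{\rProt_1}\le \mi{\rG}{\rProt}=\ic{\prot_r}{\GG_r}\le \cc{\prot_r}\le s$ by \Cref{prop:ic-cc}. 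Combining yields
\[
\kl{\distribution{\Prot_1,\kstar,G_r\sim\reald}}{\distribution{\Prot_1,\kstar,G_r\sim\faked}} \le \frac{s}{p_r q_r},
\]
and Pinsker's inequality delivers $\tvd{\cdot}{\cdot}\le \sqrt{s/(2\,p_r q_r)}$ as required.

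The main obstacle, and the place one must tread carefully, is verifying that the fourth factors in the two factorizations really do agree: $\reald$ uses the \emph{joint} $\GG_r$-conditional on $\rH_{\rT,\rK}$, while $\faked$ uses a per-player product in which each player $\player{a}$ conditions only on $\rH_{\rT,\rK,a}$. The equivalence is precisely the content of the conditional-independence statement in \Cref{lem:cond-ind}, which exploits the rectangle property of the first-round messages $\rProt_1$; without that structural lemma, this step would fail and one would pick up an additional, potentially unbounded, KL term.
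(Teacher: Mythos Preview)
Your proof is correct and follows the same approach as the paper, which establishes the lemma via the TVD chain rule (\Cref{fact:tvd-chain-rule}) together with \Cref{clm:first-message} (the third-factor bound, obtained through the identical KL/mutual-information computation and Pinsker) and \Cref{clm:cor-cond-ind} (the fourth-factor equality, which is exactly your application of \Cref{lem:cond-ind}); the only cosmetic difference is that you use the KL chain rule and apply Pinsker once at the end rather than per factor. One small caveat: $\rT$ is not literally independent of $\rProt_1$ (since $\rProt_{1,r+1}$ depends on $\rT$), but your displayed equality still holds because, conditioned on $\rT=t$, $\rProt_{1,r+1}$ is a function of $\rProt_{1,\leq r}$, and $\rT$ \emph{is} independent of $(\rProt_{1,\leq r},\{H_{i,j}\})$---which is precisely how the paper handles this point inside the proof of \Cref{clm:first-message}.
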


We will bound this difference term by term in~\Cref{eq:reald} and~\Cref{eq:faked}. The first two terms are equal in both. 
The next, and the main, claim bounds the difference between the third terms. This is yet another application of direct-sum style arguments along the lines of~\Cref{clm:low-info-cost}, 
although fundamentally different as we need a much stronger guarantee (that only holds for the first round of the protocol $\prot_r$). We emphasize that the following claim is
talking about $\prot_r$ and \emph{not} $\prot_{r-1}$ (as the variables in RHS of~\Cref{eq:reald} and~\Cref{eq:faked} are distributed according to $\prot_r$). 

\begin{claim}\label{clm:first-message}
	\[
	\Exp_{\rProt_1,\,\rT,\,\rK,\, \rH_{<t,*}, \, \rH_{t,<\kstar}} \tvd{(\rH_{t,\kstar} \mid \Prot_1 , H_{<t, *}, H_{t, <\kstar})}{\rH_{t,\kstar}} \leq \sqrt{\frac{s}{2\, p_r \cdot q_r}}. 
	\]
\end{claim}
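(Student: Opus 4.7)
The plan is to convert the total variation distance into a KL divergence via Pinsker's inequality, and then bound the resulting expected KL divergence by a direct-sum argument in the same spirit as the proof of \Cref{clm:low-info-cost}, but now applied to the first-round transcript $\rProt_1$. Concretely, by Pinsker's inequality followed by Jensen's inequality applied to the concave function $\sqrt{\,\cdot\,}$,
\[
\Exp \tvd{(\rH_{t,\kstar} \mid \Prot_1, H_{<t,*}, H_{t,<\kstar})}{\rH_{t,\kstar}} \;\leq\; \sqrt{\tfrac{1}{2} \cdot \Exp \kl{(\rH_{t,\kstar} \mid \Prot_1, H_{<t,*}, H_{t,<\kstar})}{\rH_{t,\kstar}}},
\]
so it suffices to show that the expected KL divergence on the right is at most $s/(p_r \cdot q_r)$.

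For any fixed pair $(t, \kstar) \in [q_r] \times [p_r]$, by \Cref{prop:product-dist} the sub-instances $\{\rH_{i,j}\}_{i,j}$ are mutually independent under $\GG_r$; in particular, $\rH_{t,\kstar}$ is independent of $(\rH_{<t,*}, \rH_{t,<\kstar})$. Using the standard identity $\Exp_B \kl{\distribution{A \mid B}}{\distribution{A}} = \mi{A}{B}$ together with the chain rule of mutual information, and dropping the vanishing cross-term $\mi{\rH_{t,\kstar}}{\rH_{<t,*}, \rH_{t,<\kstar}} = 0$, we obtain
\[
\Exp_{\rProt_1, \rH_{<t,*}, \rH_{t,<\kstar}} \!\kl{(\rH_{t,\kstar} \mid \Prot_1, H_{<t,*}, H_{t,<\kstar})}{\rH_{t,\kstar}} \;=\; \mi{\rH_{t,\kstar}}{\rProt_1 \mid \rH_{<t,*}, \rH_{t,<\kstar}}.
\]

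Now averaging over the uniform and independent choice of $(\rT, \rK) \in [q_r] \times [p_r]$ and telescoping via the chain rule of mutual information applied in the lexicographic order on the pairs $(t, \kstar)$---so that the prefix before $(t,\kstar)$ is precisely $(\rH_{<t,*}, \rH_{t,<\kstar})$---gives
\[
\Exp_{\rT, \rK} \mi{\rH_{\rT, \rK}}{\rProt_1 \mid \rH_{<\rT,*}, \rH_{\rT,<\rK}} \;=\; \frac{1}{p_r \cdot q_r} \cdot \mi{\{\rH_{i,j}\}_{i,j}}{\rProt_1} \;\leq\; \frac{\en{\rProt_1}}{p_r \cdot q_r} \;\leq\; \frac{s}{p_r \cdot q_r},
\]
where the last inequality uses $\en{\rProt_1} \leq \log\card{\supp{\rProt_1}} \leq \cc{\prot_r} \leq s$. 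Substituting back into the Pinsker--Jensen bound yields $\sqrt{s/(2 \, p_r \cdot q_r)}$, as required.

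The only delicate step is the independence $\rH_{t,\kstar} \perp (\rH_{<t,*}, \rH_{t,<\kstar})$, which is immediate from the product structure of $\GG_r$ established in \Cref{prop:product-dist}. No further obstacle is anticipated: everything else is a standard Pinsker plus chain-rule averaging argument that exploits the key fact that $(\rT, \rK)$ is chosen uniformly and independently of the sub-instances, so $\rProt_1$ can only reveal, on average, a $\tfrac{1}{p_r q_r}$ fraction of its total information about the collection of sub-instances to any specific one.
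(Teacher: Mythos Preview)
Your proposal is correct and follows essentially the same route as the paper: Pinsker plus Jensen to reduce to an expected KL-divergence, rewrite that as mutual information using independence of the sub-instances, then telescope via the chain rule over the uniformly chosen $(\rT,\rK)$ and bound by $\cc{\prot_r}\le s$.

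The one place where the paper is more careful than you is in handling the last player's first-round message $\rProt_{1,r+1}$, which is a deterministic function of $\rT$ (and $\rProt_{1,\leq r}$). Because of this, conditioning on $\rT=t$ changes the distribution of $\rProt_1$, so when you pass from ``for fixed $(t,\kstar)$'' to ``average over uniform $(\rT,\rK)$'', you are implicitly using that the inner expectation is unaffected by whether or not one conditions on $\rT=t$. The paper makes this explicit by first observing that $\rProt_{1,r+1}$ can be dropped from the conditioning (it is determined by $t$ and $\rProt_{1,\leq r}$, and $\rT\perp\{\rH_{i,j}\}$), then working with $\rProt_{1,\leq r}$ throughout. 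Your argument still goes through because $\mi{\rH_{t,\kstar}}{\rProt_{1,r+1}\mid \rProt_{1,\leq r},\rH_{<t,*},\rH_{t,<\kstar}}=0$ for the same reason, but it would strengthen the write-up to say so.
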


\begin{proof}
	We follow the standard plan of bounding the KL-divergence of the above distributions and applying Pinsker's inequality (\Cref{fact:pinskers}) at the end. In this lemma, 
	\emph{all} variables are with respect to the protocol $\prot_r$ on the distribution $\GG_r$. The only exception is that of $\rK$ which is chosen uniformly from $[p_r]$ (and is not defined explicitly 
	in $\GG_r$ beforehand). 
	
	By~\Cref{prop:product-dist}, the first $r$ players do not have any information about random variable $\rT$, and thus their messages in the first round cannot reveal too much about \emph{all} of the special sub-instances, and in particular 
	$H_{t,\kstar}$. The last player knows $\rT$ but has no access to \emph{any} of the sub-instances, so should not be able to reveal anything about the special sub-instances. Overall, the first message 
	cannot change the distribution of $\rH_{t,\kstar}$, for a random chosen $t \in [q_r]$ and $\kstar \in [p_r]$ by much. 
	Our proof formalizes this idea to get a bound on the KL-Divergence. In the following, we denote 
	\[
		\Prot_{1,\leq r} := (\Prot_{1,1},\Prot_{1,2},\ldots,\Prot_{1,r}),
	\]
	i.e., the first-round messages of all players except for the last one. 
	
	We claim that, for every $i \in [q_r]$ and $j \in [p_r]$,  
	\begin{align}
		\rH_{i,j} \perp \rT=i, \rK=j \mid \Prot_1, \rH_{<i, *}, \rH_{i, <j} \qquad \text{and} \qquad  \rH_{i,j} \perp  \rH_{<i, *}, \rH_{i, <j} , \rT=i, \rK=j, \label{eq:cond-2}
	\end{align}
	because in both cases, sub-instances are chosen independently in $\GG_r$ from each other, as well as the special UPC selected by $\rT$; the choice of $\rK$ is also independent of the entire graph. 
	
	We can now use this and have, 
	\begin{align*}
		&\Exp_{\rProt_1,\,\rT,\,\rK,\,\rH_{<t,*}, \, \rH_{t,<\kstar}} \kl{\rH_{t,\kstar} \mid \Prot_1 , H_{<t, *}, H_{t, <\kstar}}{\rH_{t,\kstar}} \\
		&\hspace{1cm}= \Exp_{\rProt_1,\,\rT,\,\rK,\,\rH_{<t,*}, \, \rH_{t,<\kstar}} \kl{\rH_{t,\kstar} \mid \Prot_1 , H_{<t, *}, H_{t, <\kstar},\rT=t,\rK=\kstar}{\rH_{t,\kstar} \mid H_{<t, *}, H_{t, <\kstar}, \rT=t,\rK=\kstar} 
		\tag{the distributions are the same by part one and part two of~\Cref{eq:cond-2}, respectively} \\
		&\hspace{1cm}= \Exp_{\rProt_{1,\leq r},\,\rT,\,\rK,\,\rH_{<t,*}, \, \rH_{t,<\kstar}} \kl{\rH_{t,\kstar} \mid \Prot_{1,\leq r} , H_{t, <\kstar}, \rT=t, \rK = \kstar}{\rH_{t,\kstar} \mid H_{<t, *}, H_{t, <\kstar}, \rT=t, \rK=k} 
		\tag{because $\Prot_{1,r+1}$ is fixed by $t$ and $\Prot_{1,\leq r}$ because it fixes the input of $\player{r+1}$ by~\Cref{prop:input-create}}\\
		&\hspace{1cm}= \mi{\rProt_{1,\leq r}}{\rH_{\rT,\rK} \mid \rH_{<\rT, *}, \rH_{\rT, <\rK}, \rT,\rK}. \tag{by~\Cref{fact:kl-info}}
	\end{align*}
	In words, the change in the KL-divergence of the distribution of the sub-instance $H_{t,\kstar}$ from its original distribution by the messages of first round is bounded by the information revealed by $\Prot_{1,\leq r}$ 
	about this sub-instance. 
	
	We now bound this information term as follows: 
	\begin{align*}
		\mi{\rProt_{1,\leq r}}{\rH_{\rT,\rK} \mid \rH_{<\rT,*}, \rH_{\rT,<\rK}, \rT,\rK} &= \Exp_{\rT,\rK} \Bracket{\mi{\rProt_{1,\leq r}}{\rH_{t,\kstar} \mid \rH_{<t, *}, \rH_{t, <\kstar}, \rT=t,\rK=\kstar}} \tag{by the definition of conditional mutual information} \\
		&= \Exp_{\rT,\rK} \Bracket{\mi{\rProt_{1,\leq r}}{\rH_{t,\kstar} \mid \rH_{<t, *}, \rH_{t, <\kstar}}} 
		\tag{for the same reason as~\Cref{eq:cond-2} since $(\rProt_{1,\leq r},\rH_{t,\kstar},\rH_{<t, *}, \rH_{t, <\kstar})$ are all independent of $\rT,\rK$} \\
		&= \frac{1}{q_r} \cdot \frac{1}{p_r} \cdot \sum_{t=1}^{q_r} \sum_{\kstar=1}^{p_r} \mi{\rProt_{1,\leq r}}{\rH_{t,\kstar} \mid \rH_{<t, *}, \rH_{t, <\kstar}} \tag{as the distribution of both $\rT,\rK$ is uniform and independent of each other} \\
		&= \frac{1}{q_r \cdot p_r} \cdot \mi{\rProt_{1,\leq r}}{\set{\rH_{i,j}}_{i \in [q_r], j\in [p_r]}}\tag{by the chain rule of mutual information in~\itfacts{chain-rule}}  \\
		&\leq \frac{1}{q_r \cdot p_r} \cdot \mi{\rProt}{\rG} \tag{by the data processing inequality of~\itfacts{data-processing} and~\Cref{prop:input-create} on inputs of players} \\
		&= \frac{1}{q_r \cdot p_r} \cdot \ic{\prot_{r}}{\GG_r} \tag{by the definition of external information (\Cref{def:ext-info})} \\
		&\leq \frac{1}{q_r \cdot p_r} \cdot \cc{\prot_r} \tag{by~\Cref{prop:ic-cc}}
	\end{align*}
	which is $s/(q_r \cdot p_r)$ as $s$ is the total communication cost of $\prot_r$. We thus have, 
	\begin{align}
	\Exp_{\rProt_1,\,\rT,\,\rK,\,\rH_{t,<\kstar},\, \rH_{<t,*}} \kl{\rH_{t,\kstar} \mid \Prot_1 , H_{t,<\kstar}, H_{<t,*}}{\rH_{t,\kstar}} \leq  \frac{s}{q_r \cdot p_r}. \label{eq:kl-bound-2} 
	\end{align}
	We can finish the proof as follows: 
\begin{align*}
&\Exp_{\rProt_1,\,\rT,\,\rK,\,\rH_{t,<\kstar},\, \rH_{<t,*}} \tvd{(\rH_{t,\kstar} \mid \Prot_1 , H_{t,<\kstar}, H_{<t,*})}{\rH_{t,\kstar}} \\
&\hspace{1cm}\leq \Exp_{\rProt_1,\,\rT,\,\rK,\,\rH_{t,<\kstar},\, \rH_{<t,*}}\Bracket{\sqrt{1/2 \cdot \kl{\rH_{t,\kstar} \mid \Prot_1 , H_{t,<\kstar}, H_{<t,*}}{\rH_{t,\kstar}}}} 
	\tag{by Pinsker's inequality of~\Cref{fact:pinskers}} \\
&\hspace{1cm}\leq \sqrt{1/2 \cdot \Exp_{\rProt_1,\,\rT,\,\rK,\,\rH_{t,<\kstar},\, \rH_{<t,*}}\Bracket{\kl{\rH_{t,\kstar} \mid \Prot_1 , H_{t,<\kstar}, H_{<t,*}}{\rH_{t,\kstar}}}} \tag{by Jensen's inequality as $\sqrt{\cdot}$ is concave} \\
&\hspace{1cm}\leq \sqrt{\frac{s}{2q_r \cdot p_r}}, \tag{by~\Cref{eq:kl-bound-2}} 
\end{align*}
concluding the proof. 
\end{proof}

In the following claim, we bound the fourth terms of~\Cref{eq:reald} and~\Cref{eq:faked} by showing those distributions are actually equivalent. This is a direct corollary 
of the conditional independences we established earlier in~\Cref{lem:cond-ind}. The statement of the following claim is written in a rather indirect way, by stating the distance is zero instead of simply claiming the two distributions are equivalent, 
to make its application in the later part of the proof completely transparent.

\begin{claim}\label{clm:cor-cond-ind}
	\begin{align*}
		&\hspace{5.7cm} \Exp_{\rProt_1,\rT,\rK,  \rH_{<\rT,*},\rH_{\rT,<\rK}, \rH_{\rT,\rK}} \\
		&\tvd{\paren{\rG_r \mid \Prot_1,t,\kstar, H_{<t,*},H_{t,<\kstar}, H_{t,\kstar}}}{\prod_{a=1}^{r} \paren{\rG_{r,a} \mid \Prot_1, t, \kstar, H_{<t,*}, H_{t,<\kstar}, H_{t,\kstar,a}}} = 0.
	\end{align*}
\end{claim}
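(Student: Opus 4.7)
The strategy is to show that the two distributions inside the total variation distance in the claim are literally equal for every realization of the conditioning variables; this makes the TVD zero pointwise, so the expectation is trivially zero. The argument is essentially bookkeeping on top of \Cref{lem:cond-ind}.

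First I would reduce the claim to a statement purely about sub-instances. By \Cref{prop:input-create}, $\rG_{r,r+1}$ is a deterministic function of $\rT$ (already in the conditioning), and for each $a \in [r]$ the input $\rG_{r,a}$ is a deterministic function of $\cH_a := \set{\rH_{i,j,a}}_{i \in [q_r], j\in [p_r]}$. Conditioning on $H_{<t,*}, H_{t,<\kstar}, H_{t,\kstar}$ on the LHS pins down $H_{<t,*,a}$ and $H_{t,\leq\kstar,a}$ for every $a$, so the only remaining randomness in $\cH_a$ lies in $(\rH_{t,>\kstar,a}, \rH_{>t,*,a})$. After applying the deterministic map $\cH_a \mapsto \rG_{r,a}$, the claim reduces to showing the pointwise equality
\[
\distribution{\rH_{t,>\kstar}, \rH_{>t,*} \mid \rProt_1, H_{<t,*}, H_{t,\leq\kstar}} \;=\; \prod_{a=1}^{r} \distribution{\rH_{t,>\kstar,a}, \rH_{>t,*,a} \mid \rProt_1, H_{<t,*}, H_{t,<\kstar}, H_{t,\kstar,a}};
\]
the conditioning on $\rT, \rK$ may be dropped throughout, as they are uniform on $[q_r] \times [p_r]$ and independent of all other variables in the experiment.

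The main step is then to invoke \Cref{lem:cond-ind} with $i := t$ and $j := \kstar$, which gives the factorization
\[
\distribution{\rH_{t,>\kstar}, \rH_{>t,*} \mid \rProt_1, H_{<t,*}, H_{t,\leq\kstar}} \;=\; \prod_{a=1}^{r} \distribution{\rH_{t,>\kstar,a}, \rH_{>t,*,a} \mid \rProt_1, H_{<t,*}, H_{t,\leq\kstar,a}}.
\]
The only discrepancy with the target is that each factor we want conditions on the full $H_{t,<\kstar}$, whereas \Cref{lem:cond-ind} conditions only on player $a$'s own piece $H_{t,<\kstar,a}$. This gap is closed directly by \Cref{clm:2-ind-dist} applied with $i=t, j=\kstar$, which yields $\rH_{t,<\kstar,-a} \perp (\rH_{t,>\kstar,a}, \rH_{>t,*,a}) \mid \rProt_1, H_{<t,*}, H_{t,\leq\kstar,a}$, so additionally conditioning on $H_{t,<\kstar,-a}$ leaves each factor unchanged. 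Since the argument is a direct application of previously established conditional independences, there is no real technical obstacle---the only care needed is in tracking which subset of the random variables is being conditioned on in each term.
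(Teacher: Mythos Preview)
Your proposal is correct and takes the same approach as the paper: reduce both sides to distributions over the remaining sub-instances via \Cref{prop:input-create}, drop the conditioning on $\rT,\rK$, and apply \Cref{lem:cond-ind}. One small imprecision: your reason for dropping $\rT$---that it is ``independent of all other variables''---is not quite right, since $\rT$ determines player $\player{r+1}$'s input and hence is correlated with $\rProt_1$; the paper instead argues that $\Prot_{1,r+1}$ is a deterministic function of $(t,\Prot_{1,\leq r})$ while the sub-instances depend only on $\Prot_{1,\leq r}$, so further conditioning on $\rT=t$ leaves the conditional distribution of the sub-instances given $\Prot_1$ unchanged. On the other hand, your explicit use of \Cref{clm:2-ind-dist} to upgrade the conditioning in each factor from $H_{t,<\kstar,a}$ to the full $H_{t,<\kstar}$ is a detail the paper glosses over when it simply writes ``by \Cref{lem:cond-ind}.''
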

\begin{proof}
	The distribution of $\rG_r$ conditioned on $\rT=t, H_{<t,*},H_{t,<\kstar}, H_{t,\kstar}$ is a function of sub-instances 
	\[
	\set{\rH_{t,j}}_{j > \kstar} \qquad \text{and} \qquad \set{\rH_{i,j}}_{i > t, j \in [p_r]};
	\]
	this follows immediately from~\Cref{prop:input-create}. Moreover, these distributions are independent of $\rT=t$ even conditioned on $\rProt_1$, since $\rT=t$ fixes
	$\rProt_{1,r+1}$ (again by~\Cref{prop:input-create}) but the rest of $\rProt_{1,\leq r}$ is independent of $\rT=t$ -- however, the sub-instances are only a function of $\rProt_{1,\leq r}$
	and thus remain independent of $\rT=t$. Finally, $\rK = \kstar$ is independent of the entire graph; hence, we have, 
	\begin{align*}
		&\paren{\rG_r \mid \Prot_1,t,\kstar, H_{<t,*},H_{t,<\kstar}, H_{t,\kstar}} = \paren{\rH_{t,>\kstar},\rH_{>t,*} \mid \Prot_1, H_{<t,*}, H_{t,<\kstar}, H_{t,\kstar}} \\
		&\prod_{a=1}^{r} \paren{\rG_{r,a} \mid \Prot_1, t, \kstar, H_{<t,*}, H_{t,<\kstar}, H_{t,\kstar,a}} = \prod_{a=1}^{r} \paren{\rH_{t,>\kstar,a},\rH_{>t,*,a} \mid \Prot_1, H_{<t,*}, H_{t,<\kstar}, H_{t,\kstar,a}}. 
	\end{align*}
	The two RHS are now equal by~\Cref{lem:cond-ind}, proving the claim. 
\end{proof}

\Cref{lem:reald-faked} now follows from~\Cref{clm:first-message} and~\Cref{clm:cor-cond-ind} and the chain rule (upper bound) for total variation distance (\Cref{fact:tvd-chain-rule}). 

We can now conclude the proof of~\Cref{lem:round-elim-final}, by showing that the protocol $\prot_{r-1}$ can solve $\PP_{r-1}$ with the desired probability.  

\begin{proof}[Proof of~\Cref{lem:round-elim-final}]
	Suppose we sample a graph $G_{r-1} \sim \GG_{r-1}$ and run protocol $\prot'$ of step one in~\Cref{sec:step-one} on this graph by creating the messages and instance $(\Prot_1,G_r) \sim \reald$; then, by~\Cref{clm:prot'}, 
	\begin{align}
		\Prob_{\substack{G_{r-1} \sim \GG_{r-1} \\ \\ (\Prot_1,\kstar,G_r) \sim \reald}}\Paren{\text{$\prot'$ succeeds in solving $\PP_{r-1}$}} \geq \delta - \eps, \label{eq:wishful1}
	\end{align}
	for the parameters $\delta$ and $\eps$ in the lemma's statement. 
	
	On the other hand, what we are actually doing in $\prot_{r-1}$ is to sample $G_{r-1} \sim \GG_{r-1}$ but then run $\prot'$ of step one on this graph by creating messages and instance $(\Prot_1,G_r) \sim \faked$ instead. 
	Thus, 
	\begin{align}
		\Prob_{G_{r-1} \sim \GG_{r-1}}\Paren{\text{$\prot_{r-1}$ succeeds in solving $\PP_{r-1}$}} = \Prob_{\substack{G_{r-1} \sim \GG_{r-1} \\ \\ (\Prot_1,\kstar,G_r) \sim \faked}}\Paren{\text{$\prot'$ succeeds in solving $\PP_{r-1}$}}.  \label{eq:actual1}
	\end{align}
	Nevertheless, the RHS of this equation and the LHS of the above one are quite close to each other by~\Cref{lem:reald-faked} so we should be able to extend the first equation to the second one as well. Formally, 
	\begin{align*}
		&\Prob_{\substack{G_{r-1} \sim \GG_{r-1} \\ \\ (\Prot_1,\kstar,G_r) \sim \faked}}\Paren{\text{$\prot'$ succeeds in solving $\PP_{r-1}$}} \\
		&\hspace{1cm}\geq \hspace{-0.25cm}\Prob_{\substack{G_{r-1} \sim \GG_{r-1} \\ (\Prot_1,\kstar,G_r) \sim \reald}} \hspace{-0.25cm} \Paren{\text{$\prot'$ succeeds in solving $\PP_{r-1}$}}-\tvd{{(\Prot_1,\kstar, G_r) \sim \reald}}{{(\Prot_1,\kstar, G_r) \sim \faked}} \tag{by~\Cref{fact:tvd-small}} \\
		&\hspace{1cm}\geq \delta-\eps - \sqrt{\frac{s}{2\,p_r \cdot q_r}} \tag{by~\Cref{eq:wishful1} for the first term and~\Cref{lem:reald-faked} for the second one}. 
	\end{align*}
	Plugging in this bound in~\Cref{eq:actual1} implies that $\prot_{r-1}$ solves $\PP_{r-1}$ with the desired probability. 
	
	Also, again, by~\Cref{clm:prot'}, we have that $\prot_{r-1}$ has communication cost 
	\[
		\cc{\prot_{r-1}} \leq \frac{\raoconst}{\eps} \cdot \Paren{\frac{s}{p_r} + r^2}. 
	\]
	Finally, we can fix the randomness of $\prot_{r-1}$ by an averaging argument to obtain a deterministic algorithm with the same performance. We now have the desired deterministic $(r-1)$-round $r$-party protocol for $\PP_{r-1}$, 
	concluding the proof. 
\end{proof}

\begin{Remark}\label{rem:faked-ic}
{The reason we needed the message compression arguments and the protocols $\sigmae$ and $\prot'$ of step one, instead of working
with the protocol $\sigma$ right away is the \emph{very last step} of the proof above. Had we directly used $\sigma$, then, in the last step, we should have bounded the information cost 
of $\sigmae$ on the \emph{new} distribution $\faked$ instead of $\reald$ -- while, by~\Cref{lem:reald-faked}, these two distributions are statistically close, this does \emph{not} imply
that the information cost of $\sigmae$ on $\faked$ will be small also, which stops us from applying the induction hypothesis.} 
\end{Remark}

\subsection{Proof of~\Cref{lem:communication-lb}}\label{sec:proof-communication-lb}

We now use~\Cref{lem:round-elim-final} to complete the proof of~\Cref{lem:communication-lb} (restated below). 
At this stage, the main arguments have been made and the remainder of the proof, for the most part, is a tedious calculation based on the parameters set in~\Cref{eq:parameter-n-b} and~\Cref{eq:parameter-p-q}. 

\begin{lemma*}[Restatement of~\Cref{lem:communication-lb}]
	For any $r \geq 1$, any $r$-round protocol $\prot$ that given $G \sim \GG_r(n)$ (for $n,r$ satisfying~\Cref{eq:n-r-relation}), can solve $\PP_r$ on input graph $G$ with
	 probability of success at least 
	 \[
	 2^{-p_0} \cdot \paren{1 + \frac{r}{20 \cdot (r+1)}}
	 \]
	 has communication cost
	\[
		\cc{\prot} \geq s_r(n) := \frac1{{n_0}^2 \cdot (2^{p_0} \cdot 40 \cdot \raoconst)^r \cdot ((r+1)!)^2 \cdot e^{3r\ln^{5/6}(n)}} \cdot  \paren{{(n)}\uparrow \paren{1+\frac{1}{2^{r}-1}}}.
	\]
\end{lemma*}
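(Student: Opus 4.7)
The plan is to induct on $r$, using~\Cref{lem:round-elim-final} as the round-stripping engine, with the base case handled by direct inspection of the resulting $0$-round protocol. Write $\delta_r := 2^{-p_0}\bigl(1 + r/(20(r+1))\bigr)$ for the success threshold, and note the gap $\delta_r - \delta_{r-1} = 2^{-p_0}/\bigl(20\,r(r+1)\bigr)$. For the inductive step (at $r \geq 2$), I would assume the statement for $r-1$ and suppose toward contradiction that there is a deterministic $r$-round $(r+1)$-party protocol $\prot_r$ for $\PP_r$ with success probability at least $\delta_r$ and communication cost $s \leq s_r(n)$. Fix $\eps = \eps_r := 2^{-p_0}/\bigl(40\,r(r+1)\bigr)$, exactly half the gap above, and apply~\Cref{lem:round-elim-final}; this produces an $(r-1)$-round $r$-party protocol $\prot_{r-1}$ for $\PP_{r-1}$ on $\GG_{r-1}(n_{r-1})$, where $(n_{r-1}, r-1)$ still satisfies~\Cref{eq:n-r-relation} by~\Cref{obs:eq-n-r-relation}.

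Two inequalities must then be verified so that the inductive hypothesis applies to $\prot_{r-1}$. First, the success-probability hypothesis of the inductive statement requires $\eps_r + \sqrt{s/(2 p_r q_r)} \leq 2\eps_r$. Using $p_r \cdot q_r \geq (b_r \cdot 2^r)^2 / e^{O(\ln^{3/4}(n))}$ (from~\Cref{eq:parameter-p-q} and~\Cref{prop:dup}) and the telescoping identity $n^{1 + 1/(2^r-1)} = \Theta(b_r^2)$ that follows from~\Cref{eq:parameter-n-b}, the ratio $s/(p_r q_r)$ is bounded by $e^{O(\ln^{3/4}(n)) - 3r \ln^{5/6}(n)}$, which is far below $\eps_r^2$ once $n_0$ (and hence $n$) is taken sufficiently large. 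Second, the communication upper bound $\frac{\raoconst}{\eps_r}(s/p_r + r^2)$ must be at most $s_{r-1}(n_{r-1})$. Using $p_r \geq (b_r \cdot 2^r)/e^{O(\ln^{3/4}(n))}$ together with the identity $n^{1+1/(2^r-1)} \leq O(1) \cdot b_r \cdot n_{r-1}^{1+1/(2^{r-1}-1)}$ (again from~\Cref{eq:parameter-defining}), the dominant term $\frac{\raoconst}{\eps_r} \cdot s/p_r$ telescopes: after the $n_0^2$, $(2^{p_0}\cdot 40\raoconst)^r$, and $((r+1)!)^2$ factors in $s_r$ and $s_{r-1}$ are matched, what remains is $s_{r-1}(n_{r-1})$ times a factor proportional to $\frac{r}{(r+1)\cdot 2^r} \cdot e^{O(\ln^{3/4}(n)) - 3\ln^{5/6}(n)}$, which drops below $1$ for large $n$. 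The $r^2$ additive term is a polynomial-in-$r$ overhead easily absorbed by $s_{r-1}(n_{r-1})$. With both inequalities in hand, the inductive hypothesis forces $\cc{\prot_{r-1}} \geq s_{r-1}(n_{r-1})$, contradicting the upper bound just derived.

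For the base case $r = 1$, the inductive hypothesis is not available, so a direct argument is needed. Applying~\Cref{lem:round-elim-final} with $\eps_1 = 2^{-p_0}/80$ produces a $0$-round $1$-party protocol $\prot_0$ for $\PP_0$ on $\GG_0(n_0)$ whose success probability is at least $\delta_1 - \eps_1 - \sqrt{s_1(n)/(2 p_1 q_1)} > 2^{-p_0}$, by the same large-$n$ calculation as above. The contradiction, however, comes from looking inside the construction of~\Cref{sec:step-two}: for $r = 1$ there is no ``second round onwards'' to simulate, so the answer of $\prot_0$ is simply the last-player message $\Prot_{1,2}$ written by the simulated $\player{2}$. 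But $\Prot_{1,2}$ is a deterministic function of $(\Prot_{1,1}, G_{1,2})$, both of which are determined entirely by public randomness: $G_{1,2}$ is fixed by $t$ via~\Cref{prop:input-create}, and $\Prot_{1,1}$ is sampled publicly in step 1 of the construction. Hence the output bits are a fixed function of the public randomness, independent of the input $G_0$, and so the probability that they agree with the uniformly distributed $\PP_0(G_0,\emptyset) \in \{0,1\}^{p_0}$ is at most $2^{-p_0}$. This contradicts the success bound from~\Cref{lem:round-elim-final} and closes the induction.

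The main obstacle is the bookkeeping in the inductive step: one has to check that the exponential slack $e^{3r\ln^{5/6}(n)}$ built into the definition of $s_r(n)$ is tight enough to absorb both the $e^{O(\ln^{3/4}(n))}$ losses inherited from the DUP-graph parameters and the $2^r$, $(r+1)^2$, $2^{p_0}\cdot 40\raoconst$, and $r^2$ overheads introduced by a single application of~\Cref{lem:round-elim-final}. All the combinatorial and information-theoretic content is already packaged inside~\Cref{lem:round-elim-final}; what remains is verifying that the numerical invariant stays tight across the $O(\log\log n)$ layers of the induction, which is routine once $n_0$ is chosen as a sufficiently large absolute constant.
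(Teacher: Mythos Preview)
Your proposal is correct and follows essentially the same approach as the paper: induct on $r$, using~\Cref{lem:round-elim-final} to strip a round, with a direct argument at $r=1$ that a $0$-round protocol can only guess $\PP_0$ with probability $2^{-p_0}$. The minor differences are cosmetic: you take $\eps_r = 2^{-p_0}/(40r(r+1))$ (exactly half the gap $\delta_r-\delta_{r-1}$) while the paper uses $2^{-p_0}/(40(r+1)^2)$, and your base case spells out explicitly why the output of $\prot_0$ is a function of public randomness alone, whereas the paper just invokes the generic fact that a deterministic $0$-round protocol returns a fixed answer---both are fine, and your telescoping identity $n^{1+1/(2^r-1)} = \Theta(b_r)\cdot n_{r-1}^{1+1/(2^{r-1}-1)}$ is exactly the bookkeeping the paper carries out line by line.
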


\renewcommand{\probconst}{2^{p_0}}

\begin{proof}
	Given that there is a distribution over $G$ and we are working with communication cost (which is a worst-case measure), 
	it suffices to prove~\Cref{lem:communication-lb} for \emph{deterministic} protocols; the result for randomized protocols then 
	follows immediately by the easy direction of Yao's minimax principle (namely, by fixing the randomness of the protocol to its ``best'' choice using an averaging argument). 
	We prove the lemma using induction on $r$. 
	
	\paragraph{Base Case.}
	We prove the base case when $r = 1$. Assume towards a contradiction that there exists a protocol $\pi_1$ for $\GG_1(n)$ which uses communication  
	\[
	\cc{\prot_1}< s_1(n) = \frac{1}{n_0^2 \cdot (2^{p_0} \cdot 40 \raoconst) \cdot (2!)^2 \cdot e^{3 \ln^{5/6}(n)}} \cdot n^2,
	\]
	 and success probability at least $2^{-p_0} \cdot 41/40$. We will use \Cref{lem:round-elim-final} on $\prot_1$ to get a $0$-round protocol.
	
	We know that a $0$-round protocol cannot change the prior distribution of $\PP_0(G,\emptyset)$ for $G \sim \GG_0(n_0)$ 
	which is uniform over $\set{0,1}^{p_0}$. Thus, the probability that the $0$-round protocol can solve the predicate $\PP_0$ is $2^{-p_0}$ as it returns a fixed answer to  $\PP_0(G,\emptyset)$ independent of $G$.  
	
	We use \Cref{lem:round-elim-final} on $\prot_1$ with the following parameters (from \Cref{eq:parameter-p-q} for $r=1$):
	\begin{align*}
		\epsilon = \frac1{160 \cdot \probconst} \qquad p_1 = \frac{2b_1}{\exp(\eta_p \cdot (\ln (2b_1))^{3/4})} \qquad q_1 = \frac{2b_1}{\exp(\eta_q \cdot (\ln(2b_1))^{3/4})}.
	\end{align*}

We get a $0$-round protocol $\prot_0$, which cannot communicate any bits. We will just analyze the probability of success of $\prot_0$. We have, 
\begin{align*}
	\frac{s_1(n)}{p_1 \cdot q_1} &= \underbrace{\frac{1}{n_0^2 \cdot (\probconst \cdot 40 \raoconst) \cdot (2!)^2 \cdot e^{3 \ln^{5/6}(n)}} \cdot n^2}_{=s_1} \cdot \underbrace{\frac{\exp(\eta_p \cdot (\ln(2b_1))^{3/4})}{2b_1}}_{=1/p_1} \cdot \underbrace{\frac{\exp(\eta_q \cdot (\ln(2b_1))^{3/4})}{2b_1}}_{=1/q_1} \\
	&= \frac1{16\cdot 40 \cdot \probconst\cdot \raoconst} \cdot \frac{(2 \cdot b_1 n_0)^2}{4 \cdot b_1^2 \cdot n_0^2} \cdot \exp((\eta_p+\eta_q) \cdot \ln^{3/4}(2b_1)-3\ln^{5/6}(n)) \tag{by choice of $b_1=(n/2n_0)$ from \Cref{eq:define-b1}}  \\
	&\leq \frac1{16\cdot 40 \cdot \probconst\cdot \raoconst} \cdot \exp(-2 \ln^{5/6}(n)) \tag{as $\eta_p = O(1), \eta_q = O(1)$, $2b_1 < n$, and $O(\ln^{3/4}(n))< \ln^{5/6}(n)$ for large $n$} \\
	&\leq \paren{\frac1{160\cdot \probconst}}^2. \tag{taking a loose but sufficient upper bound}
\end{align*}
Thus, by~\Cref{lem:round-elim-final}, the probability of success of $\prot_0$ is at least,
\begin{align*}
	\frac{41}{40\cdot  \probconst} - \underbrace{\frac1{160 \cdot \probconst}}_{\eps} - \underbrace{\frac1{160 \cdot\probconst}}_{\sqrt{s_1(n)/(p_1 \cdot q_1)}} =\frac{81}{80 \cdot \probconst} > \frac1{2^{p_0}},
\end{align*}
which is a contradiction. This proves the induction base for $r=1$. 
	
	\paragraph{Induction Step.}
	We now prove the induction step wherein we assume the statement is true for $r-1$ and we prove it for $r > 1$. 
	Suppose towards a contradiction that there exists a protocol $\prot_r$ with probability of success
	\[
	\delta :=\frac1{\probconst}\paren{1 +  \frac{r}{20 \cdot (r+1)}}
	\]
	for solving $\PP_{r}$ as in the lemma statement, but with 
	communication cost $\cc{\prot_r} < s_r(n)$ for the parameter $s_r(n)$ in the lemma statement.
	Now, recall from~\Cref{eq:parameter-n-b} and \Cref{eq:parameter-p-q} that any graph sampled from distribution $\GG_r$ has the following parameters: 
	\begin{align}
	\begin{split}
			n&= 2 \cdot b_r\cdot n_{r-1} = 2 \cdot \paren{n_{r-1}}\uparrow\paren{\frac{2^r-1}{2^{r-1}-1}}, \\ 
			b_r &= \paren{n_{r-1}} \uparrow \paren{1+\frac{1}{2^{r-1}-1}} \\
			p_r&= \frac{b_r\cdot 2^r}{\exp(\eta_p \cdot (\ln (b_r \cdot 2^r))^{3/4})} = \frac{\paren{n_{r-1}} \uparrow \paren{1+\frac{1}{2^{r-1}-1}} \cdot 2^r}{\exp\paren{\eta_p \cdot (r\ln2+ \ln (b_r)}^{3/4})}, \\
			q_r&= \frac{b_r\cdot 2^r}{ \exp(\eta_q \cdot (\log (b_r \cdot 2^r))^{3/4})} = \frac{\paren{n_{r-1}} \uparrow \paren{1+\frac{1}{2^{r-1}-1}} \cdot 2^r}{\exp\paren{\eta_q \cdot (r\ln2+ \log (b_{r})}^{3/4})}. 
	\end{split}\label{eq:final-para}
	\end{align}

	We use~\Cref{lem:round-elim-final} on protocol $\prot_r$ with parameter 
	\[
	\eps := \frac1{\probconst \cdot 40 \cdot (r+1)^2}
	\]
	to get a protocol $\prot_{r-1}$ for $\PP_{r-1}$ over $\GG_{r-1}(n_{r-1})$.
	We want to analyze the communication cost and probability of success of this protocol. 
	We start by getting an upper bound on $s_r(n)/p_r$. 
\begin{align*}
	\frac{s_r(n)}{p_r} &= \underbrace{\frac{{{(n)}\uparrow \paren{1+\frac{1}{2^{r}-1}}}}{n_0^2 \cdot (\probconst \cdot 40\raoconst)^r \cdot ((r+1)!)^2 \cdot e^{3r\ln^{5/6}(n)}}}_{=s_r} \cdot \underbrace{\frac{\exp\paren{\eta_p \cdot (r\ln 2+ \ln (b_{r}))^{3/4}}}{b_r \cdot 2^r}}_{=1/p_r} \\
	&= \frac{2^{\frac{2^r}{2^r-1}} \cdot \paren{n_{r-1}}\uparrow\paren{\frac{2^r}{2^{r-1}-1}}}{n_0^2 \cdot (\probconst \cdot 40\raoconst)^r \cdot ((r+1)!)^2 \cdot 2^{r}} \cdot \frac{1}{b_r} \cdot \exp\paren{\eta_p \cdot (r\ln 2+ \ln (b_{r}))^{3/4} -3r \ln^{5/6}(n)} 
	\tag{by the choice of $n_{r-1}$ from \Cref{eq:final-para} and moving around the terms $2^r$ and $b_r$ in the denominator} \\
	&= \frac{2^{\frac{2^r}{2^r-1}} \cdot \paren{n_{r-1}}\uparrow\paren{\frac{2^r}{2^{r-1}-1} - \frac{2^{r-1}}{2^{r-1}-1}}}{n_0^2 \cdot (\probconst \cdot 40\raoconst)^r \cdot ((r+1)!)^2 \cdot 2^{r}} \cdot \exp\paren{\eta_p \cdot (r\ln 2+ \ln (b_{r}))^{3/4} -3r \ln^{5/6}(n)} 
	\tag{by the choice of $b_{r}$ from \Cref{eq:final-para} and writing it in the exponent of $n_{r-1}$} \\
	&= \frac{2^{\frac{2^r}{2^r-1}} \cdot \paren{n_{r-1}}\uparrow\paren{1+\frac{1}{2^{r-1}-1}}}{n_0^2 \cdot (\probconst \cdot 40\raoconst)^r \cdot ((r+1)!)^2 \cdot 2^{r}} \cdot \exp\paren{\eta_p \cdot (r\ln 2+ \ln (b_{r}))^{3/4} -3r \ln^{5/6}(n)} 
	\tag{by canceling the terms in the exponent of $n_{r-1}$} \\
	&\leq \frac{\paren{n_{r-1}}\uparrow\paren{1+\frac{1}{2^{r-1}-1}}}{n_0^2 \cdot (\probconst \cdot 40\raoconst)^r \cdot ((r+1)!)^2 \cdot 2^{r}} \cdot \exp\paren{O(  \ln^{3/4}(n)) -3r \ln^{5/6}(n)} 
	\tag{as $2^{\frac{2^r}{2^r-1}} = O(1)$, $b_r \leq n$ by \Cref{eq:final-para}, $\eta_p = \Theta(1)$ by~\Cref{prop:dup}, and $r \ll \ln{n}$ by~\Cref{eq:n-r-relation}} \\
	&\leq \frac{\paren{n_{r-1}}\uparrow\paren{1+\frac{1}{2^{r-1}-1}}}{n_0^2 \cdot (\probconst \cdot 40\raoconst)^r \cdot ((r+1)!)^2 \cdot 2^{r}} \cdot  \exp\paren{-(3r-1) \ln^{5/6}(n)} 
	 \tag{as $O(\log^{3/4}(n)) < \ln^{5/6}(n_r)$ for large $n$}.
\end{align*}
	By~\Cref{lem:round-elim-final}, the communication cost of $\prot_{r-1}$ is at most,
	\begin{align*}
	\cc{\prot_{r-1}} &\leq \frac{\raoconst}{\eps} \cdot \paren{\frac{s_r(n)}{p_r}  +r^2} \\ 
	&\leq	\frac{\raoconst}{ \eps} \cdot \paren{\frac{2 \cdot \paren{n_{r-1}}\uparrow\paren{1+\frac{1}{2^{r-1}-1}}}{n_0^2 \cdot (\probconst \cdot 40\raoconst)^r \cdot ((r+1)!)^2 \cdot 2^{r}} \cdot  \exp\paren{-(3r-1) \ln^{5/6}(n)}} 
	\tag{by our bound on $s_r(n)/p_r$ from above and since $r^2$ is (much) smaller than this bound by \Cref{eq:n-r-relation} } \\
	&= \raoconst \cdot   \underbrace{\probconst \cdot40 \cdot (r+1)^2}_{=1/\eps} \cdot \paren{ \frac{\paren{n_{r-1}}\uparrow\paren{1+\frac{1}{2^{r-1}-1}}}{n_0^2 \cdot (\probconst \cdot 40\raoconst)^r \cdot ((r+1)!)^2 \cdot 2^{r-1}} \cdot  \exp\paren{-(3r-2) \ln^{5/6}(n)}} 
	\tag{by our choice of $\eps$ and canceling the $2$-term in the nominator via $2^r$ in the denominator} \\
	&= \frac{\paren{n_{r-1}}\uparrow\paren{1+\frac{1}{2^{r-1}-1}}}{n_0^2 \cdot (\probconst \cdot 40\raoconst)^{r-1} \cdot (r!)^2 \cdot 2^{r-1}} \cdot  \exp\paren{-(3r-1) \ln^{5/6}(n)} 
	\tag{by canceling the extra terms via the ones in the denominator} \\
	&\leq \frac{\paren{n_{r-1}}\uparrow\paren{1+\frac{1}{2^{r-1}-1}}}{n_0^2 \cdot (\probconst \cdot 40\raoconst)^{r-1} \cdot (r!)^2 \cdot 2^{r-1} \cdot e^{3 \cdot (r-1) \cdot \ln^{5/6}(n_{r-1})}}
	\tag{by dropping an extra $e^{2\ln^{5/6}(n_{r-1})}$-term and using the fact $n_{r-1} < n$} \\
	&= s_{r-1}(n_{r-1}). 
	\end{align*}
	Thus, the communication cost of $\prot_{r-1}$ is less than the bounds of the induction hypothesis for $r-1$. 
	
	It remains to bound the probability of success of $\prot_{r-1}$. 
	By~\Cref{lem:round-elim-final}, 
	\begin{align*}
		\Pr\paren{\text{$\prot_{r-1}$ solves $\PP_{r-1}$}} &\geq  \delta - \eps - \sqrt{\frac{s_r(n)}{p_r \cdot q_r}} \\
		&\geq \frac{1}{\probconst}\paren{1 + \frac{r}{20(r+1)}} - \frac1{\probconst \cdot40(r+1)^2} - \sqrt{\frac{s_r(n)}{p_r \cdot q_r}} \tag{by the choice of $\eps$ and $\delta$ earlier in the proof}. 
	\end{align*}
	We thus only need to bound the last term to conclude the proof. 
	
	We can upper bound this last term as follows: 
\begin{align*}
\frac{s_r(n)}{p_r \cdot q_r} &\leq \frac{\paren{n_{r-1}}\uparrow\paren{1+\frac{1}{2^{r-1}-1}}}{n_0^2 \cdot (\probconst \cdot 40\raoconst)^r \cdot ((r+1)!)^2 \cdot 2^{r}} \cdot  \exp\paren{-(3r-1) \ln^{5/6}(n)} \cdot 
\underbrace{\frac{\exp\paren{\eta_q \cdot (r\ln2+ \ln b_{r}}^{3/4})}{b_{r} \cdot 2^r}}_{=1/q_r} 
\tag{by our upper bound on $s_r(n)/p_r$ established above and the choice of $q_r$ in~\Cref{eq:final-para}} \\
&= \frac{ \exp\paren{-(3r-1) \ln^{5/6}(n)+\eta_q \cdot (r\ln2+ \ln b_{r})^{3/4}}}{n_0^2 \cdot (\probconst \cdot 40\raoconst)^r \cdot ((r+1)!)^2 \cdot 2^{2r}} 
\tag{by the choice of $b_r = \paren{n_{r-1}}\uparrow\paren{1+\frac{1}{2^{r-1}-1}}$ in~\Cref{eq:final-para} and moving around the terms} \\
&\leq  \frac{ \exp\paren{-(3r-1) \ln^{5/6}(n)+O(\ln^{3/4}{n})}}{n_0^2 \cdot (\probconst \cdot 40\raoconst)^r \cdot ((r+1)!)^2 \cdot 2^{2r}} 
\tag{as $b_r \leq n$ by \Cref{eq:final-para}, $\eta_q = \Theta(1)$ by~\Cref{prop:dup}, and $r \ll \ln{n}$ by~\Cref{eq:n-r-relation}} \\
&\leq  \frac{ \exp(-(3r-2)\ln^{5/6}(n))}{n_0^2 \cdot (\probconst \cdot 40\raoconst)^r \cdot ((r+1)!)^2 \cdot 2^{2r}}  \cdot  \tag{as $ O(\ln^{3/4}(n))<\log^{5/6}(n)$ for large $n$ } \\ 
&\ll \paren{\frac{1}{\probconst \cdot40 \cdot (r+1)^2}}^2. \tag{using a crude upper bound sufficient for the analysis using $\ln^{5/6} \gg r$ by~\Cref{eq:n-r-relation}} 
\end{align*}
Thus, the probability that $\prot_{r-1}$ solves $\PP_{r-1}$ is at least 
\[
	\frac{1}{\probconst}\paren{1 + \frac{r}{20(r+1)}} - \frac1{ \probconst \cdot 40(r+1)^2} -  \frac1{ \probconst \cdot 40(r+1)^2} > \frac{1}{\probconst}\paren{1 + \frac{r-1}{20r}}. 
\]
But, now $\prot_{r-1}$ is a $(r-1)$-round protocol with communication cost $\cc{\prot_{r-1}} < s_{r-1}(n_{r-1})$ and probability of success strictly larger than $1/\probconst \cdot (1+ (r-1)/20r)$ which contradicts the induction hypothesis for $r-1$ 
(recall that by~\Cref{obs:eq-n-r-relation}, $n_{r-1}$ and $(r-1)$ also satisfy~\Cref{eq:n-r-relation} and thus we can indeed apply the induction hypothesis). 
This implies that our choice of the protocol $\prot_r$ 
was contradictory, completing the proof. 
\end{proof}

\subsection{Concluding the Proof of~\Cref{thm:mis-cc}} \label{sec:proof-mis-cc}

\Cref{thm:mis-cc} follows from combining the results in the previous section and the following argument. 

 \Cref{lem:reveal-search-bit} implies that $\MIS{n}{r+1}$ is as hard as solving $\PP_{r}(G,K)$ for $G \sim \GG_r$ and every valid search sequence $K$.
 \Cref{lem:communication-lb} then shows that, for the $r$-round protocol $\prot$ in the theorem statement, there is a lower bound of 
	\[
		\cc{\prot} \geq s_r(n) :=  \frac1{n_0^2 \cdot (\probconst \cdot 40\raoconst)^r \cdot ((r+1)!)^2 \cdot e^{3r\ln^{5/6}(n)}} \cdot  \paren{(n)\uparrow \paren{1+\frac{1}{2^{r}-1}}}
	\]
	for solving $\PP_r$ on an $n$-vertex input graph $G \sim \GG_r(n)$ with probability of success at least 
	\[
	\frac1{\probconst}\paren{1+\frac{r}{20(r+1)}} \leq \frac{21}{20\cdot \probconst} < \frac{21}{20 \cdot 2^{(n_0/2)-1}}
	\]
	 for any $r \geq 1$. We choose $n_0$ to be a large enough constant to handle any constant probability of success strictly larger than zero. 
	Thus, for $\MIS{n}{r+1}$, we obtain a lower bound of 
	\[
		\Omega\Paren{\frac{1}{2^{\Theta(r \ln^{5/6}(n))}} \cdot n^{1+1/(2^{r}-1)}}
	\]
	bits on the communication cost of the problem for any $r$-round protocol as long as $n \geq (n_0)^{2^r-1}$. 

	This concludes the proof of~\Cref{thm:mis-cc}. \qed

\section*{Acknowledgement} 

Sepehr Assadi would like to thank Yu Chen and Sanjeev Khanna for several discussions on proving multi-pass semi-streaming lower bounds for MIS as part of their collaborations in~\cite{AssadiCK19a} (on one-pass lower bounds for MIS) and~\cite{AssadiCK19b} (on poly-pass lower bounds for lexicographically-first MIS). He is also thankful to Gillat Kol and Zhijun Zhang for their collaboration in~\cite{AssadiKZ22} on proving distributed sketching lower bounds for MIS.

\bigskip

\bibliographystyle{halpha-abbrv}
\bibliography{general}

\newcommand{\etalchar}[1]{$^{#1}$}
\begin{thebibliography}{BRWY13b}
\expandafter\ifx\csname url\endcsname\relax
  \def\url#1{\texttt{#1}}\fi
\expandafter\ifx\csname doi\endcsname\relax
  \def\doi#1{\burlalt{doi:#1}{http://dx.doi.org/#1}}\fi
\expandafter\ifx\csname urlprefix\endcsname\relax\def\urlprefix{URL }\fi
\expandafter\ifx\csname href\endcsname\relax
  \def\href#1#2{#2}\fi
\expandafter\ifx\csname burlalt\endcsname\relax
  \def\burlalt#1#2{\href{#2}{#1}}\fi

\bibitem[A22]{Assadi22}
S.~Assadi.
\newblock A two-pass (conditional) lower bound for semi-streaming maximum
  matching.
\newblock In J.~S. Naor and N.~Buchbinder, editors, {\em Proceedings of the
  2022 {ACM-SIAM} Symposium on Discrete Algorithms, {SODA} 2022, Virtual
  Conference / Alexandria, VA, USA, January 9 - 12, 2022}, pages 708--742.
  {SIAM}, 2022.

\bibitem[A23]{Assadi23}
S.~Assadi.
\newblock Recent advances in multi-pass graph streaming lower bounds.
\newblock {\em {SIGACT} News}, 54(3):48--75, 2023.

\bibitem[A24]{Assadi24}
S.~Assadi.
\newblock A simple (1-{$\eps$})-approximation semi-streaming algorithm for
  maximum (weighted) matching.
\newblock {\em CoRR}, abs/2307.02968. To appear in SOSA 2024, 2023.

\bibitem[AB16]{AbboudB16}
A.~Abboud and G.~Bodwin.
\newblock The 4/3 additive spanner exponent is tight.
\newblock In D.~Wichs and Y.~Mansour, editors, {\em Proceedings of the 48th
  Annual {ACM} {SIGACT} Symposium on Theory of Computing, {STOC} 2016,
  Cambridge, MA, USA, June 18-21, 2016}, pages 351--361. {ACM}, 2016.

\bibitem[Abl93]{Ablayev93}
F.~M. Ablayev.
\newblock Lower bounds for one-way probabilistic communication complexity.
\newblock In {\em Automata, Languages and Programming, 20nd International
  Colloquium, ICALP93, Lund, Sweden, July 5-9, 1993, Proceedings}, pages
  241--252, 1993.

\bibitem[ABR23]{AzarmehrB24}
A.~Azarmehr, S.~Behnezhad, and M.~Roghani.
\newblock Fully dynamic matching: (2-{$\sqrt{2}$})-approximation in polylog
  update time.
\newblock {\em CoRR}, abs/2307.08772. To appear in SODA 2024., 2023.

\bibitem[ACG{\etalchar{+}}15]{AhnCGMW15}
K.~J. Ahn, G.~Cormode, S.~Guha, A.~McGregor, and A.~Wirth.
\newblock Correlation clustering in data streams.
\newblock In {\em Proceedings of the 32nd International Conference on Machine
  Learning, {ICML} 2015, Lille, France, 6-11 July 2015}, pages 2237--2246,
  2015.

\bibitem[ACK19a]{AssadiCK19b}
S.~Assadi, Y.~Chen, and S.~Khanna.
\newblock Polynomial pass lower bounds for graph streaming algorithms.
\newblock In {\em Proceedings of the 51st Annual {ACM} {SIGACT} Symposium on
  Theory of Computing, {STOC} 2019, Phoenix, AZ, USA, June 23-26, 2019}, pages
  265--276, 2019.

\bibitem[ACK19b]{AssadiCK19a}
S.~Assadi, Y.~Chen, and S.~Khanna.
\newblock Sublinear algorithms for ({\(\Delta\)} + 1) vertex coloring.
\newblock In {\em Proceedings of the Thirtieth Annual {ACM-SIAM} Symposium on
  Discrete Algorithms, {SODA} 2019, San Diego, California, USA, January 6-9,
  2019}, pages 767--786, 2019.

\bibitem[ACN05]{AilonCN05}
N.~Ailon, M.~Charikar, and A.~Newman.
\newblock Aggregating inconsistent information: ranking and clustering.
\newblock In H.~N. Gabow and R.~Fagin, editors, {\em Proceedings of the 37th
  Annual {ACM} Symposium on Theory of Computing, Baltimore, MD, USA, May 22-24,
  2005}, pages 684--693. {ACM}, 2005.

\bibitem[AD21]{AssadiD21}
S.~Assadi and A.~Dudeja.
\newblock Ruling sets in random order and adversarial streams.
\newblock In S.~Gilbert, editor, {\em 35th International Symposium on
  Distributed Computing, {DISC} 2021, October 4-8, 2021, Freiburg, Germany
  (Virtual Conference)}, volume 209 of {\em LIPIcs}, pages 6:1--6:18. Schloss
  Dagstuhl - Leibniz-Zentrum f{\"{u}}r Informatik, 2021.

\bibitem[AG18]{AhnG18}
K.~J. Ahn and S.~Guha.
\newblock Access to data and number of iterations: Dual primal algorithms for
  maximum matching under resource constraints.
\newblock {\em {ACM} Trans. Parallel Comput.}, 4(4):17:1--17:40, 2018.

\bibitem[AGM12]{AhnGM12}
K.~J. Ahn, S.~Guha, and A.~McGregor.
\newblock Analyzing graph structure via linear measurements.
\newblock In {\em Proceedings of the Twenty-third Annual ACM-SIAM Symposium on
  Discrete Algorithms}, SODA '12, pages 459--467, 2012.

\bibitem[AJJ{\etalchar{+}}22]{AssadiJJST22}
S.~Assadi, A.~Jambulapati, Y.~Jin, A.~Sidford, and K.~Tian.
\newblock Semi-streaming bipartite matching in fewer passes and optimal space.
\newblock In J.~S. Naor and N.~Buchbinder, editors, {\em Proceedings of the
  2022 {ACM-SIAM} Symposium on Discrete Algorithms, {SODA} 2022, Virtual
  Conference / Alexandria, VA, USA, January 9 - 12, 2022}, pages 627--669.
  {SIAM}, 2022.

\bibitem[AKL17]{AssadiKL17}
S.~Assadi, S.~Khanna, and Y.~Li.
\newblock On estimating maximum matching size in graph streams.
\newblock In {\em Proceedings of the Twenty-Eighth Annual {ACM-SIAM} Symposium
  on Discrete Algorithms, {SODA} 2017, Barcelona, Spain, Hotel Porta Fira,
  January 16-19}, pages 1723--1742, 2017.

\bibitem[AKLY16]{AssadiKLY16}
S.~Assadi, S.~Khanna, Y.~Li, and G.~Yaroslavtsev.
\newblock Maximum matchings in dynamic graph streams and the simultaneous
  communication model.
\newblock In {\em Proceedings of the Twenty-Seventh Annual {ACM-SIAM} Symposium
  on Discrete Algorithms, {SODA} 2016, Arlington, VA, USA, January 10-12,
  2016}, pages 1345--1364, 2016.

\bibitem[AKO20]{AssadiKO20}
S.~Assadi, G.~Kol, and R.~Oshman.
\newblock Lower bounds for distributed sketching of maximal matchings and
  maximal independent sets.
\newblock In Y.~Emek and C.~Cachin, editors, {\em {PODC} '20: {ACM} Symposium
  on Principles of Distributed Computing, Virtual Event, Italy, August 3-7,
  2020}, pages 79--88. {ACM}, 2020.

\bibitem[AKZ22]{AssadiKZ22}
S.~Assadi, G.~Kol, and Z.~Zhang.
\newblock Rounds vs communication tradeoffs for maximal independent sets.
\newblock In {\em 63rd {IEEE} Annual Symposium on Foundations of Computer
  Science, {FOCS} 2022, Denver, CO, USA, October 31 - November 3, 2022}, pages
  1193--1204. {IEEE}, 2022.

\bibitem[ALT21]{AssadiLT21}
S.~Assadi, S.~C. Liu, and R.~E. Tarjan.
\newblock An auction algorithm for bipartite matching in streaming and
  massively parallel computation models.
\newblock In H.~V. Le and V.~King, editors, {\em 4th Symposium on Simplicity in
  Algorithms, {SOSA} 2021, Virtual Conference, January 11-12, 2021}, pages
  165--171. {SIAM}, 2021.

\bibitem[AMS96]{AlonMS96}
N.~Alon, Y.~Matias, and M.~Szegedy.
\newblock The space complexity of approximating the frequency moments.
\newblock In {\em STOC}, pages 20--29. ACM, 1996.

\bibitem[AMS12]{AlonMS12}
N.~Alon, A.~Moitra, and B.~Sudakov.
\newblock Nearly complete graphs decomposable into large induced matchings and
  their applications.
\newblock In {\em Proceedings of the 44th Symposium on Theory of Computing
  Conference, {STOC} 2012, New York, NY, USA, May 19 - 22, 2012}, pages
  1079--1090, 2012.

\bibitem[ANRW15]{AlonNRW15}
N.~Alon, N.~Nisan, R.~Raz, and O.~Weinstein.
\newblock Welfare maximization with limited interaction.
\newblock In {\em {IEEE} 56th Annual Symposium on Foundations of Computer
  Science, {FOCS} 2015, Berkeley, CA, USA, 17-20 October, 2015}, pages
  1499--1512, 2015.

\bibitem[AOSS18]{AssadiOSS18}
S.~Assadi, K.~Onak, B.~Schieber, and S.~Solomon.
\newblock Fully dynamic maximal independent set with sublinear update time.
\newblock In I.~Diakonikolas, D.~Kempe, and M.~Henzinger, editors, {\em
  Proceedings of the 50th Annual {ACM} {SIGACT} Symposium on Theory of
  Computing, {STOC} 2018, Los Angeles, CA, USA, June 25-29, 2018}, pages
  815--826. {ACM}, 2018.

\bibitem[AOSS19]{AssadiOSS19}
S.~Assadi, K.~Onak, B.~Schieber, and S.~Solomon.
\newblock Fully dynamic maximal independent set with sublinear in n update
  time.
\newblock In T.~M. Chan, editor, {\em Proceedings of the Thirtieth Annual
  {ACM-SIAM} Symposium on Discrete Algorithms, {SODA} 2019, San Diego,
  California, USA, January 6-9, 2019}, pages 1919--1936. {SIAM}, 2019.

\bibitem[AR20]{AssadiR20}
S.~Assadi and R.~Raz.
\newblock Near-quadratic lower bounds for two-pass graph streaming algorithms.
\newblock In S.~Irani, editor, {\em 61st {IEEE} Annual Symposium on Foundations
  of Computer Science, {FOCS} 2020, Durham, NC, USA, November 16-19, 2020},
  pages 342--353. {IEEE}, 2020.

\bibitem[ARVX12]{AlonRVX12}
N.~Alon, R.~Rubinfeld, S.~Vardi, and N.~Xie.
\newblock Space-efficient local computation algorithms.
\newblock In Y.~Rabani, editor, {\em Proceedings of the Twenty-Third Annual
  {ACM-SIAM} Symposium on Discrete Algorithms, {SODA} 2012, Kyoto, Japan,
  January 17-19, 2012}, pages 1132--1139. {SIAM}, 2012.

\bibitem[AS19]{AssadiS19}
S.~Assadi and S.~Solomon.
\newblock When algorithms for maximal independent set and maximal matching run
  in sublinear time.
\newblock In C.~Baier, I.~Chatzigiannakis, P.~Flocchini, and S.~Leonardi,
  editors, {\em 46th International Colloquium on Automata, Languages, and
  Programming, {ICALP} 2019, July 9-12, 2019, Patras, Greece}, volume 132 of
  {\em LIPIcs}, pages 17:1--17:17. Schloss Dagstuhl - Leibniz-Zentrum f{\"{u}}r
  Informatik, 2019.

\bibitem[AS23]{AssadiS23}
S.~Assadi and J.~Sundaresan.
\newblock Hidden permutations to the rescue: Multi-pass semi-streaming lower
  bounds for approximate matchings.
\newblock In {\em {IEEE} Annual Symposium on Foundations of Computer Science,
  {FOCS} 2023, Santa Cruz, CA, USA}. {IEEE}, 2023.

\bibitem[AW22]{AssadiW22}
S.~Assadi and C.~Wang.
\newblock Sublinear time and space algorithms for correlation clustering via
  sparse-dense decompositions.
\newblock In M.~Braverman, editor, {\em 13th Innovations in Theoretical
  Computer Science Conference, {ITCS} 2022, January 31 - February 3, 2022,
  Berkeley, CA, {USA}}, volume 215 of {\em LIPIcs}, pages 10:1--10:20. Schloss
  Dagstuhl - Leibniz-Zentrum f{\"{u}}r Informatik, 2022.

\bibitem[BBCR10]{BarakBCR10}
B.~Barak, M.~Braverman, X.~Chen, and A.~Rao.
\newblock How to compress interactive communication.
\newblock In L.~J. Schulman, editor, {\em Proceedings of the 42nd {ACM}
  Symposium on Theory of Computing, {STOC} 2010, Cambridge, Massachusetts, USA,
  5-8 June 2010}, pages 67--76. {ACM}, 2010.

\bibitem[BBD{\etalchar{+}}19]{DBehnezhadBDFHKU19}
S.~Behnezhad, S.~Brandt, M.~Derakhshan, M.~Fischer, M.~Hajiaghayi, R.~M. Karp,
  and J.~Uitto.
\newblock Massively parallel computation of matching and {MIS} in sparse
  graphs.
\newblock In P.~Robinson and F.~Ellen, editors, {\em Proceedings of the 2019
  {ACM} Symposium on Principles of Distributed Computing, {PODC} 2019, Toronto,
  ON, Canada, July 29 - August 2, 2019}, pages 481--490. {ACM}, 2019.

\bibitem[BBH{\etalchar{+}}19]{BalliuBHORS19}
A.~Balliu, S.~Brandt, J.~Hirvonen, D.~Olivetti, M.~Rabie, and J.~Suomela.
\newblock Lower bounds for maximal matchings and maximal independent sets.
\newblock In D.~Zuckerman, editor, {\em 60th {IEEE} Annual Symposium on
  Foundations of Computer Science, {FOCS} 2019, Baltimore, Maryland, USA,
  November 9-12, 2019}, pages 481--497. {IEEE} Computer Society, 2019.

\bibitem[BCMT22]{BehnezhadCMT22}
S.~Behnezhad, M.~Charikar, W.~Ma, and L.~Tan.
\newblock Almost 3-approximate correlation clustering in constant rounds.
\newblock In {\em 63rd {IEEE} Annual Symposium on Foundations of Computer
  Science, {FOCS} 2022, Denver, CO, USA, October 31 - November 3, 2022}, pages
  720--731. {IEEE}, 2022.

\bibitem[BCMT23]{BehnezhadCMT23}
S.~Behnezhad, M.~Charikar, W.~Ma, and L.~Tan.
\newblock Single-pass streaming algorithms for correlation clustering.
\newblock In N.~Bansal and V.~Nagarajan, editors, {\em Proceedings of the 2023
  {ACM-SIAM} Symposium on Discrete Algorithms, {SODA} 2023, Florence, Italy,
  January 22-25, 2023}, pages 819--849. {SIAM}, 2023.

\bibitem[BDH{\etalchar{+}}19a]{BehnezhadDHKS19}
S.~Behnezhad, M.~Derakhshan, M.~Hajiaghayi, M.~Knittel, and H.~Saleh.
\newblock Streaming and massively parallel algorithms for edge coloring.
\newblock In M.~A. Bender, O.~Svensson, and G.~Herman, editors, {\em 27th
  Annual European Symposium on Algorithms, {ESA} 2019, September 9-11, 2019,
  Munich/Garching, Germany}, volume 144 of {\em LIPIcs}, pages 15:1--15:14.
  Schloss Dagstuhl - Leibniz-Zentrum f{\"{u}}r Informatik, 2019.

\bibitem[BDH{\etalchar{+}}19b]{BehnezhadDHSS19}
S.~Behnezhad, M.~Derakhshan, M.~Hajiaghayi, C.~Stein, and M.~Sudan.
\newblock Fully dynamic maximal independent set with polylogarithmic update
  time.
\newblock In D.~Zuckerman, editor, {\em 60th {IEEE} Annual Symposium on
  Foundations of Computer Science, {FOCS} 2019, Baltimore, Maryland, USA,
  November 9-12, 2019}, pages 382--405. {IEEE} Computer Society, 2019.

\bibitem[Beh21]{Behnezhad21}
S.~Behnezhad.
\newblock Time-optimal sublinear algorithms for matching and vertex cover.
\newblock In {\em 62nd {IEEE} Annual Symposium on Foundations of Computer
  Science, {FOCS} 2021, Denver, CO, USA, February 7-10, 2022}, pages 873--884.
  {IEEE}, 2021.

\bibitem[BEPS12]{BarenboimEPS12}
L.~Barenboim, M.~Elkin, S.~Pettie, and J.~Schneider.
\newblock The locality of distributed symmetry breaking.
\newblock In {\em 53rd Annual {IEEE} Symposium on Foundations of Computer
  Science, {FOCS} 2012, New Brunswick, NJ, USA, October 20-23, 2012}, pages
  321--330. {IEEE} Computer Society, 2012.

\bibitem[BFS12]{BlellochFS12}
G.~E. Blelloch, J.~T. Fineman, and J.~Shun.
\newblock Greedy sequential maximal independent set and matching are parallel
  on average.
\newblock In G.~E. Blelloch and M.~Herlihy, editors, {\em 24th {ACM} Symposium
  on Parallelism in Algorithms and Architectures, {SPAA} '12, Pittsburgh, PA,
  USA, June 25-27, 2012}, pages 308--317. {ACM}, 2012.

\bibitem[BG14]{BravermanG14}
M.~Braverman and A.~Garg.
\newblock Public vs private coin in bounded-round information.
\newblock In J.~Esparza, P.~Fraigniaud, T.~Husfeldt, and E.~Koutsoupias,
  editors, {\em Automata, Languages, and Programming - 41st International
  Colloquium, {ICALP} 2014, Copenhagen, Denmark, July 8-11, 2014, Proceedings,
  Part {I}}, volume 8572 of {\em Lecture Notes in Computer Science}, pages
  502--513. Springer, 2014.

\bibitem[BH22]{BodwinH22}
G.~Bodwin and G.~Hoppenworth.
\newblock New additive spanner lower bounds by an unlayered obstacle product.
\newblock In {\em 63rd {IEEE} Annual Symposium on Foundations of Computer
  Science, {FOCS} 2022, Denver, CO, USA, October 31 - November 3, 2022}, pages
  778--788. {IEEE}, 2022.

\bibitem[BHP12]{BernsHP12}
A.~Berns, J.~Hegeman, and S.~V. Pemmaraju.
\newblock Super-fast distributed algorithms for metric facility location.
\newblock In A.~Czumaj, K.~Mehlhorn, A.~M. Pitts, and R.~Wattenhofer, editors,
  {\em Automata, Languages, and Programming - 39th International Colloquium,
  {ICALP} 2012, Warwick, UK, July 9-13, 2012, Proceedings, Part {II}}, volume
  7392 of {\em Lecture Notes in Computer Science}, pages 428--439. Springer,
  2012.

\bibitem[BRWY13a]{BravermanRWY13b}
M.~Braverman, A.~Rao, O.~Weinstein, and A.~Yehudayoff.
\newblock Direct product via round-preserving compression.
\newblock In F.~V. Fomin, R.~Freivalds, M.~Z. Kwiatkowska, and D.~Peleg,
  editors, {\em Automata, Languages, and Programming - 40th International
  Colloquium, {ICALP} 2013, Riga, Latvia, July 8-12, 2013, Proceedings, Part
  {I}}, volume 7965 of {\em Lecture Notes in Computer Science}, pages 232--243.
  Springer, 2013.

\bibitem[BRWY13b]{BravermanRWY13a}
M.~Braverman, A.~Rao, O.~Weinstein, and A.~Yehudayoff.
\newblock Direct products in communication complexity.
\newblock In {\em 54th Annual {IEEE} Symposium on Foundations of Computer
  Science, {FOCS} 2013, 26-29 October, 2013}, pages 746--755, 2013.

\bibitem[BS23]{BehnezhadS23}
S.~Behnezhad and M.~Saneian.
\newblock Streaming edge coloring with asymptotically optimal colors.
\newblock {\em CoRR}, abs/2305.01714, 2023.

\bibitem[BW15]{BravermanW15}
M.~Braverman and O.~Weinstein.
\newblock An interactive information odometer and applications.
\newblock In {\em Proceedings of the Forty-Seventh Annual {ACM} on Symposium on
  Theory of Computing, {STOC} 2015, June 14-17, 2015}, pages 341--350, 2015.

\bibitem[CCM08]{ChakrabartiCM08}
A.~Chakrabarti, G.~Cormode, and A.~McGregor.
\newblock Robust lower bounds for communication and stream computation.
\newblock In {\em Proceedings of the 40th Annual {ACM} Symposium on Theory of
  Computing, Victoria, British Columbia, Canada, May 17-20, 2008}, pages
  641--650, 2008.

\bibitem[CDK14]{ChierichettiDK14}
F.~Chierichetti, N.~N. Dalvi, and R.~Kumar.
\newblock Correlation clustering in mapreduce.
\newblock In S.~A. Macskassy, C.~Perlich, J.~Leskovec, W.~Wang, and R.~Ghani,
  editors, {\em The 20th {ACM} {SIGKDD} International Conference on Knowledge
  Discovery and Data Mining, {KDD} '14, New York, NY, {USA} - August 24 - 27,
  2014}, pages 641--650. {ACM}, 2014.

\bibitem[CDK19]{CormodeDK19}
G.~Cormode, J.~Dark, and C.~Konrad.
\newblock Independent sets in vertex-arrival streams.
\newblock In {\em 46th International Colloquium on Automata, Languages, and
  Programming, {ICALP} 2019, July 9-12, 2019, Patras, Greece}, pages
  45:1--45:14, 2019.

\bibitem[CE06]{CoppersmithE06}
D.~Coppersmith and M.~Elkin.
\newblock Sparse sourcewise and pairwise distance preservers.
\newblock {\em {SIAM} J. Discret. Math.}, 20(2):463--501, 2006.

\bibitem[CKL{\etalchar{+}}24]{CambusKLPU24}
M.~Cambus, F.~Kuhn, E.~Lindy, S.~Pai, and J.~Uitto.
\newblock A {$(3 + \eps)$}-approximate correlation clustering algorithm in
  dynamic streams.
\newblock In D.~Woodruff, editor, {\em Proceedings of the {ACM-SIAM} Symposium
  on Discrete Algorithms, {SODA} 2024}. {SIAM}, 2024.

\bibitem[CKP{\etalchar{+}}21a]{ChenKPSSY21}
L.~Chen, G.~Kol, D.~Paramonov, R.~R. Saxena, Z.~Song, and H.~Yu.
\newblock Almost optimal super-constant-pass streaming lower bounds for
  reachability.
\newblock In S.~Khuller and V.~V. Williams, editors, {\em {STOC} '21: 53rd
  Annual {ACM} {SIGACT} Symposium on Theory of Computing, Virtual Event, Italy,
  June 21-25, 2021}, pages 570--583. {ACM}, 2021.

\bibitem[CKP{\etalchar{+}}21b]{ChenKPSSY21b}
L.~Chen, G.~Kol, D.~Paramonov, R.~R. Saxena, Z.~Song, and H.~Yu.
\newblock Near-optimal two-pass streaming algorithm for sampling random walks
  over directed graphs.
\newblock In N.~Bansal, E.~Merelli, and J.~Worrell, editors, {\em 48th
  International Colloquium on Automata, Languages, and Programming, {ICALP}
  2021, July 12-16, 2021, Glasgow, Scotland (Virtual Conference)}, volume 198
  of {\em LIPIcs}, pages 52:1--52:19. Schloss Dagstuhl - Leibniz-Zentrum
  f{\"{u}}r Informatik, 2021.

\bibitem[CKPU23]{CambusKPU23}
M.~Cambus, F.~Kuhn, S.~Pai, and J.~Uitto.
\newblock Time and space optimal massively parallel algorithm for the 2-ruling
  set problem.
\newblock {\em CoRR}, abs/2306.00432, 2023.

\bibitem[CLM{\etalchar{+}}21]{Cohen-AddadLMNP21}
V.~Cohen{-}Addad, S.~Lattanzi, S.~Mitrovic, A.~Norouzi{-}Fard, N.~Parotsidis,
  and J.~Tarnawski.
\newblock Correlation clustering in constant many parallel rounds.
\newblock In M.~Meila and T.~Zhang, editors, {\em Proceedings of the 38th
  International Conference on Machine Learning, {ICML} 2021, 18-24 July 2021,
  Virtual Event}, volume 139 of {\em Proceedings of Machine Learning Research},
  pages 2069--2078. {PMLR}, 2021.

\bibitem[CMZ23]{ChechikMZ23}
S.~Chechik, D.~Mukhtar, and T.~Zhang.
\newblock Streaming edge coloring with subquadratic palette size.
\newblock {\em CoRR}, abs/2305.07090, 2023.

\bibitem[CSWY01]{ChakrabartiSWY01}
A.~Chakrabarti, Y.~Shi, A.~Wirth, and A.~C. Yao.
\newblock Informational complexity and the direct sum problem for simultaneous
  message complexity.
\newblock In {\em 42nd Annual Symposium on Foundations of Computer Science,
  {FOCS} 2001, 14-17 October 2001, Las Vegas, Nevada, {USA}}, pages 270--278.
  {IEEE} Computer Society, 2001.

\bibitem[CT06]{CoverT06}
T.~M. Cover and J.~A. Thomas.
\newblock {\em Elements of information theory {(2.} ed.)}.
\newblock Wiley, 2006.

\bibitem[CZ19]{ChechikZ19}
S.~Chechik and T.~Zhang.
\newblock Fully dynamic maximal independent set in expected poly-log update
  time.
\newblock In D.~Zuckerman, editor, {\em 60th {IEEE} Annual Symposium on
  Foundations of Computer Science, {FOCS} 2019, Baltimore, Maryland, USA,
  November 9-12, 2019}, pages 370--381. {IEEE} Computer Society, 2019.

\bibitem[Dar20]{Dark20}
J.~Dark.
\newblock {\em Finding structure in data streams: correlations, independent
  sets, and matchings}.
\newblock PhD thesis, University of Warwick, 2020.

\bibitem[DKO14]{DruckerKO14}
A.~Drucker, F.~Kuhn, and R.~Oshman.
\newblock On the power of the congested clique model.
\newblock In M.~M. Halld{\'{o}}rsson and S.~Dolev, editors, {\em {ACM}
  Symposium on Principles of Distributed Computing, {PODC} '14, Paris, France,
  July 15-18, 2014}, pages 367--376. {ACM}, 2014.

\bibitem[FGH{\etalchar{+}}24]{FlinGHKN24}
M.~Flin, M.~Ghaffari, M.~M. Halld{\'{o}}rsson, F.~Kuhn, and A.~Nolin.
\newblock A distributed palette sparsification theorem.
\newblock In D.~Woodruff, editor, {\em Proceedings of the {ACM-SIAM} Symposium
  on Discrete Algorithms, {SODA} 2024}. {SIAM}, 2024.

\bibitem[FKM{\etalchar{+}}05]{FeigenbaumKMSZ05}
J.~Feigenbaum, S.~Kannan, A.~McGregor, S.~Suri, and J.~Zhang.
\newblock On graph problems in a semi-streaming model.
\newblock {\em Theor. Comput. Sci.}, 348(2-3):207--216, 2005.

\bibitem[FN18]{FischerN18}
M.~Fischer and A.~Noever.
\newblock Tight analysis of parallel randomized greedy {MIS}.
\newblock In A.~Czumaj, editor, {\em Proceedings of the Twenty-Ninth Annual
  {ACM-SIAM} Symposium on Discrete Algorithms, {SODA} 2018, New Orleans, LA,
  USA, January 7-10, 2018}, pages 2152--2160. {SIAM}, 2018.

\bibitem[GG23]{GhaffariG23}
M.~Ghaffari and C.~Grunau.
\newblock Faster deterministic distributed {MIS} and approximate matching.
\newblock In B.~Saha and R.~A. Servedio, editors, {\em Proceedings of the 55th
  Annual {ACM} Symposium on Theory of Computing, {STOC} 2023, Orlando, FL, USA,
  June 20-23, 2023}, pages 1777--1790. {ACM}, 2023.

\bibitem[GGK{\etalchar{+}}18]{GhaffariGKMR18}
M.~Ghaffari, T.~Gouleakis, C.~Konrad, S.~Mitrovic, and R.~Rubinfeld.
\newblock Improved massively parallel computation algorithms for mis, matching,
  and vertex cover.
\newblock In {\em Proceedings of the 2018 {ACM} Symposium on Principles of
  Distributed Computing, {PODC} 2018, July 23-27, 2018}, pages 129--138, 2018.

\bibitem[Gha16]{Ghaffari16}
M.~Ghaffari.
\newblock An improved distributed algorithm for maximal independent set.
\newblock In R.~Krauthgamer, editor, {\em Proceedings of the Twenty-Seventh
  Annual {ACM-SIAM} Symposium on Discrete Algorithms, {SODA} 2016, Arlington,
  VA, USA, January 10-12, 2016}, pages 270--277. {SIAM}, 2016.

\bibitem[Gha22]{Ghaffari22}
M.~Ghaffari.
\newblock Local computation of maximal independent set.
\newblock In {\em 63rd {IEEE} Annual Symposium on Foundations of Computer
  Science, {FOCS} 2022, Denver, CO, USA, October 31 - November 3, 2022}, pages
  438--449. {IEEE}, 2022.

\bibitem[GKK12]{GoelKK12}
A.~Goel, M.~Kapralov, and S.~Khanna.
\newblock On the communication and streaming complexity of maximum bipartite
  matching.
\newblock In Y.~Rabani, editor, {\em Proceedings of the Twenty-Third Annual
  {ACM-SIAM} Symposium on Discrete Algorithms, {SODA} 2012, Kyoto, Japan,
  January 17-19, 2012}, pages 468--485. {SIAM}, 2012.

\bibitem[GM08]{GuhaM08}
S.~Guha and A.~McGregor.
\newblock Tight lower bounds for multi-pass stream computation via pass
  elimination.
\newblock In L.~Aceto, I.~Damg{\aa}rd, L.~A. Goldberg, M.~M. Halld{\'{o}}rsson,
  A.~Ing{\'{o}}lfsd{\'{o}}ttir, and I.~Walukiewicz, editors, {\em Automata,
  Languages and Programming, 35th International Colloquium, {ICALP} 2008,
  Reykjavik, Iceland, July 7-11, 2008, Proceedings, Part {I:} Tack {A:}
  Algorithms, Automata, Complexity, and Games}, volume 5125 of {\em Lecture
  Notes in Computer Science}, pages 760--772. Springer, 2008.

\bibitem[GO16]{GuruswamiO16}
V.~Guruswami and K.~Onak.
\newblock Superlinear lower bounds for multipass graph processing.
\newblock {\em Algorithmica}, 76(3):654--683, 2016.

\bibitem[GS23]{GhoshS23}
P.~Ghosh and M.~Stoeckl.
\newblock Low-memory algorithms for online and w-streaming edge coloring.
\newblock {\em CoRR}, abs/2304.12285, 2023.

\bibitem[GU19]{GhaffariU19}
M.~Ghaffari and J.~Uitto.
\newblock Sparsifying distributed algorithms with ramifications in massively
  parallel computation and centralized local computation.
\newblock In T.~M. Chan, editor, {\em Proceedings of the Thirtieth Annual
  {ACM-SIAM} Symposium on Discrete Algorithms, {SODA} 2019, San Diego,
  California, USA, January 6-9, 2019}, pages 1636--1653. {SIAM}, 2019.

\bibitem[Hes03]{Hesse03}
W.~Hesse.
\newblock Directed graphs requiring large numbers of shortcuts.
\newblock In {\em Proceedings of the Fourteenth Annual {ACM-SIAM} Symposium on
  Discrete Algorithms, January 12-14, 2003, Baltimore, Maryland, {USA}}, pages
  665--669. {ACM/SIAM}, 2003.

\bibitem[HJMR07]{HarshaJMR07}
P.~Harsha, R.~Jain, D.~A. McAllester, and J.~Radhakrishnan.
\newblock The communication complexity of correlation.
\newblock In {\em 22nd Annual {IEEE} Conference on Computational Complexity
  {(CCC} 2007), 13-16 June 2007, San Diego, California, {USA}}, pages 10--23.
  {IEEE} Computer Society, 2007.

\bibitem[HP21]{HuangP21}
S.~Huang and S.~Pettie.
\newblock Lower bounds on sparse spanners, emulators, and diameter-reducing
  shortcuts.
\newblock {\em {SIAM} J. Discret. Math.}, 35(3):2129--2144, 2021.

\bibitem[HZ23]{HaqiZ23}
A.~Haqi and H.~Zarrabi{-}Zadeh.
\newblock Almost optimal massively parallel algorithms for k-center clustering
  and diversity maximization.
\newblock In K.~Agrawal and J.~Shun, editors, {\em Proceedings of the 35th
  {ACM} Symposium on Parallelism in Algorithms and Architectures, {SPAA} 2023,
  Orlando, FL, USA, June 17-19, 2023}, pages 239--247. {ACM}, 2023.

\bibitem[JRS09]{JainRS09}
R.~Jain, J.~Radhakrishnan, and P.~Sen.
\newblock A property of quantum relative entropy with an application to privacy
  in quantum communication.
\newblock {\em J. {ACM}}, 56(6):33:1--33:32, 2009.

\bibitem[Kap13]{Kapralov13}
M.~Kapralov.
\newblock Better bounds for matchings in the streaming model.
\newblock In {\em Proceedings of the Twenty-Fourth Annual {ACM-SIAM} Symposium
  on Discrete Algorithms, {SODA} 2013, New Orleans, Louisiana, USA, January
  6-8, 2013}, pages 1679--1697, 2013.

\bibitem[Kap21]{Kapralov21}
M.~Kapralov.
\newblock Space lower bounds for approximating maximum matching in the edge
  arrival model.
\newblock In D.~Marx, editor, {\em Proceedings of the 2021 {ACM-SIAM} Symposium
  on Discrete Algorithms, {SODA} 2021, Virtual Conference, January 10 - 13,
  2021}, pages 1874--1893. {SIAM}, 2021.

\bibitem[KMVV13]{KumarMVV13}
R.~Kumar, B.~Moseley, S.~Vassilvitskii, and A.~Vattani.
\newblock Fast greedy algorithms in mapreduce and streaming.
\newblock In {\em 25th {ACM} Symposium on Parallelism in Algorithms and
  Architectures, {SPAA} '13, Montreal, QC, Canada - July 23 - 25, 2013}, pages
  1--10, 2013.

\bibitem[KMW16]{KuhnMW16}
F.~Kuhn, T.~Moscibroda, and R.~Wattenhofer.
\newblock Local computation: Lower and upper bounds.
\newblock {\em J. {ACM}}, 63(2):17:1--17:44, 2016.

\bibitem[KN97]{KushilevitzN97}
E.~Kushilevitz and N.~Nisan.
\newblock {\em Communication complexity}.
\newblock Cambridge University Press, 1997.

\bibitem[KN21]{KonradN21}
C.~Konrad and K.~K. Naidu.
\newblock On two-pass streaming algorithms for maximum bipartite matching.
\newblock In M.~Wootters and L.~Sanit{\`{a}}, editors, {\em Approximation,
  Randomization, and Combinatorial Optimization. Algorithms and Techniques,
  {APPROX/RANDOM} 2021, August 16-18, 2021, University of Washington, Seattle,
  Washington, {USA} (Virtual Conference)}, volume 207 of {\em LIPIcs}, pages
  19:1--19:18. Schloss Dagstuhl - Leibniz-Zentrum f{\"{u}}r Informatik, 2021.

\bibitem[KN24]{KonradK24}
C.~Konrad and K.~K. Naidu.
\newblock An unconditional lower bound for two-pass streaming algorithms for
  maximum matching approximation.
\newblock In {\em Proceedings of the 2024 {ACM-SIAM} Symposium on Discrete
  Algorithms, {SODA} 2024}. {SIAM}, 2024.

\bibitem[KNR95]{KremerNR95}
I.~Kremer, N.~Nisan, and D.~Ron.
\newblock On randomized one-round communication complexity.
\newblock In {\em Proceedings of the Twenty-Seventh Annual {ACM} Symposium on
  Theory of Computing, 29 May-1 June 1995, Las Vegas, Nevada, {USA}}, pages
  596--605, 1995.

\bibitem[Kon18a]{Konrad18}
C.~Konrad.
\newblock {MIS} in the congested clique model in o(log log {\(\Delta\)})
  rounds.
\newblock {\em CoRR}, abs/1802.07647, 2018.

\bibitem[Kon18b]{Konrad18b}
C.~Konrad.
\newblock A simple augmentation method for matchings with applications to
  streaming algorithms.
\newblock In I.~Potapov, P.~G. Spirakis, and J.~Worrell, editors, {\em 43rd
  International Symposium on Mathematical Foundations of Computer Science,
  {MFCS} 2018, August 27-31, 2018, Liverpool, {UK}}, volume 117 of {\em
  LIPIcs}, pages 74:1--74:16. Schloss Dagstuhl - Leibniz-Zentrum f{\"{u}}r
  Informatik, 2018.

\bibitem[KPRR19]{KonradPRR19}
C.~Konrad, S.~V. Pemmaraju, T.~Riaz, and P.~Robinson.
\newblock The complexity of symmetry breaking in massive graphs.
\newblock In J.~Suomela, editor, {\em 33rd International Symposium on
  Distributed Computing, {DISC} 2019, October 14-18, 2019, Budapest, Hungary},
  volume 146 of {\em LIPIcs}, pages 26:1--26:18. Schloss Dagstuhl -
  Leibniz-Zentrum f{\"{u}}r Informatik, 2019.

\bibitem[Lin87]{Linial87}
N.~Linial.
\newblock Distributive graph algorithms-global solutions from local data.
\newblock In {\em 28th Annual Symposium on Foundations of Computer Science, Los
  Angeles, California, USA, 27-29 October 1987}, pages 331--335. {IEEE}
  Computer Society, 1987.

\bibitem[LMSV11]{LattanziMSV11}
S.~Lattanzi, B.~Moseley, S.~Suri, and S.~Vassilvitskii.
\newblock Filtering: a method for solving graph problems in mapreduce.
\newblock In {\em {SPAA} 2011: Proceedings of the 23rd Annual {ACM} Symposium
  on Parallelism in Algorithms and Architectures, San Jose, CA, USA, June 4-6,
  2011 (Co-located with {FCRC} 2011)}, pages 85--94, 2011.

\bibitem[Lub85]{Luby85}
M.~Luby.
\newblock A simple parallel algorithm for the maximal independent set problem.
\newblock In {\em Proceedings of the 17th Annual {ACM} Symposium on Theory of
  Computing, May 6-8, 1985, Providence, Rhode Island, {USA}}, pages 1--10,
  1985.

\bibitem[LWWX22]{LuWWX22}
K.~Lu, V.~V. Williams, N.~Wein, and Z.~Xu.
\newblock Better lower bounds for shortcut sets and additive spanners via an
  improved alternation product.
\newblock In J.~S. Naor and N.~Buchbinder, editors, {\em Proceedings of the
  2022 {ACM-SIAM} Symposium on Discrete Algorithms, {SODA} 2022, Virtual
  Conference / Alexandria, VA, USA, January 9 - 12, 2022}, pages 3311--3331.
  {SIAM}, 2022.

\bibitem[MNSW95]{MiltersenNSW95}
P.~B. Miltersen, N.~Nisan, S.~Safra, and A.~Wigderson.
\newblock On data structures and asymmetric communication complexity.
\newblock In F.~T. Leighton and A.~Borodin, editors, {\em Proceedings of the
  Twenty-Seventh Annual {ACM} Symposium on Theory of Computing, 29 May-1 June
  1995, Las Vegas, Nevada, {USA}}, pages 103--111. {ACM}, 1995.

\bibitem[NO08]{NguyenO08}
H.~N. Nguyen and K.~Onak.
\newblock Constant-time approximation algorithms via local improvements.
\newblock In {\em 49th Annual {IEEE} Symposium on Foundations of Computer
  Science, {FOCS} 2008, October 25-28, 2008, Philadelphia, PA, {USA}}, pages
  327--336. {IEEE} Computer Society, 2008.

\bibitem[OSSW18]{OnakSSW18}
K.~Onak, B.~Schieber, S.~Solomon, and N.~Wein.
\newblock Fully dynamic {MIS} in uniformly sparse graphs.
\newblock In I.~Chatzigiannakis, C.~Kaklamanis, D.~Marx, and D.~Sannella,
  editors, {\em 45th International Colloquium on Automata, Languages, and
  Programming, {ICALP} 2018, July 9-13, 2018, Prague, Czech Republic}, volume
  107 of {\em LIPIcs}, pages 92:1--92:14. Schloss Dagstuhl - Leibniz-Zentrum
  f{\"{u}}r Informatik, 2018.

\bibitem[RG20]{DRozhonG20}
V.~Rozhon and M.~Ghaffari.
\newblock Polylogarithmic-time deterministic network decomposition and
  distributed derandomization.
\newblock In K.~Makarychev, Y.~Makarychev, M.~Tulsiani, G.~Kamath, and
  J.~Chuzhoy, editors, {\em Proccedings of the 52nd Annual {ACM} {SIGACT}
  Symposium on Theory of Computing, {STOC} 2020, Chicago, IL, USA, June 22-26,
  2020}, pages 350--363. {ACM}, 2020.

\bibitem[RS78]{RuzsaS78}
I.~Z. Ruzsa and E.~Szemer{\'e}di.
\newblock Triple systems with no six points carrying three triangles.
\newblock {\em Combinatorics (Keszthely, 1976), Coll. Math. Soc. J. Bolyai},
  18:939--945, 1978.

\bibitem[RTVX11]{RubinfeldTVX11}
R.~Rubinfeld, G.~Tamir, S.~Vardi, and N.~Xie.
\newblock Fast local computation algorithms.
\newblock In B.~Chazelle, editor, {\em Innovations in Computer Science - {ICS}
  2011, Tsinghua University, Beijing, China, January 7-9, 2011. Proceedings},
  pages 223--238. Tsinghua University Press, 2011.

\bibitem[RVW16]{RoughgardenVW16}
T.~Roughgarden, S.~Vassilvitskii, and J.~R. Wang.
\newblock Shuffles and circuits: (on lower bounds for modern parallel
  computation).
\newblock In C.~Scheideler and S.~Gilbert, editors, {\em Proceedings of the
  28th {ACM} Symposium on Parallelism in Algorithms and Architectures, {SPAA}
  2016, Asilomar State Beach/Pacific Grove, CA, USA, July 11-13, 2016}, pages
  1--12. {ACM}, 2016.

\bibitem[RY20]{RaoY20}
A.~Rao and A.~Yehudayoff.
\newblock {\em Communication Complexity: and Applications}.
\newblock Cambridge University Press, 2020.
\newblock \doi{10.1017/9781108671644}.

\bibitem[Suo20]{Suomela20}
J.~Suomela.
\newblock Using round elimination to understand locality.
\newblock {\em {SIGACT} News: Distributed Computing Column 79}, 51(3):62, 2020.

\bibitem[Wei15]{Weinstein15}
O.~Weinstein.
\newblock Information complexity and the quest for interactive compression.
\newblock {\em {SIGACT} News}, 46(2):41--64, 2015.

\bibitem[WXX23]{WilliamsXX23}
V.~V. Williams, Y.~Xu, and Z.~Xu.
\newblock Simpler and higher lower bounds for shortcut sets.
\newblock {\em CoRR}, abs/2310.12051, 2023.

\bibitem[YYI09]{YoshidaYI09}
Y.~Yoshida, M.~Yamamoto, and H.~Ito.
\newblock An improved constant-time approximation algorithm for maximum
  matchings.
\newblock In M.~Mitzenmacher, editor, {\em Proceedings of the 41st Annual {ACM}
  Symposium on Theory of Computing, {STOC} 2009, Bethesda, MD, USA, May 31 -
  June 2, 2009}, pages 225--234. {ACM}, 2009.

\end{thebibliography}

\clearpage
\appendix

\part*{Appendix}
\section{Background on Information Theory}\label{app:info}

We now briefly introduce some definitions and facts from information theory that are needed in this thesis. We refer the interested reader to the text by Cover and Thomas~\cite{CoverT06} for an excellent introduction to this field, 
and the proofs of the statements used in this Appendix. 

For a random variable $\rA$, we use $\supp{\rA}$ to denote the support of $\rA$ and $\distribution{\rA}$ to denote its distribution. 
When it is clear from the context, we may abuse the notation and use $\rA$ directly instead of $\distribution{\rA}$, for example, write 
$A \sim \rA$ to mean $A \sim \distribution{\rA}$, i.e., $A$ is sampled from the distribution of random variable $\rA$. 

\begin{itemize}
\item We denote the \emph{Shannon Entropy} of a random variable $\rA$ by
$\en{\rA}$, which is defined as: 
\begin{align}
	\en{\rA} := \sum_{A \in \supp{\rA}} \Pr\paren{\rA = A} \cdot \log{\paren{1/\Pr\paren{\rA = A}}} \label{eq:entropy}
\end{align} 

\item The \emph{conditional entropy} of $\rA$ conditioned on $\rB$ is denoted by $\en{\rA \mid \rB}$ and defined as:
\begin{align}
\en{\rA \mid \rB} := \Ex_{B \sim \rB} \bracket{\en{\rA \mid \rB = B}}, \label{eq:cond-entropy}
\end{align}
where 
$\en{\rA \mid \rB = B}$ is defined in a standard way by using the distribution of $\rA$ conditioned on the event $\rB = B$ in Eq~(\ref{eq:entropy}).

\item The \emph{mutual information} of two random variables $\rA$ and $\rB$ is denoted by
$\mi{\rA}{\rB}$ and is defined:
\begin{align}
\mi{\rA}{\rB} := \en{A} - \en{A \mid  B} = \en{B} - \en{B \mid  A}. \label{eq:mi}
\end{align}

\item The \emph{conditional mutual information} is defined as $\mi{\rA}{\rB \mid \rC}:= \en{\rA \mid \rC} - \en{\rA \mid \rB,\rC}$.
\end{itemize}

\subsection{Useful Properties of Entropy and Mutual Information}\label{sec:prop-en-mi}

We shall use the following basic properties of entropy and mutual information throughout. 
Proofs of these properties mostly follow from convexity of the entropy function
and Jensen's inequality and can be found in~\cite[Chapter~2]{CoverT06}. 

\begin{fact}\label{fact:it-facts}
  Let $\rA$, $\rB$, $\rC$, and $\rD$ be four (possibly correlated) random variables.
   \begin{enumerate}
  \item \label{part:uniform} $0 \leq \en{\rA} \leq \log{\card{\supp{\rA}}}$. The right equality holds
    iff $\distribution{\rA}$ is uniform.
  \item \label{part:info-zero} $\mi{\rA}{\rB \mid \rC} \geq 0$. The equality holds iff $\rA$ and
    $\rB$ are \emph{independent} conditioned on $\rC$.
  \item \label{part:cond-reduce} \emph{Conditioning on a random variable reduces entropy}:
    $\en{\rA \mid \rB,\rC} \leq \en{\rA \mid  \rB}$.  The equality holds iff $\rA \perp \rC \mid \rB$.
  \item \label{part:chain-rule} \emph{Chain rule for mutual information}: $\mi{\rA,\rB}{\rC \mid \rD} = \mi{\rA}{\rC \mid \rD} + \mi{\rB}{\rC \mid  \rA,\rD}$.
  \item \label{part:data-processing} \emph{Data processing inequality}: for a function $f(\rA)$ of $\rA$, $\mi{f(\rA)}{\rB \mid \rC} \leq \mi{\rA}{\rB \mid \rC}$. 
   \end{enumerate}
\end{fact}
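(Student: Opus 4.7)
The plan is to verify each of the five items by reducing them to the convexity/concavity of $\log$ and to a couple of basic identities for Shannon entropy, as is standard. Throughout I would keep the running identity $\en{\rA,\rB} = \en{\rA} + \en{\rB \mid \rA}$ (chain rule for entropy) which follows directly by expanding the joint distribution and regrouping the $\log$ terms in the definition at Eq.~(\ref{eq:entropy}); together with the identity $\mi{\rA}{\rB \mid \rC} = \en{\rA \mid \rC} - \en{\rA \mid \rB, \rC}$, which is the very definition used in the paper, these two facts will do most of the work.

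For part (\ref{part:uniform}): the lower bound is immediate because each summand $\Pr(\rA = A) \cdot \log(1/\Pr(\rA = A))$ is nonnegative. The upper bound follows from Jensen's inequality applied to the concave function $\log$: writing $\en{\rA} = \Ex_{A \sim \rA}[\log(1/\Pr(\rA = A))] \leq \log \Ex_{A \sim \rA}[1/\Pr(\rA = A)] = \log \card{\supp{\rA}}$, with equality iff $1/\Pr(\rA = A)$ is constant on $\supp{\rA}$, i.e.\ $\rA$ is uniform. For part (\ref{part:info-zero}): the definition of mutual information can be rewritten as the conditional KL-divergence $\Ex_{C \sim \rC}\,\mathbb{D}(\distribution{\rA, \rB \mid \rC = C} \,\|\, \distribution{\rA \mid \rC = C} \times \distribution{\rB \mid \rC = C})$, and Jensen applied to $-\log$ gives $\mathbb{D}(\mu \,\|\, \nu) \geq 0$ with equality iff $\mu = \nu$; this yields exactly the independence characterization.

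Part (\ref{part:cond-reduce}) is then immediate from parts (\ref{part:info-zero}) and the definition: $\en{\rA \mid \rB} - \en{\rA \mid \rB, \rC} = \mi{\rA}{\rC \mid \rB} \geq 0$, with equality iff $\rA \perp \rC \mid \rB$. For part (\ref{part:chain-rule}) I would just expand both sides using the definition $\mi{\rX}{\rY \mid \rZ} = \en{\rX \mid \rZ} - \en{\rX \mid \rY, \rZ}$ and apply the entropy chain rule to $\en{\rA, \rB \mid \cdot}$: both sides collapse to $\en{\rA, \rB \mid \rD} - \en{\rA, \rB \mid \rC, \rD}$, which is a short symbolic calculation.

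Finally, part (\ref{part:data-processing}) would be deduced from the chain rule: since $f(\rA)$ is a deterministic function of $\rA$, the pair $(f(\rA), \rA)$ has the same distribution as $\rA$ (up to a deterministic relabeling), so $\mi{\rA}{\rB \mid \rC} = \mi{f(\rA), \rA}{\rB \mid \rC} = \mi{f(\rA)}{\rB \mid \rC} + \mi{\rA}{\rB \mid f(\rA), \rC} \geq \mi{f(\rA)}{\rB \mid \rC}$, using part (\ref{part:info-zero}) on the second term. The main (minor) obstacle is really only bookkeeping: making sure that every identity is used in its conditional form, since every statement in the fact carries an extra conditioning variable, and that the ``equality iff'' clauses are stated precisely; beyond that, everything is a one-line invocation of Jensen or of the entropy chain rule, which is why the authors are content to cite~\cite{CoverT06}.
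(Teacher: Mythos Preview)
Your proposal is correct and matches the paper's treatment: the paper does not actually prove this fact but merely states that the properties ``mostly follow from convexity of the entropy function and Jensen's inequality'' and cites~\cite[Chapter~2]{CoverT06}, which is exactly the toolkit you invoke. Your item-by-item derivations (Jensen for parts (\ref{part:uniform}) and (\ref{part:info-zero}), then deducing (\ref{part:cond-reduce}), (\ref{part:chain-rule}), and (\ref{part:data-processing}) from the definitions and the entropy chain rule) are the standard ones and fill in precisely what the paper leaves to the reference.
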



\noindent
We also use the following two standard propositions on effect of conditioning on mutual information.

\begin{proposition}\label{prop:info-increase}
  For random variables $\rA, \rB, \rC, \rD$, if $\rA \perp \rD \mid \rC$, then, 
  \[\mi{\rA}{\rB \mid \rC} \leq \mi{\rA}{\rB \mid  \rC,  \rD}.\]
\end{proposition}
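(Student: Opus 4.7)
The plan is to prove this by a short application of the chain rule for mutual information together with the non-negativity of mutual information, leveraging the hypothesis that $\rA \perp \rD \mid \rC$ to kill one of the terms that arises.

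First, I would expand $\mi{\rA}{\rB, \rD \mid \rC}$ in two different ways using the chain rule for mutual information (\itfacts{chain-rule}). On the one hand,
\[
	\mi{\rA}{\rB, \rD \mid \rC} = \mi{\rA}{\rD \mid \rC} + \mi{\rA}{\rB \mid \rC, \rD},
\]
and on the other hand,
\[
	\mi{\rA}{\rB, \rD \mid \rC} = \mi{\rA}{\rB \mid \rC} + \mi{\rA}{\rD \mid \rC, \rB}.
\]
Equating these two expressions yields
\[
	\mi{\rA}{\rD \mid \rC} + \mi{\rA}{\rB \mid \rC, \rD} = \mi{\rA}{\rB \mid \rC} + \mi{\rA}{\rD \mid \rC, \rB}.
\]

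Next, I would use the hypothesis $\rA \perp \rD \mid \rC$, which by \itfacts{info-zero} is equivalent to $\mi{\rA}{\rD \mid \rC} = 0$. Substituting this into the equation above and rearranging gives
\[
	\mi{\rA}{\rB \mid \rC, \rD} - \mi{\rA}{\rB \mid \rC} = \mi{\rA}{\rD \mid \rC, \rB} \geq 0,
\]
where the inequality is again by the non-negativity of (conditional) mutual information (\itfacts{info-zero}). This immediately gives the desired inequality $\mi{\rA}{\rB \mid \rC} \leq \mi{\rA}{\rB \mid \rC, \rD}$.

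There is no real obstacle here: the proof is a two-line manipulation that really just says ``conditioning on an independent variable $\rD$ cannot decrease the information $\rB$ carries about $\rA$, because any change in the conditional distribution of $\rA$ after learning $\rD$ must be mediated through $\rB$.'' The only thing to be careful about is which of the two chain rule expansions to use; splitting $(\rB, \rD)$ rather than $(\rA, \text{something})$ is what makes the hypothesis $\rA \perp \rD \mid \rC$ directly applicable.
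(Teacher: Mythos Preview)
Your proof is correct. The paper's argument is slightly different in presentation: it works directly at the entropy level rather than the mutual-information level. Specifically, the paper writes $\mi{\rA}{\rB \mid \rC} = \en{\rA \mid \rC} - \en{\rA \mid \rC,\rB}$, uses the hypothesis $\rA \perp \rD \mid \rC$ together with \itfacts{cond-reduce} to replace $\en{\rA \mid \rC}$ by $\en{\rA \mid \rC,\rD}$, and then uses ``conditioning reduces entropy'' to bound $\en{\rA \mid \rC,\rB} \geq \en{\rA \mid \rC,\rB,\rD}$, which reassembles into $\mi{\rA}{\rB \mid \rC,\rD}$. Your approach instead stays entirely at the mutual-information level, expanding $\mi{\rA}{\rB,\rD \mid \rC}$ two ways via the chain rule and using \itfacts{info-zero} both to kill $\mi{\rA}{\rD \mid \rC}$ and to bound the leftover $\mi{\rA}{\rD \mid \rC,\rB} \geq 0$. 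The two arguments are essentially dual to one another and of the same length; yours has the minor advantage of never unpacking mutual information into entropies, while the paper's makes the role of ``conditioning reduces entropy'' more explicit.
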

 \begin{proof}
  Since $\rA$ and $\rD$ are independent conditioned on $\rC$, by
  \itfacts{cond-reduce}, $\HH(\rA \mid  \rC) = \HH(\rA \mid \rC, \rD)$ and $\HH(\rA \mid  \rC, \rB) \ge \HH(\rA \mid  \rC, \rB, \rD)$.  We have,
	 \begin{align*}
	  \mi{\rA}{\rB \mid  \rC} &= \HH(\rA \mid \rC) - \HH(\rA \mid \rC, \rB) = \HH(\rA \mid  \rC, \rD) - \HH(\rA \mid \rC, \rB) \\
	  &\leq \HH(\rA \mid \rC, \rD) - \HH(\rA \mid \rC, \rB, \rD) = \mi{\rA}{\rB \mid \rC, \rD}. \qed
	\end{align*}
	
\end{proof}

\begin{proposition}\label{prop:info-decrease}
  For random variables $\rA, \rB, \rC,\rD$, if $ \rA \perp \rD \mid \rB,\rC$, then, 
  \[\mi{\rA}{\rB \mid \rC} \geq \mi{\rA}{\rB \mid \rC, \rD}.\]
\end{proposition}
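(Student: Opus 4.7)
The plan is to prove this by a direct entropy decomposition, mirroring the structure of the proof of \Cref{prop:info-increase} (which went in the opposite direction). First, I would expand both sides of the desired inequality using the definition of conditional mutual information:
\[
\mi{\rA}{\rB \mid \rC} = \en{\rA \mid \rC} - \en{\rA \mid \rB, \rC}, \qquad \mi{\rA}{\rB \mid \rC, \rD} = \en{\rA \mid \rC, \rD} - \en{\rA \mid \rB, \rC, \rD}.
\]
So proving $\mi{\rA}{\rB \mid \rC} \geq \mi{\rA}{\rB \mid \rC, \rD}$ reduces to establishing
\[
\en{\rA \mid \rC} - \en{\rA \mid \rC, \rD} \geq \en{\rA \mid \rB, \rC} - \en{\rA \mid \rB, \rC, \rD}.
\]

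Next, I would handle the right-hand side using the hypothesis $\rA \perp \rD \mid \rB, \rC$. By the equality condition in \itfacts{cond-reduce}, conditional independence of $\rA$ and $\rD$ given $\rB, \rC$ implies $\en{\rA \mid \rB, \rC} = \en{\rA \mid \rB, \rC, \rD}$, so the right-hand side of the displayed inequality is zero. For the left-hand side, the general ``conditioning reduces entropy'' direction of \itfacts{cond-reduce} (applied without the equality hypothesis) gives $\en{\rA \mid \rC} \geq \en{\rA \mid \rC, \rD}$, so the left-hand side is non-negative. Chaining these yields the result.

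There is no substantive obstacle here: the statement is a routine consequence of the chain rule and the equality case of ``conditioning reduces entropy'', and the proof is essentially symmetric to that of \Cref{prop:info-increase} with the roles of the two sides of the inequality swapped. The only thing to be careful about is invoking the independence hypothesis on the correct pair of entropy terms (those that include $\rB$ in the conditioning) rather than on the terms that do not, since it is precisely the presence of $\rB$ that makes $\rA$ and $\rD$ conditionally independent.
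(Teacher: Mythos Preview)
Your proposal is correct and matches the paper's proof essentially line for line: the paper also uses $\rA \perp \rD \mid \rB,\rC$ to conclude $\en{\rA \mid \rB,\rC} = \en{\rA \mid \rB,\rC,\rD}$, and then applies \itfacts{cond-reduce} to get $\en{\rA \mid \rC} \geq \en{\rA \mid \rC,\rD}$, chaining these via the entropy decomposition of mutual information.
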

 \begin{proof}
 Since $\rA \perp \rD \mid \rB,\rC$, by \itfacts{cond-reduce}, $\HH(\rA \mid \rB,\rC) = \HH(\rA \mid \rB,\rC,\rD)$. Moreover, since conditioning can only reduce the entropy (again by \itfacts{cond-reduce}), 
  \begin{align*}
 	\mi{\rA}{\rB \mid  \rC} &= \HH(\rA \mid \rC) - \HH(\rA \mid \rB,\rC) \geq \HH(\rA \mid \rD,\rC) - \HH(\rA \mid \rB,\rC) \\
	&= \HH(\rA \mid \rD,\rC) - \HH(\rA \mid \rB,\rC,\rD) = \mi{\rA}{\rB \mid \rC,\rD}. \qed
 \end{align*}

\end{proof}

\subsection{Measures of Distance Between Distributions}\label{sec:prob-distance}

We use two main measures of distance (or divergence) between distributions, namely the \emph{Kullback-Leibler divergence} (KL-divergence) and the \emph{total variation distance}. 

\paragraph{KL-divergence.} For two distributions $\mu$ and $\nu$ over the same probability space, the \textbf{Kullback-Leibler (KL) divergence} between $\mu$ and $\nu$ is denoted by $\kl{\mu}{\nu}$ and defined as: 
\begin{align}
\kl{\mu}{\nu}:= \Ex_{a \sim \mu}\Bracket{\log\frac{\mu(a)}{{\nu}(a)}}. \label{eq:kl}
\end{align}
We also have the following relation between mutual information and KL-divergence. 
\begin{fact}\label{fact:kl-info}
	For random variables $\rA,\rB,\rC$, 
	\[\mi{\rA}{\rB \mid \rC} = \Ex_{(B,C) \sim {(\rB,\rC)}}\Bracket{ \kl{\distribution{\rA \mid \rB=B,\rC=C}}{\distribution{\rA \mid \rC=C}}}.\] 
\end{fact}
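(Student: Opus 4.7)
The proof plan is a direct calculation by unrolling all definitions, and no creative step is required. Starting from the RHS, I would expand the KL-divergence using its defining sum from \Cref{eq:kl}, which turns the RHS into a nested expectation over $(B,C) \sim (\rB,\rC)$ and $A$ drawn from $\distribution{\rA \mid \rB = B, \rC = C}$ of $\log\paren{p(A \mid B, C)/p(A \mid C)}$. Splitting this logarithm additively separates the expression into two pieces, one involving $\log p(A \mid B, C)$ and one involving $\log p(A \mid C)$.

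The first piece is immediately $-\en{\rA \mid \rB, \rC}$ by \Cref{eq:cond-entropy}: taking expectation of $\log p(A \mid B, C)$ over the full joint distribution of $(\rA, \rB, \rC)$ is precisely how conditional entropy is defined. For the second piece, the integrand depends only on $(A, C)$, so I can first marginalize out $\rB$ using the elementary identity $\sum_B p(B \mid C)\,p(A \mid B, C) = p(A \mid C)$. This collapses the second piece to $\Ex_{C \sim \rC}\Bracket{\sum_A p(A \mid C) \log p(A \mid C)} = -\en{\rA \mid \rC}$.

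Putting the two pieces together, the RHS equals $\en{\rA \mid \rC} - \en{\rA \mid \rB, \rC}$, which is exactly $\mi{\rA}{\rB \mid \rC}$ by the definition of conditional mutual information given just above \Cref{fact:it-facts} (i.e., the conditional version of \Cref{eq:mi}). There is no real obstacle here: the fact is the standard reformulation of (conditional) mutual information as the average KL-divergence from the posterior $\distribution{\rA \mid \rB, \rC}$ to the prior $\distribution{\rA \mid \rC}$, and the only thing to be careful with is the marginalization of $\rB$ in the second term of the split.
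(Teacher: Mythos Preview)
Your proposal is correct: this is exactly the standard derivation, expanding the KL-divergence, splitting the logarithm, and recognizing the two pieces as $-\en{\rA \mid \rB,\rC}$ and $-\en{\rA \mid \rC}$ after marginalizing out $\rB$. The paper itself does not prove \Cref{fact:kl-info}; it merely states it as a standard fact (with a blanket reference to Cover and Thomas), so there is nothing further to compare.
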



\paragraph{Total variation distance.} We denote the \textbf{total variation distance} between two distributions $\mu$ and $\nu$ on the same 
support $\Omega$ by $\tvd{\mu}{\nu}$, defined as: 
\begin{align}
\tvd{\mu}{\nu}:= \max_{\Omega' \subseteq \Omega} \paren{\mu(\Omega')-\nu(\Omega')} = \frac{1}{2} \cdot \sum_{x \in \Omega} \card{\mu(x) - \nu(x)}.  \label{eq:tvd}
\end{align}
\noindent
We use the following basic properties of total variation distance. 
\begin{fact}\label{fact:tvd-small}
	Suppose $\mu$ and $\nu$ are two distributions for $\event$, then, 
	$
	{\mu}(\event) \leq {\nu}(\event) + \tvd{\mu}{\nu}.
$
\end{fact}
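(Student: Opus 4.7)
The plan is to derive this inequality directly from the variational definition of total variation distance stated in \Cref{eq:tvd}, namely $\tvd{\mu}{\nu} = \max_{\Omega' \subseteq \Omega} \paren{\mu(\Omega') - \nu(\Omega')}$, where $\Omega$ is the common support of $\mu$ and $\nu$. The event $\event$ is itself a subset of $\Omega$, so it is one admissible choice of $\Omega'$ in that maximization. Plugging it in gives $\mu(\event) - \nu(\event) \leq \tvd{\mu}{\nu}$, and rearranging yields the claimed bound $\mu(\event) \leq \nu(\event) + \tvd{\mu}{\nu}$.

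There is no genuine obstacle here; the only thing worth double-checking is that \Cref{eq:tvd} is indeed the form of the definition I am allowed to use (rather than, say, the $\ell_1$ form $\frac{1}{2}\sum_x |\mu(x) - \nu(x)|$, which is also stated there as equivalent). Since both forms are given as the definition, one could alternatively argue via the $\ell_1$ form by splitting $\event$ into the sub-events on which $\mu(x) \geq \nu(x)$ and $\mu(x) < \nu(x)$ and bounding the positive contribution by $\frac{1}{2}\sum_x |\mu(x) - \nu(x)|$; but the direct approach via the $\max$ formulation is a one-line argument and I would present that.
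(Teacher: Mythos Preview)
Your argument is correct: taking $\Omega' = \event$ in the variational definition \Cref{eq:tvd} immediately gives $\mu(\event) - \nu(\event) \leq \tvd{\mu}{\nu}$, which is the claim. The paper states this as a standard fact without proof, so there is nothing further to compare.
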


We also have the following (chain-rule) bound on the total variation distance of joint variables.

\begin{fact}\label{fact:tvd-chain-rule}
	For any distributions $\mu$ and $\nu$ on $n$-tuples $(X_1,\ldots,X_n)$, 
	\[
		\tvd{\mu}{\nu} \leq \sum_{i=1}^{n} \Exp_{X_{<i} \sim \mu} \tvd{\mu(X_i \mid X_{<i})}{\nu(X_i \mid X_{<i})}. 
	\]
\end{fact}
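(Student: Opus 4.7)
The plan is to prove the inequality by induction on $n$, using the standard decomposition of a joint distribution into a marginal times a conditional. The base case $n=1$ is trivial (in fact it is an equality, since there is no conditioning and the single-coordinate marginal is the full distribution itself). So the real work lies in the inductive step, where I reduce the chain-rule bound for $n$ coordinates to the chain-rule bound for $n-1$ coordinates plus one extra expected-TVD term for the $n$-th coordinate.

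For the inductive step, I would start from the algebraic identity
\[
\mu(x_{\le n}) - \nu(x_{\le n}) = \mu(x_{<n})\bigl[\mu(x_n\mid x_{<n}) - \nu(x_n\mid x_{<n})\bigr] + \nu(x_n\mid x_{<n})\bigl[\mu(x_{<n}) - \nu(x_{<n})\bigr],
\]
obtained by adding and subtracting $\mu(x_{<n})\,\nu(x_n\mid x_{<n})$. Applying the triangle inequality, summing over all tuples $x_{\le n}$, and using the definition of total variation distance in \Cref{eq:tvd} as $\tfrac{1}{2}\sum_x|\mu(x)-\nu(x)|$, the first term contributes exactly $2\,\Exp_{X_{<n}\sim\mu}\tvd{\mu(X_n\mid X_{<n})}{\nu(X_n\mid X_{<n})}$ (since $\sum_{x_n}|\mu(x_n\mid x_{<n}) - \nu(x_n\mid x_{<n})| = 2\tvd{\mu(X_n\mid x_{<n})}{\nu(X_n\mid x_{<n})}$, and then one averages over $x_{<n}\sim \mu$), while the second term contributes at most $2\tvd{\mu(X_{<n})}{\nu(X_{<n})}$ (since $\sum_{x_n}\nu(x_n\mid x_{<n}) = 1$). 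Dividing by $2$ yields
\[
\tvd{\mu}{\nu} \le \tvd{\mu(X_{<n})}{\nu(X_{<n})} + \Exp_{X_{<n}\sim\mu}\tvd{\mu(X_n\mid X_{<n})}{\nu(X_n\mid X_{<n})}.
\]

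Applying the inductive hypothesis to the first summand on the right (which is the total variation distance between the $(n-1)$-tuple marginals) gives
\[
\tvd{\mu(X_{<n})}{\nu(X_{<n})} \le \sum_{i=1}^{n-1} \Exp_{X_{<i}\sim\mu}\tvd{\mu(X_i\mid X_{<i})}{\nu(X_i\mid X_{<i})},
\]
and adding the $i=n$ term recovers the claimed bound. The only mild subtlety I would be careful about is the handling of $x_{<n}$ such that $\mu(x_{<n})=0$: on such tuples $\mu(x_n\mid x_{<n})$ is undefined, but they contribute zero mass to the expectation under $\mu$, so they can be safely ignored (and one can define the conditional arbitrarily there). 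There is no real obstacle here; the main content is simply the triangle-inequality/add-and-subtract step, so the proof is essentially a two-line calculation dressed up as an induction.
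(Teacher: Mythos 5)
The paper does not include a proof of this fact; it is stated in the appendix as a standard result, with the introductory note that the surrounding material follows Cover and Thomas~\cite{CoverT06}. Your proof is a correct and entirely standard derivation: the add-and-subtract decomposition of $\mu(x_{\le n})-\nu(x_{\le n})$, the triangle inequality, and the identification of the two resulting sums as the conditional and marginal TVD terms are all sound, and the induction closes cleanly. Two small remarks for completeness. First, your zero-measure caveat is correctly identified but slightly one-sided: you should also note that when $\mu(x_{<n})>0$ but $\nu(x_{<n})=0$, the conditional $\nu(\cdot\mid x_{<n})$ is likewise undefined; the argument still goes through because any choice of conditional keeps $\sum_{x_n}\nu(x_n\mid x_{<n})=1$ in the second term, and the stated inequality then holds for every such choice. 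Second, you do not actually need the induction as a separate device: the same add-and-subtract trick applied to the full telescoping decomposition $\mu(x_{\le n})-\nu(x_{\le n})=\sum_{i=1}^n \mu(x_{<i})\bigl(\mu(x_i\mid x_{<i})-\nu(x_i\mid x_{<i})\bigr)\nu(x_{>i}\mid x_{\le i})$ gives the bound in one shot, which is marginally cleaner, though your version is perfectly fine.
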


The following Pinsker's inequality bounds the total variation distance between two distributions based on their KL-divergence, 

\begin{fact}[Pinsker's inequality]\label{fact:pinskers}
	For any distributions $\mu$ and $\nu$, 
	$
	\tvd{\mu}{\nu} \leq \sqrt{\frac{1}{2} \cdot \kl{\mu}{\nu}}.
	$ 
\end{fact}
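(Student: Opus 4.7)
The plan is to prove Pinsker's inequality via the standard two-step reduction: first establish the binary (Bernoulli) case by an elementary second-derivative argument, and then reduce the general case to the binary one via a data-processing property of KL-divergence (equivalently, the log-sum inequality). Throughout I take $\log$ to mean the natural logarithm, since this is what the coefficient $1/2$ in the statement requires.

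For the binary case, fix $q \in (0,1)$ and define $d(p\|q) := p\ln(p/q) + (1-p)\ln((1-p)/(1-q))$. I would show $d(p\|q) \geq 2(p-q)^2$ by analyzing $g(p) := d(p\|q) - 2(p-q)^2$ on $(0,1)$. Routine computation gives $g(q)=0$, $g'(q)=0$, and
\[
g''(p) = \frac{1}{p} + \frac{1}{1-p} - 4 = \frac{1}{p(1-p)} - 4.
\]
Since $p(1-p) \leq 1/4$ on $(0,1)$, we have $g'' \geq 0$ everywhere, so $g$ is convex with its unique minimum of $0$ at $p=q$, yielding the Bernoulli inequality.

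For the general case, I would lift this to arbitrary distributions $\mu,\nu$ on a common support $\Omega$ by a push-forward. Pick $A \subseteq \Omega$ attaining the maximum in \Cref{eq:tvd}, so that $\tvd{\mu}{\nu} = \mu(A) - \nu(A)$. Let $\mu',\nu'$ be the Bernoulli distributions on $\{0,1\}$ induced by the indicator of $A$, with $\mu'(1)=\mu(A)$ and $\nu'(1)=\nu(A)$. The log-sum inequality applied to the two pairs of masses $(\mu(A),\nu(A))$ and $(\mu(A^c),\nu(A^c))$ gives
\[
\kl{\mu'}{\nu'} \;\leq\; \kl{\mu}{\nu},
\]
which is the data-processing property of KL-divergence under deterministic coarse-graining. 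Combining this with the Bernoulli bound applied at $(p,q)=(\mu(A),\nu(A))$ produces
\[
2 \cdot \tvd{\mu}{\nu}^2 \;=\; 2(\mu(A)-\nu(A))^2 \;\leq\; d(\mu(A)\|\nu(A)) \;=\; \kl{\mu'}{\nu'} \;\leq\; \kl{\mu}{\nu},
\]
and taking square roots yields the claim.

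The main obstacle is really just the Bernoulli inequality; once that is in hand, the coarse-graining step is mechanical. A minor technicality concerns degenerate cases where $p$ or $q$ equals $0$ or $1$: these are absorbed by the standard conventions $0\ln 0 = 0$ and $x \ln(x/0) = +\infty$ for $x>0$, under which $\kl{\mu}{\nu}$ is $+\infty$ whenever $\mu$ is not absolutely continuous with respect to $\nu$ and the inequality holds trivially.
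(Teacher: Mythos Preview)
Your proof is correct and follows the standard two-step argument (binary case via convexity, then general case via data-processing/log-sum). However, the paper does not actually prove this statement: it is listed as a background fact in the appendix, with all proofs deferred to the Cover--Thomas textbook. So there is nothing in the paper to compare against; your argument simply supplies what the paper omits.
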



\section{Extending \Cref{prop:dup} to all Graph Sizes}\label{app:prop-dup} 

Recall that in the proof of~\Cref{prop:dup}, the graph $G$ constructed only exists for all integers $\ell \geq 1$, which does not include all large enough integers $n$ as in the statement of the proposition. 

To fix this, we simply take the largest possible $G$ with fewer than $n$ vertices and pad each layer with extra vertices to have exactly $n$ vertices overall as required.
This padding does not change the parameters $p$ and $q$. As such, we show that for any choice of large enough number of vertices $n$ there exists a choice of $n'$ that is consistent with some integer $\ell$ such that $G$ exists and $n \geq n' \geq {n}/{2^{\Theta(\sqrt{\log n})}}$. This implies a negligible loss on the bounds of $p$ and $q$ when $G$ is padded.

To that end, in the construction of $G$, we have that $d^2 = \log \frac{n}{k+1} = d \cdot \log ((k+2) \cdot \ell)$ and thus
\begin{align}
	& \log (k\ell) \leq d_\ell \leq \log (3k\ell), \text{~and} \label{eq:dbounds1} \\
	& \log (k(\ell+1)) \leq d_{\ell + 1} \leq \log (3k(\ell + 1)) \leq \log (6k\ell) \label{eq:dbounds2}
\end{align}
where $d_\ell :=  \log ((k+2) \cdot \ell)$ refers to the choice of $d$ that depends only on integer $\ell$ (since $k$ is fixed).
Let $n$ be as in the statement of the proposition where $n = b \cdot (k+1)$ for some large enough integer $b$ since the required DUP graph is a $(k+1)$-layered graph.
Fix a choice of $\ell$ such that 
\begin{align}
	b' = ((k+2) \cdot \ell)^{d_\ell} \leq b \leq ((k+2) \cdot (\ell + 1))^{d_{\ell + 1}} \label{eq:bbounds}
\end{align}
where $b'$ also defines the integer $n' = b' \cdot (k+1)$.
Then, using the bounds in \Cref{eq:dbounds1}, \Cref{eq:dbounds2}, and \Cref{eq:bbounds}, we have that
\begin{align*}
	\frac{n}{n'} = \frac{b}{b'} &\leq \frac{((k+2) \cdot (\ell + 1))^{\log(6k\ell)}}{((k+2) \cdot \ell)^{\log(k\ell)}} \leq \paren{1 + \frac{1}{\ell}}^{\log (k\ell)} \cdot ((k+2) \cdot (\ell + 1))^{\log 6} \\
	& \leq \exp\paren{\frac{\log(k\ell)}{\ell}} \cdot (6k\ell)^3 \leq 2^{\Theta\paren{\log (k\ell)}}.
\end{align*}
Finally, since $(k\ell)^{\log(k\ell)} \leq \paren{(k+2)\cdot \ell}^{d_\ell} = b' \leq b \leq n$, we have that 
\[
\log(k\ell) = \sqrt{\log (k\ell)^{\log(k\ell)}} \leq \sqrt{\log n}.
\] 
This concludes the proof for all graph sizes.

\end{document}